\setlist[enumerate,1]{label=\bf (\roman*)}
\newtcolorbox{standout}{
  colback=gray!15,
  boxrule=0pt,
  left=.3cm,
  right=.3cm,
  top=.18cm,
  bottom=.18cm,
  boxsep=0pt
}
\DeclareFontFamily{T3}{lmr}{}
\DeclareFontShape{T3}{lmr}{m}{n}{<-> ssub * cmr/m/n}{}
\theoremstyle{plain}
\newtheorem{theorem}{Theorem}[section]
\newtheorem{lemma}[theorem]{Lemma}
\newtheorem{proposition}[theorem]{Proposition}
\newtheorem{corollary}[theorem]{Corollary}
\theoremstyle{definition}
\newtheorem{definition}[theorem]{Definition}
\newtheorem{example}[theorem]{Example}
\theoremstyle{remark}
\newtheorem{remark}[theorem]{Remark}
\crefname{equation}{}{}
\crefname{section}{\S}{\S\S}
\crefname{subsection}{\S}{\S\S}
\crefname{subsubsection}{\S}{\S\S}
\crefname{definition}{Def.}{Defs.}
\crefname{theorem}{Thm.}{Thms.}
\crefname{corollary}{Cor.}{Cors.}
\crefname{lemma}{Lem.}{Lems.}
\crefname{proposition}{Prop.}{Props.}
\crefname{remark}{Rem.}{Rems.}
\crefname{notation}{Ntn.}{Ntns.}
\crefname{fact}{Fact}{Fact}
\crefname{example}{Ex.}{Exs.}
\crefname{figure}{Fig.}{Figs.}
\crefname{table}{Tab.}{Tabs.}
\crefname{footnote}{ftn.}{ftns.}
\Crefname{footnote}{Ftn.}{Ftns.}
\definecolor{darkblue}{rgb}{0.05,0.25,0.65}
\definecolor{darkgreen}{RGB}{20,140,10}
\definecolor{lightgray}{rgb}{0.9,0.9,0.9}
\definecolor{darkorange}{RGB}{200,100,5}
\definecolor{darkyellow}{rgb}{.91,.91,0}
\definecolor{lightolive}{RGB}{225, 220, 185}
\let\originalsslash\sslash
\renewcommand{\sslash}{\mathord{\originalsslash}}
\DeclareRobustCommand{\rchi}{{\mathpalette\irchi\relax}}
\newcommand{\irchi}[2]{\raisebox{\depth}{$#1\chi$}} 
\newcommand{\BigDelta}{\mathop{\mathchoice
  {\raisebox{-0.5ex}{\scalebox{1.7}{$\Delta$}}} 
  {\raisebox{-0.3ex}{\scalebox{1.3}{$\Delta$}}} 
  {\raisebox{-0.5ex}{\scalebox{1.6}{$\Delta$}}} 
  {\raisebox{-0.1ex}{\scalebox{1.1}{$\Delta$}}} 
}}
\renewcommand{\setminus}{-}
\tikzset{
  snake left/.style={
    rounded corners,
    to path={
      let \p1 = (\tikztostart.east),
          \p2 = (\tikztotarget.west),
          \p3 = ($(\p1)!0.5!(\p2)$),
          \n1 = {8pt} 
      in
      (\p1)
      -- (\x1 + \n1, \y1)
      -- (\x1 + \n1, \y3)
      -- (\x2 - \n1, \y3) \tikztonodes
      -- (\x2 - \n1, \y2)
      -- (\p2)
    }
  }
}
\tikzset{
  uphordown/.style={
    rounded corners,
    to path={
      let \p1 = (\tikztostart.north),
          \p2 = (\tikztotarget.north),
          \n1 = {max(\y1,\y2) + 8pt}
      in
      (\p1)
      -- (\x1, \n1)
      -- (\x2, \n1) \tikztonodes 
      -- (\p2)
    }
  }
}
\tikzset{
  downhorup/.style={
    rounded corners,
    to path={
      let \p1 = (\tikztostart.south),
          \p2 = (\tikztotarget.south),
          \n1 = {min(\y1,\y2) - 8pt}
      in
      (\p1)
      -- (\x1, \n1)
      -- (\x2, \n1) \tikztonodes 
      -- (\p2)
    }
  }
}
\tikzset{
  rightvertleft/.style={
    rounded corners,
    to path={
      let \p1 = (\tikztostart.east),
          \p2 = (\tikztotarget.east),
          \n1 = {max(\x1,\x2) + 8pt}
      in
      (\p1)
      -- (\n1, \y1)
      -- (\n1, \y2) \tikztonodes 
      -- (\p2)
    }
  }
}
\tikzset{
  leftvertright/.style={
    rounded corners,
    to path={
      let \p1 = (\tikztostart.west),
          \p2 = (\tikztotarget.west),
          \n1 = {min(\x1,\x2) - 8pt}
      in
      (\p1)
      -- (\n1, \y1)
      -- (\n1, \y2) \tikztonodes 
      -- (\p2)
    }
  }
}
\newcommand{\shape}{%
  \hspace{.7pt}%
  \raisebox{0.8pt}{\rm\normalfont\textesh}%
  \hspace{1pt}%
}
\newcommand{\defneq}{\equiv}
\newcommand{\hotype}[1]{\mathcal{#1}}
\newcommand{\plus}{\hspace{.8pt}{\adjustbox{scale={.5}{.77}}{$\sqcup$} \{\infty\}}}
\newcommand{\cpt}{\mathpalette\cpt@inner\relax}
\newcommand{\cpt@inner}[2]{%
  \scalebox{0.5}[0.9]{$#1\cup$}
  #1\{\infty\}
}
\newcommand{\cptIndex}[1]{\hspace{.8pt}{\adjustbox{scale={.5}{.77}}{$\cup$} \{\infty_{\hspace{-.5pt}{}_{#1}}\hspace{-1.9pt}\}}}
\newcommand{\grayunderbrace}[2]{\mathcolor{gray}{\underbrace{\mathcolor{black}{#1}}}_{\mathcolor{gray}{#2}}}
\newcommand{\grayoverbrace}[2]{\mathcolor{gray}{\overbrace{\mathcolor{black}{#1}}}^{\mathcolor{gray}{#2}}}
\newcommand{\HilbertSpace}{%
  \mathcal{H}%
}
\newcommand{\TensorUnit}{\mathds{1}}
\begin{document}

\title[Topological Quantization of Higher Gauge Fields]
{Complete Topological Quantization 
\\ of Higher Gauge Fields}

\thanks{\emph{Funding} by Tamkeen UAE under the 
NYU Abu Dhabi Research Institute grant {\tt CG008}.}

\author{Hisham Sati}
\address{Mathematics Program and Center for Quantum and Topological Systems, New York University Abu Dhabi}
\curraddr{}
\email{hsati@nyu.edu}
\thanks{}

\author{Urs Schreiber           }
\address{Mathematics Program and Center for Quantum and Topological Systems, New York University Abu Dhabi}
\curraddr{}
\email{us13@nyu.edu}
\thanks{}

\subjclass[2020]{
Primary:
81T70, 
81T45, 
81T13, 
81T30, 
55N20, 
53C08, 
55P62, 
Secondary:
83E50, 
81V27, 
81V70, 
18G35, 
81T35. 
}

\keywords{
Higher gauge theory, 
flux quantization, 
topological quantum field theory,
nonabelian differential cohomology, 
rational homotopy theory, 
Cohomotopy, 
11D supergravity, 
M-theory, 
fractional quantum Hall effect, 
anyons}

\date{\today}

\dedicatory{
  \href{https://ncatlab.org/nlab/show/Center+for+Quantum+and+Topological+Systems}{\includegraphics[width=3.1cm]{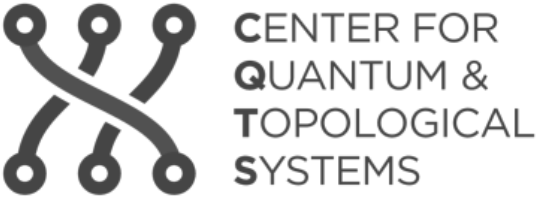}}
}

\begin{abstract}
After global completion of higher gauge fields (as appearing in higher-dimensional supergravity) by proper flux quantization in extraordinary nonabelian cohomology, the (non-perturbative, renormalized) topological quantum observables and quantum states of solitonic field histories are completely determined through a topological form of light-front quantization. We survey the logic of this construction and expand on aspects of the quantization argument.

In the instructive example of 5D Maxwell-Chern-Simons theory (the gauge sector of 5D SuGra)  dimensionally reduced to 3D, a suitable choice of flux quantization in Cohomotopy (``Hypothesis h'') recovers this way the fine detail of the traditionally renormalized (Wilson loop) quantum observables of abelian Chern-Simons theory and makes novel predictions about anyons in fractional quantum (anomalous) Hall systems. 

An analogous choice (``Hypothesis H'') of global completion of 11D higher Maxwell-Chern-Simons theory (the higher gauge sector of 11D SuGra) realizes various aspects of the topological sector of the conjectural ``M-theory'' and its M5-branes.
\end{abstract}

\maketitle

\newpage

\setcounter{tocdepth}{2}
\tableofcontents

\vspace{-.6cm}

These are expanded lecture notes, prepared as handout material for a mini-course of the same title \cite{Edinburgh}, held at the workshop \cite{ICMS:2025:GHS}:
\begin{quote}
``\emph{Geometry, Higher Structures and Physics}'', 
ICMS, Edinburgh (2025).
\end{quote}
The notes are to provide a concise overview of our program \parencites{SS24-Phase,SS24-Obs} of non-Lagrangian global completion of  (supergravity-type, cf. \parencites{FreedmanVanProeyen2012}[\S 6]{Fre2013}{nLab:supergravity}) higher gauge field theories (cf. \cite{SatiSchreiberHigherGauge}) and of the non-perturbative quantization of their solitonic topological sector. 
This is motivated by:
\begin{enumerate}
\item the open problem of global completion \parencites{GSS24-SuGra,FSS20-H} of 11D SuGra (``M-theory'', cf. \cite{Duff1999World,nLab:Mtheory}) and \parencites{GSS25-M5,FSS21-Hopf,FSS21-StrStruc} of its M5-probes (``Theory $\mathcal{X}$'');  

\item relatedly \cite{SS25-Seifert,SS26-Rickles}, the open problem of global understanding \cite{SS25-FQH,SS25-FQAH,SS25-CrystallineChern} of anyons in fractional quantum (anomalous) Hall materials (``topological quantum''). 
\end{enumerate}

Our aim here is to provide a transparent picture of the main steps of the construction, while relegating mathematical background to the references. (Much of that background has been laid out in the monographs \cite{FSS23-Char,SS25-Bun,SS26-Orb}, while further exposition of the approach is in \parencite{SS25-Flux,SS25-Srni,SS25-WilsonLoops}.)

Required of the audience is, in the first part of these lectures, nothing more than familiarity with basics of differential forms and algebraic topology, as found for instance in \cite{Bott1982}, and with applications to physics in \cite{Frankel2011}. At the end of \cref{GlobalCompletionOfHigherGaugeFields} and in parts of \cref{QuantizationInTopologicalSector} we invoke basics of geometric \cite{Lurie2009} and stable homotopy theory \cite{BarnesRoitzheim2020}, such as surveyed in \parencites{Sc18-ToposLectures}[\S 1]{FSS23-Char}. But after the dust has settled, the examples and applications in \cref{ExamplesAndApplications} again require only classical homotopy theory, such as found in \cite{Whitehead1978,Bredon1993}.

Familiarity with physics jargon will aid the reader to orient themselves, but is not strictly necessary. In particular, nowhere do we rely on string/M-theory folklore --- all our concepts have definitions, and our claims have proofs.

\smallskip

For lists of background references beyond the scope of what can reasonably be cited here we will point to entries of the \emph{nLab} research wiki at \href{https://ncatlab.org/nlab/show/HomePage}{\tt ncatlab.org}.

\section{Global Completion of Higher Gauge Fields}
\label{GlobalCompletionOfHigherGaugeFields}

Our ground field is the real numbers. All manifolds and their differential forms are understood to be smooth. Repeated indices imply that we sum over them.

\subsection{Higher Flux Densities}
\label{NonlinearFluxDensities}
\footnote{
  This section parallels \cite[\S 2]{SS25-Flux}. The closest in traditional literature is the ``geometric'' formulation of higher-dimensional supergravity reviewed in \cite{CDF1991}. cf. \cref{GaussLawLInfinityAlgebras}.
}

We discuss the Bianchi identities and equations of motion of higher flux densities, generalizing Maxwell's equations for the electromagnetic field in vacuum. Below in \cref{OnTheCompletedPhaseSpace}, the \emph{on-shell} higher flux densities (those satisfying their equations of motion) are promoted to globally well-defined higher gauge fields with higher \emph{gauge potentials}.
While non-abelian Yang-Mills fields are \emph{not} part of our discussion, we do crucially pay attention to \emph{non-linear} relations in the Bianchi identities/Gau{\ss} laws of the flux density. These lead to the corresponding charges being in extraordinary \emph{non-abelian cohomology} (in \cref{TotalFluxQuantization}).

\subsubsection{Nonlinear Flux Densities}
\footnote{
  We follow \parencites[\S 2.4]{SS25-Flux}[\S 2.1]{SS24-Phase}.
}

Ever since Faraday discovered \emph{magnetic field lines} in the 1850s, the field strength of gauge fields is expressed as the \emph{density of flux} of the field through infinitesimal hypersurfaces in spacetime, which is measured, in modern mathematical language, by differential forms (cf. \cref{MagneticFluxLines})

\begin{figure}[htb]
\caption{
  \label{MagneticFluxLines}
  The local \emph{field strength} $B$ of the magnetic field may be understood as a density of \emph{magnetic flux lines} through any given surface, as such measured by a differential 2-form $F_2$ on space(time)  (the \emph{Faraday tensor}). 
  In higher-dimensional generalization, flux densities of higher gauge fields are encoded by differential forms of higher degree. Graphics adapted from \cite{NaveHPH-MagneticFlux}.
}
\centering
\adjustbox{scale=0.9,
  rndfbox=4pt
}{
\begin{tikzpicture}

  \draw (0,0) node {
    \includegraphics[width=8.8cm]{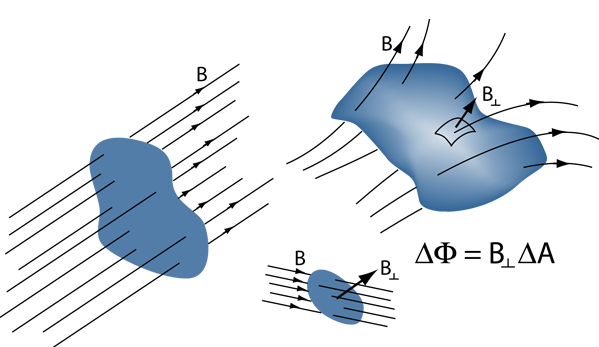}
  };

\node[rotate=-52]
  at (-.15,-.15) {
    \scalebox{.7}{
      \color{darkblue}
      \bf
      \;\;\;magnetic flux lines
    }
  };

\draw[
  draw=white,
  fill=white
]
  (1.7,-.9) rectangle (3.8,-1.5);

\node[
 rotate=-43
]
  at (2,.4) {
   \scalebox{.8}{
     $\Delta \, x^{{}_1}$
   }
  };

\node[
 rotate=+40
]
  at (2.06,.88) {
   \scalebox{.8}{
     $\Delta \, x^{{}_2}$
   }
  };

\node[
  rotate=-40
]
  at (.7,-2.6) {
    \scalebox{.9}{$
      \vec \Delta x^{{}_1} 
      \!\!
      \wedge
      \vec \Delta x^{{}_2}
    $}
  };

\node
  at (3.3,-1.6) {
    \scalebox{1}{
      $
        \def\arraystretch{1.2}
        \begin{array}{l}
        \overset{
          \mathclap{
            \scalebox{.7}{
              \def\arraystretch{.9}
              \color{darkblue}
              \bf
              \begin{tabular}{c}
                magnetic flux 
                \\
                through surface element
              \end{tabular}
            }
          }
        }{
        F_2\big(\vec \Delta x^{_1}, \vec \Delta x^{_2}\big)
        }
        \\
        \,=\,
        B_{\perp} 
          \cdot
        \Delta x^{_1} 
          \cdot 
        \Delta x^{_2    }
        \end{array}
      $
    }
  }; 
\end{tikzpicture}
}
\end{figure}

The flux density of the electromagnetic field on a spacetime $X^{1,3}$ is encoded by a differential 2-form $F_2 \in \Omega^2_{\mathrm{dR}}(X^{1,3})$ (the \emph{Faraday tensor}) subject to \emph{Maxwell's equations of motion} (here: in vacuum, cf. \cite[\S 3.5 \& \S7.2b]{Frankel2011}):
\begin{equation}
  \label{MaxwellEquationOfMotion}
  \begin{aligned}
    \mathrm{d} F_2 & = 0 \,,
    \\
    \mathrm{d} \star F_2 & = 0 \,.
  \end{aligned}
\end{equation}

It is a surprisingly good idea (cf. \cite{HehlObukhov2003, nLab:PreMetricElectromagnetism}) to reformulate this equivalently as a \emph{pair} of differential forms $F_2, F'_2 \in \Omega^2_{\mathrm{dR}}(X^{1,3})$ subject to a pair of differential equations complemented by a Hodge-duality constraint:
\begin{equation}
  \label{DualitySymmetricMaxwellEquationOfMotion}
  \begin{aligned}
    \mathrm{d} F_2 & = 0
    \,,
    \\
    \mathrm{d} F'_2 & = 0
    \mathrlap{\,,}
  \end{aligned}
  \;\;\;
  F'_2 = \star F_2
  \,.
\end{equation}

\subsubsection{The equations of Motion}
\label{OnTheEquationsOfMotion}

In this ``pre-geometric'' or ``duality-symmetric'' formulation we say that: 

Flux densities of a \emph{higher gauge field} of higher Maxwell-type on a higher-dimensional spacetime $X^{1,d}$ are a finite set $I$ of \emph{flux species} and an $I$-tuple of differential forms $F^{(i)}$ (the \emph{flux densities} themselves) of positive degree $\mathrm{deg}(i)>0$ on spacetime,
\begin{equation}
  \label{TheFluxDensities}
  \vec F = \big(
    F^{(i)}
    \in
    \Omega^{\mathrm{deg}(i)}(X^{1,d})
  \big)_{i \in I}
  \mathrlap{\,,}
\end{equation}
subject to \emph{equations of motion} of this form:
\begin{equation}
  \label{TheHigherMaxwellEquations}
  \mathllap{
  \forall_{i \in I}
  \;\;\;
  }
  \mathrm{d}F^{(i)}
  =
  P^{(i)}\big(\vec F\, \big)
  \,,\;\;\;
  \star F^{(i)} = 
  \mu^{(i)}(\vec F\, )
  \mathrlap{\,,}
\end{equation}
where the $P^{(i)}$ are wedge-polynomials and the $\mu^{(i)}$ are invertible linear functions in the set $I$ of variables. Hence,
in evident shorthand, the equations of motion read:
\begin{equation}
  \label{ShorthandOfTheHigherMaxwellEquations}
  \mathrm{d}\vec F = \vec P(\vec F\,)
  \,,\;\;\;\;
  \star \vec F = \vec \mu(\vec F\,)
  \mathrlap{\,.}
\end{equation}

\begin{example}[Higher gauge fields]
\label[example]{HigherGaugeFields} {\ }
\begin{description}
  \item[Self-dual/chiral field] on $\Sigma^{1,4k+1}$
  (e.g., gauge sector on M5 for $k=1$)
  \begin{equation}
    \label{EoMOfSelfDualField}
    \mathrm{d}H_{2k+1} = 0
    \,,
    \;\;\;\;
    \star H_{2k+1} = H_{2k+1}
    \mathrlap{\,.}
  \end{equation}

  \item[Type IIA RR-field] on $X^{1,9}$
  (gauge sector of type IIA 10D SuGra)
  \begin{equation}
    \label{EoMOfIIARRField}
    \mathrm{d} F_{2\bullet} = 0
    \,,
    \;\;\;\;\;
    \star F_{2\bullet} = F_{10-2\bullet}
    \mathrlap{\,.}
  \end{equation}

  \item[5D Maxwell-Chern-Simons]
  on $X^{1,4}$
  (gauge sector of minimal 5D SuGra)
  \begin{equation}
    \label{EoMOf5DMCS}
    \begin{aligned}
      \mathrm{d}F_2 & = 0\,,
      \\
      \mathrm{d}F_3 & = 
        \tfrac{1}{2}F_2 \wedge F_2
      \,,
    \end{aligned}
    \;\;\;
    F_3 = \star F_2
    \mathrlap{\,.}
  \end{equation}

  \item[11D higher Maxwell-Chern-Simons]
  on $X^{1,10}$
  (gauge sector of 11D SuGra)
  \begin{equation}
    \label{EoMOfCFieldIn11DSugra}
    \begin{aligned}
      \mathrm{d}G_4 & = 0 \,,
      \\
      \mathrm{d}G_7 & = 
        \tfrac{1}{2}G_4 \wedge G_4\,,
    \end{aligned}
    \;\;\;
    G_7 = \star G_4
    \mathrlap{\,.}
  \end{equation}
\end{description}

Beware that \textbf{nonabelian Yang-Mills fields} are \emph{not} an example of Maxwell-type higher gauge fields \cref{ShorthandOfTheHigherMaxwellEquations} -- we discuss this subtlety in \cref{TheNonExampleOfYangMillsFields}.
\end{example}

More generally, given a \emph{background} $\vec G$ of such higher flux densities then higher flux densities $\vec F$ \emph{coupled to} (``\emph{twisted by}'') the background  are subject to differential equations of the more general form:
\begin{equation}
  \mathrm{d}
  F^{(i)} 
    =
  P^{(i)}\big(
    \vec F, \mathcolor{purple}{\vec G}
  \,\big)
  \mathrlap{\,.}
\end{equation}

\begin{example}[Twisted higher gauge fields]
\nopagebreak \hfill
\begin{description}
  \item[Twisted Type IIA RR-field]
  a \emph{Kalb-Ramond B-field background}
  on $X^{1,9}$
  \begin{equation}
    \label{EoMForBField}
    \begin{aligned}
      \mathrm{d}
      H_3 & = 0
      \\
      \mathrm{d} H_7 & = 0 
      \mathrlap{\,,}
    \end{aligned}
    \;\;\;\;
    H_7 = \star H_3
  \end{equation}
  twists the type IIA RR-field as:
  \begin{equation}
    \mathrm{d} F_{2\bullet}
    =
    F_{2 \bullet -2 }
    \wedge
    \mathcolor{purple}{H_3}
    \mathrlap{\,.}
  \end{equation}

  \item[Twisted self-dual field]
  A C-field background flux 
  $G_4$ 
  twists the self-dual field on an M5-probe as
  $\begin{tikzcd}[sep=small] \Sigma^{1,5} \ar[r, "{ \phi }"] & X^{1,10}\end{tikzcd}$
  \begin{equation}
    \label{M5WorldvolumeBianchi}
    \mathrm{d} H_3 
      =
    \phi^\ast \mathcolor{purple}{F_4} \,.
  \end{equation}
  
\end{description}
\end{example}

\subsubsection{On-shell Flux Densities}
\label{OnShellFluxDensities}
\footnote{
  For more on on-shell flux densities in our context see \parencites[\S 2.5 \& \S 3.1]{SS25-Flux}[\S 2.1]{SS24-Phase}.
}

We proceed to analyze the \emph{space of solutions} of such equations of motion on higher flux densities (which is the precursor of the full \emph{phase space} stack of the corresponding completed higher gauge fields in \cref{OnTheCompletedPhaseSpace}).

For that purpose, assume from now on that spacetime is \emph{globally hyperbolic}, 
\begin{equation}
  \label{GloballyHyperbolicSpacetime}
  X^{1,d} 
    \simeq 
  \mathbb{R}^{1,0} 
    \times 
  X^d
  \mathrlap{\,,}
\end{equation}
with \emph{Cauchy surface} 
\begin{equation}
  \label{InclusionOfCauchySurface}
  \begin{tikzcd}
    X^d 
    \simeq
    \{t_0\} \times X^d
    \ar[
      r, 
      hook,
      "{ \iota_{t_0} }"
    ]
    &
    X^{1,d}
    \mathrlap{\,.}
  \end{tikzcd}
\end{equation}
Let us denote the set of \emph{germs of solutions} of our equations of motion, around $t_0$, by
\begin{equation}
  \label{LocalSolutionSpace}
  \mathrm{LocSol}_{t_0}
  :=
  \left\{
  \substack{
    \scalebox{.7}{germs of solutions}
    \\
    \scalebox{.7}{of equations of motion}
    \\
    \scalebox{.7}{around $\{t_0\} \times X^d$}
  }
  \right\}
  \mathrlap{.}
\end{equation}
Hence, an element in this set is represented by a solution 
\begin{equation}
  \label{ARepresentativeSolution}
  \vec F = 
  \Big(
  F^{(i)} \in 
    \Omega^{\mathrm{deg}}\big(
    (-\epsilon, + \epsilon) \times X^d
  \big)
  \Big)_{i \in I}\,
  ,\quad 
  \mathrm{d} \vec F = \vec P(\vec F\,)
  ,\;\;
  \star \vec F = \vec \mu(\vec F\,)
\end{equation}
for some positive time interval $\epsilon > 0$, and two such representatives are identified if they agree on the time interval where they are both defined. 

It turns out that the above duality symmetric formulation of the equations of motion lends itself to the formulation of their initial value problem:

\begin{proposition}[{\cite[Thm. 2.2]{SS24-Phase}}]
  \label[proposition]{IdentifyingCauchyData}
  Pullback of duality-symmetric flux densities \cref{ShorthandOfTheHigherMaxwellEquations} to the Cauchy surface \cref{InclusionOfCauchySurface} identifies the local solution space \cref{LocalSolutionSpace} with the space of $I$-indexed flux densities on $X^d$ that satisfy the \emph{higher Gau{\ss} law}:
  \begin{equation}
    \label{LocalSolutionsViaInitialValues}
    \begin{tikzcd}[row sep=-2pt, column sep=3pt]
      \mathrm{LocSol}_{t_0}
      \ar[
        rr,
        "{ \iota_{t_0}^\ast }",
        "{ \sim }"{swap}
      ]
      &&
      \Big\{
      \vec B 
      \defneq
      \big(
      B^{(i)}
      \in
      \Omega^{\mathrm{deg}(i)}_{\mathrm{dR}}
      (X^d)
      \big)_{i \in I}
      \;\big\vert\;
      \grayoverbrace{
      \mathrm{d} \vec B
      =
      \vec P(\vec B)
      }{
        \smash{
        \scalebox{.7}{\rm higher Gau{\ss} law}
        }
      }
      \Big\}
      \\
      \vec F
      &\longmapsto&
      \vec B :=  \iota_{t_0}^\ast \vec F
      \mathrlap{\,.}%
      \hspace{5.54cm}
    \end{tikzcd}
  \end{equation}
\end{proposition}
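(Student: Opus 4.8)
The plan is to make the product (``ADM''-type) structure \cref{GloballyHyperbolicSpacetime} explicit and thereby turn the equations of motion into a well-posed evolution problem off the Cauchy slice \cref{InclusionOfCauchySurface}. Concretely, I would decompose each flux density on $(-\epsilon,+\epsilon)\times X^d$ into its spatial and temporal parts, $F^{(i)} = B^{(i)}(t) + \mathrm{d}t\wedge E^{(i)}(t)$, where $B^{(i)}(t),E^{(i)}(t)\in\Omega^\bullet_{\mathrm{dR}}(X^d)$ is a $t$-parametrized family of forms on $X^d$ involving no $\mathrm{d}t$. Under this splitting $\mathrm{d}F^{(i)} = \mathrm{d}_X B^{(i)} + \mathrm{d}t\wedge\big(\partial_t B^{(i)} - \mathrm{d}_X E^{(i)}\big)$, while for the product Lorentzian metric the Hodge star interchanges, up to degree-dependent signs, the spatial Hodge duals of the two components: $\star F^{(i)} = \pm\,\mathrm{d}t\wedge\star_{\!X}B^{(i)} \pm \star_{\!X}E^{(i)}$. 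Since each $P^{(i)}$ is a wedge-polynomial and each $\mu^{(i)}$ is constant-coefficient linear, each equation in \cref{TheHigherMaxwellEquations} accordingly splits into a component purely on $X^d$ and a component proportional to $\mathrm{d}t$.

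That the map in \cref{LocalSolutionsViaInitialValues} is well defined with image in the asserted set is then immediate: $\iota_{t_0}^\ast$ is a morphism of differential graded algebras, hence commutes with $\mathrm{d}$ and with the $P^{(i)}$, so $\mathrm{d}_X\vec B = \iota_{t_0}^\ast(\mathrm{d}\vec F) = \iota_{t_0}^\ast\vec P(\vec F) = \vec P(\vec B)$, and this is independent of the chosen representative. For the inverse I would argue as follows. The spatial part of the Hodge-duality constraint reads $\pm\star_{\!X}E^{(i)} = \mu^{(i)}(\vec B)$, so by invertibility of $\vec\mu$ and of the Riemannian Hodge star on $X^d$ the temporal components are fixed algebraically by the spatial ones, $E^{(i)}(t) = \pm\star_{\!X}^{-1}\mu^{(i)}\!\big(\vec B(t)\big)$; the $\mathrm{d}t$-part of the same constraint is then not an independent condition but equivalent to $\star$ squaring to $\pm\mathrm{id}$ degreewise. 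Substituting this expression for $\vec E$ into the $\mathrm{d}t$-component of the Bianchi identities $\mathrm{d}F^{(i)}=P^{(i)}(\vec F)$ turns them into a first-order evolution system $\partial_t\vec B = \mathcal{D}(\vec B)$ with principal part the first-order operator $\pm\,\mathrm{d}_X\star_{\!X}^{-1}\vec\mu$ and lower-order part polynomial in $\vec B$ --- a symmetric-hyperbolic system, reducing for $I$ the pair $(F_2,\star F_2)$ to the familiar first-order form of the vacuum Maxwell equations. By standard well-posedness it admits, for any Cauchy datum $\vec B$ on $X^d$, a solution $\vec B(t)$ on some $(-\epsilon,+\epsilon)\times X^d$ with $\vec B(t_0)=\vec B$, whose germ at $t_0$ is unique. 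Setting $F^{(i)} := B^{(i)}(t)+\mathrm{d}t\wedge E^{(i)}(t)$ with $\vec E$ as above produces a tuple obeying the Hodge-duality constraint and the $\mathrm{d}t$-part of the Bianchi identities by construction, and the uniqueness statement makes $\iota_{t_0}^\ast$ injective; it remains only to verify the spatial part of the Bianchi identities.

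That last check --- propagation of the higher Gau{\ss} law away from the slice --- is the one genuinely non-formal step, and where I expect the most care (largely Koszul-sign bookkeeping). Writing $\mathcal{C}^{(i)}(t):=\mathrm{d}_X B^{(i)}(t)-P^{(i)}\big(\vec B(t)\big)$ for the constraint violation, I would differentiate in $t$, substitute the evolution equation for $\partial_t\vec B$, and simplify using $\mathrm{d}_X^2=0$; the potentially obstructing cross-terms cancel precisely because any legitimate system \cref{TheHigherMaxwellEquations} satisfies the integrability identity $\sum_k(\partial_{F^{(k)}}P^{(i)})\wedge P^{(k)}=0$ --- equivalently, that the generators $F^{(i)}$ subject to $\mathrm{d}F^{(i)}=P^{(i)}$ present a semifree differential graded algebra (i.e.\ an $L_\infty$-algebra), as is manifest in each case of \cref{HigherGaugeFields}. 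What is left is then a homogeneous linear ODE $\partial_t\vec{\mathcal{C}}=\mathcal{L}_{\vec B(t)}\vec{\mathcal{C}}$ for the $t$-family $\vec{\mathcal{C}}$ of forms on $X^d$; since $\vec{\mathcal{C}}(t_0)=\mathrm{d}_X\vec B-\vec P(\vec B)=0$ by the assumed higher Gau{\ss} law, we get $\vec{\mathcal{C}}\equiv 0$, so the constructed $\vec F$ solves \emph{all} of \cref{TheHigherMaxwellEquations} with $\iota_{t_0}^\ast\vec F=\vec B$, giving surjectivity. Since the same computation exhibits every germ of solution as arising from its own Cauchy data this way, $\iota_{t_0}^\ast$ is the asserted bijection.
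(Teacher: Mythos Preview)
Your proposal is correct and lays out exactly the kind of ADM-style decomposition and constraint-propagation argument that constitutes the ``direct computation'' which the paper itself defers to the appendix of the cited reference. The one point that would deserve slightly more care in full generality is the blanket assertion of symmetric hyperbolicity of the resulting first-order evolution system, but for the duality-symmetric systems at hand (with $\vec\mu$ invertible) this is indeed standard.
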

\begin{proof}
  This follows by direct computation spelled out in \cite[\S A]{SS24-Phase}.
\end{proof}

We next reformulate the local solution set \cref{LocalSolutionsViaInitialValues} in a neat way via $L_\infty$-algebras.
To that end, first we recall how finite-type $L_\infty$-algebras are dual to dgc-algebras whose underlying graded algebra is free.

\subsubsection{CE-Algebras of Lie algebras}
\label{CEAlgebrasOfLieAlgebras}

First, for $(\mathfrak{g}, [-,-])$ a finite-dimensional Lie algebra, its \emph{Chevalley-Eilenberg algebra} $\mathrm{CE}(\mathfrak{g})$ is the differential graded-commutative (dgc) algebra, which is the Grassmann algebra $\wedge^\bullet \mathfrak{g}^\ast$ on the dual vector space $\mathfrak{g}^\ast$ equipped with the differential which on generators is the dual of the Lie bracket. So if $(t_i)_{i = 1}^{\mathrm{dim}(\mathfrak{g})}$ is a linear basis for $\mathfrak{g}$ with structure constants $[t_i, t_j] = f^k_{ij} t_k$, and with $(t_1^k)_{i = 1}^{\mathrm{dim}(\mathfrak{g})}$ denoting the dual basis of $\wedge^1 (\mathfrak{g}^\ast)$ (in degree 1), then
\begin{equation}
  \label{CEAlgebraOfALieAlgebra}
  \mathrm{CE}(\mathfrak{g})
  \:=
  \Big(
  \wedge^\bullet \mathfrak{g}^\ast
   ,\;
   \mathrm{d} 
     : 
     t_1^k 
     \mapsto -  
     \tfrac{1}{2} f^k_{ij} 
     \, t_1^i \,t_1^j 
  \Big)
\end{equation}
is a dgc-algebra. In fact, the condition that $\mathrm{d}^2 = 0$ in $\mathrm{CE}(\mathfrak{g})$ is equivalent to the Jacobi identity on $\mathfrak{g}$.
Moreover, the passage $\mathfrak{g} \mapsto \mathrm{CE}(\mathfrak{g})$ is \emph{faithful}: Lie algebra homomorphisms $\begin{tikzcd}[sep=small] \mathfrak{g} \ar[r, "{\phi}"] & \mathfrak{h}\end{tikzcd}$ are in natural bijection with dgc-algebra homomorphisms $\begin{tikzcd}[sep=small] \mathrm{CE}(\mathfrak{g}) \ar[r, <-, "{ \phi^\ast }"{pos=.65}] & \mathrm{CE}(\mathfrak{h})\end{tikzcd}$.

\subsubsection{$L_\infty$-Algebras via CE-Algebras}
\footnote{
  For general introduction to $L_\infty$-algebras cf. \parencites{LadaStacheff1993}{Reinhold2019l}{KraftSchnitzer2024}.
  For our discussion of finite-type $L_\infty$-algebras via their CE-algebras (``FDAs'') see \parencites[Def. 13]{SatiSchreiberStasheff2009}[\S 3]{FSS19-RationalM}[\S 4]{FSS23-Char}{CastellaniDAuria2025}.
}

After this passage, regarding Lie algebras as duals of certain dgc-algebras, it is straightforward to generalize.
For $\mathfrak{g}$ an $\mathbb{N}$-graded vector space of finite type (finite dimensional in each degree) and with $\mathfrak{g}^\vee$ denoting its degreewise dual, consider a differential $\mathrm{d}$ on its graded Grassmann algebra $\wedge^\bullet \mathfrak{g}^{\vee}$. Dually, this encodes a tower of higher graded skew-symmetric brackets on $\mathfrak{g}$:
\begin{equation}
  \begin{tikzcd}[
    column sep=70pt
  ]
    \wedge^1 \mathfrak{g}^{\vee}
    \ar[
      rr,
      "{
        \mathrm{d}
          _{\vert 
            \scalebox{.6}{$\wedge^1 \mathfrak{g}^\vee$}}
        \,=\,
        [-,-]^\ast
        \,+\,
        [-,-,-]^\ast
        \,+\,
        \cdots
      }"
    ]
    &&
    \wedge^2 \mathfrak{g}^\vee
    \oplus
    \wedge^3 \mathfrak{g}^\vee
    \oplus 
    \cdots
    \subset
    \wedge^\bullet \mathfrak{g}^\vee
    \mathrlap{\,,}
  \end{tikzcd}
\end{equation}
on which the condition $\mathrm{d}^2 = 0$ is equivalently a system of higher Jacobi identities.
These structures 
\begin{equation}
  \label{LInfinityAlgebras}
  \big(\mathfrak{g}, [-,-], [-,-,-], \cdots\big)
  \;\;\; \mbox{s.t.}\;\;\;
  \left\{
  \def\arraystretch{1.1}
  \begin{array}{l}
    \mathrm{d} \in 
    \mathrm{Der}^1(\wedge^\bullet \mathfrak{g}^\vee)
    \;
    \mbox{with}
    \\
    \mathrm{d}_{\scalebox{.6}{$\vert \wedge^1 \mathfrak{g}^{\vee}$}}
    \,=\,
    [-,-]^\ast + [-,-,-]^\ast + \cdots
    \\
    \mbox{satisfies}\;
    \mathrm{d}^2 = 0
  \end{array}
    \right.
\end{equation}
are the (real, connected and finite-type) \emph{$L_\infty$-algebras}. 

If we denote by $\mathbb{R}_{\mathrm{d}}[I]$ the \emph{free dgc algebra} on a set $I$ of graded generators, then we may conveniently present the CE-algebras of such $L_\infty$-algebras as quotients by differential ideals, as follows:
\begin{equation}
  \label{CEAlgebraAsQuotientOfFreeDGCA}
  \mathrm{CE}(\mathfrak{g})
  \simeq
  \mathbb{R}_{\mathrm{d}}
  \Big[
    \big\{
      b^{(i)}_{\mathrm{deg}(i)}
    \big\}_{i \in I}
  \Big]
  \Big/
  \Big(
    \mathrm{d}b^{(i)} = P^{(i)}\big(\vec b\,\big)
  \Big)_{i \in I}
  \mathrlap{\,,}
\end{equation}
for some (graded symmetric) polynomials $P^{(i)}$ in the set $I$ of graded variables.
(You are surely seeing now where this is headed!)

For example, when the Lie algebra $\mathfrak{g}$ in \cref{CEAlgebraOfALieAlgebra} carries an $\mathrm{ad}$-invariant metric $g$, then $\mu := g(-,[-,-])$ is a cocycle, $\mathrm{d} \big(\mu_{i j k } t^i t^j t^k\big) = 0$, so that we obtain a 2-term $L_\infty$-algebra (a \emph{Lie 2-algebra}) $\mathfrak{string}_{\mathfrak{g}}$, characterized by
\begin{equation}
  \label{CEOfStringLie2Algebra}
  \mathrm{CE}\big(
    \mathfrak{string}_{\mathfrak{g}}
  \big)
  \simeq
  \mathbb{R}_{\mathrm{d}}\left[
    \begin{aligned}
      &(t^i_1)_{i = 1}^{\mathrm{dim}(\mathfrak{g})}
      \\
      &\;\;\;b_2 
    \end{aligned}
  \right]
  \Big/
  \left(
  \begin{aligned}
    \mathrm{d} \, t_1^k & = 
    - \tfrac{1}{2} f^k_{ij} \,t^i_1\, t^j_1
    \\[-2pt]
    \mathrm{d} \, b_2 &= \mu_{i j k} 
    \,
    t_1^i \, t_1^j \, t_1^k
  \end{aligned}
  \right)
  \mathrlap{.}
\end{equation}
The string Lie 2-algebra \cref{CEOfStringLie2Algebra} is an important example of $L_\infty$-algebras,
which serves as \emph{coefficients of gauge potentials} of higher gauge fields. However, we are now actually interested in a rather different class of examples of $L_\infty$-algebras, namely those that serve as \emph{coefficients of flux densities} of higher gauge fields.

\subsubsection{Whitehead bracket $L_\infty$-Algebras}
\label{OnWhitheadBracketLInfinityAlgebras}
\footnote{
  For detailed surveys of $\mathbb{R}$-rational homotopy theory, Sullivan models and their dual Whitehead $L_\infty$-algebras see \cite[\S 5 \& Prop. 5.11, Rem. 5.4]{FSS23-Char}. For more examples, see \cite[p. 16]{SS25-Flux}.
}

For our purpose, the key source of (connected, finite-type) \emph{nilpotent $L_\infty$-algebras} is the following:

\begin{proposition}[Whitehead $L_\infty$-algebras and minimal Sullivan models, {cf. \cite[Prop. 5.11]{FSS23-Char}}]
\label{WhiteheadLInfinityAlgebras}
Given $\mathcal{A}$ a connected topological space with abelian fundamental group and finite-dimensional rational cohomology in each degree, then there exists an essentially unique $L_\infty$-algebra $\mathfrak{l}\mathcal{A}$ such that:
\begin{enumerate}
  \item its underlying graded vector space is that of the $\mathbb{R}$-rationalized homotopy groups of the loop space of $\mathcal{A}$:
  \begin{equation}
    (\mathfrak{l}\mathcal{A})_\bullet 
    \simeq 
    \pi_\bullet\big( \Omega \mathcal{A} \big)
    \otimes_{\mathbb{Z}}
    \mathbb{R}
    \mathrlap{\,;}
  \end{equation}

  \item
  its brackets are such that the cochain cohomology of its CE-algebra is the $\mathbb{R}$-cohomology of $\mathcal{A}$:
  \begin{equation}
    H^\bullet\big(
      \mathrm{CE}(
        \mathfrak{l}\mathcal{A}
      )
    \big)
    \simeq
    H^\bullet\big(
      \mathcal{A}
      ,\,
      \mathbb{R}
    \big)
    \mathrlap{\,.}
  \end{equation}
\end{enumerate}
\end{proposition}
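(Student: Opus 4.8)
\medskip

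The statement is a reformulation, in $L_\infty$-algebraic language, of the fundamental theorem of $\mathbb{R}$-rational homotopy theory, and the plan is to produce $\mathfrak{l}\mathcal{A}$ as the linear dual of the minimal Sullivan model of $\mathcal{A}$ and then read off properties (i) and (ii). First I would replace $\mathcal{A}$ by its de Rham complex with real coefficients, $\Omega^\bullet_{\mathrm{dR}}(\mathcal{A})$ (smooth differential forms if $\mathcal{A}$ is a manifold, Sullivan's piecewise-polynomial forms in general): a connective graded-commutative dg-algebra over $\mathbb{R}$ whose cochain cohomology is $H^\bullet(\mathcal{A};\mathbb{R})$, finite-dimensional in each degree by hypothesis. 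By Sullivan's minimal model theorem this cdga receives a quasi-isomorphism $\rho \colon (\wedge^\bullet V, \mathrm{d}_V) \xrightarrow{\sim} \Omega^\bullet_{\mathrm{dR}}(\mathcal{A})$ from a \emph{minimal Sullivan algebra}: the free graded-commutative algebra on an $\mathbb{N}$-graded vector space $V$ concentrated in degrees $\geq 1$, with a degree-$1$ derivation $\mathrm{d}_V$ that is decomposable, $\mathrm{d}_V V \subseteq \wedge^{\geq 2} V$. Existence is by the standard inductive attachment of polynomial generators (realizing cohomology in low degrees, then killing spurious classes), and minimality pins the model down up to isomorphism by the usual lifting lemma. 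The hypotheses --- $\mathcal{A}$ connected, $\pi_1$ abelian, rational cohomology of finite type --- are exactly what make this construction apply and force each $V^n$ to be finite-dimensional.

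Next I would dualize with a degree shift: set $(\mathfrak{l}\mathcal{A})_n := (V^{n+1})^\vee$, a finite-type $\mathbb{N}$-graded vector space. Since $\mathrm{d}_V$ is a derivation it is determined by its restriction $\mathrm{d}_V|_V \colon V \to \wedge^{\geq 2} V$; transposing this restriction component by component produces a tower of graded-antisymmetric brackets $[-,-], [-,-,-], \ldots$ on $\mathfrak{l}\mathcal{A}$ (no unary bracket, by minimality), and the identity $\mathrm{d}_V^2 = 0$ is equivalent to the higher Jacobi identities of \cref{LInfinityAlgebras}. Hence $\mathfrak{l}\mathcal{A}$ is a connected, finite-type, nilpotent $L_\infty$-algebra with $\mathrm{CE}(\mathfrak{l}\mathcal{A}) = (\wedge^\bullet V, \mathrm{d}_V)$ by construction. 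Property (ii) is then immediate from $\rho$ being a quasi-isomorphism: $H^\bullet(\mathrm{CE}(\mathfrak{l}\mathcal{A})) = H^\bullet(\wedge^\bullet V, \mathrm{d}_V) \cong H^\bullet(\Omega^\bullet_{\mathrm{dR}}(\mathcal{A})) \cong H^\bullet(\mathcal{A};\mathbb{R})$.

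For property (i) I would invoke the homotopy-theoretic half of Sullivan's theorem: for a space with abelian fundamental group and finite-type rational cohomology, the generating space of the minimal model is naturally dual to rational homotopy, $V^n \cong \mathrm{Hom}_{\mathbb{Z}}(\pi_n \mathcal{A}, \mathbb{R})$, so that, using finite type to identify the double dual, $(\mathfrak{l}\mathcal{A})_n = (V^{n+1})^\vee \cong \pi_{n+1}(\mathcal{A}) \otimes_{\mathbb{Z}} \mathbb{R} \cong \pi_n(\Omega \mathcal{A}) \otimes_{\mathbb{Z}} \mathbb{R}$. Finally, ``essential uniqueness'' --- uniqueness of $\mathfrak{l}\mathcal{A}$ up to $L_\infty$-isomorphism --- follows because an $L_\infty$-algebra is recovered from its CE-algebra (degree-shifted dual of the generators, transposed differential), as recalled in \cref{CEAlgebrasOfLieAlgebras}, so that uniqueness up to isomorphism of the minimal Sullivan model transfers to $\mathfrak{l}\mathcal{A}$.

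The formal steps --- dualization, translating $\mathrm{d}_V^2 = 0$ into Jacobi identities, the cohomology comparison --- are routine once the minimal model is in hand. I expect the real obstacle to be the identification $V^\vee \simeq \pi_\bullet(\mathcal{A}) \otimes \mathbb{R}$ underlying property (i): proving it requires the compatibility of minimal Sullivan models with fibrations together with the nilpotence/abelian-$\pi_1$ hypothesis --- one bootstraps along the Postnikov tower of $\mathcal{A}$, using that $K(\mathbb{Z},n)$ has minimal model $(\wedge\langle x_n \rangle, 0)$ --- and it is here, rather than in the algebra, that the hypotheses are genuinely needed. (Alternatively one could bypass Sullivan and build $\mathfrak{l}\mathcal{A}$ directly from Quillen's Lie model, but matching it to the ``minimal Sullivan model'' phrasing of the statement would then cost an extra comparison.)
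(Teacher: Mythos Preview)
Your proposal is correct and is precisely the standard argument: build the minimal Sullivan model, dualize to get the $L_\infty$-algebra, read off (ii) from the defining quasi-isomorphism and (i) from Sullivan's identification of the generating space with the dual of rational homotopy. Note, however, that the paper does not give its own proof of this proposition --- it is stated as a recalled fact with a ``cf.'' citation to \cite[Prop.~5.11]{FSS23-Char} and then illustrated by examples --- so there is nothing in the paper to compare against beyond confirming that your outline matches the intended content of that reference.
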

Some jargon:
\begin{itemize}
\item 
the brackets of $\mathfrak{l}\mathcal{A}$ are the \emph{higher Whitehead brackets} of $\mathcal{A}$ over $\mathbb{R}$, 
\item the CE-algebra $\mathrm{CE}(\mathfrak{l}\mathcal{A})$ is known as the \emph{minimal Sullivan model} of $\mathcal{A}$.
\end{itemize}

For example:
\begin{enumerate}
\item
An Eilenberg-MacLane space
$\mathcal{A} = K(n,\mathbb{Z}) = B^n \mathbb{Z}$ is characterized by the fact that its homotopy is concentrated on $\pi_{n}(B^n \mathbb{Z}) = \mathbb{Z}$, whence $\mathrm{CE}(\mathfrak{l}B^n \mathbb{Z})$ has a single generator $\omega_{n}$ spanning $\pi_n \otimes \mathbb{R} \simeq \mathbb{R}$, which therefore must necessarily be closed, and hence:
\begin{equation}
  \label{CEAlgebraOfEMSpace}
  \mathrm{CE}(\mathfrak{l}B^n \mathbb{Z})
  \simeq
  \mathbb{R}_{\mathrm{d}}
  \big[
    \omega_{n}
  \big]
  \big/
  \big(
    \mathrm{d}\, \omega_n = 0
  \big)
  \mathrlap{\,.}
\end{equation}
The same result is obtained for $K(n, \mathbb{Q}) \simeq B^n \mathbb{Q}$: $\mathfrak{l}(-)$ and $\mathrm{CE}(\mathfrak{l}(-))$ retain (only) the \emph{rational homotopy type}.

\item 
  For a product of EM-spaces, $\mathcal{A} = B^{n_1} \mathbb{Z} \times B^{n_2} \mathbb{Z}$, their homotopy groups are the products of the factors, and hence:
\begin{equation}
  \label{CEAlgebraOfProductOfEMSpaces}
  \mathrm{CE}\big(
    \mathfrak{l}(B^{n_1} \mathbb{Z}
    \times
    B^{n_2} \mathbb{Z}
    )
  \big)
  \simeq
  \mathbb{R}_{\mathrm{d}}
  \left[
    \begin{aligned}
      \omega_{n_1}
      \\
      \omega'_{n_2}
    \end{aligned}
  \right]
  \big/
  \left(
    \begin{aligned}
      \mathrm{d}\, \omega_{n_1} & = 0
      \\
      \mathrm{d}\, \omega'_{n_2} & = 0
    \end{aligned}
  \right)
  \mathrlap{.}
\end{equation}

\item
The classifying space of the stable unitary group
$\mathcal{A} = \widetilde{\mathrm{KU}} := \bigcup_{n \in \mathbb{N}} B \mathrm{U}(n)$ has homotopy groups concentrated on $\pi_{2\bullet}(B \mathrm{U}) \simeq \mathbb{Z}$ and hence
\begin{equation}
  \label{CEAlgebraOfBU}
  \mathrm{CE}\big(
    \mathfrak{l}B \mathrm{U}
  \big)
  \simeq
  \mathbb{R}_{\mathrm{d}}
  \big[
    \omega_{2\bullet}
  \big]
  \big/
  \big(
    \mathrm{d} \, \omega_{2 \bullet}  = 0
  \big)\,.
\end{equation}

\end{enumerate}

Moreover, by the \emph{Serre finiteness theorem} we have:
\begin{itemize}
  \item[{\bf (iv)}] For $\mathcal{A} = S^{2k+1}$ an odd-dimensional sphere, the only non-torsion homotopy group is $\pi_{2k+1}(S^{2k+1}) \simeq \mathbb{Z}$ and also the $\mathbb{R}$-cohomology is concentrated on $H^{2k+1}(S^{2k+1}; \mathbb{R}) \simeq \mathbb{R}$. Therefore $\mathrm{CE}(\mathfrak{l}S^{2k+1})$ has a single generator $\omega_{2k+1}$ spanning $\pi_{2k+1} \otimes \mathbb{R} \simeq \mathbb{R}$, which must be closed to also span $H^{2k+1}$. Being of odd degree, it generates no further cohomology under wedge product, and hence:
  \begin{equation}
    \mathrm{CE}\big(
      \mathfrak{l}S^{2k+1}
    \big)
    \simeq
    \mathbb{R}_{\mathrm{d}}
    \big[
    \omega_{2k + 1}
    \big]
    \big/
    \big(
      \mathrm{d} \, \omega_{2k + 1} = 0
    \big)
    \mathrlap{\,,}
  \end{equation}
  just as for $B^{2k+1} \mathbb{Z}$ \cref{CEAlgebraOfEMSpace}.
  
  \item[{\bf (v)}] An even dimensional sphere $\mathcal{A} = S^{2k}$ however has non-torsion homotopy concentrated not just on $\pi_{2k}(S^{2k}) \simeq \mathbb{Z}$ but also on $\pi_{4k-1}(S^{2k}) \simeq \mathbb{Z}$, while the real cohomology is still concentrated on $H^{2k}(S^{2k}; \mathbb{R}) \simeq \mathbb{R}$. Therefore $\mathrm{CE}(\mathfrak{l}S^{2k})$ has two generators, $\omega_{2k}$ and $\omega_{4k-1}$ spanning $\pi_{2k} \otimes \mathbb{R} \simeq \mathbb{R}$ and $\pi_{4k-1} \otimes \mathbb{R} \simeq \mathbb{R}$, of which $\omega_{2k}$ still needs to be closed in order to span $H^{2k}$. But now its square $\omega_{2k} \, \omega_{2k}$ does not vanish, while still being closed. In order to remove this contribution from the cochain cohomology of the CE-algebra, the other generator must be a coboundary of the square, and hence:
  \begin{equation}
    \label{CElOfEvenDimensionalSphere}
    \mathrm{CE}(\mathfrak{l}S^{2k})
    \simeq
    \mathbb{R}_{\mathrm{d}}
    \left[
    \begin{aligned}
      \omega_{2k}\;\;
      \\[-2pt]
      \omega_{4k-1}
    \end{aligned}
    \right]
    \Big/
    \left(
    \begin{aligned}
      \mathrm{d}\, \omega_{2k} & = 0
      \\[-2pt]
      \mathrm{d}\, \omega_{4k-1} & = 
      \tfrac{1}{2} \omega_{2k} \omega_{2k}
    \end{aligned}
    \right)
  \end{equation}
  (where the prefactor of $1/2$ is just convention: one could use any nonvanishing factor here without changing the isomorphism class of the CE-algebra).

  By \cref{LInfinityAlgebras}, the corresponding $L_\infty$-algebra $\mathfrak{l}S^{2k}$ has generators $v_{2k-1}$ and $v_{4k-2}$ on which the unique nonvanishing bracket is
  \begin{equation}
    [v_{2k-1}, v_{2k-1}]
    =
    - v_{4k-2}
    \mathrlap{\,.}
  \end{equation}
  For $k = 2$ this is also known as the \emph{gauge algebra of 11D SuGra} \parencites[(2.6)]{CremmerJuliaLuPope1998}[\S 4]{Sati2010}[Ex. 2.2]{SatiVoronov2024}{nLab:CFieldGaugeAlgebra} and as such it appears below in \cref{SomeCharacteristicLInfinityAlgebras}.
\end{itemize}

\subsubsection{Closed $L_\infty$-valued Forms}
\label{ClosedLInfinityValuedForms}
\footnote{
  For more on closed $L_\infty$-algebraic forms see \cite[\S 6]{FSS23-Char}, following \cite[Prop. 26]{SatiSchreiberStasheff2009}.
}

Given an $L_\infty$-algebra 
$\mathfrak{g}$ \cref{LInfinityAlgebras}, we say that the (\emph{flat} or) \emph{closed $\mathfrak{g}$-valued differential forms} on a manifold $X$ are the dg-algebra homomorphisms from its CE-algebra \cref{CEAlgebraAsQuotientOfFreeDGCA} to the de Rham algebra of differential forms on $X$:
\begin{equation}
  \label{ClosedLInifnityValuedForms}
  \Omega^1_{\mathrm{dR}}\big(
    X;\, \mathfrak{g}
  \big)_{\mathrm{cl}}
  :=
  \mathrm{Hom}_{\mathrm{dgAlg}}\big(
    \mathrm{CE}(\mathfrak{g})
    ,\,
    \Omega^\bullet_{\mathrm{dR}}(X)
  \big)
  \mathrlap{\,.}
\end{equation}

For example, given $\mathfrak{g}$ an ordinary Lie algebra $(\mathfrak{g}, [-,-])$ as in \cref{CEAlgebrasOfLieAlgebras},  then 
\begin{equation}
  \begin{aligned}
  \Omega^1_{\mathrm{dR}}\big(
    X;\, \mathfrak{g}
  \big)_{\mathrm{cl}}
  &=
  \left\{
  \begin{tikzcd}[sep=5pt]
    \Omega^\bullet_{\mathrm{dR}}(X)
    \ar[
      rrr,
      <-,
      "{
        A
      }"
    ]
    &[-25pt]&&
    \mathrm{CE}(\mathfrak{g})
    \\[-5pt]
    A^i
    \ar[
      rrr,
      shorten=6pt,
      <-|
    ]
    \ar[
      dd,
      |->,
      "{ \mathrm{d} }"
    ]
    &&&
    t^i_1
    \ar[
      ddd,
      |->,
      "{ \mathrm{d} }"
    ]
    \\
    \\
    \mathrm{d} A^i
    \ar[
      dr,
      equals
    ]
    \\
    &
    -\tfrac{1}{2}
    f^{i}_{jk} \, A^j \wedge A^k
    \ar[
      rr,
      <-|
    ]
    &&
    - \tfrac{1}{2} f^{i}_{jk} \, t^j \, t^k
  \end{tikzcd}
  \right\}
  \\
  & = 
  \Big\{
  A \in \Omega^1_{\mathrm{dR}}(X)
  \,\big\vert\,
  F_A \defneq 
  \mathrm{d} A + \tfrac{1}{2}[A\wedge A]
  = 0
  \Big\}
  \\
  & \simeq
  \Omega^1_{\mathrm{dR}}(X; \mathfrak{g}
  )_{\mathrm{flat}}
  \end{aligned}
\end{equation}
is the familiar set of flat $\mathfrak{g}$-valued 1-forms. 

On the other hand, consider the \emph{inner automorphism Lie 2-algebra} $\mathfrak{inn}(\mathfrak{g})$ of this Lie algebra $\mathfrak{g}$, characterized by:
\begin{equation}
  \mathrm{CE}\big(
  \mathfrak{inn}(\mathfrak{g})
  \big)
  \simeq
  \mathbb{R}_{\mathrm{d}}
  \left[
  \begin{aligned}
    (t_1^i)_{i = 1}^{\mathrm{dim}(\mathfrak{g})}
    \\
    (r_2^i)_{i = 1}^{\mathrm{dim}(\mathfrak{g})}
  \end{aligned}
  \right]
  \Big/
  \left(
  \begin{aligned}
    \mathrm{d}\, t^k_1 &= 
    -\tfrac{1}{2} f^k_{i j} \, t^i_1\, t^j_1
    -
    r^k_2
    \\
    \mathrm{d}\, r_2^k 
    & =
    -
    \;
    f^k_{i j} \, t_1^j \, r_2^k
  \end{aligned}
  \right)
  \mathrlap{.}
\end{equation}
The closed $\mathfrak{inn}(\mathfrak{g})$-valued differential forms are nominally pairs consisting of an \emph{unconstrained} $\mathfrak{g}$-valued 1-form $A$ and a 2-form $F$ which is constrained to be its curvature, and whose flatness/closure condition is the classical \emph{Bianchi identity} (which is no further constraint but a consequence):
\begin{equation}
  \begin{aligned}
  \Omega^1_{\mathrm{dR}}\big(
    X;\,
    \mathfrak{inn}(\mathfrak{g})
  \big)
  &
  \simeq
  \left\{
  \begin{aligned}
    A 
    \in \Omega^1_{\mathrm{dR}}(X; \mathfrak{g})
    \\
    F_A 
    \in \Omega^2_{\mathrm{dR}}(X; \mathfrak{g})
  \end{aligned}
  \;\middle\vert\;
  \begin{aligned}
    \mathrm{d} A & = - \tfrac{1}{2}[A\wedge A]
    -
    F_A
    \\
    \mathrm{d} F_A & = \;\; [A \wedge F_A]
  \end{aligned}
  \right\}
  \\
  & 
  \simeq
  \Omega^1_{\mathrm{dR}}(X; \mathfrak{g})
  \mathrlap{\,.}
  \end{aligned}
\end{equation}

It is this kind of constraint that we are after: Flatness of $L_\infty$-valued forms understood as \emph{Gau{\ss} laws or Bianchi identities on flux densities}. But because the word ``flatness'' invokes the situation of gauge potentials satisfying a non-generic constraint, whereas Gau{\ss} laws/Bianchi identities are generic identities specializing to closure for the case of an abelian gauge field, we will speak of \emph{closed} $L_\infty$-valued forms. It is the same mathematical concept, just a more suggestive word for it, in our context (cf. \cref{RoleOfLInfinityAlgebras}).

\begin{table}[htb]
\caption{
  \label{RoleOfLInfinityAlgebras}
  It is a common idea in higher gauge theory to consider \emph{local gauge potentials} with coefficients in $L_\infty$-algebras (cf. \parencites[\S\S 2,4]{Alfonsi2025HigherGeometry}{BorstenEtAl2025-HigherGaugeTheory}, going back to \cite{BaezSchreiber2007}). In contrast,
  {\protect\footnotemark}
  in our discussion here the $L_\infty$-algebras appear as coefficients of the \emph{flux densities} (while the gauge potentials are provided globally and need not have any local expression in terms of $L_\infty$-value forms, cf. \cref{TheHigherGaugePotentials}).
  This change of role is not as widely appreciated, though it is implicit in the old ``geometric'' approach to supergravity via ``FDAs'' (cf. \cref{GaussLawLInfinityAlgebras} and \cref{SpaceOfChoicesOfFluxQuantizationLaws}).
}
\centering
\vspace{-.3cm}
\adjustbox{rndfbox=4pt}{
\begin{tblr}{
  colspec={cc|c},
  row{1} = {bg=lightgray}
}
&
\SetCell[c=2]{c}
  $L_\infty$-algebras $\mathfrak{g}$/$\mathfrak{a}$ as coefficients 
  &
\\
For 
&
 gauge potentials & flux densities
\\
considered & elsewhere & here
\\
are 
&
$\begin{array}{c}
  \mathrm{su}(n), \mathrm{string}(n),...
\end{array}$
& nilpotent
\\
and 
  & 
\SetCell[c=2]{c}
these $\mathfrak{g}$/$\mathfrak{a}$-valued forms
\\
are & 
\begin{tabular}{c}generically\\not flat\end{tabular} 
& 
\begin{tabular}{c}
  always flat, aka closed:
  \\
  Bianchi identity/Gau{\ss} law
\end{tabular}
\end{tblr}
}
\end{table}

\footnotetext{
  For nonabelian $L_\infty$-algebras there is no tight relation between the two perspectives in\cref{RoleOfLInfinityAlgebras}. However, gauge potentials in the left column with coefficients in $\mathfrak{g}$ tend to provide representatives (``cycle data'') for differential cocycles defined (in \cref{OnTheCompletedPhaseSpace}) via the right column with coefficients in $L_\infty$-algebras of invariant polynomials on $\mathfrak{g}$ \parencites[\S 8.3]{SatiSchreiberStasheff2009}{FSSt12-DiffClasses}). The archetypical example here is the representation of classes in differential K-theory (RR-fields according to \emph{Hypothesis K}, \cref{HypothesisK}) by principal connections (Yang-Mills gauge potentials), see \cref{TheNonExampleOfYangMillsFields}. 
}

For example, we now have that the Bianchi identities on the duality-symmetric flux densities of 11D supergravity equivalently characterize closed $\mathfrak{l}S^4$-valued differential forms:
\begin{equation}
  \label{ClosedS4ValuedForms}
  \Omega^1_{\mathrm{dR}}\big(
    X;\,
    \mathfrak{l}S^4
  \big)_{\mathrm{cl}}
  =
  \left\{
  \begin{aligned}
    G_4 \in \Omega^4_{\mathrm{dR}}(X)
    \\
    G_7 \in \Omega^7_{\mathrm{dR}}(X)
  \end{aligned}
  \,\middle\vert\,
  \begin{aligned}
    \mathrm{d}\, G_4 &= 0
    \\
    \mathrm{d}\, G_7 & =
    \tfrac{1}{2} G_4 \wedge G_3
  \end{aligned}
  \right\}
  \mathrlap{.}
\end{equation}

\subsubsection{Gau{\ss} Law $L_\infty$-Algebras}
\label{GaussLawLInfinityAlgebra}
\footnote{%
  \addtocounter{footnote}{-1}%
  \refstepcounter{footnote}%
  \label{GaussLawLInfinityAlgebras}%
  This notion of \emph{Gau{\ss} law $L_\infty$-algebras} is due to \cite[Rem. 2.4]{SS24-Phase}, reviewed in \parencites[\S 3.1]{SS25-Flux}, going back to \parencites{FSS17-Sphere}[\S 2.5]{Sati2018}[\S 7]{FSS19-RationalM}. In some form this may also be recognized in the ``geometric formulation'' of supergravity theories via ``FDA''s (a misnomer for the \emph{semi-free} Chevalley-Eilenberg DGAs) due to \cite{NeemanRegge1978,DAuriaFre1982,vanNieuwenhuizen1983}, cf. \cite{CDF1991,nLab:DFRSupergravity}, reviewed in \cite[\S 6]{Fre2013} and, with dual the $L_\infty$-algebra structure made manifest, in \parencites{FSS15-WZW}{FSS18-TD}{HSS2019}{GSS25-TD}{CastellaniDAuria2025}.
}
\nopagebreak \newline
\indent The expression \cref{LocalSolutionsViaInitialValues} for the solution space of the equations of motion for higher flux densities resembles the expressions \cref{CEAlgebraAsQuotientOfFreeDGCA,ClosedLInifnityValuedForms} for closed $L_\infty$-algebra valued forms: 
the graded-symmetric polynomials $\vec P$ in each case satisfy the same kind of condition and play the same role:

\begin{proposition}
  \label[proposition]
   {SolutionSpaceAsClosedDiffForms}
  Given equations of motion for higher flux densities \cref{ShorthandOfTheHigherMaxwellEquations}, then there exists a unique $L_\infty$-algebra $\mathfrak{a}$ \cref{LInfinityAlgebras} such that the solution set of on-shell flux densities \textup{(\cref{OnShellFluxDensities})} is in natural bijection with that of closed $\mathfrak{a}$-valued forms on any Cauchy surface $X^d$:
  \begin{equation}
    \label{SolutionSpaceAsClosedForms}
    \mathrm{LocSol}
    \simeq
    \Omega^1_{\mathrm{dR}}\big(
      X^d;\, \mathfrak{a}
    \big)_{\mathrm{cl}}
    \,.
  \end{equation}
\end{proposition}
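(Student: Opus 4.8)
The plan is to read off the $L_\infty$-algebra $\mathfrak{a}$ directly from the wedge-polynomials $\vec P$, so that \cref{SolutionSpaceAsClosedForms} becomes essentially a restatement of \cref{IdentifyingCauchyData}. Indeed, \cref{IdentifyingCauchyData} already identifies $\mathrm{LocSol}$, via pullback to the Cauchy surface, with the set of $I$-tuples $\vec B = \big(B^{(i)} \in \Omega^{\mathrm{deg}(i)}_{\mathrm{dR}}(X^d)\big)_{i \in I}$ subject to the higher Gau{\ss} law $\mathrm{d}\vec B = \vec P(\vec B)$; so it suffices to produce an $L_\infty$-algebra $\mathfrak{a}$ whose closed $\mathfrak{a}$-valued forms on $X^d$ form exactly this set, naturally in $X^d$.

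\textbf{Construction.} Mimicking \cref{CEAlgebraAsQuotientOfFreeDGCA}, let $\mathfrak{a}$ be the $\mathbb{N}$-graded vector space with one generator for each flux species $i \in I$, placed in degree $\mathrm{deg}(i) - 1$ (the degree shift that turns a degree-$n$ CE-generator into a degree-$(n-1)$ element of $\mathfrak{a}$, as in the passage from $\omega_{2k}$ to $v_{2k-1}$ above). Since $I$ is finite and every $\mathrm{deg}(i) > 0$, this $\mathfrak{a}$ is connected and of finite type. Its Grassmann algebra is the free graded-commutative algebra $\mathbb{R}_{\mathrm{d}}\big[\{b^{(i)}_{\mathrm{deg}(i)}\}_{i \in I}\big]$, and I would equip it with the derivation of degree $+1$ determined on generators by $\mathrm{d}\,b^{(i)} := P^{(i)}(\vec b)$ and extended by the Leibniz rule — legitimate because each $P^{(i)}$ is, by hypothesis, a wedge-polynomial of the correct degree $\mathrm{deg}(i)+1$ in the graded variables $\vec b$.

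\textbf{The main point.} The only step with real content is checking $\mathrm{d}^2 = 0$ on this free algebra — equivalently, that $(\mathfrak{a},\vec P^{\ast})$ is a genuine $L_\infty$-algebra in the sense of \cref{LInfinityAlgebras}, i.e.\ that the $\vec P$ satisfy the higher Jacobi identities $\sum_{j \in I} (\partial P^{(i)}/\partial b^{(j)}) \wedge P^{(j)}(\vec b) = 0$. I expect this to be the crux. Conceptually it follows from the integrability underlying \cref{IdentifyingCauchyData}: differentiating the equation of motion $\mathrm{d}F^{(i)} = P^{(i)}(\vec F)$ gives $0 = \sum_j (\partial P^{(i)}/\partial F^{(j)}) \wedge \mathrm{d}F^{(j)} = \sum_j (\partial P^{(i)}/\partial F^{(j)}) \wedge P^{(j)}(\vec F)$ on every on-shell $\vec F$, and, there being enough on-shell flux densities to detect a nonzero wedge-polynomial relation, the vanishing must already hold formally in $\mathbb{R}_{\mathrm{d}}[\vec b]$ — which is exactly what the quotient presentation in \cref{CEAlgebraAsQuotientOfFreeDGCA} presupposes. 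In practice it is checked directly in each case: it is vacuous when $\vec P$ is linear (self-dual and RR fields) and trivially satisfied for 5D and 11D Maxwell–Chern–Simons, where the quadratic $P$ involves only a variable whose own $P$ vanishes. Alternatively one simply reads this consistency of $\vec P$ into the meaning of ``equations of motion of the form \cref{ShorthandOfTheHigherMaxwellEquations}''.

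\textbf{Conclusion, uniqueness, naturality.} With $\mathrm{d}^2 = 0$ in hand we have $\mathrm{CE}(\mathfrak{a}) \simeq \mathbb{R}_{\mathrm{d}}\big[\{b^{(i)}\}_{i\in I}\big]\big/\big(\mathrm{d}b^{(i)} = P^{(i)}(\vec b)\big)_{i \in I}$, and then by the very definition \cref{ClosedLInifnityValuedForms} a closed $\mathfrak{a}$-valued form on $X^d$ is a dg-algebra homomorphism $\mathrm{CE}(\mathfrak{a}) \to \Omega^\bullet_{\mathrm{dR}}(X^d)$. Since no relations are imposed on the generators $b^{(i)}$ themselves, such a homomorphism is freely specified by the images $B^{(i)} \in \Omega^{\mathrm{deg}(i)}_{\mathrm{dR}}(X^d)$, subject only to compatibility with the differentials, which is precisely $\mathrm{d}B^{(i)} = P^{(i)}(\vec B)$. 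Hence $\Omega^1_{\mathrm{dR}}(X^d;\mathfrak{a})_{\mathrm{cl}}$ is, on the nose, the higher-Gau{\ss}-law solution set, and composing with \cref{IdentifyingCauchyData} yields \cref{SolutionSpaceAsClosedForms}, manifestly natural under pullback along smooth maps (hence independent of the chosen Cauchy surface, all of which are diffeomorphic to $X^d$ by global hyperbolicity). For uniqueness: since $\mathrm{CE}(-)$ is faithful on connected finite-type $L_\infty$-algebras (as recalled for Lie algebras in \cref{CEAlgebrasOfLieAlgebras}) and any $\mathfrak{a}$ with the asserted property must have $\mathrm{CE}(\mathfrak{a})$ corepresenting the Gau{\ss}-law-solution functor on de Rham algebras — which forces the above presentation up to isomorphism — the $L_\infty$-algebra $\mathfrak{a}$ is determined up to isomorphism.
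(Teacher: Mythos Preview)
Your proposal is correct and follows precisely the approach that the paper sketches: the paper does not give a formal proof environment for this proposition, but simply observes in the preceding sentence that the higher Gau{\ss} law \cref{LocalSolutionsViaInitialValues} and the CE-algebra presentation \cref{CEAlgebraAsQuotientOfFreeDGCA} involve graded-symmetric polynomials $\vec P$ satisfying ``the same kind of condition'' and playing ``the same role'', whence one reads off $\mathfrak{a}$ directly from $\vec P$. You have spelled this out carefully --- including the $\mathrm{d}^2 = 0$ check and the uniqueness argument --- which is more detail than the paper itself provides.
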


\begin{figure}[htb]
\label{TimeEvolutionAsConcordance}
\caption{
  \label{CauchyEvolutionOfGaussLawData}
  The higher Gau{\ss} law condition on higher electric/magnetic flux densities on any Cauchy surface, and with it the space of on-shell spacetime field histories of higher flux densities (\cref{IdentifyingCauchyData}), are equivalently the closure conditions on differential forms valued in a characteristic $L_\infty$-algebra $\mathfrak{a}$ (\cref{ClosedLInfinityValuedForms}). Under time evolution, this closed $\mathfrak{a}$-valued differential form data undergoes a \emph{concordance}, which preserves exactly the total flux image in $\mathfrak{a}$-valued nonabelian de Rham cohomology (\cref{TotalFluxInNonaDRCohomology}).
}
\centering
\adjustbox{
  rndfbox=4pt
}{
\includegraphics[width=.9\textwidth]{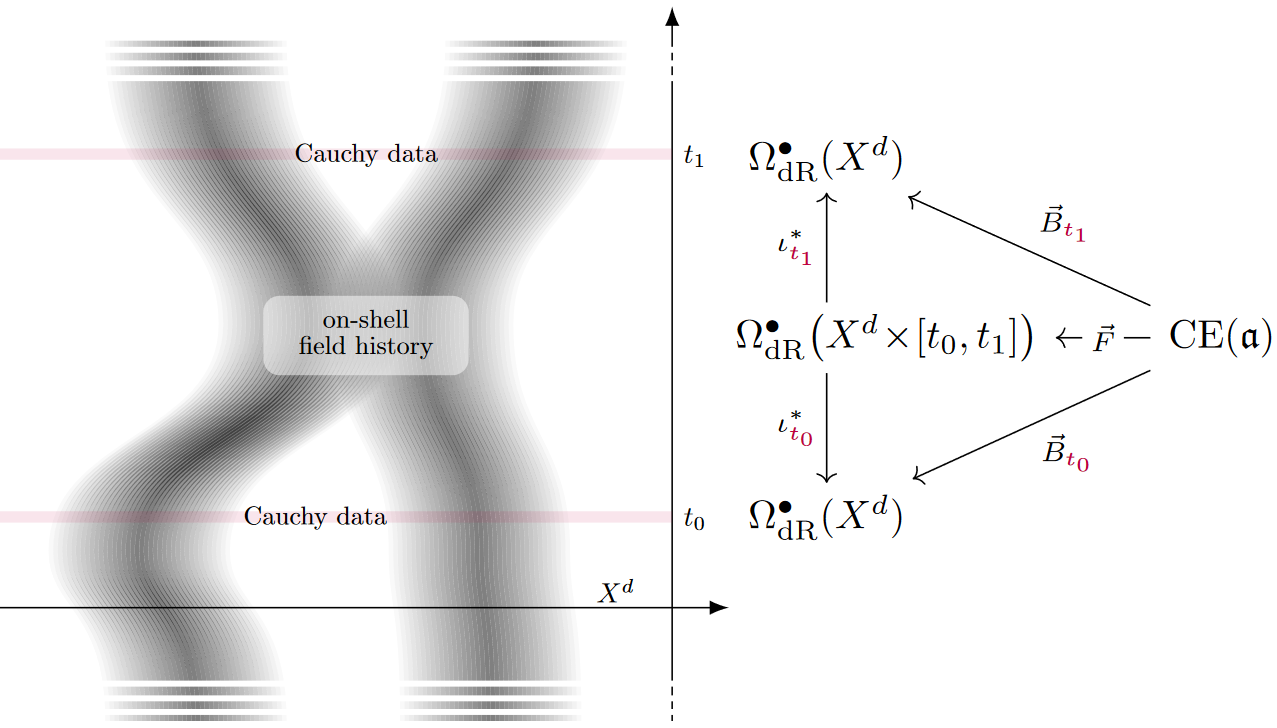}
}
\end{figure}

\begin{example}
\label[example]{SomeCharacteristicLInfinityAlgebras}
Direct comparison shows that the characteristic $L_\infty$-algebra $\mathfrak{a}$ is:

\begin{enumerate}[
  ref={\bf (\roman*)} in \cref{SomeCharacteristicLInfinityAlgebras}
]
  \item\label{CharacteristicOfMaxwellField}
  $\mathfrak{a} =\mathfrak{l} (B^{2} \mathbb{Z} \times B^2 \mathbb{Z})$
  \cref{CEAlgebraOfProductOfEMSpaces}
  for the Maxwell field in vacuum
  \cref{DualitySymmetricMaxwellEquationOfMotion},

  \item\label{CharacteristicOfSelfDualField}
  $\mathfrak{a} =\mathfrak{l} B^{2k+1} \mathbb{Z}$
  \cref{CEAlgebraOfEMSpace}
  for the self-dual field in $D=4k+2$ \cref{EoMOfSelfDualField},

  \item\label{CharacteristicOfIIARRField}
  $\mathfrak{a} = \mathfrak{l} B \mathrm{U}$
  \cref{CEAlgebraOfBU}
  for the RR-field in type IIA 10D SuGra
  \cref{EoMOfIIARRField},
  \footnote{
    Since we have been assuming connected classifying spaces here for simplicity, the degree=0 RR-flux is not included as stated, but the generalization is straightforward, cf. \cite[(91)]{HSS2019}.
  }

  \item\label{CharacteristicOf5DMCSField} 
  $\mathfrak{a} = \mathfrak{l}S^2$ \cref{CElOfEvenDimensionalSphere}
  for the 5D MCS theory in 5D SuGra 
  \cref{EoMOf5DMCS},
  
  \item\label{CharacteristicOfCField} 
  $\mathfrak{a} = \mathfrak{l}S^4$ \cref{CElOfEvenDimensionalSphere}
  for the 7D MCS theory  in 11D SuGra 
  \cref{EoMOfCFieldIn11DSugra}.
\end{enumerate}
\end{example}

This shows how the equations of motion of higher flux densities are actually ``homotopy theoretic'' in nature,  reflecting the \emph{rationalization} of potentially richer topological structure. This perspective is what allows us now to naturally progress to \emph{flux quantization} of these flux densities, by lifting them through the rationalization process, concretely by lifting them through the \emph{character map} on \emph{nonabelian cohomology}.

\subsection{Total Flux Quantization}
\label{TotalFluxQuantization}
\footnote{
  The notion of \emph{flux quantization} (often: ``\emph{charge quantization}'') goes back to \cite{Dirac1931} for the case of the ordinary magnetic flux quantized in ordinary 2-cohomology (``Dirac charge quantization''); modern accounts are \parencites{Alvarez1985}[\S 16.4e]{Frankel2011}. The idea that magnetic ``B-field'' flux can and should analogously be quantized in ordinary 3-cohomology is due to \cite{Gawedzki1988} and was popularized by \cite{FreedWitten1999}. The idea that magnetic ``C-field'' flux can and should analogously be quantized in a (shifted form of) ordinary 4-cohomology is due to \cite{Witten1997flux,Witten1997fivebrane} and motivated the definition of Whitehead-generalized (meaning: abelian generalized) differential cohomology by \cite{HopkinsSinger2005}. 
  The proposal that ``D-brane charge is in twisted K-theory'' (\emph{Hypothesis K}, cf. \cref{SomeFluxQuantizationHypotheses}) is, originally somewhat implicitly \parencites{MooreWitten2000}{nLab:DBraneChargeQuantizationInKTheory}, the further generalization to flux quantization of the RR-field. Understanding this in a general context of higher fluxes quantized in Whitehead-generalized (meaning: abelian) cohomology theories is due to \cite{Freed2000}. 
  
  But quantizing RR-flux in K-theory means, implicitly, to quantize not just the magnetic but also the electric fluxes (cf. again \cref{SomeFluxQuantizationHypotheses}), an evident fact whose significance may have remained underappreciated. 
  That in the presence of non-linear Bianchi identities the joint quantization of magnetic and electric fluxes must be in nonabelian cohomology was made fully explicit in \parencites[\S 3]{FSS22-Twistorial}{SS24-Phase}{SS25-Flux}, with the required \emph{nonabelian character map} established in \cite{FSS23-Char}, following the specific discussion for 11D SuGra fluxes in \cite{FSS20-H} (which provides a twisted nonabelian solution to the above ``shifted quantization'' problem of the C-field, by \cite[\S 3.4 Prop. 3.13]{FSS20-H}).
}

With the on-shell flux densities understood, the task is now to impose their \emph{flux quantization}. The key to progress on that matter is that, with \cref{SolutionSpaceAsClosedDiffForms}, we have identified the solution space of flux densities purely as an object of $\mathbb{R}$-\emph{rational homotopy theory}. This allows us to understand it as the image of a rationalization/character map of spaces of discrete (``quantized'') charges in nonabelian cohomology.

\subsubsection{Total flux in nonabelian dR cohomology}
\label{TotalFluxInNonaDRCohomology}
\footnote{
  The notion of total flux in nonabelian de Rham cohomology is made explicit in \cite[\S 3.1]{SS25-Flux}.
}

First, we need to grasp the notion of \emph{total flux} embodied by flux densities:

The flux densities seen on a Cauchy surface $ \begin{tikzcd}[sep=small]\{t_0\} \times X^d \ar[r, hook] & X^{1,d} \end{tikzcd}$ \cref{LocalSolutionsViaInitialValues} depend on $t_0$, but differ by the flux densities at another time $t_1$ by a \emph{concordance} of closed $\mathfrak{a}$-valued differential forms \cref{SolutionSpaceAsClosedForms}, namely by a joint extension to the interval $[t_0,t_1]$ provided by the spacetime flux density $\vec F$ (cf. \cref{CauchyEvolutionOfGaussLawData}):
\begin{equation}
  \begin{tikzcd}[
    column sep=0pt
  ]
    \{t_1\}
    \times
    X^d
    \ar[
      d,
      hook,
      "{
        \iota_{t_1}
      }"
    ]
    &[10pt]
    \vec B(t_1)
    &\in&
    \Omega^1_{\mathrm{dR}}\big(
      X^d; \mathfrak{a}
    \big)_{\mathrm{cl}}
    \\
    {[t_0, t_1]} \times X^d
    &
    \vec F
    \ar[
      d,
      |->,
      shorten=2pt,
      "{
        \iota_{t_1}^\ast
      }"
    ]
    \ar[
      u,
      |->,
      shorten=2pt,
      "{
        \iota_{t_1}^\ast
      }"{swap}
    ]
    &\in&
    \Omega^1_{\mathrm{dR}}\big(
      [t_0, t_1] \times X^d; \mathfrak{a}
    \big)_{\mathrm{cl}}
    \\
    \{t_0\}
    \times
    X^d
    \ar[
      u,
      hook',
      "{
        \iota_{t_0}
      }"{swap}
    ]
    &
    \vec B(t_0)
    &\in&
    \Omega^1_{\mathrm{dR}}\big(
      X^d; \mathfrak{a}
    \big)_{\mathrm{cl}}
    \mathrlap{\,.}
  \end{tikzcd}
\end{equation}
Hence, the \emph{total flux} $[\vec B\,]$ --- which should be that quantity which is preserved under time evolution of flux densities --- is the \emph{concordance class} of a flux density $\vec B$. The set of the concordance classes of flat $\mathfrak{a}$-valued forms is the \emph{nonabelian de Rham cohomology} $H^1_{\mathrm{dR}}(-;\mathfrak{a})$:
\begin{equation}
  \label{NonabelianDeRhamCohomology}
  [\vec B\,]
  \in
  H^1_{\mathrm{dR}}(X^d; \mathfrak{a})
  :=
  \Big(
  \Omega^1_{\mathrm{dR}}\big(
    X^d; \mathfrak{a}
  \big)_{\mathrm{cl}}
  \Big)\big/\mathrm{concordance}
  \mathrlap{\,.}
\end{equation}

For example:
\begin{itemize}
  \item
  For $\mathfrak{a} = \mathfrak{l}B^n \mathbb{Z}$ \cref{CEAlgebraOfEMSpace}, the nonabelian de Rham cohomology \cref{NonabelianDeRhamCohomology} reduces to ordinary de Rham cohomology:
  \begin{equation}
    \label{OrdinaryDeRhamAsNonabelianDeRham}
    H^1_{\mathrm{dR}}\big(
      X; \mathfrak{l}B^n \mathbb{Z}
    \big)
    \simeq
    H^n_{\mathrm{dR}}(X)
    \,.
  \end{equation}
  Therefore, the \emph{total magnetic flux}, in the above sense, encoded in the ordinary Maxwell flux density $F_2$ \cref{MaxwellEquationOfMotion} is its de Rham cohomology class, as it should be.
\end{itemize}

\subsubsection{Charge in Nonabelian Cohomology}
\label{TotalCharge}
\footnote{
  Early consideration of this elegant notion \cref{TheNonabelianCohomologySet} of generalized higher nonabelian cohomology (subsuming the component-based definitions going back to Giraud) is in \parencites[Def. 6.0.6]{Toen2002}[Def. 2.3]{Schreiber2009OWR}{NSS2015a}[Def. 6]{Lurie2014}, monographs include \parencites[\S2-3]{FSS23-Char}{SS26-Orb}.

  The relation to physical charges is further discussed in \cite[p. 24]{SS25-Flux}, going back to \cite{Schreiber2009OWR}. 
}

In the case that 
\begin{equation}
  \begin{aligned}
  X^d 
    &\simeq 
  \mathbb{R}^d \setminus \mathbb{R}^p
  \\
    &
    \simeq
  \mathbb{R}^p 
    \times 
  \mathbb{R}_{> 0}
    \times 
  S^{d-p-1}
  \end{aligned}
\end{equation}
is the complement of a ``$p$-brane'' submanifold (assumed Cartesian here, just for the moment and for simplicity)
then the total flux on $X^d$ \cref{NonabelianDeRhamCohomology} is to be thought of as ``sourced'' by and thus reflecting \emph{charge} distributed on $\mathbb{R}^p$. But generically this charge comes in indivisible units, hence is ``quantized'' in (for the moment) the na{\"i}ve sense of being \emph{discretized}, while the total flux as defined so far does \emph{a priori} vary continuously, cf. \cref{OrdinaryDeRhamAsNonabelianDeRham}.

In other words, there ought to be:
\begin{enumerate}
\item
a set of charges, to be denoted $H^1(X^d; \Omega \mathcal{A})$, depending contravariantly on the homotopy type of $X^d$, only, 
\item a map assigning to these charges  the total flux which they source, to be denoted
\begin{equation}
  \begin{tikzcd}
  H^1(X^d;\, \Omega \mathcal{A})
  \ar[
    rr,
    "{
      \mathrm{ch}^{\mathcal{A}}_{X^d}
    }"
  ]
  &&
  H^1_{\mathrm{dR}}\big(
    X^d;\,
    \mathfrak{a}
  \big)
  \end{tikzcd}
\end{equation}
and being natural in $X^d$.
\end{enumerate}

The first demand on charges in \cref{TotalCharge} may be satisfied by choosing any (connected, pointed) topological space $\mathcal{A}$ as a \emph{classifying space} of charges and taking the set of charges to be the homotopy classes $\pi_0(-)$ of the continuous maps into this classifying space
\begin{equation}
  \label{TheNonabelianCohomologySet}
  H^1(X^d; \Omega \mathcal{A})
  :=
  \pi_0\, \mathrm{Map}\big(
    X^d, \mathcal{A}
  \big)
  \mathrlap{\,.}
\end{equation}
This is the \emph{nonabelian cohomology} with coefficients in the \emph{loop $\infty$-group} of $\mathcal{A}$. \footnote{
  Evidently, the right hand side of \cref{TheNonabelianCohomologySet} immediately applies also to non-connected $\mathcal{A}$, in which generality we write $H^0(X; \mathcal{A}) := \pi_0\, \mathrm{Map}(X^d, \mathcal{A})$. But the restriction to connected $\mathcal{A}$ and the understanding of \cref{TheNonabelianCohomologySet} as cohomology in degree 1 with coefficients in $\Omega \mathcal{A}$ makes a little more transparent the nature of the nonabelian character map in \cref{OnTheCharacterMap} below, where the \emph{Whitehead $L_\infty$-algebra} $\mathfrak{l}\mathcal{A}$ (\cref{OnWhitheadBracketLInfinityAlgebras}) is really the rational incarnation of $\Omega \mathcal{A}$. Of course, one could alternatively instead consider Whitehead $L_\infty$-\emph{algebroids}. In any case, this means that the connectedness of $\mathcal{A}$ is at most of notational relevance.}

\begin{example}[Charges in Generalized and Nonabelian Cohomology] 
\nopagebreak \hfill
\begin{description}[beginpenalty=10000]
\item[Ordinary cohomology]
 With $\mathcal{A} \defneq B^n \mathbb{Z}$ we obtain \emph{ordinary cohomology}
 \begin{equation}
   H^1(X; \Omega B^n \mathbb{Z})
   \simeq
   H^n(X; \mathbb{Z})
   \mathrlap{\,,}
 \end{equation}
 such as, for instance,
 \begin{equation}
   H^1\big(
     X; \Omega (B^2 \mathbb{Z} \times
      B^2 \mathbb{Z})
   \big)
   \simeq
   H^2(X; \mathbb{Z})^2
   \mathrlap{\,.}
 \end{equation}
 This is the set of ordinary \emph{electromagnetic charges} in $1+3D$, such as carried by any integer numbers of electrons and magnetic monopoles (which are 0-branes), respectively:
 \begin{equation}
   \label{2CohomologyOfMonopoleCharges}
   H^2\big(
     \mathbb{R}^3 
     \setminus
     \mathbb{R}^0
     ;\,
     \mathbb{Z}
   \big)
   \simeq
   H^2\big(S^2\big)
   \simeq
   \mathbb{Z}\,.
 \end{equation}

\item[Topological K-Cohomology]
 With $\mathcal{A} \defneq \mathbb{Z} \times B \mathrm{U}$ we obtain (complex topological) \emph{K-theory}
 \begin{equation}
   H^0\big(
     X;
     \mathbb{Z}
     \times
     B \mathrm{U}
   \big)
   \simeq
   \prod_{n \in \mathbb{Z}}
   \,
   H^1\big(X; 
     \Omega B \mathrm{U}
   \big)
   \simeq
   K(X)
   \mathrlap{\,,}
 \end{equation}
 which witnesses integer numbers of $\mathrm{D}p$-branes for even $p = 2k$:
 \begin{equation}
   \label{KOfBlackDpBrane}
   K\big(
     \mathbb{R}^9 \setminus \mathbb{R}^{p}
   \big)
   \simeq
   K\big(
     S^{8-2k}
   \big)
   \simeq
   \mathbb{Z}
   \mathrlap{\,.}
 \end{equation}

 \item[Cohomotopy]
 With $\mathcal{A} \defneq S^n$ we obtain \emph{Cohomotopy}
 \begin{equation}
   H^1(X; \Omega S^n)
   \simeq
   \pi^n(X)
   \mathrlap{\,,}
 \end{equation}
 which for $n = 4$ witnesses integer numbers of M5-branes 
 \begin{equation}
   \pi^4\big(
     \mathbb{R}^{10}
     \setminus
     \mathbb{R}^{5}
   \big)
   \simeq
   \pi^4(S^4)
   \simeq
   \mathbb{Z}
   \mathrlap{\,,}
 \end{equation}
 and integer numbers of M2-branes (together with some fractional M2-charges):
 \begin{equation}
   \pi^4\big(
     \mathbb{R}^{10}
     \setminus
     \mathbb{R}^{2}
   \big)
   \simeq
   \pi^4(S^7)
   \simeq
   \mathbb{Z}
   \times \mathbb{Z}_{/12}
   \mathrlap{\,.}
 \end{equation}
\end{description}
\end{example}

\subsubsection{The Character Map}
\label{OnTheCharacterMap}
\footnote{
  The nonabelian character map is due to \parencites[\S IV]{FSS23-Char}{SS25-TEC}, following \cite[Def. 3.7]{FSS22-Twistorial}, which in turn follows \parencites[\S 4]{FSS15-M5WZW} (the case of 4-Cohomotopy) and \cite{FSS20-H} (the case of tangentially twisted 4-Cohomotopy, reviewed in \cite[\S 12]{FSS23-Char}).
}

The second demand in \cref{TotalCharge} is then satisfied by the cohomology operation which is represented by the rationalization
unit map $\eta^{\mathbb{Q}}_{\mathcal{A}}$ followed by extension of scalars from $\mathbb{Q}$ to $\mathbb{R}$:
\begin{equation}
  \label{RationalizationUnit}
  \begin{tikzcd}[column sep=35pt]
    \mathcal{A}
    \ar[
      r,
      "{
        \eta^{\mathbb{Q}}_{\mathcal{A}}
      }"
    ]
    \ar[
      rr,
      downhorup,
      "{ 
        \eta^{\mathbb{R}}_{\mathcal{A}}
      }"{description}
    ]
    &
    L^{\mathbb{Q}} \mathcal{A}
    \ar[
      r,
      "{ \mathbb{R}\otimes_{\mathbb{Q}}(-) }"
    ]
    &
    L^{\mathbb{R}} \mathcal{A}
    \mathrlap{\,,}
\end{tikzcd}
\end{equation}
which lands in nonabelian de Rham cohomology with $\mathfrak{a}$-coefficients iff $\mathcal{A}$ is \emph{admissible} in that its Whitehead $L_\infty$-algebra is isomorphic to the Gau{\ss}-law $L_\infty$-algebra:
\begin{equation}
  \label{AdmissibilityCondition}
  \mathfrak{l}\mathcal{A}
  \simeq
  \mathfrak{a}
  \mathrlap{\,.}
\end{equation}
In this case a \emph{nonabelian de Rham theorem} \cite[\S 6]{FSS23-Char} gives the identification on the right of the following diagram \cref{TheCharacterMap} and hence defines:
\begin{definition}
The \emph{nonabelian character map} $\mathrm{ch}^{\mathcal{A}}$, from nonabelian cohomology with coefficients in $\Omega \mathcal{A}$ to nonabelian de Rham cohomology with coefficients in $\mathfrak{l}\mathcal{A}$, is the nonabelian cohomology operation induced by $\mathbb{R}$-rationalization $\eta^{\mathbb{R}}_{\mathcal{A}}$ \cref{RationalizationUnit} of the classifying space $\mathcal{A}$, as follows:
\begin{equation}
  \label{TheCharacterMap}
  \begin{tikzcd}[
    row sep=-3pt, column sep=large
  ]
  \substack{
    \scalebox{.7}{\bf\color{purple}Charge \color{darkblue}in $\Omega \mathcal{A}$-valued}
    \\
    \scalebox{.7}{\bf\color{darkblue}nonabelian cohomology}
  }
  \ar[
    rr,
    phantom,
    "{
      \scalebox{.7}{\bf\color{darkgreen}character}
    }"
  ]
  &&
  \substack{
    \scalebox{.7}{\bf\color{purple} Total flux \color{darkblue}in $\mathfrak{l}\mathcal{A}$-valued}
    \\
    \scalebox{.7}{\bf\color{darkblue}de Rham cohomology}
  }
  \\
    H^1\big(
      X; \Omega \mathcal{A}
    \big)
    \ar[
      rr,
      "{
       \mathrm{ch}^{\mathcal{A}}_X
      }"
    ]
    &&
    H^1_{\mathrm{dR}}\big(
      X;
      \mathfrak{l}\mathcal{A}
    \big)
    \\
    \rotatebox[origin=c]{-90}{$:=$}
    &&
    \\
    \pi_0
    \,
    \mathrm{Map}(X, \mathcal{A})
    \ar[
      rr,
      "{
        (\eta^\mathbb{R}_{\mathcal{A}})_\ast
      }"
    ]
    &&
    \pi_0
    \,
    \mathrm{Map}\big(
      X, 
      L^{\mathbb{R}}\!\mathcal{A}
    \big)    
    \mathrlap{\,.}
    \ar[
      uu,
      "{ \sim }"{sloped, swap, pos=.35}
    ]
  \end{tikzcd}
\end{equation}
\end{definition}
\begin{example}[Character maps] {\ }
\begin{enumerate}
  \item On ordinary cohomology, the 
  nonabelian character map \cref{TheCharacterMap} reduces to the ordinary \emph{de Rham homomorphism}:
  \begin{equation}
    \label{DeRhamHomomorphismAsCharacter}
    \begin{tikzcd}
      \mathrm{ch}^{B^n \mathbb{Z}}
      :
      H^n(X; \mathbb{Z})
      \ar[r]
      &
      H^n_{\mathrm{dR}}(X)
      \mathrlap{\,.}
    \end{tikzcd}
  \end{equation}
  \item On K-theory, the 
  nonabelian character map \cref{TheCharacterMap} reduces to the ordinary \emph{Chern character}:
  \begin{equation}
    \begin{tikzcd}
      \mathrm{ch}^{\mathbb{Z} \times B \mathrm{U}}
      :
      K(X)
      \ar[r]
      &
      \bigoplus_{k \in \mathbb{N}}  
      H^{2k}_{\mathrm{dR}}(X)
      \mathrlap{\,.}
    \end{tikzcd}
  \end{equation}
  \item On 4-Cohomotopy the nonabelian character is a novel \emph{cohomotopical character map} (whose properties are reviewed in \cite[\S 12]{FSS23-Char}):
  \begin{equation}
    \begin{tikzcd}
      \mathrm{ch}^{S^4}
      :
      \pi^4(X)
      \ar[r]
      &
      H^1_{\mathrm{dR}}\big(
        X;
        \mathfrak{l}S^4
      \big)
      \mathrlap{\,.}
    \end{tikzcd}
  \end{equation}
\end{enumerate}
\end{example}

\begin{remark}[The set of choices of flux quantizations]
  \label[remark]{SpaceOfChoicesOfFluxQuantizationLaws}
  Given a characteristic $L_\infty$-algebra $\mathfrak{a}$ there exist either \emph{none} or \emph{infinitely many} admissible choices \cref{AdmissibilityCondition} of classifying spaces $\mathcal{A}$:
  \begin{enumerate}
    \item If $\mathcal{A}$ is an admissible choice, then also $\mathcal{A} \times B K$ is admissible for all \emph{finite groups} $K \in \mathrm{Grp}(\mathrm{FinSet})$. This is just one way of many for spaces $\mathcal{A}$ to have the same \emph{rational homotopy type} $\mathfrak{l}\mathcal{A}$. 
    
    In other words: A choice of flux quantization law $\mathcal{A}$ is a choice of \emph{torsion completion} of the charges of the theory, and there are infinitely many ways that torsion may appear.

    \item The above assumption that $\mathcal{A}$ is simply connected excludes flux quantization of any theory involving 1-form flux densities. But we may assume more generally that $\mathcal{A}$ has \emph{nilpotent} fundamental group acting nilpotently on all higher homotopy groups (cf. \cite{Hilton1982,nLab:NilpotentTopSpace}), since in that case \cref{WhiteheadLInfinityAlgebras} still holds. In this generality, the Whitehead $L_\infty$-algebras $\mathfrak{l}\mathcal{A}$ that appear are precisely those which are (connected and of finite type and) \emph{nilpotent} in that its iterated brackets eventually vanish.

    Beware that this class of $L_\infty$-algebras (playing the role of coefficients of nonlinear flux densities) is essentially disjoint (away from the abelian cases) from that one is interested in when considering $L_\infty$-algebras as coefficients for gauge potentials (cf. \cref{RoleOfLInfinityAlgebras}).
\end{enumerate}
  
\end{remark}

\subsubsection{Total Flux Quantization}
\footnote{
  For more on total flux quantization, cf. \parencites[\S 3.1-2]{SS25-Flux}[\S 2]{BaSS26-MString}.
}

In summary so far, the equations of motion for higher flux densities \cref{ShorthandOfTheHigherMaxwellEquations}
are to be complemented by a \emph{flux quantization law} represented by a classifying space $\mathcal{A}$, subject to the admissibility condition that its Whitehead $L_\infty$-algebra (\cref{WhiteheadLInfinityAlgebras}) reproduces the Gau{\ss} law $L_\infty$-algebra $\mathfrak{a}$ \cref{SolutionSpaceAsClosedForms}: $\mathfrak{l}\mathcal{A} \simeq \mathfrak{a}$.

Given such a flux quantization law, the solution space of flux densities (\cref{OnShellFluxDensities}) is to be both restricted and extended:
\begin{itemize}
  \item 
  to the extent that the $\mathcal{A}$-character map $\mathrm{ch}^{\mathcal{A}}$ \cref{TheCharacterMap} is not \emph{surjective},
  \\
  the physical flux densities are just those whose total flux $[\vec B]$ is in its image;
  \item 
  to the extent that the character map is not \emph{injective}, 
  \\
  the lift of the total flux $[\vec B]$ to a charge preimage $\rchi$ is to be part of the physical field content, alongside the flux densities:
\end{itemize}
\begin{equation}
  \label{FluxesAndTheirCharges}
  \begin{tikzcd}[row sep=-2pt, column sep=7pt]
    &&
    H^1(X^d; \Omega \mathcal{A})
    \ar[
      dd,
      "{
        \mathrm{ch}^{\mathcal{A}}
      }"{swap}
    ]
    &\ni&
    \rchi
    \ar[
      dd,
      |->,
      shorten=4pt
    ]
    \\[10pt]
    \\[10pt]
    \Omega^1_{\mathrm{dR}}\big(
      X; \mathfrak{l}\mathcal{A}
    \big)_{\mathrm{cl}}
    \ar[
      rr,
      "{
        [-]
      }"
    ]
    &\phantom{-}&
    H^1_{\mathrm{dR}}\big(
      X;
      \mathfrak{l}\mathcal{A}
    \big)
    &\ni&
    \mathrm{ch}(\rchi)
    \\
    \rotatebox[origin=c]{+90}{$\in$}
    &&
    \rotatebox[origin=c]{+90}{$\in$}
    \\
    \vec B 
    \ar[
      rr,
      |->,
      shorten=10pt
    ]
      &&
    {[\vec B]}
    \ar[
     uurr,
     equals,
     bend right=25
    ]
  \end{tikzcd}
\end{equation}

For example, 
\begin{enumerate}
\item when $\mathcal{A} \defneq B^2 \mathbb{Z}$, so that the character map is the de Rham homomorphism \cref{DeRhamHomomorphismAsCharacter}, then the electromagnetic flux densities $F_2$ whose total flux $[F_2]$ is in the image of the character maps are the \emph{integral forms}.

This \emph{total flux quantization} of the magnetic field is observed in experiment: When confining magnetic flux density on a slab $X^3 = \mathbb{R}^2_{\cpt} \times [-\epsilon,+\epsilon]$ of type II superconductor material, one observes an integer number
\begin{equation}
  n \in \mathbb{Z}
  \simeq
  H^2\big(
    \mathbb{R}^2_{\cpt}
    \times [-\epsilon, +\epsilon]
    ;\mathbb{Z}
  \big)
\end{equation}
of \emph{magnetic flux quanta} penetrating the superconductor, visible via electron microscopy in the form of (an integer number of) \emph{Abrikosov vortices} that these induce in the material's electron sea. 
\end{enumerate}

So, a choice of flux quantization law $\mathcal{A}$ for given on-shell fluxes characterized by $\mathfrak{a} \simeq \mathfrak{l}\mathcal{A}$ is part of the definition of the higher gauge field theory.
In fact, below in \cref{OnTheCompletedPhaseSpace} we see that: 
\begin{standout}
The choice of $\mathcal{A}$ \emph{completes} the definition of the higher gauge field content.
\end{standout}

Therefore, the choice of $\mathcal{A}$ is part of the specification of the \emph{physical model}. If the physics to be described by that model is thought to be fixed, then the choice of $\mathcal{A}$ is a \emph{hypothesis} about the correct mathematical description of that physics, each such choice entailing \emph{predictions} about the global behavior of the given physical fields.

Some choices of $\mathcal{A}$ are mathematically more canonical than others, in that they are minimal in terms of cell complex structure, as already reflected in our notation above for the corresponding Gau{\ss}-law $L_\infty$-algebras:

\begin{example}[Some flux quantization hypotheses]
\label[example]{SomeFluxQuantizationHypotheses}
\nopagebreak \hfill
\begin{description}[
  font=\itshape,
  ref={\bf (\roman*)} in \cref{SomeFluxQuantizationHypotheses}
]
  \item[Dirac charge quantization]
  \label{DiracChargQuantization}
  For the vacuum Maxwell field \cref{DualitySymmetricMaxwellEquationOfMotion} characterized by \cref{CharacteristicOfMaxwellField}
  \begin{equation}
    \mathfrak{a} = 
    \mathfrak{l}\big(
      B^2 \mathbb{Z}
      \times
      B^2 \mathbb{Z}
    \big)
    \mathrlap{\,,}
  \end{equation}
  the evident choice of flux quantization is 
  \begin{equation}
    \begin{aligned}
    \mathcal{A}
    &:=
    B^2 \mathbb{Z}
    \times 
    B^2 \mathbb{Z}
    \\
    &
    \simeq
    B\mathrm{U}(1)
    \times
    B \mathrm{U}(1)
    \end{aligned}
  \end{equation}
  implying both magnetic as well as electric charges in integral cohomology. This is the duality symmetric form of traditional Dirac charge quantization, considered in this form in \parencites{FreedMooreSegal2007}[Rem. 2.3]{BeckerBeniniSchenkelSzabo2017}[Def. 4.1]{LazaroiuShahbazi2022a}[Def. 4.3]{LazaroiuShahbazi2022b}.

  \item[Hypothesis K]\label[example]{HypothesisK}
  For the type IIA RR-field \cref{EoMOfIIARRField} characterized by 
  \cref{CharacteristicOfIIARRField},
  \begin{equation}
    \mathfrak{a} = \mathfrak{l}(\mathbb{Z} \times B \mathrm{U})
    \mathrlap{\,,}
  \end{equation}
  the evident choice of flux quantization is 
  \begin{equation}
    \label{ClassifyingSpaceForComplexKtheory}
    \mathcal{A} := 
    \mathbb{Z} \times B \mathrm{U}
    \mathrlap{
      \;\simeq
      \mathrm{KU}_0
      \,,
    }
  \end{equation}
  implying charges in topological K-theory.

  The hypothesis that this is the ``correct'' choice of flux quantization of type IIA RR-flux \cite{MooreWitten2000,Grady2022} has received much attention in the past (cf. \parencites[\S 4.1]{SS25-Flux}{nLab:DBraneChargeQuantizationInKTheory}) and could be called \emph{Hypothesis K} to distinguish it from other hypotheses on the global nature of RR-fields (cf. \cite[Rem. 4.1]{SS23-Defect}). 

  \item[Hypothesis h]\label[example]{Hypothesish}
  For the 5D SuGra gauge field \cref{EoMOf5DMCS} characterized by \cref{CharacteristicOf5DMCSField},
  \begin{equation}
    \mathfrak{a} = \mathfrak{l}S^2
    \mathrlap{,}
  \end{equation}
  the evident choice of flux quantization is 
  \begin{equation}
    \mathcal{A} := S^2
    \mathrlap{,}
  \end{equation}
  implying charges in 2-Cohomotopy.

  Since 5D Maxwell-Chern-Simons theory has a constrained KK-reduction to abelian 3D Chern-Simons theory, one may regard this as a hypothesis about the flux quantization also of the latter \cite[\S 3]{SS25-WilsonLoops}, and as such we have called this \emph{Hypothesis h} (with lower-case ``h'') \cite{SS25-FQH}. This flux quantization turns out to imply fine detail of quantum Chern-Simons theory \cite[\S 3]{SS25-FQH} and makes novel predictions on where and how to find (nonabelian) anyons in fractional quantum (anomalous) Hall systems \parencites[Fig. D]{SS25-FQH}{SS25-FQAH}{SS25-CrystallineChern}[\S 4.2]{SS25-Orient}.

  \item[Hypothesis H]\label[example]{HypothesisH}
  For the C-field in 11D SuGra  \cref{EoMOfCFieldIn11DSugra} characterized by \cref{CharacteristicOfCField},
  \begin{equation}
    \mathfrak{a} = \mathfrak{l}S^4
    \mathrlap{\,,}
  \end{equation}
  the evident choice of flux quantization is 
  \begin{equation}
    \mathcal{A} := S^4
    \mathrlap{\,,}
  \end{equation}
  implying charges in 4-Cohomotopy.

  The hypothesis that this is at least close to the ``correct'' choice for the global completion of 11D SuGra to ``M-theory'' we have called \emph{Hypothesis H} and shown to imply a whole list of subtle topological effects expected in ``M-theory'' (cf. \parencites{FSS20-H}{FSS21-Hopf}{FSS21-StrStruc}{SS21-M5Anomaly}).
\end{description}
\end{example}

\begin{remark}
It is important to note that the above nonabelian flux quantization process concerns quantization not just of the magnetic charges but compatibly so also of the electric charges. Quantization of electric charges has not received due attention elsewhere, and it is with the quantization of the electric charges included that the flux quantization laws generically become non-abelian.
\end{remark}

But total flux quantization is not yet the whole story: The full higher gauge field involves also a \emph{gauge potential} $\widehat{A}$ (experimentally seen in Aharonov-Bohm effects) which witnesses a local form of flux quantization. But this turns out to be induced from the same flux quantization choice $\mathcal{A}$, which is what we turn to next.

\subsection{Completed Phase Space}
\label{OnTheCompletedPhaseSpace}
\footnote{
For more on the completed phase space see
\parencites{SS24-Phase}[\S 3.1]{SS25-Flux}.
}

In this section, we need to invoke some geometric homotopy theory (higher topos theory, cf. \cite{Lurie2009}) in order to describe the geometry of the phase space of higher gauge fields, globally completed by a flux quantization law $\mathcal{A}$ according to \cref{TotalFluxQuantization}. 

But a central result (\cref{ShapeOfAdiff} below) says that the ``topological sector'' of this phase space (its shape) is equivalently just (the homotopy type of) $\mathcal{A}$! Since this is all that enters the discussion of the quantization of the topological sector in \cref{QuantizationInTopologicalSector} below, the reader willing to take this step for granted may want to skip ahead to \cref{QuantizationInTopologicalSector}.

\subsubsection{Smooth Sets of Flux Densities}
\footnote{
  For more on the topic of smooth sets in physics see \parencites{Schreiber2025,GS25-FieldsI,IbortMas2025}
}

So far, we have been considering the \emph{set} $\mathrm{LocSol} \simeq \Omega^1_{\mathrm{dR}}\big(X^d; \mathfrak{a}\big)_{\mathrm{cl}}$ of on-shell flux densities (\cref{OnShellFluxDensities}), but now we need to regard this as a kind of \emph{smooth space} amenable to differential geometry. In fact, we will need to do differential geometry on a kind of \emph{moduli space} of on-shell flux densities.
But it is an open secret in mathematical physics that the differential geometry of spaces of fields of almost any kind falls outside the scope of traditional differential geometry: These spaces are generically not even infinite-dimensional manifolds.

Therefore, we need to invoke a better differential geometry where smooth spaces of flux densities do naturally exist. This is the \emph{cohesive topos theory} of \emph{smooth sets}:

\begin{definition}
\label[definition]{SiteOfCartesianSpaces}
Let $\mathrm{CartSp}$ denote the category whose objects are the Cartesian spaces $\mathbb{R}^n$ for $n \in \mathbb{N}$ and whose morphisms are the smooth functions between these. Regard this as a site via the \emph{coverage} (Grothendieck pre-topology) of \emph{differentiably good open covers} (covers by open subsets all whose non-empty finite intersections are diffeomorphic to an $\mathbb{R}^n$).
\end{definition}

Whatever a \emph{smooth set} $X$ is, it should be \emph{probeable} by these $\mathbb{R}^n$, in that we should know of $X$:
\begin{enumerate}
\item the sets of smooth \emph{probes} (``plots'') $\mathrm{Plt}(\mathbb{R}^n,X) := \big\{\begin{tikzcd}[sep=small]\mathbb{R}^n \ar[r, dashed, shorten=-2pt] & X \end{tikzcd}\big\}$, 
\item the (contravariant) functoriality of these sets under precomposition with ordinary smooth functions
$\begin{tikzcd}[sep=small] \mathbb{R}^{n_1} \ar[r, "f"] & \mathbb{R}^{n_2}\end{tikzcd}$\;:
$
  \begin{tikzcd}
    \mathrm{Plt}(\mathbb{R}^{n_2}, X)
    \ar[
      r,
      "{
        f^\ast
      }"
    ]
    &
    \mathrm{Plt}(\mathbb{R}^{n_1}, X)
    \mathrlap{\,,}
  \end{tikzcd}
$
\item and 
the fact that probes by some $\mathbb{R}^n$ may be glued together from compatible probes by covering open subsets.
\end{enumerate}

Jointly, this says that the sets of probes/plots of a generalized smooth space $X$ should form a \emph{sheaf of sets} on $\mathrm{CartSp}$ -- and that's actually all we need to know about such $X$:
\begin{definition}
  \emph{Smooth sets} are the sheaves on $\mathrm{CartSp}$:
  \begin{equation}
    \label{CategoryOfSmoothSets}
    \mathrm{SmthSet}
    :=
    \mathrm{Sh}(\mathrm{CartSp})
    \mathrlap{\,.}
  \end{equation}
  The morphisms $\begin{tikzcd}[sep=small]X \ar[r] & Y\end{tikzcd}$ of these sheaves are the \emph{smooth maps} between smooth sets.
\end{definition}

For example:
\begin{enumerate}
  \item Smooth manifolds $X$ are smooth sets via the ordinary smooth functions into them: $\mathrm{Plt}(\mathbb{R}^n, X) := C^\infty(\mathbb{R}^n, X)$.

  \item In particular, the $\mathbb{R}^n$ themselves become smooth sets this way (\emph{Yoneda embedding}) and the given plots $\begin{tikzcd}[sep=small] \mathbb{R}^n \ar[r, dashed, shorten=-2pt] & X\end{tikzcd}$ of any smooth set are equivalently the smooth maps $\begin{tikzcd}[sep=small] \mathbb{R}^n \ar[r, shorten=-2pt] & X\end{tikzcd}$ (\emph{Yoneda lemma}).

  \item For smooth sets $X_1$, $X_2$, their \emph{product smooth set} $X_1 \times X_2$ is given by $\mathrm{Plt}(\mathbb{R}^n, X_1 \times X_2) := \mathrm{Plt}(\mathbb{R}^n, X_1) \times \mathrm{Plt}(\mathbb{R}^n, X_2)$.

  \item For any pair of smooth sets $X$, $A$ the set of smooth maps between them becomes a smooth set $\mathrm{Map}(X,A)$ via $\mathrm{Plt}\big(\mathbb{R}^n, \mathrm{Map}(X,A)\big) \!:=\! \big\{\begin{tikzcd}[sep=small] \mathbb{R}^n \times X \ar[r, shorten=-2pt] & A
  \end{tikzcd}\big\}$.
\end{enumerate}
The last of these examples shows how typical spaces of fields are naturally smooth sets. 
This has evident variants, of which one of key interest to us is:
\begin{itemize}
  \item[{\bf (v)}]
  For an $L_\infty$-algebra $\mathfrak{a}$ \cref{CEAlgebraAsQuotientOfFreeDGCA} the \emph{moduli} smooth set of closed $\mathfrak{a}$-valued differential forms is 
  \begin{equation}
    \label{ModuliSpaceOfClosedaValuedForms}
    \begin{tikzcd}
     \Omega^1_{\mathrm{dR}}(-; \mathfrak{a})_{\mathrm{cl}}
     \in
     \mathrm{SmthSet}
     \mathrlap{\,}
    \end{tikzcd}
  \end{equation}
  whose $\mathbb{R}^n$-plots are just the closed $\mathfrak{a}$-valued forms on the probe space $\mathbb{R}^n$:
  \begin{equation}
    \label{PlotsOfModuliSpaceOfClosedaValuedForms}
    \mathrm{Plt}\big(
      \mathbb{R}^n
      ,\,
      \Omega^1_{\mathrm{dR}}(-;\mathfrak{a})_{\mathrm{cl}}
    \big)
    :=
    \Omega^1_{\mathrm{dR}}(\mathbb{R}^n;\mathfrak{a})_{\mathrm{cl}}
    \mathrlap{\,,}
  \end{equation}
\end{itemize}

The definition \cref{PlotsOfModuliSpaceOfClosedaValuedForms}
bootstraps, over Cartesian probe spaces, the idea that there is to be a \emph{universal} closed $\mathfrak{a}$-valued form $\vec F_{\mathrm{univ}}$ on the smooth moduli space
whose pullback along smooth classifying maps uniquely produces all others --- which then in fact follows generally, for $X$ any smooth manifold (and generally any smooth set $X$):
\begin{equation}
  \label{aValuedFormsAsMaps}
  \begin{tikzcd}[row sep=-2pt, column sep=0pt]
    \mathrm{Hom}\big(
      X, 
      \Omega^1_{\mathrm{dR}}(-;
        \mathfrak{a})_{\mathrm{cl}}
    \big)
    \ar[
      rr,
      "{
        \sim
      }"
    ]
    &&
    \Omega^1_{\mathrm{dR}}(X;
        \mathfrak{a})_{\mathrm{cl}}    
    \\
    \big(
    X \xrightarrow{\phi}
      \Omega^1_{\mathrm{dR}}(-;
        \mathfrak{a})_{\mathrm{cl}}
    \big)
    &\longmapsto&
    \phi^\ast \vec F_{\mathrm{univ}}
    \mathrlap{\,.}
  \end{tikzcd}
\end{equation}

\subsubsection{Deformation moduli of flux densities}
\footnote{
  The object $\shape \, \Omega^1_{\mathrm{dR}}(-;\mathfrak{a})_{\mathrm{cl}}$ was introduced in \parencites[\S 4.4.14.2]{Sc13-dcct}[(9.2)]{FSS23-Char}, more exposition is in \cite[p. 26]{SS25-Flux}.

  The ambient $\infty$-topos of smooth $\infty$-groupoids goes back to \parencites[\S A]{FSSt12-DiffClasses}[\S 3.3]{Sc13-dcct} following \cite{Dugger1998}. The definition is reviewed in \cite[\S 1]{FSS23-Char}, monograph accounts are \parencites[\S 1]{FSS23-Char}[\S 4]{SS25-Bun}.
}

We have seen deformation classes of closed $\mathfrak{a}$-valued forms in the form of concordances over the interval $\Delta^1_{\mathrm{top}} = [0,1]$ (in \cref{TotalFluxInNonaDRCohomology}). These deformations serve as \emph{coboundaries} between closed $\mathfrak{a}$-valued forms. But in a higher gauge theory with flux densities of higher degrees, there are in general also higher order such deformations/concordances, parameterized over the topological $k$-simplices $\Delta^k_{\mathrm{top}}$. The tower of these higher-order deformations enhances the smooth set
\cref{PlotsOfModuliSpaceOfClosedaValuedForms} of closed $\mathfrak{a}$-valued forms on $\mathbb{R}^n$ to a \emph{simplicial smooth set}
\begin{equation}
  \label{ShapeOfModuliOfClosedAValuedForms}
  \shape
  \,
  \mathbf{\Omega}^1_{\mathrm{dR}}
  (-;\mathfrak{a})_{\mathrm{cl}}
  \in
  \mathrm{SmthSet}_{\Delta}
  \mathrlap{\,,}
\end{equation}
whose simplicial sets of plots are
\begin{equation}
  \mathrm{Plt}\Big(
    \mathbb{R}^n
    \times 
    \Delta^k
    ,
    \shape
    \,
    \mathbf{\Omega}^1_{\mathrm{dR}}
    (-;\mathfrak{a})_{\mathrm{cl}}
  \Big)
  :=
  \Omega^1_{\mathrm{dR}}
  \Big(
    \mathbb{R}^n 
    \times 
    \Delta^k_{\mathrm{top}}
    ;\,
    \mathfrak{a}
  \Big)_{\mathrm{cl}}
  \mathrlap{\,.}
\end{equation}

Here, in evident generalization of \cref{CategoryOfSmoothSets}, \emph{simplicial smooth sets} are the objects of the sheaf topos
\begin{equation}
  \mathrm{SmthSet}_\Delta
  :=
  \mathrm{Sh}\big(
   \mathrm{CartSp}
   \times 
   \Delta
  \big)
\end{equation}
over the product site of $\mathrm{CartSp}$ (\cref{SiteOfCartesianSpaces}) with the simplex category (the latter equipped with the trivial Grothendieck topology), whose objects are formal products
\begin{equation}
  \mathbb{R}^n \times \Delta^k
  \in 
  \mathrm{CartSp}
  \times 
  \Delta
  \,,\;\;
  n,k \in \mathbb{N}
\end{equation}
of a Cartesian space probing smooth structure and of a cellular simplex probing higher homotopy structure.

We will regard as homotopy-invertible the simplicial smooth maps in the class 
\begin{equation}
  W := 
  \big\{
  \mbox{Local weak homotopy equivalences} 
  \big\}
  \subset
  \mathrm{Mor}\big(
    \mathrm{SmthSet}_{\Delta}
  \big)
  \mathrlap{\,,}
\end{equation}
whose simplicial maps of probes, when restricted to arbitrarily small probes (to \emph{stalks}), are simplicial weak homotopy equivalences. Doing so means to regard simplicial smooth sets as presentations of \emph{smooth $\infty$-groupoids} (cf. \parencites[\S A]{FSSt12-DiffClasses}[\S 1]{FSS23-Char}[\S 4]{SS25-Bun}):
\begin{equation}
  \label{SmoothInfinityGroupoids}
  \mathrm{SmthGrpd}_\infty
  :=
  L^W
  \,
  \mathrm{SmthSet}_{\Delta}
  \mathrlap{\,.}
\end{equation}

The plain smooth sets may be faithfully understood as \emph{simplicially constant} simplicial smooth sets, 
\begin{equation}
  \label{SimpliciallyConstantSimplicialSmoothSets}
  \begin{tikzcd}
    \mathrm{SmthSet}
    \ar[r, hook]
    &
    \mathrm{SmthSet}_{\Delta}
    \mathrlap{\,,}
  \end{tikzcd}
\end{equation}
and as such there is 
a canonical smooth simplicial inclusion morphism from the smooth moduli space of closed $\mathfrak{a}$-valued forms \cref{ModuliSpaceOfClosedaValuedForms} to this simplicial smooth moduli space \cref{ShapeOfModuliOfClosedAValuedForms} of their deformations:
\begin{equation}
  \label{ShapeUnitOnModuliOfClosedaValuedForms}
  \begin{tikzcd}
    \Omega^1_{\mathrm{dR}}(-;\mathfrak{a})
    \ar[
      r,
      hook,
      "{
        \eta^{\scalebox{.6}{$\shape$}}
      }"
    ]
    &
    \shape
    \,
    \Omega^1_{\mathrm{dR}}(-;\mathfrak{a})_{\mathrm{cl}}    
    \mathrlap{\,.}
  \end{tikzcd}
\end{equation}
This map witnesses how a fixed flux density, on the left, becomes deformable, on the right.
For example:
\begin{itemize}
  \item
  for $\mathfrak{a} = \mathfrak{l}B^n \mathbb{Z}$, then 
  \begin{equation}
    \shape
    \,
    \Omega^1_{\mathrm{dR}}(-;\mathfrak{a})_{\mathrm{cl}}    
    \simeq
    \mathrm{DK}\Big(
      \begin{tikzcd}[sep=15pt]
        0 
        \ar[r]
        &
        \Omega^0_{\mathrm{dR}}(-)
        \ar[r, "{\mathrm{d}}"]
        & 
        \cdots
        \ar[r, "\mathrm{d}"]
        &
        \Omega^n_{\mathrm{dR}}(-)_{\mathrm{cl}}
      \end{tikzcd}
    \Big)
  \end{equation}
  is equivalently, as a smooth $\infty$-groupoid \cref{SmoothInfinityGroupoids}, the simplicial incarnation of the $n$-shifted de Rham complex  (cf. \cite[Lem. 9.2]{FSS23-Char}).
\end{itemize}
Conversely this means that we may regard $\shape \, \Omega^1_{\mathrm{dR}}(-;\mathfrak{a})_{\mathrm{cl}}$ as the \emph{nonabelian de Rham complex} with  coefficients in the $L_\infty$-algebra $\mathfrak{a}$.

\subsubsection{The differential character map}
\footnote{
  The differential nonabelian character map \cref{DifferentialCharacterMap} is due to 
  \cite[Def. 9.2]{FSS23-Char} (a reinterpretation of the classical rationalization map in dg-algebraic rational homotopy theory reviewed as \cite[(5.17)]{FSS23-Char}).
  The expression \cref{ShapeAsSmoothPathGroupoid} for the shape of smooth $\infty$-groupoids is due to \cite{PavlovEtAl2024}.
}
\newline
\nopagebreak
Where smooth sets are simplicially constant simplicial smooth sets \cref{SimpliciallyConstantSimplicialSmoothSets}, on the other extreme
every topological space $A$ induces its \emph{fundamental $\infty$-groupoid} or \emph{shape} (cf. \parencites[\S 4.3.2]{SS25-Bun}[\S 9.1.1]{SS26-Orb})
\footnote{
  In \cref{ShapeOfTopologicalSpace} the functor $\gamma$ regards a simplicial smooth set as a smooth $\infty$-groupoid via the simplicial localization of  \cref{SmoothInfinityGroupoids}.
}
\begin{equation}
  \label{ShapeOfTopologicalSpace}
  \shape
  \,
  A
  \in 
  \begin{tikzcd}
    \mathrm{SmthSet}_{\Delta}
    \ar[r, "{ \gamma }"]
    &
    \mathrm{SmthGrpd}_{\infty}
    \mathrlap{\,,}
  \end{tikzcd}
\end{equation}
which is represented by a geometrically discrete (hence constant on $\mathrm{CartSp}$) simplicial smooth set, given by its \emph{singular simplicial complex}:
\begin{equation}
  \mathrm{Plt}\big(
    \mathbb{R}^n \!\times\! \Delta^k
    ,
    \shape\, X
  \big)
  =
  \mathrm{Hom}\big(
    \Delta^k_{\mathrm{top}},
    A
  \big)
  \mathrlap{\,.}
\end{equation}

The general formula for the \emph{shape} $\shape \mathbf{X}$ of a smooth $\infty$-groupoid $\mathbf{X}$ is
\begin{equation}
  \label{ShapeAsSmoothPathGroupoid}
  \mathrm{Plt}\big(
    \mathbb{R}^n \times \Delta^k,
    \,
    \shape \mathbf{X}
  \big)
  \simeq
  \mathrm{Hom}\big(
    \mathbb{R}^n \times
    \Delta^k_{\mathrm{top}},
    \mathbf{X}
  \big)
  \mathrlap{\,,}
\end{equation}
and despite the superficial dependence on the Cartesian probes in this formula, the result is, up to equivalence, a geometrically discrete $\infty$-groupoid (hence is ``pure shape'', not carrying geometry):
\begin{equation}
  \shape \mathbf{X}
  \in
  \begin{tikzcd}
    \mathrm{Grpd}_\infty
    \ar[r, hook]
    &
    \mathrm{SmthGrpd}_\infty\,.
  \end{tikzcd}
\end{equation}

In fact, all previous constructions related to a choice of classifying space ``$\mathcal{A}$'' (connected, with abelian fundamental group and of rational finite-type) only depend on its shape, whence we will understand from here on that $\mathcal{A}$ denotes the \emph{pure shape} of a classifying space:
\begin{equation}
  \label{DifferentialCharacter}
  \mathcal{A} := \shape \, A
  \in 
  \begin{tikzcd} 
    \mathrm{Grpd}_\infty
    \ar[r, hook]
    &
    \mathrm{SmthGrpd}_\infty
    \mathrlap{\,.}
  \end{tikzcd}
\end{equation}

Now, a reframing  of the \emph{fundamental theorem of dg-algebraic rational homotopy theory} (\cite[\S 5 \&  Def. 9.2]{FSS23-Char}) gives a \emph{differential nonabelian character} map of smooth $\infty$-groupoids
\begin{equation}
  \label{DifferentialCharacterMap}
  \begin{tikzcd} 
    \mathcal{A}
    \ar[
      rr,
      "{
        \mathbf{ch}^{\mathcal{A}}
      }"
    ]
    &&
    \shape\, 
    \Omega^1_{\mathrm{dR}}\big(
      -;
     \mathfrak{l}\mathcal{A}
    \big)_{\mathrm{cl}}
  \end{tikzcd}
\end{equation}
which is an equivalent incarnation in $\mathrm{SmthGrpd}_\infty$ of the rationalization map $\eta^{\mathbb{R}}_{\mathcal{A}}$ in $\mathrm{Grpd}_\infty$ that defines the nonabelian character map  \cref{TheCharacterMap}.

\subsubsection{The completed phase space}
\footnote{
  The construction of nonabelian differential cohomology is due to \cite[Def. 9.3]{FSS23-Char} and its interpretation as providing the completed phase space of higher gauge fields is made explicit in \cite[Def. 2.6]{SS24-Phase}.
}

With all this in hand, we may now refine the picture \cref{FluxesAndTheirCharges} of flux quantization from total fluxes (being nonabelian cohomology classes) to the actual flux densities. We have produced a pair of coincident maps of smooth $\infty$-groupoids
\begin{equation}
  \label{TheFluxQuantizationCoCone}
  \begin{tikzcd}[
    column sep=12pt,
    row sep=10pt
  ]
    &&
    \mathcal{A}
    \ar[
      dd,
      "{ 
        \mathbf{ch}^{\mathcal{A}} 
      }"
    ]
    \\
    \\
    \Omega^1_{\mathrm{dR}}(-;
      \mathfrak{a})_{\mathrm{cl}}
    \ar[
      rr,
      "{
        \eta^{\scalebox{.6}{$\shape$}}
      }"
    ]
    &&
    \shape\,
    \Omega^1_{\mathrm{dR}}(-;
      \mathfrak{a})_{\mathrm{cl}}
  \end{tikzcd}
  \mathrlap{
  \;\;\;\;
  \mbox{
    (where 
    $\mathfrak{a}
    \simeq \mathfrak{l}\mathcal{A}$),
  }
  }
\end{equation}
which describe universally how flux densities (on the bottom left) may correspond (on the bottom right) with $\mathcal{A}$-valued charges (on the top right).

Concretely, for $X^d$ a Cauchy surface and
\begin{itemize}
\item
$\begin{tikzcd} X^d \ar[r, "{\vec B}"] & \Omega^1_{\mathrm{dR}}(-;\mathfrak{a})_{\mathrm{cl}}\end{tikzcd}$ modulating
\cref{aValuedFormsAsMaps} on-shell flux densities $\vec B$,

\item
$\begin{tikzcd} X^d \ar[r, "{ \rchi }"] & \mathcal{A} \end{tikzcd}$ classifying a charge in nonabelian cohomology \cref{TheNonabelianCohomologySet},
\end{itemize}
their compatibility is expressed by homotopies $\widehat{A}$ filling the resulting square of smooth $\infty$-groupoids:
\begin{definition}[Global phase space and Nonabelian differential cohomology]
\label[definition]{TheNonabelianDifferentialCohomologyPullback}
Given equations of motion \cref{ShorthandOfTheHigherMaxwellEquations} for higher flux densities characterized by an $L_\infty$-algebra $\mathfrak{a}$ (\cref{WhiteheadLInfinityAlgebras}) and an admissible flux quantization law $\mathcal{A}$ \cref{AdmissibilityCondition} then a \emph{globally complete} on-shell gauge field configuration is a dashed cone of smooth $\infty$-groupoids over \cref{TheFluxQuantizationCoCone} with tip $X^d$:
\begin{equation}
  \hspace{.5cm}
  \begin{tikzcd}[
    row sep=14pt,
    column sep=15pt
  ]
    X^d
    \ar[
      rr, 
      dashed,
      "{ \rchi }",
      "{\ }"{swap, name=s}
    ]
    \ar[
      dd, 
      dashed,
      "{ \vec B }"{swap},
      "{\ }"{name=t}
    ]
    &&
    \mathcal{A}
    \ar[
      dd,
      "{ 
        \mathbf{ch}^{\mathcal{A}} 
      }"
    ]
    \ar[
      from=s,
      to=t,
      dashed,
      Rightarrow,
      "{ \widehat A }"
    ]
    \\
    \\
    \Omega^1_{\mathrm{dR}}(-;
      \mathfrak{a})_{\mathrm{cl}}
    \ar[
      rr,
      "{
        \eta^{\scalebox{.6}{$\shape$}}
      }"
    ]
    &&
    \shape\,
    \Omega^1_{\mathrm{dR}}(-;
      \mathfrak{a})_{\mathrm{cl}}
    \mathrlap{\,.}
  \end{tikzcd}
\end{equation}
Equivalently, 
if we denote the universal such cone by $\mathcal{A}_{\mathrm{diff}}$, then the choice of \emph{gauge potential} $\widehat{A}$ for given on-shell flux densities $\vec B$ is equivalently a dashed lift as shown here:
\begin{equation}
  \label{GaugeFieldsAsLifts}
  \hspace{-1cm}
  \begin{tikzcd}[
    column sep=15pt
  ]
    &[+5pt]&[+5pt]
    \mathcal{A}_{\mathrm{diff}}
    \ar[
      rr, 
      "{ \rchi_{\mathrm{univ}} }",
      "{\ }"{swap, name=s}
    ]
    \ar[
      dd, 
      "{ 
        \vec B_{\mathrm{univ}} 
      }"{description},
      "{\ }"{name=t}
    ]
    \ar[
      ddrr,
      phantom,
      "{ \lrcorner }"{pos=.02}
    ]
    &&
    \mathcal{A}
    \ar[
      dd,
      "{ 
        \mathbf{ch}^{\mathcal{A}} 
      }"
    ]
    \ar[
      from=s,
      to=t,
      Rightarrow,
      shorten >=3pt,
      "{ 
        {\widehat A}_{\mathrm{univ}}
      }"
    ]
    \\
    \\
    X^d
    \ar[
      rr,
      "{
        \vec B
      }"
    ]
    \ar[
      uurr,
      dashed,
      "{
        (\vec B, \rchi, \widehat A)
      }"
    ]
    &&
    \Omega^1_{\mathrm{dR}}(-;
      \mathfrak{a})_{\mathrm{cl}}
    \ar[
      rr,
      "{
        \eta^{\scalebox{.6}{$\shape$}}
      }"
    ]
    &&
    \shape\,
    \Omega^1_{\mathrm{dR}}(-;
      \mathfrak{a})_{\mathrm{cl}}
    \mathrlap{\,.}
  \end{tikzcd}
\end{equation}
The gauge (homotopy) equivalence classes of such data are the \emph{nonabelian differential cohomology} of $X^d$:
\begin{equation}
  \label{NonabelianDifferentialCohomology}
  H^1_{\mathrm{diff}}\big(
    X^d;
    \Omega\mathcal{A}
  \big)
  :=
  \pi_0
  \mathrm{Hom}\big(
    X^d
    ,
    \mathcal{A}_{\mathrm{diff}}
  \big)
  \mathrlap{\,.}
\end{equation}
\end{definition}

\begin{example}
\nopagebreak \hfill
\begin{enumerate}
\item
  For $\mathcal{A} = B^2 \mathbb{Z}$, the data \cref{GaugeFieldsAsLifts} are equivalently $\mathrm{U}(1)$-principal bundles with connection, the usual model for the electromagnetic field.
  
\item For $\mathcal{A} = B^3 \mathbb{Z}$, the data \cref{GaugeFieldsAsLifts} are equivalently \emph{$\mathrm{U}(1)$-bundle gerbes with connection}, the usual model for the B-field \cref{EoMForBField} \cite[Ex. 9.4]{FSS23-Char}.

\item For $\mathcal{A} = \mathbb{Z} \times B \mathrm{U}$,  \cref{GaugeFieldsAsLifts} are cocycles in ``canonical'' differential K-theory \cite[Ex. 9.2]{FSS23-Char}.

\item Generally, for $\mathcal{A} = E_n$ a stage in an \emph{$\Omega$-spectrum of spaces}, the nonabelian differential cohomology \cref{NonabelianDifferentialCohomology} reduces to ``canonical'' differential $E^n$-cohomology \cite[Ex. ]{FSS23-Char}.

\item For $\mathcal{A} = \shape S^n$, we say that \cref{NonabelianDifferentialCohomology} is (unstable/nonabelian) \emph{differential cohomotopy}.
\end{enumerate}
\end{example}

Accordingly, given an admissible flux quantization law $\mathcal{A}$, then 
\begin{equation}
  \label{ThePhaseSpace}
  \mathrm{PhsSpc}
  :=
  \mathrm{Map}\big(
    X^d,
    \mathcal{A}_{\mathrm{diff}}
  \big)
  \in
  \mathrm{SmthGrpd}_\infty
\end{equation}
is the \emph{phase space} of the higher gauge theory: the ``space'' (smooth $\infty$-groupoid) of globally completed on-shell field configurations. 

\begin{remark}[The higher gauge potentials]
\label[remark]{TheHigherGaugePotentials}
It is not obvious that the homotopy denoted $\widehat{A}$ in \cref{TheNonabelianDifferentialCohomologyPullback} locally corresponds to the usual (higher) gauge potentials seen in the physics literature, in cases such as in \cref{HigherGaugeFields,SomeFluxQuantizationHypotheses}.
That this is indeed the case is remarkable and is shown:
\begin{enumerate}
\item
for higher $\mathrm{U}(1)$-gauge fields in \cite[Ex. 9.4]{FSS23-Char},
\item
for the 11D supergravity C-field in \cite[Prop. 2.48]{GSS24-SuGra},
\item
and for the self-dual gauge field on the M5 in \cite[\S B]{SS25-Srni}.
\end{enumerate}
\end{remark}
The cases of the RR-field and of the 5D MCS field follow analogously.

In view of all these examples of higher gauge fields one is naturally left wondering about the common case of nonabelian Yang-Mills fields. Here we have the following interesting subtlety:

\begin{example}[The (non-)example of Yang-Mills fields]
\label[example]{TheNonExampleOfYangMillsFields}
Beware that Yang-Mills fields (cf. \cite{RudolphSchmidt2017,nLab:YangMillsTheory}) for a \emph{nonabelian} gauge Lie group $G$ (with Lie algebra $\mathfrak{g}$) such as $\mathrm{U}(n)$, $n \geq 2$, are \emph{not} an example of \emph{Maxwell-type higher gauge fields} in the sense of \cref{OnTheEquationsOfMotion} (as the term ``Maxwell-type'' suggests). This is because:
\begin{enumerate}
\item their flux densities 
$\big\{F^a \in \Omega^2_{\mathrm{dR}}\big(\widehat X\big) \big\}_{a = 1}^{\mathrm{dim}(\mathfrak{g})}$
in general are not globally defined on spacetime $X$ as required in \cref{TheFluxDensities}, but only on a good open cover $\begin{tikzcd}[sep=7pt]\widehat{X} \ar[r,->>, shorten=-2pt] & X, \end{tikzcd}$

\item
the Bianchi identity 
$
  \mathrm{d}
  \, 
  F^a
  =
  f^a_{b c} \, A^b \wedge F^c
$
(involving the structure constants $(f^a_{bc} \in \mathbb{R})_{a,b,c =1}^{\mathrm{dim}(\mathfrak{g})}$ of $\mathfrak{g}$)
is not a polynomial in the flux densities alone, as required in \cref{TheHigherMaxwellEquations,ShorthandOfTheHigherMaxwellEquations}, but depends also on the \emph{gauge potential} 1-forms $\big\{ A^a \in \Omega^1_{\mathrm{dR}}\big(\widehat{X}\big) \big\}$.
\end{enumerate}

Accordingly, there cannot be a classifying space $\mathcal{A}$ whose character map \cref{OnTheCharacterMap} witnesses the quantization of nonabelian Yang-Mills charges as it does exist for the higher Maxwell-type field species in \cref{SomeFluxQuantizationHypotheses}. 

But one can turn this around: There is an evident ``candidate'' classifying space, namely the traditional classifying space of $G$-principal bundles, denoted $\mathcal{A} := B G$ (cf. \parencites[Thm. 3.5.1]{RudolphSchmidt2017}[\S 3.3.1 \& Thm. 5.2.13]{SS25-Bun}), and one can ask which kinds of higher gauge fields are described by using that in \cref{TheNonabelianDifferentialCohomologyPullback}: 

Given that in the (co)limit of infinite rank  this goes to the classifying space of (reduced) topological K-theory, $\begin{tikzcd}[sep=17pt]B \mathrm{U}(n) \ar[r, "{ n \to \infty }"] & B \mathrm{U} \end{tikzcd}$ \cref{ClassifyingSpaceForComplexKtheory}, we already see that setting $\mathcal{A} := B G$ generally gives a differential refinement of the nonabelian cohomology theory sometimes called \emph{unstable K-theory} \cite{HamanakaKono2004}. The cocyles of such differential unstable K-theory have been worked out in \cite[Prop. 9.4]{FSS23-Char}, following \cite[p. 28]{HopkinsSinger2005}: They are indeed represented by principal connections, hence by Yang-Mills fields, but subject to a stronger equivalence relation than principal gauge isomorphism.

This stronger equivalence relation is of course that which in the stable case, $n \to \infty$, has been argued to reflect D-brane/antibrane pair creation/annihilation \cite[\S 3]{Witten1998-DBranesAndKTheory}.
This is an immediate consequence of the traditional \emph{Hypothesis K} which may remain underappreciated: If D-brane charge is really classified in K-theory, then the Yang-Mills fields commonly considered on coincident D-branes (following \cite[p. 7]{Witten1996-BoundStates}) are not as such actually physically observable, only their class in differential K-theory is (cf. \cite{BMSS19-GaugeEnhancement}).
\end{example}

\section{Quantization of their Topological Sector}
\label{QuantizationInTopologicalSector}

Now that we quantize, our ground field becomes the complex numbers.

\subsection{Topological Observables}
\label{OnTopologicalObservables}
\footnote{
  Our discussion of topological observables follows \cite[\S 3]{SS24-Obs} and \parencites[\S 2.5]{SS22-Conf}[\S 4]{CSS23-QuantumStates}.
}

In quantizing the topological sector of a globally completed (flux-quantized) higher gauge theory, we first consider the topological observables, in order to then pass to their quantization as an algebra of topological quantum observables, and to the determination of the topological quantum states.

\subsubsection{Ordinary topological observables}
\footnote{
  For the ordinary notion of observables as smooth functions on phase space see \cite{SimmsWoodhouse1976}. For ordinary topological observables as the 0-homology of phase space see \cite[\S 4.1.4]{SS25-WilsonLoops}.
}

Ordinary \emph{observables} $O$ are the smooth scalar functions on phase space \cref{ThePhaseSpace}:
\begin{equation}
  \label{OrdinaryObservable}
  \begin{tikzcd}
    \mathrm{PhsSpc}
    \ar[
      r,
      dashed,
      "{ O }"
    ]
    &
    \mathbb{C}\,.
  \end{tikzcd}
\end{equation}
Since $\begin{tikzcd}[sep=small]\mathbb{C} \in \mathrm{SmthSpc} \ar[r, hook] & \mathrm{SmthGrpd}_\infty\end{tikzcd}$ is 0-truncated, these are automatically \emph{gauge invariant}.

A \emph{topological observable} is one that is sensitive only to the global shape of the phase space, but not to its local geometry, hence is such a map \cref{OrdinaryObservable} that factors through the shape unit:
\begin{equation}
  \begin{tikzcd}
    \mathrm{PhsSpc}
    \ar[
      rr,
      uphordown,
      "{ O }"
    ]
    \ar[
      r,
      "{
        \eta^{\scalebox{.6}{$\shape$}}
      }"
    ]
    &
    \shape
    \,
    \mathrm{PhsSpc}
    \ar[
      r,
      dashed
    ]
    &
    \mathbb{C}
    \mathrlap{\,.}
  \end{tikzcd}
\end{equation}
As before, since $\mathbb{C}$ is 0-truncated, this factors furthermore through the 0-truncation of the shape of the phase space, which is the set of connected components $\pi_0(-)$:
\begin{equation}
  \begin{tikzcd}[column sep=large]
    \mathrm{PhsSpc}
    \ar[
      rrr,
      uphordown,
      "{ O }"
    ]
    \ar[
      r,
      "{
        \eta^{\scalebox{.6}{$\shape$}}
      }"
    ]
    &
    \shape
    \,
    \mathrm{PhsSpc}
    \ar[
      r,
      "{
        \eta^{[-]_0}
      }"
    ]
    &
    \pi_0 \shape\, \mathrm{PhsSpc}
    \ar[
      r,
      dashed
    ]
    &
    \mathbb{C}
    \mathrlap{\,.}
  \end{tikzcd}
\end{equation}
Moreover, realistic observables (those realizable in experiment) are nonvanishing only on a finite number of connected components of the phase space. Therefore the realistic topological observables form the 0th \emph{homology} of the shape of the phase space:
\begin{equation}
  \mathrm{TopObs}
  =
  H_0\big(
    \shape\, 
    \mathrm{PhsSpc}
    ;\,
    \mathbb{C}
  \big)
  \,.
\end{equation}

\subsubsection{Higher topological observables}
\label{OnHigherTopologicalObservables}
\footnote{
  The notion of higher observables as the higher homology of a higher stacky phase space appears in \parencites{CSS23-QuantumStates}[\S 4]{SS24-Obs} following \cite[\S 2.5]{SS22-Conf}.

  The homotopy theory of $\infty$-vector spaces modeled as simplicial chain complexes is established in \cite[Def. 3.2, Thm. 3.3]{SS23-EoS}.
}

Beyond its connected components, (the shape of) the phase space encodes information about higher global symmetries of the higher gauge fields, also known as ``categorified symmetries'' (cf. \cite{SchreiberSkoda2009}),  ``generalized symmetries'' (cf. \cite{GripaiosEtAl2024}) or ``higher-form symmetries'' (cf. \cite{PerezLona2025}), for more see \cite{Kaidi2026,nLab:GeneralizedGlobalSymmetry}. If we imagine (as has become common these days) that these are also observable in experiment, then we should say that we have an \emph{$\mathbb{N}$-graded} vector space of topological \emph{higher observables}, given by the higher homology groups of the shape of the phase space:
\begin{equation}
  \label{TheHigherTopologicalObservables}
  \mathrm{TopObs}_\bullet
  :=
  H_\bullet\big(
    \shape\, \mathrm{PhsSpc}
    ;\,
    \mathbb{C}
  \big)
  \mathrlap{\,.}
\end{equation}

We may further resolve this structure by passing from homology classes to chains. 
To that end, consider the category of \emph{simplicial chain complexes}
\begin{equation}
  \mathrm{Ch}\big(
    \mathrm{Vect}
  \big)_\Delta
  :=
  \mathrm{Sh}\big(
    \Delta,
    \mathrm{Ch}(\mathrm{Vect})
  \big)
\end{equation}
and force into homotopy equivalences the maps that induce quasi-isomorphisms on totalizations:
\begin{equation}
  \label{TotalQuasiIsomorphisms}
  W 
  :=
  \big\{
    \mbox{Total-quasi-isomorphisms}
  \big\}
  \subset
  \mathrm{Mor}\Big(
    \mathrm{Ch}\big(
      \mathrm{Vect}
    \big)_\Delta
  \Big)
  \mathrlap{\,,}
\end{equation}
to obtain 
the $\infty$-category of \emph{$\infty$-vector spaces} (equivalently: \emph{$H\mathbb{C}$-modules}, cf. \cite{SS26-KLoc}):
\begin{equation}
  \label{InfinityCategoryOfInfinityVectorSpaces}
  \mathrm{Vect}_\infty
  :=
  L^W \mathrm{Ch}\big(
    \mathbb{C}\mathrm{Vect}
 \big)_\Delta
 \mathrlap{\,.}
\end{equation}

Now, since simplicial chain complexes support a combinatorial simplicial model category structure with the above weak equivalences \cref{TotalQuasiIsomorphisms}, there is  a canonical ``\emph{tensoring}'' of $\infty$-vector spaces over $\infty$-groupoids
\begin{equation}
  \label{TensoringOfInfinityVectOverInfinityGrpd}
  \begin{tikzcd}
    \mathrm{Grpd}_\infty
    \times
    \mathrm{Vect}_\infty
    \ar[
      rr,
      "{ (-)\cdot(-) }"
    ]
    &&
    \mathrm{Vect}_\infty
  \end{tikzcd}
\end{equation}
which on the tensor unit
$
  \TensorUnit 
    \in 
  \mathrm{Vect}_\infty
$
sends $\infty$-groupoids to their $\mathbb{C}$-linearization or \emph{$\mathbb{C}$-motive} (cf. \parencites[\S 8.2]{Sc14-LinTypes}[Prop. 2.10]{SS25-Monadology}[\S 2.3]{SS26-KLoc}), represented by what classically is their singular chain complex $\mathbb{C}[\mathcal{X}]$:
\begin{equation}
  \mathcal{X}
  \in
  \mathrm{Grpd}_\infty
  \;\;\;\;
  \vdash
  \;\;\;\;
  \mathbb{C}[\mathcal{X}]
  :=
  \mathcal{X} \cdot \TensorUnit
  \mathrlap{\,.}
\end{equation}

Therefore, the \emph{motive of the shape of the phase space} is
\begin{equation}
  \label{MotiveOfShapeOfPhaseSpace}
  \mathbf{TopObs}
  =
  \mathbb{C}\big[
    \shape\, \mathrm{PhsSpc}
  \big]
  \in
  \mathrm{Vect}_\infty
\end{equation}
and the higher topological observables \cref{TheHigherTopologicalObservables} are its stable homotopy groups:
\begin{equation}
  \label{HigherTopObsAsHomotopyGroups}
  \mathrm{TopObs}_\bullet
  \simeq
  \pi_\bullet\big(
    \mathbf{TopObs}
  \big)
  \defneq
  \pi_\bullet\big(
    \mathbb{C}\big[
      \shape \mathrm{PhsSpc}
    \big]
  \big)
  \mathrlap{\,.}
\end{equation}

\subsubsection{The shape of phase space}
\footnote{
  For more on the shape of smooth $\infty$-groupoids see \parencites[\S 4.3]{SS25-Bun}. The smooth Oka principle (\cref{SmoothOkaPrinciple}) is  due to \parencites{PavlovEtAl2024}[\S 1.1.2]{SS25-Bun}
}

Hence, to proceed with analyzing the topological observables, we need a good handle on the shape of the completed phase space of flux-quantized higher gauge fields. 

\begin{lemma}[Smooth Oka principle]
  \label[proposition]{SmoothOkaPrinciple}
  For $\begin{tikzcd}[sep=small]X \in \mathrm{SmthMfd} \ar[r, hook] & \mathrm{SmthGrpd}_\infty \end{tikzcd}$ and any $\mathbf{A} \in \mathrm{SmthGrpd}_\infty$ we have natural equivalences
  \begin{equation}
    \label{SmoothOkaPrincipleEquivalence}
    \begin{aligned}
    \shape\, 
    \mathrm{Map}\big(
      X, \mathbf{A}
    \big)
    & \simeq
    \mathrm{Map}\big(
      \shape X, \shape \mathbf{A}
    \big)
    \\
    & \simeq
    \mathrm{Map}\big(
      X, \shape \mathbf{A}
    \big)
    \mathrlap{\,.}
    \end{aligned}
  \end{equation}
\end{lemma}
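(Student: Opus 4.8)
The plan is to handle the two equivalences in \cref{SmoothOkaPrincipleEquivalence} separately, since only the first is substantial. For the second I expect a purely formal argument from cohesion of $\mathrm{SmthGrpd}_\infty$: $\shape$ is left adjoint to the fully faithful inclusion $\mathrm{Disc}\colon\mathrm{Grpd}_\infty\hookrightarrow\mathrm{SmthGrpd}_\infty$ of the geometrically discrete objects, and it preserves finite products. Given this, for any geometrically discrete $\mathbf{B}=\mathrm{Disc}\,B$ and any $X$ the $\mathbb{R}^n$-plots of $\mathrm{Map}(X,\mathbf{B})$ are
\[
  \mathrm{Hom}\big(\mathbb{R}^n\times X,\,\mathrm{Disc}\,B\big)
  \;\simeq\;
  \mathrm{Hom}_{\mathrm{Grpd}_\infty}\big(\shape(\mathbb{R}^n\times X),\,B\big)
  \;\simeq\;
  \mathrm{Hom}_{\mathrm{Grpd}_\infty}\big(\shape X,\,B\big)\,,
\]
using $\shape(\mathbb{R}^n\times X)\simeq\shape\mathbb{R}^n\times\shape X\simeq\shape X$; hence $\mathrm{Map}(X,\mathbf{B})\simeq\mathrm{Disc}\,\mathrm{Map}(\shape X,B)$ is itself geometrically discrete and $\shape$ acts on it as the identity. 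Applying this with $B:=\shape\mathbf{A}$ gives $\mathrm{Map}(X,\shape\mathbf{A})\simeq\mathrm{Map}(\shape X,\shape\mathbf{A})\simeq\shape\,\mathrm{Map}(X,\shape\mathbf{A})$, so that the whole of \cref{SmoothOkaPrincipleEquivalence} reduces to showing that post-composition with the shape unit $\mathbf{A}\to\shape\mathbf{A}$ induces an equivalence $\shape\,\mathrm{Map}(X,\mathbf{A})\xrightarrow{\ \sim\ }\shape\,\mathrm{Map}(X,\shape\mathbf{A})$.

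To attack this remaining equivalence I would compute the left side through the explicit path-groupoid model \cref{ShapeAsSmoothPathGroupoid} of shape: $\shape\mathbf{Y}$ is the geometric realization of the simplicial $\infty$-groupoid $[k]\mapsto\mathrm{Map}(\Delta^k_e,\mathbf{Y})$, where $\Delta^k_e\cong\mathbb{R}^k$ is the affine (extended) $k$-simplex. Taking $\mathbf{Y}=\mathrm{Map}(X,\mathbf{A})$ and using the exponential law $\mathrm{Map}(\Delta^k_e,\mathrm{Map}(X,\mathbf{A}))\simeq\mathrm{Map}(\Delta^k_e\times X,\mathbf{A})$ presents $\shape\,\mathrm{Map}(X,\mathbf{A})$ as the realization of the \emph{smooth} singular complex $[k]\mapsto\mathrm{Map}(\Delta^k_e\times X,\mathbf{A})$ of the mapping stack, whereas the right side $\mathrm{Map}(\shape X,\shape\mathbf{A})\simeq\mathrm{Map}(\mathrm{Sing}\,X,\shape\mathbf{A})$ is governed by the ordinary \emph{continuous} singular complex of $X$. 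Thus the lemma becomes the assertion that, for $X$ a manifold, restricting along $\Delta^\bullet_{\mathrm{smth}}\hookrightarrow\Delta^\bullet_{\mathrm{top}}$ changes nothing here --- every map $\Delta^k_{\mathrm{top}}\times X\to\mathbf{A}$ can be deformed to one that is smooth in the simplex directions, compatibly in $k$ so as to descend to realizations.

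From here there are two structural reductions. First, one may reduce to $X$ a Cartesian space: a differentiably good open cover (\cref{SiteOfCartesianSpaces}) presents a manifold $X$ as the $\infty$-colimit of its \v{C}ech nerve, with terms disjoint unions of $\mathbb{R}^n$'s, so $\mathrm{Map}(-,\mathbf{A})$ and $\mathrm{Map}(-,\shape\mathbf{A})$ turn this into totalizations; the nontrivial point --- and a large part of why the statement is a theorem rather than a formality --- is that $\shape$ must be shown to commute with exactly these totalizations, since $\shape$ does not preserve limits in general. Second, via the Postnikov tower of $\mathbf{A}$ and its defining fibre sequences with Eilenberg--MacLane fibres $B^m\mathbb{A}$ (with the finite-dimensionality of $X$ controlling convergence), one may further reduce to $\mathbf{A}=B^m\mathbb{A}$, where the claim is a de Rham- / smoothing-of-cocycles statement. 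I expect the genuine obstacle to be the remaining smoothing lemma itself: for $X$ a Cartesian space (hence, via the cover, for any manifold), the inclusion of those maps $\Delta^k_{\mathrm{top}}\times X\to\mathbf{A}$ that are fibrewise smooth over $X$ in the $\Delta^k$-direction into all continuous-in-$\Delta^k$ such maps is a weak equivalence; this is proved by a partition-of-unity and collaring argument in which the smooth structure of $X$ is indispensable --- precisely the reason $X$ must be a manifold and cannot be replaced by an arbitrary $\mathbf{A}$. This Oka-type input, together with the limit-preservation facts needed for the two reductions, is exactly what \cite{PavlovEtAl2024} supplies, and I would invoke it for the analytic core rather than redo it.
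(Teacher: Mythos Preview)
The paper does not prove this lemma: it is stated with a footnote attributing the result to \cite{PavlovEtAl2024} and \cite[\S 1.1.2]{SS25-Bun}, and is then used as a black box in \cref{ShapeOfAdiff} and \cref{TheShapeOfPhaseSpace}. So there is no in-paper proof to compare against.

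Your outline is a reasonable reconstruction of the argument behind the cited result. The formal reduction of the second equivalence to cohesion ($\shape$ left adjoint to $\mathrm{Disc}$, product-preserving, $\shape\mathbb{R}^n\simeq\ast$) is correct and is indeed the easy half. Your identification of the substantive content --- that $\shape$ must commute with the \v{C}ech totalization over a good cover, and that the analytic core is a smoothing statement for simplices mapping into the target --- matches what \cite{PavlovEtAl2024} actually establishes. One small comment: the Postnikov reduction you sketch is not strictly needed in that reference; the concordance-theoretic model of shape via \cref{ShapeAsSmoothPathGroupoid} handles general $\mathbf{A}$ directly once the smoothing input is in place. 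But since you, like the paper, ultimately defer to \cite{PavlovEtAl2024} for the hard step, your proposal is effectively the same as the paper's treatment: cite the source.
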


\begin{proposition}
\label[proposition]{ShapeOfAdiff}
The shape of the nonabelian differential cohomology moduli $\mathcal{A}_{\mathrm{diff}}$ \cref{GaugeFieldsAsLifts} is the classifying shape $\mathcal{A}$:
\begin{equation}
  \label{ShapeOfDifferentialModuliIsCalA}
  \shape \, \mathcal{A}_{\mathrm{diff}}
  \;\simeq\;
  \mathcal{A}
\end{equation}
and under this equivalence the shape unit is the universal charge map:
\begin{equation}
  \label{ShapeUnitOnADiffIsUniversalCharge}
  \eta
    ^{\scalebox{.6}{$\shape$}}
    _{ \mathcal{A}_{\mathrm{diff}} }
  \simeq
  \rchi_{\mathrm{univ}}
  \mathrlap{\,.}
\end{equation}
\end{proposition}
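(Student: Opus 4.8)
The plan is to read the statement off the defining homotopy pullback of $\mathcal{A}_{\mathrm{diff}}$. Writing $\mathfrak{a} \simeq \mathfrak{l}\mathcal{A}$ as before, the (pullback) square in \cref{GaugeFieldsAsLifts} exhibits $\mathcal{A}_{\mathrm{diff}}$ as the fibre product of the differential character $\mathbf{ch}^{\mathcal{A}}$ \cref{DifferentialCharacterMap} with the shape unit $\eta^{\shape}$:
\[
  \mathcal{A}_{\mathrm{diff}}
  \;\simeq\;
  \mathcal{A}
  \times_{\shape\,\Omega^1_{\mathrm{dR}}(-;\mathfrak{a})_{\mathrm{cl}}}
  \Omega^1_{\mathrm{dR}}(-;\mathfrak{a})_{\mathrm{cl}}
  \,.
\]
The crucial structural point is that the base $\shape\,\Omega^1_{\mathrm{dR}}(-;\mathfrak{a})_{\mathrm{cl}}$ of this pullback is \emph{geometrically discrete}: it is the nonabelian de Rham coefficient object (the real rationalization associated with $\mathfrak{a}$), which by the nonabelian de Rham theorem \cite[\S 6 \& \S 9]{FSS23-Char} lies in $\mathrm{Grpd}_\infty \hookrightarrow \mathrm{SmthGrpd}_\infty$ and carries no smooth structure; and $\mathcal{A} = \shape\,A$ is likewise discrete by the standing convention \cref{DifferentialCharacter}.

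The one non-formal ingredient is then the fact that $\shape$ preserves homotopy pullbacks over discrete objects. I would deduce this from the cohesion of $\mathrm{SmthGrpd}_\infty$: for a discrete base $D$ one has $\mathrm{SmthGrpd}_{\infty/D} \simeq \mathrm{Fun}(D, \mathrm{SmthGrpd}_\infty)$ (slices over a colimit are limits of slices, and $D \simeq \mathrm{colim}_{D}\,\ast$), so any $\mathbf{X} \to D$ decomposes as the $D$-indexed colimit of its fibres, a fibre product $\mathbf{X}\times_D\mathbf{Y}$ is the $D$-indexed colimit of the pointwise products of fibres, and --- via this description together with co-Yoneda for $\infty$-groupoid-indexed colimits --- the functor $\shape = \mathrm{Disc}\,\Pi$, being a left adjoint (so preserving those colimits), satisfying the cohesion axiom that $\Pi$, hence $\shape$, preserves finite products, and restricting to the identity on discrete objects, carries the description of $\mathbf{X}\times_D\mathbf{Y}$ to the analogous one for $\shape\mathbf{X}\times_D\shape\mathbf{Y}$. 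One may alternatively just cite this, as it underlies the shape computations of differential refinements in \cite[\S 9]{FSS23-Char}. This is where all the content sits, and I expect it to be the main obstacle --- granting it, the rest is bookkeeping.

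Given this, apply $\shape$ to the defining pullback: since $\shape\,\mathcal{A} \simeq \mathcal{A}$ by discreteness, and $\shape$ sends the shape unit $\eta^{\shape}_{\Omega^1_{\mathrm{dR}}(-;\mathfrak{a})_{\mathrm{cl}}}$ to an equivalence by idempotency of the shape modality, the fibre product collapses along that equivalence and yields $\shape\,\mathcal{A}_{\mathrm{diff}} \simeq \mathcal{A}$, which is \cref{ShapeOfDifferentialModuliIsCalA}. For \cref{ShapeUnitOnADiffIsUniversalCharge}, naturality of the shape unit along $\rchi_{\mathrm{univ}}\colon \mathcal{A}_{\mathrm{diff}} \to \mathcal{A}$ gives $\shape(\rchi_{\mathrm{univ}}) \circ \eta^{\shape}_{\mathcal{A}_{\mathrm{diff}}} \simeq \eta^{\shape}_{\mathcal{A}} \circ \rchi_{\mathrm{univ}}$; here $\eta^{\shape}_{\mathcal{A}}$ is an equivalence, and under the identification just obtained $\shape(\rchi_{\mathrm{univ}})$ is precisely the collapsing equivalence $\shape\,\mathcal{A}_{\mathrm{diff}} \xrightarrow{\sim} \mathcal{A}$ (it is the image under $\shape$ of the top leg of the pullback square), so that $\eta^{\shape}_{\mathcal{A}_{\mathrm{diff}}}$ is identified with $\rchi_{\mathrm{univ}}$, as claimed. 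As a consistency check, combining \cref{ShapeOfDifferentialModuliIsCalA} with the smooth Oka principle \cref{SmoothOkaPrinciple} then gives $\shape\,\mathrm{PhsSpc} \simeq \mathrm{Map}(X^d, \mathcal{A})$, which is the only input needed in \cref{QuantizationInTopologicalSector}.
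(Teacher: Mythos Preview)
Your proof is correct and follows essentially the same approach as the paper: both apply $\shape$ to the defining pullback square, use that $\shape$ preserves pullbacks over discrete (pure-shape) bases together with idempotency to see that $\shape\rchi_{\mathrm{univ}}$ becomes an equivalence, and then deduce the second claim from the factorization/naturality of the shape unit through discrete objects. Your version is slightly more explicit in justifying why $\shape$ preserves such pullbacks (the paper simply asserts it), and you phrase the second half via the naturality square of $\eta^{\shape}$ rather than the paper's factorization $\rchi_{\mathrm{univ}} \simeq F \circ \eta^{\shape}$ plus 2-out-of-3, but these are the same argument in different clothing.
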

\begin{proof}
  Using that the shape modality is idempotent, whence $\shape \eta^{\scalebox{.6}{$\shape$}} \simeq \mathrm{id}$ and $\shape \mathbf{ch}^{\scalebox{.6}{$\mathcal{A}$}} \simeq \mathbf{ch}^{\scalebox{.6}{$\mathcal{A}$}}$, and that it preserves pullbacks over pure-shape objects we obtain the following pullback on the right
  \begin{equation}
    \shape
    \left(
    \begin{tikzcd}[column sep=18pt]
      \mathcal{A}_{\mathrm{diff}}
      \ar[
        r,
        "{ \rchi_{\mathrm{univ}} }"
      ]
      \ar[
        d,
        "{
          \vec B_{\mathrm{univ}}
        }"{swap}
      ]
      \ar[
        dr,
        phantom,
        "{ \lrcorner }"{pos=.1}
      ]
      &
      \mathcal{A}
      \ar[
        d,
        "{ \mathbf{ch}^{\mathcal{A}} }"
      ]
      \\
      \mathbf{\Omega}^1_{\mathrm{dR}}\big(
        -;
        \mathfrak{a}
      \big)_{\mathrm{cl}}
      \ar[
        r,
        shorten=-2pt,
        "{ \eta^{\scalebox{.6}{$\shape$}} }"
      ]
      &
      \shape
      \,
      \mathbf{\Omega}^1_{\mathrm{dR}}\big(
        -;
        \mathfrak{a}
      \big)_{\mathrm{cl}}
    \end{tikzcd}
    \right)
    \simeq\;\;\;
    \begin{tikzcd}[column sep=14pt]
      \shape\mathcal{A}_{\mathrm{diff}}
      \ar[
        r,
        "{ \shape \rchi_{\mathrm{univ}} }"
      ]
      \ar[
        d,
        "{
          \shape \vec B_{\mathrm{univ}}
        }"{swap}
      ]
      \ar[
        dr,
        phantom,
        "{ \lrcorner }"{pos=.1}
      ]
      &
      \mathcal{A}
      \ar[
        d,
        "{ \mathbf{ch}^{\mathcal{A}} }"
      ]
      \\
      \shape
      \mathbf{\Omega}^1_{\mathrm{dR}}\big(
        -;
        \mathfrak{a}
      \big)_{\mathrm{cl}}
      \ar[
        r,
        equals,
        shorten=-2pt
      ]
      &
      \shape
      \,
      \mathbf{\Omega}^1_{\mathrm{dR}}\big(
        -;
        \mathfrak{a}
      \big)_{\mathrm{cl}}
      \mathrlap{\,.}
    \end{tikzcd}
  \end{equation}
  Since equivalences are preserved under pullback, this shows that $\shape \rchi_{\mathrm{univ}}$ is an equivalence, proving the first statement. 

  Moreover, the universal property of the shape implies that a map to a pure shape object like $\mathcal{A}$ factors through the shape unit of its domain, so that 
  \begin{equation}
    \begin{tikzcd}
      \mathcal{A}_{\mathrm{diff}}
      \ar[
        r,
        "{
          \eta^{\scalebox{.6}{$\shape$}}
        }"
      ]
      \ar[
        rr,
        uphordown,
        "{
          \rchi_{\mathrm{univ}}
        }"
      ]
      &
      \shape \mathcal{A}_{\mathrm{diff}}
      \ar[r, "{ F }"]
      &
      \mathcal{A}
      \mathrlap{\,,}
    \end{tikzcd}
  \end{equation}
  for some pure shape map $F$.
  But with $\shape \rchi_{\mathrm{univ}}$ and $\shape \eta^{\scalebox{.6}{$\shape$}}$ being equivalences, also $\shape F \simeq F$ is an equivalence, which proves the second claim.
\end{proof}

\begin{corollary}
\label[corollary]{TheShapeOfPhaseSpace}
The shape of the phase space of $\mathcal{A}$-flux-quantized higher gauge fields is that of the mapping space of the Cauchy surface $X^d$ into $\mathcal{A}$:
\begin{equation}
  \begin{aligned}
  \shape
  \,
  \mathrm{PhsSpc}
  & 
  \underset{
    \mathclap{\scalebox{.7}{\cref{ThePhaseSpace}}
    }
  }{\defneq}
  \shape\, \mathrm{Map}\big(
    X^d,
    \mathcal{A}_{\mathrm{diff}}
  \big)
  \underset{
    \mathclap{\scalebox{.7}{\cref{SmoothOkaPrincipleEquivalence}}
    }
  }{\simeq}
  \mathrm{Map}\big(
    X^d,
    \shape\, \mathcal{A}_{\mathrm{diff}}
  \big)
  \\
  & 
  \underset{
    \mathclap{\scalebox{.7}{\cref{ShapeOfDifferentialModuliIsCalA}}
    }
  }{\simeq}
  \mathrm{Map}\big(
    X^d
    ,
    \mathcal{A}
  \big)
  \mathrlap{\,,}
  \end{aligned}
\end{equation}
and hence the higher topological observables \cref{HigherTopObsAsHomotopyGroups} are the homology groups of this mapping space (the homotopy groups of its ``motive'' \cref{MotiveOfShapeOfPhaseSpace}):
\begin{equation}
  \label{HigherTopOsAsHomotopyGroupsOfMaps}
  \mathrm{TopObs}_\bullet
  \simeq
  \pi_\bullet\Big(
    \mathbb{C}\big[
      \mathrm{Map}(X^d, \mathcal{A})
    \big]
  \Big)
  \mathrlap{\,.}
\end{equation}
\end{corollary}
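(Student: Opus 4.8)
The plan is to chain together the three structural inputs already in hand, with essentially no new work. First I would unfold the definition $\mathrm{PhsSpc} \defneq \mathrm{Map}(X^d, \mathcal{A}_{\mathrm{diff}})$ from \cref{ThePhaseSpace} and check that the hypotheses of the Smooth Oka principle are met: the Cauchy surface $X^d$ is an honest smooth manifold (by \cref{GloballyHyperbolicSpacetime}), hence an object of $\mathrm{SmthMfd} \hookrightarrow \mathrm{SmthGrpd}_\infty$, and $\mathcal{A}_{\mathrm{diff}}$ is a bona fide smooth $\infty$-groupoid by its construction in \cref{TheNonabelianDifferentialCohomologyPullback}. Then \cref{SmoothOkaPrinciple} yields $\shape\, \mathrm{Map}(X^d, \mathcal{A}_{\mathrm{diff}}) \simeq \mathrm{Map}(X^d, \shape\, \mathcal{A}_{\mathrm{diff}})$, which is the first of the two equivalences displayed in \cref{SmoothOkaPrincipleEquivalence}.

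Second I would substitute the identification $\shape\, \mathcal{A}_{\mathrm{diff}} \simeq \mathcal{A}$ supplied by \cref{ShapeOfAdiff}, giving $\shape\, \mathrm{PhsSpc} \simeq \mathrm{Map}(X^d, \mathcal{A})$ — the first displayed claim. For good measure I would also record that, by \cref{ShapeUnitOnADiffIsUniversalCharge}, under this identification the shape unit on phase space is levelwise postcomposition with the universal charge map $\rchi_{\mathrm{univ}}$, so that the equivalence is compatible with charge extraction on field configurations; this is not needed for the bare statement but clarifies its physical content.

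Third, for the homology claim I would invoke \cref{HigherTopObsAsHomotopyGroups}, which reads $\mathrm{TopObs}_\bullet \simeq \pi_\bullet\big(\mathbb{C}[\shape\, \mathrm{PhsSpc}]\big)$. Since the linearization functor $\mathbb{C}[-] = (-)\cdot \TensorUnit$ of \cref{TensoringOfInfinityVectOverInfinityGrpd} preserves equivalences and $\pi_\bullet$ is homotopy-invariant on $\mathrm{Vect}_\infty$, the equivalence $\shape\, \mathrm{PhsSpc} \simeq \mathrm{Map}(X^d, \mathcal{A})$ propagates to $\mathrm{TopObs}_\bullet \simeq \pi_\bullet\big(\mathbb{C}[\mathrm{Map}(X^d, \mathcal{A})]\big)$, which is \cref{HigherTopOsAsHomotopyGroupsOfMaps}.

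There is no genuine obstacle here: the content has been fully absorbed into \cref{SmoothOkaPrinciple} and \cref{ShapeOfAdiff}, and the corollary is a bookkeeping assembly of the two. The only point deserving a sentence of care is confirming that $\mathrm{Map}(X^d, \mathcal{A}_{\mathrm{diff}})$ is being formed in $\mathrm{SmthGrpd}_\infty$ — so that the Oka principle applies — rather than in the $1$-categorical $\mathrm{SmthSet}$, and that the equivalences are natural enough to be passed through $\mathbb{C}[-]$; both are immediate from the ambient setting.
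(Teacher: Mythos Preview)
Your proposal is correct and follows exactly the paper's approach: the corollary's proof is embedded in the displayed chain of equivalences, invoking in turn the definition \cref{ThePhaseSpace}, the Smooth Oka principle \cref{SmoothOkaPrincipleEquivalence}, and \cref{ShapeOfDifferentialModuliIsCalA}, with the homology claim following directly from \cref{HigherTopObsAsHomotopyGroups}. Your additional remarks about verifying the Oka hypotheses and the compatibility with $\rchi_{\mathrm{univ}}$ are reasonable elaborations but not required for the bare statement.
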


\subsubsection{Solitonic fields}
\label{SolitonicFields}
\footnote{
  Discussion of solitonic charges classified by pointed maps is in \parencites[\S 2.2]{SS25-Flux}[\S A.2]{SS24-Obs}. The solitonic phase space is discussed in \cite[\S A.2 (91)]{SS25-WilsonLoops}.

  For the classical notion of \emph{one-point compactification} $(-)_{\cpt}$ cf. \parencites[pp. 150]{Kelley1975}[p. 199]{Bredon1993}{nLab:OnePointCompactification}.
}

Generally, the field configurations of topological interest are \emph{solitonic}, meaning that their flux densities and charges \emph{vanish at infinity}. 
To model this, take: 
\begin{enumerate}

\item the charge classifying space $\mathcal{A}$ to be equipped with a (``base'') point, to be thought of as the classifier for zero-charge:
\begin{equation}
  \label{TheZeroCharge}
  \begin{tikzcd}
    \{0\}
    \ar[r, "{ \iota_0 }"]
    &
    \mathcal{A}
    \mathrlap{\,,}
  \end{tikzcd}
\end{equation}

\item
the spatial domain $X^d_{\cpt}$ to be the \emph{one-point compactification}  of a manifold $X^d$:
\begin{equation}
  \label{OnePointCompactification}
  X^d_{\cpt}
  :=
  X^d/\mathrm{ends}
  \,,
\end{equation}
identifying the \emph{ends} of a non-compact space (such as the evident two ends $\pm \infty$ of the real line) with a \emph{point at infinity}
\begin{equation}
  \label{ThePointAtInfinity}
  \begin{tikzcd}
    \{\infty\}
    \ar[r, "{ \iota_{\infty} }"]
    &
    X^d_{\cpt}
    \mathrlap{\,,}
  \end{tikzcd}
\end{equation}
which we take to be equipped with a \emph{neighborhood of infinity},
$\begin{tikzcd}[sep=small] O_\infty \ar[r, hook] & X^d_{cpt}  \end{tikzcd}$.

\end{enumerate}

Then the \emph{solitonic phase space} $\mathrm{PhsSpc}^{\mathrm{sol}}$ is the variant of the phase space \cref{ThePhaseSpace} where the fluxes and charges are constrained to vanish on the neighborhood of infinity. Its shape is computed in corresponding variation of \cref{TheShapeOfPhaseSpace} and comes out as the \emph{pointed mapping space} $\mathrm{Map}^\ast(-,-)$ out of the one-point compactification:
\begin{equation}
  \label{ShapeOfSolitonicPhaseSpace}
  \shape \mathrm{PhsSpc}^{\mathrm{sol}}
  \simeq
  \mathrm{Map}^\ast\big(
    X^d_{\cpt},
    \mathcal{A}
  \big)
  \mathrlap{\,.}
\end{equation}

To appreciate this: When modeled via the compact-open topology on the set of continuous maps to a representative pointed topological space $A$ (with $\mathcal{A} \simeq \shape A$) out of a CW-complex $X^d_{\cpt}$, then the pointed mapping space \cref{ShapeOfSolitonicPhaseSpace} is the space of maps $\rchi$ that \emph{literally vanish at infinity} in that they take the value $0 \in A$ when evaluated at $\infty \in X^d$:
\begin{equation}
  \rchi \in 
  \mathrm{Map}^\ast\big(
    X^d_{\cpt}
    ,\,
    A
  \big)
  \;\;\;
  \Rightarrow
  \;\;\;
  \begin{tikzcd}[
    row sep=15pt
  ]
    X^d_{\cpt}
    \ar[
      rr,
      dashed,
      "{ \rchi }"
    ]
    &&
    A
    \\
    \{\infty\}
    \ar[
      u,
      hook,
      "{ \iota_{\infty} }"
    ]
    \ar[
      rr
    ]
    &&
    \{0\}
    \mathrlap{\,.}
    \ar[
      u,
      hook,
      "{ \iota_0 }"
    ]
  \end{tikzcd}
\end{equation}

Accordingly, in generalization of \cref{HigherTopOsAsHomotopyGroupsOfMaps} have the \emph{solitonic topological observables}
\begin{equation}
  \label{HigherSolitonicTopologicalObservables}
  \begin{aligned}
  \mathrm{TopObs}^{\mathrm{sol}}_\bullet
  & :=
  \pi_\bullet\big(
    \mathbf{TopObs}^{\mathrm{sol}}
  \big)
  =
  \pi_\bullet\Big(
    \mathbb{C}\big[
      \shape 
      \mathrm{PhsSpc}^{\mathrm{sol}}
    \big]
  \Big)
  \\
  & =
  \pi_\bullet\Big(
    \mathbb{C}\big[
      \mathrm{Map}^\ast(
        X^d_{\cpt}, 
        \mathcal{A}
      )
    \big]
  \Big)
  \mathrlap{\,.}
  \end{aligned}
\end{equation}

When $X^d$ is already compact, then its point at infinity is disjoint
\begin{equation}
  \mbox{$X$ compact}
  \;\;\;
  \Rightarrow
  \;\;\;
  X_{\plus}
  :=
  X \sqcup \{\infty\}
  \mathrlap{\,,}
\end{equation}
in which case the pointed maps reduce to unpointed maps:
\begin{equation}
  \mathrm{Map}^\ast\big(
    X^d_{\cpt}
    ,
    \mathcal{A}
  \big)
  \simeq
  \mathrm{Map}\big(
    X^d
    ,
    \mathcal{A}
  \big)
  \mathrlap{\,.}
\end{equation}
Therefore we will suppress the superscript $(-)^{\mathrm{sol}}$ in the following and understand ``topological field sectors'' and ``topological observables'' to refer to solitonic fields.

For example:
\footnote{
  The following statements about the one-point compactification apply to locally compact Hausdorff spaces.
}
\begin{enumerate}
\item The end of Euclidean space $\mathbb{R}^d$ is the proverbial ``$(d-1)$-sphere at infinity'' whose identification with  a single point $\infty$ closes $\mathbb{R}^d$ to the $d$-sphere:
\begin{equation}
  \mathbb{R}^d_{\cpt}
  \simeq
  S^d
  \mathrlap{\,.}
\end{equation}

\item
If  a space is already compact, then its infinity is disjoint:
\begin{equation}
  \mbox{$X$ compact}
  \;\;\;\;
  \Rightarrow
  \;\;\;\;
  X_{\cpt} \simeq X_{\plus}
  \,.
\end{equation}

\item The one-point-compactification of a product space is the \emph{smash product} $(-)\wedge(-)$ of the compactifications of the factors:
\begin{equation}
  \label{OnePointCompactificationOfProduct}
  \big(
    X
    \times
    Y
  \big)_{\cpt}
 \! \simeq
  \big(X_{\cptIndex{X}}\big) 
    \wedge 
  \big(Y_{\cptIndex{Y}}\big)
  \!:=
  \frac{
    \rule[-6pt]{0pt}{5pt}%
    X_{\cptIndex{X}} 
      \times 
    Y_{\cptIndex{Y}}
  }{
    \{\infty_{{}_X}\!\}
    \!\!\times\!\! 
    Y_{\cptIndex{Y}}
    \cup
    X_{\cptIndex{X}} 
    \!\!\times\!\!
    \{\infty_{{}_Y}\!\}
  }
  \mathrlap{\,.}
\end{equation}

\item The pointed maps out of a smash product with a circle are based loops of maps out of the remaining factor:
\begin{equation}
  \label{PointedMapsOutOfSmashWithCircle}
  \mathrm{Map}^\ast\big(
    S^1 \wedge X
    ,
    \mathcal{A}
  \big)
  \simeq
  \Omega
  \,
  \mathrm{Map}^\ast\big(
    X
    ,
    \mathcal{A}
  \big)
  \mathrlap{\,.}
\end{equation}

\end{enumerate}

\subsubsection{Higher-dimensional solitons}
\label{OnSolitonicBranes}
\footnote{
  The content of this section follows \cite[\S 2.2]{SS25-Flux}. Further discussion of different \emph{pointed spatial domains} for the same Cauchy surfaces corresponding to different dimensional soliton species is in \parencites[\S 2.1]{SS22-Conf}[\S 3.2]{BaSS26-MString}.
}

Yet more generally, a higher-dimensional soliton is a field configuration which is constrained (only) to vanish at \emph{transverse infinity}, 
namely transverse to its higher-dimensional soliton core. 

This is mathematically modeled by adjoining the point-at-infinity to the transverse directions but keeping it disjoint to the longitudinal directions. For instance $\mathbb{R}^p_{\plus} \wedge \mathbb{R}^{d-p}_{\cpt}$ is the pointed domain space on which to measure ``cartesian'' or ``flat'' $p$-solitonic  charge in $d$-dimensional space.

\begin{figure}[htb]
\caption{\label{AbrikosovVortexSchematics}
  \emph{Abrikosov vortices}
  (cf. \cite{FranzEtAl1996,nLab:VortexString}, figure adapted from \cite[Fig. 1]{LoudonMidgley2009}),
  in flat slabs of 2-dimensional type II superconductors, are the \emph{solitonic 1-branes} of ordinary electromagnetism (cf. \cref{AbrikosovVorticesAsSolitonic1Branes}).
}
\centering
\adjustbox{
  rndfbox=4pt
}{
\begin{tikzpicture}
\clip
  (-6,+1.5) rectangle 
  (+6,-3);
\node at (0,0) {
\includegraphics[width=11.5cm]{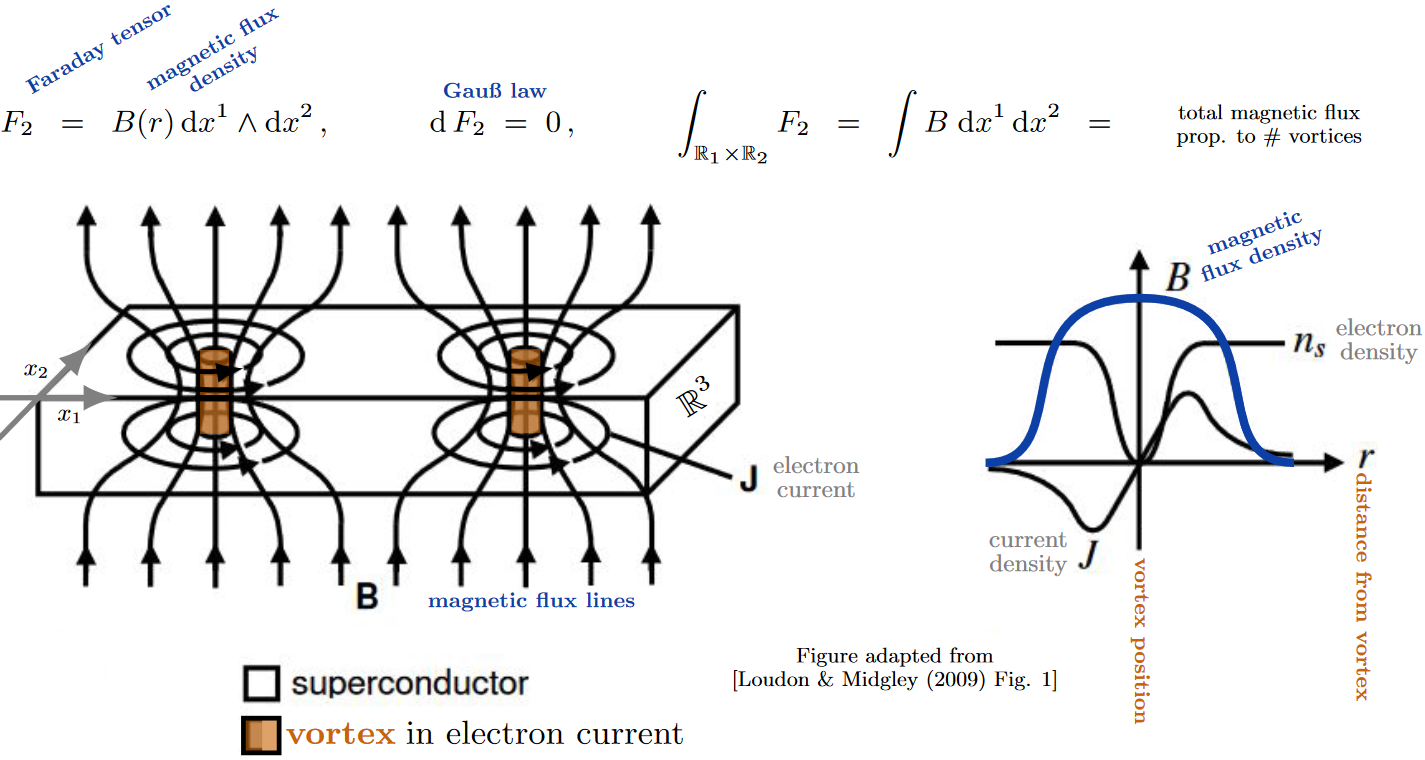}};

\draw[
  line width=15pt,
  draw=white
]
  (0,-2.3) --
  (2.8,-2.3);

\end{tikzpicture}
}

\end{figure}

\begin{example}
\label[example]{AbrikosovVorticesAsSolitonic1Branes}
The solitonic 1-branes of ordinary Maxwell electromagnetism are the \emph{Abrikosov vortices} (cf. \cref{AbrikosovVortexSchematics}): Here spacetime is the wedge product of the locus $\mathbb{R}^2_{\cpt}$ of an essentially 2-dimensional (type II) superconducting slab of material on which transverse magnetic flux is \emph{localized} (not escaping to infinity), with the remaining $\mathbb{R}^{1,1}_{\plus}$. With magnetic flux quantized in $B \mathrm{U}(1)$ it follows that there are any integer number of 1D magnetic flux concentrations 
\begin{equation}
  \begin{aligned}
    \pi_0\, 
    \mathrm{Map}^\ast\big(
      \mathbb{R}^{1,1}_{\plus}
      \wedge
      \mathbb{R}^2_{\cpt}
      ,\,
      B \mathrm{U}(1)
    \big)
    & 
    \simeq
    \pi_0\, 
    \mathrm{Map}^\ast\big(
      S^2
      ,\,
      B \mathrm{U}(1)
    \big)
    \simeq
    \pi_2\big(B \mathrm{U}(1)\big)
    \\
    & \simeq
    \mathbb{Z}
    \mathrlap{\,,}
  \end{aligned}
\end{equation}
which is famously what is seen in experiment, as indicated in \cref{AbrikosovVortexSchematics}.

(This is the actual experimental test that Dirac charge quantization, \cref{DiracChargQuantization}, of the magnetic field is the correct hypothesis for the global completion of the electromagnetic field; while the traditional argument via magnetic monopoles remains hypothetical.)
\end{example}

Therefore, in full generality we are to consider any
\begin{equation}
  \label{TheSpatialDomain}
  \big(
    X^d_{\mathrm{dom}},
    \infty
  \big)
  \in
  \mathrm{TopSp}^\ast
  \,,
  \;\;\;
  \mbox{
    with
    $X^d \simeq X^d_{\mathrm{dom}} \!\setminus\! \{\infty\}$
  }
\end{equation}
as the \emph{spatial domain} on which to observe the topological quantum behavior of higher-dimensional solitons in $\Omega \mathcal{A}$-cohomology.

As before in \cref{TransverseSpace}, in the base case of no topology-change in the light-front direction this specializes to
\begin{equation}
  X^{d}_{\mathrm{dom}}
  \simeq
  \mathbb{R}^1_{\cpt} \wedge X^{d-1}_{\mathrm{dom}}
  \mathrlap{\,.}
\end{equation}

\subsection{Light Front Quantization}
\label{OnLightFronQuantization}

We ask now for the quantization of the higher solitonic topological observables  \cref{HigherSolitonicTopologicalObservables} to \emph{quantum observables}, hence for the structure of a generally non-commutative complex star-algebra on the graded vector space $\mathrm{TopObs}_{\bullet}$ that qualifies as the appropriate graded \emph{star-algebra of quantum observables}. 

Since there is no established non-perturbative quantization prescription for higher phase space stacks of globally completed higher gauge fields, we will propose one (in \cref{OnTopologicalQuantumObservables}), at least for the solitonic topological sector in light-front form, and justify it by demonstrating the following plausibility checks:
\begin{description}
  \item[\cref{OnTopologicalQuantumObservables}] it reflects the (light-front) time-ordering phenomenon characterizing quantum operator products (recalled in \cref{OnTheLightFront,OnAlgebrasOfQuantumObservables});

  \item[\cref{The2DTQFT}]
  seen after KK-reduction on the transverse space $X^{d-1}$ it yields a (Chas-Sullivan-Godin string topology) open  
  $\mathrm{TQFT}_2[X^{d-1}]$ on the remaining $\mathbb{R}^{1,1}$;

  \item[\cref{TheddimExtednedTQFT}] seen as a $d$-dimensional Euclidean TQFT on $X^d$ it fully extends to an $(\infty,d)$-cobordism representation (\cref{TheddimExtednedTQFT});

  \item[\cref{OnMaxwellFluxObservables}] it reproduces the non-perturbative topological quantum observables of Maxwell theory;

  \item[\cref{OnMaxwellCSFluxObservables}] it reproduces the phase space structure of topological flux observables in 5D and 11D Maxwell-Chern-Simons theory.
  
\end{description}

\subsubsection{The light front}
\label{OnTheLightFront}
\footnote{
  The \emph{light front form} (also known as: ``infinite momentum frame'') of Hamiltonian evolution in relativistic field theory was first highlighted by \cite[\S 5]{Dirac1949}; for review of traditional light front quantization cf. \cite{Burkardt1996,nLab:LightFrontQuantization}. 

  The suggestion to think of what we consider in \cref{OnTopologicalQuantumObservables} as the globally completed and topological version of light front quantization originates with \cite[\S 4]{SS24-Obs}.
}

In order to couple the purely topological nature of the observables from \cref{OnTopologicalObservables} to time evolution on a globally hyperbolic spacetime $X^{1,d} \simeq \mathbb{R}^{1,0} \times X^d$ \cref{GloballyHyperbolicSpacetime},
we consider now quantization in \emph{light front form}  hence for evolution along a \emph{lightlike foliation} of $X^{1,d}$ (cf. \cref{TheLightFront}). In this situation, as the parameter $t \in \mathbb{R}^{1,0}$ evolves, the system is \emph{necessarily evolving monotonically in space} $X^d$, no matter which local Lorentz frames are chosen in \cref{GloballyHyperbolicSpacetime}.

Of course, light front quantization of 11D supergravity has famously been suggested \parencites{BFSS1997}{Susskind1997}{nLab:BFSSMatrixModel} as a non-perturbative quantization capturing aspects of ``M-theory'', but has traditionally only been applied to local degrees of freedom (in fact mostly to graviton scattering amplitudes). Our aim here is to give the topological form of light front quantization which applies to solitons in the globally completed higher gauge field content.

\begin{SCfigure}[2.6][htb]
\caption{
  \label{TheLightFront} 
  The \emph{light front form} of relativistic field theory regards evolution along a fixed \emph{lightlike} foliation of spacetime (such as by the wave fronts of a plane wave of electromagnetic radiation). 
  With respect to a temporal foliation by spacelike (Cauchy) hypersurfaces $X^d$, this means that as the actual timelike parameter evolves (along $\mathbb{R}^{1,0}$ in the graphics) the system is \emph{monotonically translating in space}, along a foliation of $X^d$. Therefore, time-ordered products of light-front observables are equivalently spatially ordered along a foliation of $X^d$.
}
\centering
\begin{tikzpicture}
  \draw[
    -Stealth,
    line width=.9,
    gray
  ]
  (0,-1) to
  (0,2);
  \draw[
    -Stealth,
    line width=.9,
    gray
  ]
  (-1,0) to
  (2,0);

  \foreach \n in {-2,...,5} {
    \draw[
      dashed,
      rotate=-45,
      shift={(0,\n*.2)},
    ]
      (-1.5,0) to (1.5,0);
  }

 \node[
   scale=.8
 ] at 
   (.4,1.7) {$\mathbb{R}^{1,0}$};
 \node[
   scale=.8
 ] at 
   (1.7, .3) {$X^{d}$};

\draw[
  black!70,
  -Stealth
]
  (45:-.2) -- 
    node[
      pos=.9, 
      below,
      scale=.9
    ] {$\tau$}
  (45:1.5);
  
\end{tikzpicture}
\end{SCfigure}
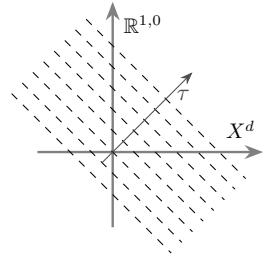

In the simplest base case, the light front evolution proceeds without ``topology change'' in the spatial direction, along a cylinder (cf. \cref{TheMFiber}):
\begin{equation}
  \label{TransverseSpace}
  \mathbb{R}^{1,d}
  \defneq
  \mathbb{R}^{1,1} \times X^{d-1}
  \,,
  \;\;\;\;\;\;
  X^d
  \defneq
  \mathbb{R}^1 
    \times 
  X^{d-1}
  \mathrlap{.}
\end{equation}
(Here $X^{d-1}$ is the \emph{transverse space} whose notational suppression in \cref{TransverseSpace} yields exactly the picture in \cref{TheMFiber}.) The quantum observables in this situation are discussed in \cref{OnAlgebrasOfQuantumObservables,The2DTQFT}.

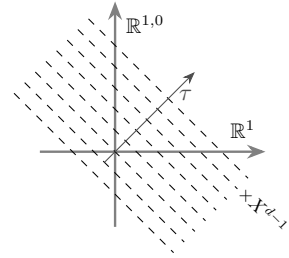
\begin{SCfigure}[2.6][htb]
\caption{
  \label{TheMFiber} 
  In the case that spacetime and with it the Cauchy surface has a line factor, $X^d = \mathbb{R}^1 \times X^{d-1}$ \cref{TransverseSpace},
  the light-front foliation in \cref{TheLightFront} may be taken to be the product of $X^{d-1}$ with the canonical light-front folia of 2D Minkowski spacetime $\mathbb{R}^{1,1}$. We refer to the factor $\mathbb{R}^1$ here as the \emph{M-fiber}, since it plays the role of the extra dimension that opens up in describing branes in 10D SuGra via 11D SuGra \parencites{DuffHoweInamiStelle1987}{Witten1995}{BFSS1997}.
}
\centering
\begin{tikzpicture}
  \draw[
    -Stealth,
    line width=.9,
    gray
  ]
  (0,-1) to
  (0,2);
  \draw[
    -Stealth,
    line width=.9,
    gray
  ]
  (-1,0) to
  (2,0);

  \foreach \n in {-2,...,5} {
    \draw[
      dashed,
      rotate=-45,
      shift={(0,\n*.2)},
    ]
      (-1.5,0) to (1.5,0);
  }

 \node[
   scale=.8
 ] at 
   (.4,1.7) {$\mathbb{R}^{1,0}$};
 \node[
   scale=.8
 ] at 
   (1.7, .3) {$\mathbb{R}^1$};

\draw[
  black!70,
  -Stealth
]
  (45:-.2) -- 
    node[
      pos=.9, 
      below,
      scale=.9
    ] {$\tau$}
  (45:1.5);

 \node[
   rotate=-45,
   scale=.7
 ]
  at (2-.03,-.8+.03) {$\times X^{d-1}$};
 
\end{tikzpicture}
\end{SCfigure}

More generally, there may be ``topology change'' in that $X^d$ is a \emph{cobordism} between (ends which are) different transverse spaces $X^{d-1}_{\mathrm{in}}$ and $X^{d-1}_{\mathrm{out}}$. The generalization of the quantization to this more general situation we discuss in \cref{TheddimExtednedTQFT}.

\subsubsection{Algebras of Quantum Observables}
\label{OnAlgebrasOfQuantumObservables}
\footnote{
  The observation that the quantum (operator) product of observables is their ordinary product after shifting their time domains into operator order is due to \cite[p. 381-2]{Feynman1948}, there argued with the time-discretized path integral in quantum mechanics, in this form recalled in \parencites[\S 7.3]{FeynmanHibbsStyer2010}[pp. 33]{Nagaosa1999}.
  
  We may observe that Feynman's ``very important relation'' \cite[(45-6)]{Feynman1948} is (the 1d discretized version of) what is now called the \emph{Schwinger-Dyson equation} (cf. \cite[(15.25)]{HenneauxTeitelboim1992}), which has a rigorous formulation in general relativistic field theory (not relying on path integral heuristics, cf. \cite[Rem. 7.7]{Rejzner2016}). This may be used to generalize Feynman's old observation to quantum field theory (cf. \href{https://physics.stackexchange.com/a/685812/5603}{\tt physics.SE:a/685812}), including its light-front formulation.
}

In order to understand what we are looking for, here we highlight a field-theoretic generalization of an old insight of \cite[p. 381-2]{Feynman1948} that has echoes in modern discussion of time-ordered products of quantum fields, but which in itself may be underappreciated. As a slogan:
\begin{standout}
  The quantum (operator) product $O_2(\tau) \star O_1(\tau)$ of observables on the $\tau$th folium (\cref{TheLightFront}) is their ordinary product $O_2(\tau + \epsilon) \cdot O_1(\tau)$ after slightly  shifting ($\epsilon \to 0$) their folium domain into operator order. 
\end{standout}

Since this may not be citable from the literature, we briefly indicate how to see it (for more details and pointers cf. \cite[\S 3]{nLab:QuantumObservable}). To that end, we  now use standard notation and terminology from Lagrangian field theory (such as found in \cite{HenneauxTeitelboim1992}).

Consider:
\begin{itemize}
\item a Lagrangian field theory on a coordinate chart $\simeq \mathbb{R}^{1+d}$,

\item with a regular Lagrangian density $L$ (including all of the usual examples, just excluding pathological field dependencies);

\item $\phi$ one of its field species (which may be a scalar field or a component of a more complex field);

\item $\big\langle \cdots \big\rangle$ the \emph{path integral expectation value} of fields observables (in some state), if it exists or can be imagined, or else the \emph{time-ordered product}, of its field operator argument.

\end{itemize}

The key fact now is the \emph{Schwinger-Dyson equation} for field insertion $\phi$, which is the statement that:
\begin{equation}
  \label{SchwingerDysonEquation}
  \left\langle
  \left(
  \frac{
    \partial L
  }{
    \partial\phi
  }
  -
  \partial_\mu 
  \frac{
    \partial L
  }{
    \partial(\partial_\mu \phi)
  }
  \right)
  \!
  (x)
  \cdot
  \phi(y)
  \right\rangle
  =
  \mathrm{i}\hbar
  \left\langle
  \frac{
    \partial \phi(y)
  }{
    \partial \phi
  }
  (x)
  \right\rangle
  =
  \mathrm{i}\hbar 
  \,
  \delta^{1+d}(x-y)
  \mathrlap{\,.}
\end{equation}

Multiply this equation with a smearing function $f \in C^\infty(\mathbb{R}^d)$ (a smooth function of compact support on a spatial slice), integrate the result over all of space and over a time interval $(\tau -\epsilon, \tau + \epsilon)$ for $\tau := y^0$ and $\epsilon \in \mathbb{R}_{> 0}$, and take the limit $\epsilon \to 0$. In this limit all bulk summands in the left term in \cref{SchwingerDysonEquation} vanish (using here our assumption that $L$ depends regularly on the fields) and what remains, under integration by parts, is just the temporal boundary evaluation of the term involving the \emph{canonical momentum}:
\begin{equation}
  \label{CanonicalFieldMomentum}
  \pi 
    := 
  \frac
    {\partial L}
    {\partial (\partial_0 \phi)}
  \mathrlap{\,.}
\end{equation}
The result is the value of the following distributional equation on the arbitrary smearing function $f$, where we now decompose coordinates as $x = (x^0 , \vec x\,)$:
\begin{equation}
  \label{PathIntegralCanonicalCommutationRelation}
  \underset{\underset{\epsilon \to 0}{\longrightarrow}}{\mathrm{lim}}
  \Big\langle
  \pi\big(\tau \!+\! \epsilon, \vec x\, \big)
  \cdot
  \phi\big(\tau, \vec y\, \big)
  -
  \phi\big(\tau, \vec y\, \big)
  \cdot
  \pi\big(\tau \!-\! \epsilon, \vec x\, \big)
  \Big\rangle
  =
  -\mathrm{i}\hbar
  \,
  \delta^d\big(\vec x - \vec y\, \big)
  \mathrlap{\,.}
\end{equation}
If we think of $\langle \cdots \rangle$ as the expectation value computed by a path integral, then its arguments $\pi \cdot \phi$ are ordinary (pointwise, commutative) products of observables, and yet this equation \cref{PathIntegralCanonicalCommutationRelation} expresses the expectation value of the equal-time \emph{canonical commutation relation} between field operators and their canonical momenta:
\begin{equation}
  \label{CanonicalCommutatorRelation}
  \pi\big(\vec x\, \big)
    \star 
  \phi\big(\vec y\, \big)
  -
  \phi\big(\vec y\, \big)
    \star 
  \pi\big(\vec x\, \big)
  =
  - \mathrm{i}\hbar
  \,
  \delta^d\big(
    \vec x - \vec y
  \, \big)
  \mathrlap{\,.}
\end{equation}
As advertized, we see that the operator product order in \cref{CanonicalCommutatorRelation} is reflected in the temporal order in \cref{PathIntegralCanonicalCommutationRelation}.

This derivation applies verbatim also after passage to light front form (\cref{TheLightFront}): In this case the integral is over an $x^+$-interval
(where
$ 
  x^{\pm}
  :=
  \tfrac{1}{\sqrt{2}}\big(
    x^0 \mp x^1
  \big)
$),
erected over a lightlike hypersurface $x^+ = \tau$, and yields the following identity,
\footnote{
  The factor of $\sfrac{1}{2}$ on the right of \cref{LightFrontPathIntegralCanonicalCommutationRelation} originates as a factor of 2 on the left due to the fact that Lorentz invariant Lagrangians have kinetic terms of the form $L = (\partial_+ \phi) \cdot (\partial_- \phi) + \cdots$ for which
  \begin{equation}
    \begin{aligned}
    \partial_\mu 
    \frac{ \partial L }{
      \partial(\partial_\mu)
    }
    &
    =
    \partial_-(\partial_+ \phi)
    +
    \partial_+(\partial_- \phi)
    + 
    \cdots
    =
    2 \, \partial_+ (\partial_- \phi)
    \mathrlap{\,.}
    \end{aligned}
  \end{equation}
  This exactly matches a corresponding factor of $\sfrac{1}{2}$ in the canonical derivation of \cref{LightFrontCanonicalCommutatorRelation}, whose derivation, however, requires more work, see \cite[\S A]{Burkardt1996}.
}
where we now decompose coordinates as $x = \big( x^+, x^-, \vec x_\perp \big)$:
\begin{equation}
  \label{LightFrontPathIntegralCanonicalCommutationRelation}
  \begin{aligned}
  &
  \underset{\underset{\epsilon \to 0}{\longrightarrow}}{\mathrm{lim}}
  \Big\langle
  \pi\big(
    \tau \!+\! \epsilon, 
    x^-,
    \vec x \,
  \big)
  \cdot
  \phi\big(
    \tau,
    x^-,
    \vec y\,
  \big)
  -
  \phi\big(
    \tau, \vec y
  \big)
  \cdot
  \pi\big(
    \tau \!-\! \epsilon, \vec x \,
  \big)
  \Big\rangle
  \\
  & \qquad =
  -\mathrm{i}\hbar
  \,
  \tfrac{1}{2}\delta\big(x^- - y^-\big)
  \delta^{d-1}\big(\vec x - \vec y\, \big)
  \mathrlap{\,.}
  \end{aligned}
\end{equation}
Here $\pi$ is the canonical light-front momentum
\begin{equation}
  \pi :=
  \frac{
    \partial L
  }{
    \partial(\partial_+ \phi)
  }
\end{equation}
and even though its nature in canonical formalism is quite different (being a second class constraint, cf. \cite{Burkardt1996}) from that of the ordinary momentum \cref{CanonicalFieldMomentum}, its operator commutator is again exactly of this form (cf. \cite[Table 2.1]{Burkardt1996}), under replacing time-ordered ordinary products with operator products:
\begin{equation}
  \label{LightFrontCanonicalCommutatorRelation}
  \begin{aligned}
  &
  \pi\big(x^-, \vec x_\perp\big)
    \star 
  \phi\big(y^-, \vec y_{\perp}\big)
  -
  \phi\big(y^-, \vec y_\perp\big)
    \star 
  \pi\big(x^-, \vec x_\perp\big)
  \\
  &
  \qquad =
  - \mathrm{i}\hbar
  \,
  \tfrac{1}{2}\delta\big(x^- - y^-\big)
  \delta^d\big(
    \vec x_\perp - \vec y_\perp
  \big)
  \mathrlap{\,.}
  \end{aligned}
\end{equation}
In conclusion, we see again that the operator product order in \cref{LightFrontCanonicalCommutatorRelation} is reflected in the light-front parameter order of ordinary products in \cref{LightFrontPathIntegralCanonicalCommutationRelation}.

This suggests that the quantum product on topological observables should similarly be their pointwise product after suitably shifting their light-front domains into the intended product order.

\subsubsection{Topological Quantum Observables}
\label{OnTopologicalQuantumObservables}
\footnote{
  The discussion of Pontrjagin algebras as algebras of quantum observables follows \parencites[\S 3-4]{SS24-Obs}[\S 4]{CSS23-QuantumStates}[p. 18]{SS22-Conf}[Fig. 11]{SS25-Srni}.
}

In the base case that the light front evolution proceeds without ``topology change'', hence along a cylinder
$
  \mathbb{R}^{1,d}
  \defneq
  \mathbb{R}^{1,1} \times X^{d-1}
  X^d
  \defneq
  \mathbb{R}^1 
    \times 
  X^{d-1}
$
\cref{TransverseSpace},
the solitonic topological observables \cref{HigherSolitonicTopologicalObservables} are the homology of the based loop space of a topological \emph{phase space in codimension=1},
\begin{equation}
  \label{PhaseSpaceInCodimension1}
  M 
    :=
  \shape \mathrm{PhsSpc}_{d-1}
    :=
  \mathrm{Map}^\ast\Big(
    X^{d-1}_{\cpt}
    ,
    \mathcal{A}
  \Big)
  \mathrlap{\,,}
\end{equation}
namely:
\begin{equation}
  \label{SolTopObsAsHomologyOfLoopSpace}
  \begin{aligned}
  \mathrm{TopObs}_{\bullet} \;
  & 
  \underset{\mathclap{\scalebox{.6}{%
    \cref{HigherSolitonicTopologicalObservables}
  }}}{\defneq}
  \;
  H_\bullet\Big(
    \mathrm{Map}^\ast\big(
      X^d_{\cpt},
      \mathcal{A}
    \big)
    ;
    \mathbb{C}
  \Big)
  \\
  &
  \underset{\mathclap{\scalebox{.6}{%
    \cref{TransverseSpace}
  }}}{\simeq}
  \;
  H_\bullet\Big(
    \mathrm{Map}^\ast\big(
      (\mathbb{R}^1 \times X^{d-1})_{\cpt},
      \mathcal{A}
    \big)
    ;
    \mathbb{C}
  \Big)
  \\
  & 
  \underset{\mathclap{\scalebox{.6}{%
    \cref{OnePointCompactificationOfProduct}
  }}}{\simeq}
  \;
  H_\bullet\Big(
    \mathrm{Map}^\ast\big(
      S^1 \wedge X^{d-1}_{\cpt},
      \mathcal{A}
    \big)
    ;
    \mathbb{C}
  \Big)
  \\
  &
  \underset{\mathclap{\scalebox{.6}{%
    \cref{PointedMapsOutOfSmashWithCircle}
  }}}{\simeq}
  \;
  H_\bullet\Big(
    \Omega
    \,
    \mathrm{Map}^\ast\big(
      X^{d-1}_{\cpt},
      \mathcal{A}
    \big)
    ;
    \mathbb{C}
  \Big)
  \\
  & 
  \underset{\mathclap{\scalebox{.6}{%
    \cref{PhaseSpaceInCodimension1}
  }}}{\defneq}
  \;
  H_\bullet\big(
    \Omega \, M
    ;\,
    \mathbb{C}
  \big)
  \mathrlap{\,.}
  \end{aligned}
\end{equation}
As such, the observables carry the graded \emph{Pontrjagin algebra} structure induced by pushforward in homology
\begin{equation}
  \label{ThePontrjaginProduct}
  \begin{tikzcd}[
    column sep=14pt
  ]   
    H_\bullet\big(
      \Omega M
    \big)
    \otimes
    H_\bullet\big(
      \Omega M
    \big)
    \ar[
      r,
      "{ \sim }"
    ]
    \ar[
      rrr,
      downhorup,
      "{ \star }"{description}
    ]
    &
    H_\bullet\big(
      (\Omega M)
      \times
      (\Omega M)
    \big)
    \ar[
      rr,
      "{
        H_\bullet(\star)
      }"
    ]
    &&
    H_\bullet\big(
      (\Omega M)
      \times
      (\Omega M)
    \big)
  \end{tikzcd}
\end{equation}
along concatenation of loops 
\begin{equation}
  \begin{tikzcd}[row sep=-2pt, column sep=0pt]
    \ell 
    &:& 
    \mathbb{R}^1_{\cpt} 
    \ar[rr] 
    && 
    M 
    \\
    &&
    x^1 
      &\longmapsto&
    \ell(x^1)
    \mathrlap{\,,}
\end{tikzcd}
\end{equation}
given by:
\begin{equation}
  \begin{tikzcd}[
    sep=-3pt
  ]
    \Omega M 
    \times 
    \Omega M
    \ar[
      rr,
      "{ \star }"
    ]
    &&
    \Omega M
    \\
    (\ell_2, \, \ell_1)
    &\longmapsto&
    \ell_2 \star \ell_1
    &:&
    x^1 \mapsto
    \left\{
    \begin{aligned}
      \ell_1\big(
        \ln(-\tfrac{1}{x^1})
      \big) & \mbox{if $x^1 \leq 0$,}
      \\
      \ell_2\big(\ln(+x^1)\big)
      &
      \mbox{if $x^1 \geq 0$.}
    \end{aligned}
    \right.
  \end{tikzcd}
\end{equation}
Here we are notationally (ab)using the happy coincidence that both concatenation of loops as well as quantum products of observables are commonly denoted by the symbol ``$\star$''.

Let's see what this means under the identification \cref{SolTopObsAsHomologyOfLoopSpace}:
\begin{enumerate}
\item 
A loop 
\[
  \ell \in \Omega M
\]
is the spatial component (the temporal component being topologically trivial) of a light-front \emph{field history} (``path'') of solitonic charges.

As such, this is best understood, equivalently, as a \emph{path} through moduli space which happens to be constant on the vanishing field configuration in the far past and future.

\item Topological observables are sensitive exactly to the homotopy class of such loops:
\[
  [\ell] 
    \in 
  \pi_0\big(\Omega M\big)
    \simeq
  \pi_1(M)
  \mathrlap{\,.}
\]

\item
Specifically, the corresponding homology class
\[
  \delta[\ell]
  \in 
  \mathbb{C}\big[
    \pi_1(M)
  \big]
  \simeq
  H_0 \big( \Omega M; \mathbb{C} \big)
\]
is equivalently the \emph{characteristic function} of $[\ell]$ among the compactly supported functions 
\begin{equation}
  H_0 \big( \Omega M; \mathbb{C} \big)
  \simeq
  \mathrm{Map}\big(
   \pi_1(M)
   ,
   \mathbb{C}
  \big)_{\mathrm{cpt}}
\end{equation}
and as such is to be understood as being the binary observable (classically: the projection operator) which asks:
\begin{quote}
``Is the solitonic light-front charge history of form $\ell$, up to continuous deformation?''
\end{quote}

\item The Pontrjagin product of a pair of such observables,
\[
  \delta[\ell_2] \star \delta[\ell_1]
  =
  \delta[ \ell_2 \star \ell_1 ]
\]
is hence the observable which asks:
\begin{quote}
``Is the solitonic field history, up to continuous deformation,  \emph{first} of form $\ell_1$ and \emph{then} of the form $\ell_2$?'' 
\end{quote}
This is like the product of these characteristic functions after shifting their light-front domain into product order. 
\end{enumerate}
In this way, the Pontrjagin product \cref{ThePontrjaginProduct} of solitonic topological observables is the analog of the time-ordered product of local observables (from \cref{OnAlgebrasOfQuantumObservables}) and therefore a good candidate for a quantum ``operator'' product.

Therefore, we regard the graded Pontrjagin algebra structure \cref{ThePontrjaginProduct} on the solitonic topological observables as the \emph{algebra of topological quantum observables} on solitonic fields: 
\begin{equation}
  \big(
    \mathrm{TopObs}_\bullet,
    \ast
  \big)
  \in
  \mathrm{Alg}\big(
    \mathbb{C}\mathrm{Vect}^{\mathbb{Z}}
  \big)
  \,.
\end{equation}
Here the ordinary topological observables (those in degree=0) form a subalgebra
\begin{equation}
  \label{AlgebraOfOrdinaryQuantumObservables}
  \big(
    \mathrm{TopObs}_0,
    \star
  \big)
  \in
  \mathrm{Alg}\big(
    \mathrm{Vect}
  \big)
  \,,
\end{equation}
which is the \emph{group algebra of the fundamental group} $\pi_1(-)$ of the codimension=1 phase space (based at the zero-charge configuration):
\begin{equation}
  \label{OrdinaryTopObsAsGroupAlgebras}
  \mathrm{TopObs}_0
  \simeq
  \mathbb{C}\Big[
    \pi_1 
    \mathrm{Map}^\ast\big(
      X^{d-1}_{\mathrm{dom}},
      \mathcal{A}
    \big)
  \Big]
  \mathrlap{\,.}
\end{equation}
On the other extreme, before passage to homology we have $A_\infty$-algebra structure  on the phase space motive \cref{MotiveOfShapeOfPhaseSpace}, which suggestively we will still denote by ``$\star$'':
\begin{equation}
  \label{MotiveAlgebra}
  \big(
    \mathbf{TopObs},
    \star
  \big)
  \in
  \mathrm{Alg}\big(
    \mathrm{Vect}_\infty
  \big)
  \mathrlap{\,,}
\end{equation}
being the \emph{$\infty$-group $\infty$-algebra} structure (cf. \parencites{AbadSchaetz2016}{nLab:InfinityGroupAlgebra}) on the based loop $\infty$-group of the codim=1 phase space:
\begin{equation}
  \label{SolTopObsInDegreeZeroAsGroupAlgebra}
  \mathbf{TopObs}
  =
  \mathbb{C}\big[
    \Omega\, 
    \mathrm{Map}^\ast\big(
      X^{d-1}_{\mathrm{dom}}
      ,
      \mathcal{A}
    \big)
  \big]
  \mathrlap{\,.}
\end{equation}

Finally, we note that a respectable algebra of quantum observables needs to be a complex \emph{star-algebra}, hence equipped with a anti-$\mathbb{C}$-linear involution $(-)^\ast$ reflecting time reversal. Such a structure is naturally present on the Pontrjagin algebra \cref{ThePontrjaginProduct}, given by complex conjugation of coefficients combined with pushforward in homology: 
\begin{equation}
  \begin{tikzcd}[
    column sep=30pt
  ]
    H_\bullet(\Omega M; \mathbb{C})
    \ar[
      rr,
      "{
        H_\bullet\big(
          \overline{(-)};\,
          \mathbb{C}
        \big)
      }"
    ]
    \ar[
      rrrr,
      downhorup,
      "{
        (-)^\ast
      }"{description}
    ]
    &&
    H_\bullet(\Omega M; \mathbb{C})
    \ar[
      rr, 
      "{  
        H_\bullet\big(
          \Omega M;\, 
          \overline{(-)}
        \big)
      }"
    ]
    &&
    H_\bullet(\Omega M; \mathbb{C})
  \end{tikzcd}
\end{equation}
along reversal of loops:
\begin{equation}
  \begin{tikzcd}[row sep=-2pt, column sep=0pt]
    \Omega M
    \ar[
      rr,
      "{ \overline{(-)} }"
    ]
    &\phantom{--}&
    \Omega M
    \\
    \ell &\longmapsto&
    \overline{\ell}
    \mathrlap{
      \;
      : s \mapsto
      \ell(1-s)
      \mathrlap{\,.}
    }
  \end{tikzcd}
\end{equation}

In fact, the $\infty$-group structure of $\Omega M$ induces on $\mathrm{TopObs}_\bullet$ the structure of a graded cocommutative \emph{Hopf algebra} whose antipode is this star-involution.

\subsubsection{The 2D TQFT of the M-fiber}
\label{The2DTQFT}
\footnote{
  The open/closed TQFT perspective on Chas-Sullivan-type string topology operations was proposed by \cite{Godin2007} and worked out in \cite{Kupers2011}. The observation of \cref{PontrjaginAlgebraViaOpenStringTopology} (not surprising but noteworthy) that in the open string 0-brane sector the string topology product coincides with the Pontrjagin product \cref{ThePontrjaginProduct} is due to \cite[\S 4.1.5]{SS25-WilsonLoops} using \cite[p. 136-7]{Kupers2011}. 
}

The usual global Hamiltonian formulation of a QFT means to regard it as quantum mechanical evolution along the parameter $\tau$ indexing the spacetime folia (\cref{TheLightFront}). For a topological field theory, this retains \emph{no} evolution information unless and until spatial topology changes with the parameter $\tau$.

At the other extreme, the \emph{extended functorial} formulation of $1+d$-dimensional QFT keeps track of its evolution in all $1+d$ spacetime directions. This retains more information about topological field theories, even locally. In fact, the \emph{cobordism hypothesis} asserts that a fully extended topological field theory is completely characterized by its restriction to any spacetime point.

Interpolating between these extremes, Kaluza-Klein-style reduction over a product decomposition $X^{1,d} = X^{1,d'} \times X^{d-d'}$ means to describe a $(1+d)$-dimensional QFT via a($n$-extended) $\mathrm{QFT}_{1+d'}[X^{d-d'}]$ in $(1+d')$-dimensions indexed by the fiber space $X^{d-d'}$.

Considering this for $d'=1$ in the cylindrical situation \cref{TransverseSpace}, we expect to obtain a 2-dimensional $\mathrm{TQFT}_{1+1}[X^{d-1}]$ which describes the propagation of a ``string'' (the M-theory fiber) in the moduli space of fields on $X^{d-1}$.
Specifically, due to the restriction that solitonic fields \emph{vanish at infinity}, this is effectively an ``open string'' whose endpoints are stuck on the ``0-brane'' (the base point) of vanishing field moduli in the moduli space $\mathcal{X} \defneq \mathrm{Map}^\ast\big(X^{d-1}_{\mathrm{dom}}, \mathcal{A}\big)$, cf. \cref{HCFT}.

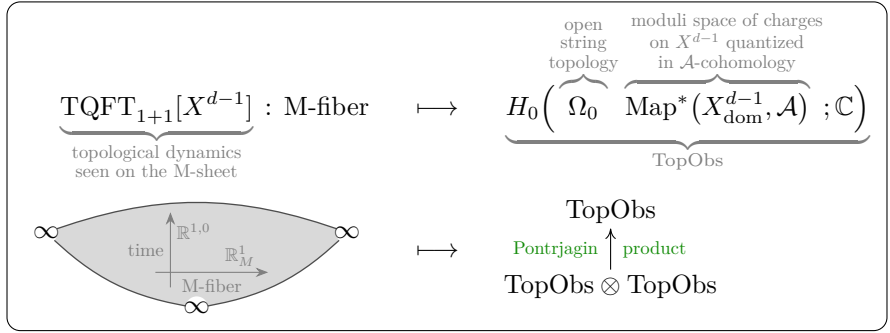
\begin{figure}[htb]
\caption{
  \label{HCFT} 
  The topological dynamics (of higher gauge fields flux-quantized in $\hotype{A}$-cohomology) on a globally hyperbolic spacetime $X^{1,d}$ with an ``M-fiber'', $X^{1,d} \simeq \mathbb{R}^{1,1} \times X^{d-1}$ (\cref{TheMFiber}), yields a $D = 1+1$ genus=0 open TQFT on the ``M-sheet'' $\mathbb{R}^{1,0} \times \mathbb{R}^1_M$, which is the ``HCFT'' induced by \emph{open string topology operations} over the moduli space of topological charges on $X^{d-1}$. 
  Here the solitonic nature of the topological charges, namely their \emph{vanishing at infinity}, entails that the spatial ends $\infty$ of the ``M-sheet'' $\mathbb{R}^{1,0} \times \mathbb{R}^1_M$ are stuck on the ``0-brane'' locus $\infty \mapsto \{\ast\} \hookrightarrow \mathrm{Map}^\ast\big(X^{d-1}_{\mathrm{dom}},\hotype{A}\big)$ in the moduli space, which reduces (\cref{PontrjaginAlgebraViaOpenStringTopology}) the string product to the Pontrjagin product of topological quantum observables from \cref{OnTopologicalQuantumObservables}.
}

\centering

\adjustbox{
  rndfbox=5pt
}{
$
  \begin{aligned}
  \grayunderbrace{
    \mathrm{TQFT}_{1+1}[X^{d-1}]
  }{
    \mathclap{
    \substack{
      \scalebox{.7}{topological dynamics}
      \\
      \scalebox{.7}{seen on the M-sheet}
    }
    }
  }
  :
  \scalebox{1}{M-fiber}
  &
  \;\;\;\longmapsto\;\;\;
  \grayunderbrace{
  H_0\Big(
  \grayoverbrace{
    \Omega_0
    \mathclap{\phantom{X^{d-1}_{\mathrm{dom}}}}
  }{
    \mathclap{
    \substack{
      \scalebox{.7}{open}
      \\
      \scalebox{.7}{string}
      \\
      \scalebox{.7}{topology}
    }
    }
  }
  \grayoverbrace{
  \mathrm{Map}^\ast\big(
    X^{d-1}_{\mathrm{dom}}
    ,
    \hotype{A}
  \big)
  }{\;\;
    \substack{
      \scalebox{.7}{moduli space of charges}
      \\
      \scalebox{.7}{on $X^{d-1}$ quantized}
      \\
      \scalebox{.7}{in $\hotype{A}$-cohomology}
    }
  }
  ;
  \mathbb{C}
  \Big)
  }{
    \mathrm{TopObs}
  }
  \\
\begin{tikzpicture}[
  baseline=(current bounding box.center)
]

\def\width{2}

\draw[
  line width=.5,
  draw=black!70,
  fill=gray!30
] 
  (-\width, 0) 
    to[bend left=20]
  (+\width,0) 
    to[bend left=18]
  (0, -1)
    to[bend left=18]
  (-\width,0); 

\fill[white] (-\width,0) circle (.15);
\fill[white] (+\width,0) circle (.15);
\fill[white] (0,-1) circle (.15);
\node[scale=1] at (-\width,0) {$\infty$};
\node[scale=1] at (+\width,0) {$\infty$};
\node[scale=1] at (0,-1) {$\infty$};

\begin{scope}[
  shift={(-.3-.05,-.48-.05)}
]
\draw[
  -Stealth,
  gray
] 
  (-.2,0) 
    to[
      "$\mathbb{R}^1_M$"{
        near end,
        scale=.7
      },
      "{ M-fiber }"{
        swap,
        scale=.7
      }
    ]
  (1.3,0);
\draw[
  -Stealth,
  gray
] 
  (0,-.2) 
    to[
      "{ time }"{
        scale=.7
      },
      "{
        $\mathbb{R}^{1,0}$
      }"{
        swap,
        near end,
        scale=.7
      }
    ]
  (0,.8);
\end{scope}
\end{tikzpicture}
&
\;\;\;\longmapsto\;\;\;
\begin{tikzcd}[
  row sep=12pt
]
  \mathrm{TopObs}
  \\
  \mathrm{TopObs}
  \otimes
  \mathrm{TopObs}
  \ar[
    u,
    shorten=-2pt,
    "{\color{darkgreen} 
      \scalebox{.7}{
        Pontrjagin
      }
    }",
    "{\color{darkgreen} 
      \scalebox{.7}{
        product
      }
    }"{swap},
  ]
\end{tikzcd}
  \end{aligned}
$
}

\end{figure}

Open-closed 2d TQFTs describing propagation on bare target space manifolds $\mathcal{X}$ are known as Chas-Sullivan-Godin \emph{string topology operations}. In the case of open strings attached to 0-branes these exist in the further generality that $\mathcal{X}$ is any topological space and one has (using \cite[p. 136-7]{Kupers2011}):
\begin{proposition}
  \label[proposition]
   {PontrjaginAlgebraViaOpenStringTopology}
  The string-topology TQFT of open strings attached to a 0-brane $b_0$ in any topological space $\mathcal{X}$ exists and assigns:
  \begin{enumerate}
    \item to the interval the homology $H_\bullet(\Omega_{b_0}\mathcal{X})$ of the base loop space of $\mathcal{X}$;
    \item to the concatenation of intervals the Pontrjagin product on $H_\bullet(\Omega_{b_0}\mathcal{X})$.
  \end{enumerate}
\end{proposition}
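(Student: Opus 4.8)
The plan is to obtain the statement by specializing the general open--closed string-topology construction of \cite{Godin2007} and \cite{Kupers2011} to the 0-brane sector, where the wrong-way (umkehr) maps that make string topology subtle all degenerate to identities, so that the remaining operations are the naive ones on the based loop space.

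First I would set up the open-string path spaces: for a brane label given by a subspace $Q \hookrightarrow \mathcal{X}$ one forms the space of paths $[0,1] \to \mathcal{X}$ with both endpoints in $Q$, and for $Q = \{b_0\}$ this is by definition the based loop space $\Omega_{b_0}\mathcal{X}$ --- which makes sense for \emph{any} topological space $\mathcal{X}$, with no manifold or transversality hypothesis, unlike the closed-string operations. This gives the value on the interval asserted in (i). Next I would identify the generating cobordisms of the genus-zero open cobordism category with a single object: the interval, the ``multiplication'' cobordism (the open pair of pants, a disk carrying two incoming and one outgoing interval on its boundary), the unit disk, and their duals. In the Chas--Sullivan--Godin recipe the multiplication is the composite of (a) passing to the sublocus of pairs of paths whose relevant endpoints agree, by a Thom--Pontrjagin umkehr map along the finite-codimension inclusion determined by $Q$, and (b) concatenating. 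For $Q = \{b_0\}$ the endpoint condition is \emph{automatically} satisfied on all of $\Omega_{b_0}\mathcal{X} \times \Omega_{b_0}\mathcal{X}$, the umkehr map is the identity, and the induced operation on $H_\bullet(\Omega_{b_0}\mathcal{X})$ is simply the pushforward along the honest concatenation map $\Omega_{b_0}\mathcal{X} \times \Omega_{b_0}\mathcal{X} \to \Omega_{b_0}\mathcal{X}$ --- i.e. the Pontrjagin product \cref{ThePontrjaginProduct}, which is (ii); likewise the unit is the class of the constant loop.

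It then remains to check that these assignments extend to a symmetric monoidal functor on the open cobordism category, i.e. that the sewing relations among the generators are respected. Associativity of the pair-of-pants relation holds because loop concatenation is homotopy associative (the based loop space is an $A_\infty$-space, indeed a grouplike $E_1$-space), so both bracketings of $H_\bullet(\Omega_{b_0}\mathcal{X})^{\otimes 3} \to H_\bullet(\Omega_{b_0}\mathcal{X})$ coincide; the unit relations hold because the constant loop is a homotopy unit; and the remaining cobordism relations --- those involving the co-operations and any open-to-closed gluing --- go through exactly as in \cite{Godin2007} and \cite{Kupers2011}, now with every umkehr map trivial. Mapping-class-group invariance is automatic in the 0-brane sector since there is no geometry to retain beyond the combinatorial concatenation pattern.

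The main obstacle is not really an obstacle: the genuine difficulty of string topology --- constructing finite-codimension umkehr maps and arranging transversality --- is precisely what vanishes here, so the work is bookkeeping: isolating the 0-brane component of the open--closed TQFT of Godin and Kupers, confirming that in this component the path spaces are ordinary mapping spaces and the structure maps ordinary continuous maps, and hence that the induced algebra is literally the Pontrjagin ring. The one point needing a little care is the generality ``any topological space $\mathcal{X}$'': one must invoke the version of the open-string construction (as in \cite[p.~136-7]{Kupers2011}) that requires smoothness or finite-dimensionality only of the brane, which here is a point, so that no hypothesis on $\mathcal{X}$ is used.
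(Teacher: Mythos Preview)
Your proposal is correct and follows exactly the route the paper indicates: the paper does not give its own proof but simply cites \cite[p.~136--7]{Kupers2011} (and \cite{Godin2007}) for this fact, treating it as a known observation from the string-topology literature. Your sketch is a faithful and accurate unpacking of why the 0-brane sector of the open Chas--Sullivan--Godin TQFT degenerates to the Pontrjagin algebra --- the key point being that the umkehr maps trivialize when the brane is a point --- and you correctly flag that this is what allows the construction to go through for arbitrary $\mathcal{X}$ rather than just manifolds.
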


For our case, $\mathcal{X} = \mathrm{Map}^\ast(X^{d-1}, \mathcal{A})$, this is exactly the topological quantum data that we obtained in \cref{OnTopologicalQuantumObservables}.

\subsection{Topological Quantum States}
\footnote{
  The following discussion of topological quantum states on solitonic field histories goes back to \parencites[\S 3.5]{SS22-Conf}[\S 4]{CSS23-QuantumStates} with further discussion in \parencites[\S 2.1]{SS25-FQH}.
}

With the solitonic topological quantum observables understood (in \cref{OnLightFronQuantization}), we turn to the corresponding (solitonic topological) pure \emph{quantum states}. We understand them first as modules over the algebra of quantum observables for trivial light-cone topology and then in more detail as values of a holographic TQFT functor.

\subsubsection{Local systems of topological quantum states}
\label{OnLocalSystemsOfTopologicalQuantumStates}
\footnote{
  For background on local systems in our context, see \parencites[Lit. 2.22]{MySS2024}[Rem. 2.12]{SS25-FQH}[Fig. 1]{SS25-CrystallineChern}.
}

In the case $X^d = \mathbb{R}^1 \times X^{d-1}$ \cref{TransverseSpace}, consider the algebra of ordinary solitonic topological quantum observables $\mathrm{TopObs}_0$ according to \cref{AlgebraOfOrdinaryQuantumObservables}. 
The corresponding \emph{spaces of ordinary topological pure quantum states} must be modules over this algebra of observables. But this being a group algebra \cref{OrdinaryTopObsAsGroupAlgebras}, these modules are equivalently the linear representations of the group, and that group being a fundamental group of a space, these are equivalently the \emph{local systems} of vector spaces on the connected component $\mathrm{Map}^\ast_0(-,-)$ of the codim=1 phase space:
\begin{equation}
  \label{SpacesOfTopQStatesAsLocalSystems}
  \begin{aligned}
  \mathrm{TopQStSpcs}
  & 
  :=
  \big(
  \mathrm{TopObs}_0
  \big)
  \mathrm{Mod}\big(
    \mathrm{Vect}
  \big)
  \\
  & 
  \,\simeq
  \big(
    \pi_1\,
    \mathrm{Map}^\ast(X^{d-1}_{\cpt},\mathcal{A})
  \big)
  \mathrm{Rep}\big(
    \mathrm{Vect}
  \big)
  \\
  & 
  \,\simeq
  \mathrm{Vect}^{
    \mathrm{Map}^\ast_0(%
      X^{d-1}_{\cpt},\mathcal{A}%
    )
  }
  \mathrlap{\,.}
  \end{aligned}
\end{equation}
Here in the last line,
\begin{equation}
  \hspace{-.8cm}
  \mathcal{X}
  \in
  \mathrm{Grpd}_\infty
  \;\;\;
  \vdash
  \;\;\;
  \mathrm{Vect}^{\mathcal{X}}
  :=
  \mathrm{Func}\big(
    \mathcal{X},
    \mathrm{Vect}
  \big)
  \in
  \mathrm{Cat}
\end{equation}
denotes the category of local 1-systems on an $\infty$-groupoid $\mathcal{X}$ (the shape of some topological space), being equivalently the category of functors $\begin{tikzcd}[sep=small]\mathcal{X} \ar[r, dashed] & \mathrm{Vect}\end{tikzcd}$ to the 1-category of vector spaces.

Over general base spaces, such local systems are equivalently tuples of linear representations of the fundamental groups of the base space, indexed by the connected components $\pi_0$, which in our case are the total charge sectors in nonabelian cohomology $\tilde H^1(X^{d-1}_{\cpt},\Omega \mathcal{A})$ \cref{TheNonabelianCohomologySet}:
\begin{equation}
  \label{LocalSystemsAsFundGrpReps}
  \mathrm{Vect}^{
    \mathrm{Map}(X^{d-1}_{\cpt}, \mathcal{A})
  }
  \simeq
  \qquad \;\;
  \prod_{\mathclap{
    [x] \in 
    \tilde H^1\big(X^{d-1}_{\cpt}, \Omega\mathcal{A}\big)
  }}
  \qquad \;\;
  \big(
    \pi_1 
    \mathrm{Map}(X^{d-1}_{\cpt}, \mathcal{A})
  \big)
  \mathrm{Rep}(\mathrm{Vect})
  \mathrlap{\,.}
\end{equation}
This means (cf. \cref{AsymptoticChargeSector}) that in generalizing quantum state spaces from local systems over the connected component of 0-charge in \cref{SpacesOfTopQStatesAsLocalSystems} to local systems over all of $M$ \cref{PhaseSpaceInCodimension1} we are allowing for all possible topological \emph{asymptotic boundary conditions}
in terms of \emph{asymptotic charge sectors} (not necessarily that of vanishing charge) in the longitudinal direction. (In the picture of \cref{HCFT}, this means that we are allowing the M-fiber string to be attached to any ``0-brane'' in the moduli space.)

\begin{SCfigure}[.71][htb]
\caption
[
  For an inclusion $\iota$ of an asymptotic boundary of space (for us: at longitudinal infinity), its nonabelian cohomology classifies the \emph{asymptotic boundary charges} that may serve as \emph{asymptotic boundary conditions} for fields.
]
{
  \label{AsymptoticChargeSector}
  For an inclusion\footnotemark\; $\iota$ of an asymptotic boundary of space (for us: at longitudinal infinity), its nonabelian cohomology classifies the \emph{asymptotic boundary charges} that may serve as \emph{asymptotic boundary conditions} for fields.
}
\centering
$
\adjustbox{raise=42pt}{
  \begin{tikzcd}[
    row sep=5pt,
    column sep=23pt
  ]
    X^{d-1} 
    \ar[
      rr,
      hook,
      "{
        \iota
      }"
    ]
    &&
    X^d
    \\
    \pi_0
    \mathrm{Map}^\ast\!\big(
      X^{d-1}_{\cpt},
      \mathcal{A}
    \big)
    \ar[
      rr,
      <-,
      "{ \;
        \mathrm{Map}^\ast(%
          \iota,%
          \mathcal{A}%
        )
      }"
    ]
    \ar[
      d,
      equals
    ]
    &&
    \pi_0
    \mathrm{Map}^\ast\!\big(
      X^{d}_{\cpt},
      \mathcal{A}
    \big)
    \ar[
      d,
      equals
    ]
    \\[+4pt]
    \tilde H^1\big(
      X^{d-1}_{\cpt},
      \Omega\mathcal{A}
    \big)
    \ar[
      rr,
      <-,
      "{ 
        (\iota_{\cpt})^\ast 
      }"{pos=.6}
    ]
    &&
    \tilde H^1\big(
      X^{d}_{\cpt},
      \Omega\mathcal{A}
    \big)
  \end{tikzcd}
  }
$
\end{SCfigure}
\footnotetext{
  We are using that the one-point compactification $(-)_{\cpt}$ is functorial on proper maps (between locally compact Hausdorff spaces), such as on boundary inclusions of manifolds: Such maps extend continuously to the point at infinity. Therefore, while  the middle map in \cref{AsymptoticChargeSector} is really $\mathrm{Map}^\ast\big( \iota_{\cpt}, \mathcal{A}\big)$, we are notationally suppressing the subscript $(-)_{\cpt}$ on $\iota$, here and in the following, for notational brevity.
}
In a key example \cite{SS23-Defect}, the  codim=1 phase space for a suitable theory with \emph{intersecting solitonic branes} (a situation slightly outside the scope that we have reviewed here) is the \emph{configuration space} of some number of ordered points in the plane (reflecting the positions of defect branes in their transverse space), whence the corresponding spaces of solitonic topological quantum states \cref{SpacesOfTopQStatesAsLocalSystems} are pure \emph{braid representations} witnessing (nonabelian) anyonic ``statistics'', a hallmark property of topological quantum states in 2 dimensions. 
The variant of this example describing specifically the anyons of fractional quantum Hall systems we discuss in more detail in \cref{OnFQAHMaterials}.

\subsubsection{$\infty$-Local systems of higher topological quantum states}
\label{OnInfinityLocalSystemsOfHigherTopQStates}
\footnote{
  In dg-categorical language, $\mathbb{C}$-linear $\infty$-local systems go back to \cite{BlockSmith2014}.
  A model category presentation of $\mathbb{C}$-linear  $\infty$-local systems modeled as simplicial functors is in \cite[\S 3.2]{SS23-EoS}.
  Discussion in quasi-category language is in \cite[\S 4.3]{HopkinsLurie2013}, with relevant points surveyed, in our context, in \cite[Exs. 3.11, 5.6]{Sc14-LinTypes}.
}

As the full motive $\mathbf{TopObs}$ \cref{MotiveOfShapeOfPhaseSpace} refines the topological observables, so the local systems of quantum states \cref{OnLocalSystemsOfTopologicalQuantumStates} refine to $\infty$-local systems of higher topological quantum states (cf. \cite[Rem. 3.11]{SS23-EoS}):
\begin{equation}
  \label{HigherSpacesOfTopQStatesAsLocalSystems}
  \begin{aligned}
  \mathbf{TopQStSpc}
  & 
  :=
  \big(
    \mathbf{TopObs}
  \big)
  \mathrm{Mod}\big(
    \mathrm{Vect}_\infty
  \big)
  \\
  & 
  \,\simeq
  \big(
    \Omega\,
    \mathrm{Map}^\ast(X^{d-1}_{\cpt},\mathcal{A})
  \big)
  \mathrm{Rep}\big(
    \mathrm{Vect}_\infty
  \big)
  \\
  & 
  \,\simeq
  \mathrm{Vect}_\infty^{
    \mathrm{Map}^\ast_0(%
      X^{d-1}_{\cpt},\mathcal{A}%
    )
  }
  \mathrlap{,}
  \end{aligned}
\end{equation}
where in the last line
\begin{equation}
  \label{FormingInfinityLocalSystems}
  \hspace{-.8cm}
  \mathcal{X}
  \in
  \mathrm{Grpd}_\infty
  \;\;\;
  \vdash
  \;\;\;
  \mathrm{Vect}_\infty^{\mathcal{X}}
  :=
  \mathrm{Func}\big(
    \mathcal{X},
    \mathrm{Vect}_\infty
  \big)
  \in
  \mathrm{Cat}_\infty
\end{equation}
is the $\infty$-category of \emph{$\infty$-local systems} on an $\infty$-groupoid $\mathcal{X}$, consisting of the $\infty$-functors $\begin{tikzcd}[sep=small]\mathcal{X} \ar[r, dashed] & \mathrm{Vect}_\infty\end{tikzcd}$ to the $\infty$-category of $\infty$-vector spaces \cref{InfinityCategoryOfInfinityVectorSpaces}.

Since every homotopy type $\mathcal{X}$ is equivalently the disjoint union of the deloopings $B \mathcal{G}_x$ of its loop $\infty$-groups $\Omega_x \mathcal{X}$ as the base point $x$ ranges through its connected components, the $\infty$-local systems on $\mathcal{X}$ are equivalently tuples of linear $\infty$-representations of these $\infty$-groups (in refinement of \cref{LocalSystemsAsFundGrpReps}, cf. \cite[Rem. 3.11]{SS23-EoS}):
\begin{equation}
  \label{InfinityLocalSystemsAsInfinityRepresentations}
  \mathrm{Vect}_\infty^{\mathcal{X}}
  \simeq
  \prod_{[x] \in \pi_0(\mathcal{X})}
  (\Omega_x \mathcal{X}) \mathrm{Rep}(\mathrm{Vec}_\infty)
  \mathrlap{\,.}
\end{equation}

Incidentally, this makes manifest that the algebraic K-theory of $\mathbf{TopQStSpc}$ \cref{HigherSpacesOfTopQStatesAsLocalSystems} is the $H\mathbb{C}$-linear \emph{Waldhausen A-theory} of the shape of the phase space (\cite{HessShipley2016,nLab:ATheory}). 

\subsubsection{The holographic TQFT}
\label{TheddimExtednedTQFT}
\footnote{
  The notion of $(\infty,n)$-extended topological field theory is of course due to \cite{Lurie2008}. The construction of ``classical'' examples via  mapping spaces into classifying spaces is considered in  \parencites[\S 3]{FHLT2010}[v2 \S5.2.18.3]{Sc13-dcct}. Their quantization by pull-push of coefficients via \emph{motivic yoga} (6-functor formalism \cite{Scholze2025}, here in ``Wirthm{\"u}ller form'' where $f^! \simeq f^\ast$, cf. \cite{FauskHuMay2003}), including the Beck-Chevalley condition \cref{BeckChevalleyCondition}, is considered in \parencites[Cor. 5.10, Def. 7.6]{Sc14-LinTypes}[v2 \S5.2, Def. 7.6]{Sc13-dcct}{Stefanich2025-Categorification}. 
  \nopagebreak
  That this \emph{motivic yoga} may be understood as axiomatics for dependent \emph{linear homotopy types} which formalize the logic of quantum systems in general was envisioned in \cite[\S 3.2]{Sc14-LinTypes}, syntactically realized in \cite[\S 2.4]{Riley2022}, and developed for application to (topological) quantum information/computation (as in \cref{OnFQAHMaterials}) in \cite{SS25-Monadology}
}

We have seen:
\begin{description}
  \item[in \cref{OnTopologicalObservables}] that
  associated with the full codim=0 phase space space is the motive $\mathbf{TopObs}$ of (solitonic) topological observables;

  \item[in \cref{OnLocalSystemsOfTopologicalQuantumStates,OnInfinityLocalSystemsOfHigherTopQStates}] that, associated with the codim=1 phase space (in the case of trivial lightcone topology) is an $\infty$-category $\mathbf{TopQStSpc}$ of spaces of quantum states given by the $\infty$-local systems.  
\end{description}

But in fact, also the first item is naturally expressed in terms of $\infty$-local systems: The motive of the shape of the phase space, $\mathbf{TopObs}$ \cref{MotiveOfShapeOfPhaseSpace},  is equivalently the result of \emph{pull-pushing} the tensor unit $\infty$-local system through this correspondence (cf. \parencites[Ex. 4.1]{Sc14-LinTypes}[Prop. 2.7]{SS25-Monadology}):
\begin{equation}
\label{PhaseSpaceMotiveAsPullPush}
\hspace{-2cm}
  \begin{tikzcd}[ 
    row sep=-3pt, 
    column sep=40pt
  ]
    & 
    \mathrm{Map}^\ast(X^d_{\cpt}, \mathcal{A})
    \ar[
      dl,
      "{ p }"{swap}
    ]
    \ar[
      dr,
      "{ p }"
    ]
    \\[10pt]
    \ast
    &&
    \ast
    \\
    \mathrm{Vect}^{\ast}_\infty
    \ar[
      r,
      "{ 
        \BigDelta_p
      }"
    ]
    &
    \mathrm{Vect}
      ^{\!\!\mathrm{Map}^\ast(X^d_{\cpt}, \mathcal{A})}
      _\infty
    \ar[
      r,
      "{ 
        \sum_p 
      }"
    ]
    &
    \mathrm{Vect}^{\ast}_\infty
    \\
    \TensorUnit
    \ar[
      rr,
      |->,
      shorten <=5pt,
      shorten >=30pt
    ]
    &&
   \;\;\; \mathclap{
    \mathbb{C}\big[
      \mathrm{Map}^\ast(X^d_{\cpt}, \mathcal{A})
    \big]
    \mathrlap{
      \defneq 
      \mathbf{TopObs}.
    }
    }
  \end{tikzcd}
\end{equation}
Here we are using that for a map $f$ of $\infty$-groupoids there is an induced adjoint pair of base change functors of $\infty$-local systems:
\begin{equation}
  \label{BaseChangeAdjunctionForInfinityLocalSystems}
  \begin{tikzcd}[
    row sep=4pt,
    column sep=40pt
  ]
    \mathcal{X}
    \ar[
      rr,
      "{ f }"
    ]
    &&
    \mathcal{Y}
    \\
    \mathrm{Vect}^{\mathcal{X}}_\infty
    \ar[
      rr,
      shift left=9pt,
      "{ 
        \sum_f \,\defneq\, f_!
      }"{description}
    ]
    \ar[
      rr,
      phantom,
      "{
        \bot
      }"{scale=.6, pos=.6}
    ]
    \ar[
      rr,
      <-,
      shift right=9pt,
      "{
       \;
        \BigDelta_f
        \,\defneq\,
        f^\ast 
      }"{description}
    ]
    &&
    \mathrm{Vect}^{\mathcal{Y}}_\infty
    \mathrlap{\,,}
  \end{tikzcd}
\end{equation}
where the right adjoint $\BigDelta_f$ is pullback, while the left adjoint $\sum_f$ is given by forming $\infty$-colimits of $\infty$-vector spaces over the homotopy fibers of $f$ --- which for constant systems reduces to the tensoring \cref{TensoringOfInfinityVectOverInfinityGrpd}.

However, from this perspective, the system of observables and quantum states generalizes to spacetimes $X^{1,d}$ which are not necessarily topologically trivial in the lightcone direction as in \cref{TransverseSpace}, but where $X^d$ may be a non-trivial \emph{cobordism}, hence\footnote{
  We are allowing the cobordisms and their in/out-boundaries to be non-compact manifolds, but we only ever map out of their one-point compactification, cf. \cref{AsymptoticChargeSector}. This slightly non-standard construction is the relevant one for application to actual physics.
}
a $d$-manifold with ``longitudinal'' boundary $I^{d-1} \sqcup O^{d-1} := \partial X^d$, included as:
\begin{equation}
  \begin{tikzcd}[
    row sep=-2pt,
    column sep=50pt
  ]
    & 
    X^d
    \\
    I^{d-1}
    \ar[
      ur,
      hook,
      "{ i }"
    ]
    &&
    O^{d-1}
    \mathrlap{\,.}
    \ar[
      ul,
      hook',
      "{ o }"{swap}
    ]
  \end{tikzcd}
\end{equation}
Associated with such data is first, under $\mathrm{Map}^\ast\big( (-)_{\cpt}, \mathcal{A} \big)$ (\cref{AsymptoticChargeSector}), the \emph{correspondence} which restricts the topological field configurations to their in/out-going values, and then, under $\mathrm{Vect}_\infty^{(-)}$ \cref{FormingInfinityLocalSystems} and pull-push
\cref{BaseChangeAdjunctionForInfinityLocalSystems}, a linear $\infty$-functor between categories of in/out-spaces of higher topological quantum states (cf. \cite[Def. 7.6]{Sc14-LinTypes}):
\begin{equation}
  \label{PullPushOfInfinityLocalSystems}
  \begin{tikzcd}[ 
    row sep=-4pt,
    column sep=42pt
  ]
    & 
    \mathrm{Map}^\ast\big(X^d_{\cpt}, \mathcal{A}\big)
    \ar[
      dl,
      shorten=-3pt,
      "{ 
        \mathrm{Map}^\ast(
          i,
          \mathcal{A}
        )
      }"{sloped, pos=.5}
    ]
    \ar[
      dr,
      shorten=-3pt,
      "{ 
        \mathrm{Map}^\ast(
          o,
          \mathcal{A}
        )
      }"{sloped, pos=.5}
    ]
    \\[7pt]
    \mathrm{Map}^\ast\big(I^{d-1}_{\cpt}\!, \mathcal{A}\big)
    &&
    \mathrm{Map}\big(O^{d-1}_{\cpt}\!, \mathcal{A}\big)
    \\[10pt]
    \mathrm{Vect}
      ^{\!\!\mathrm{Map}^\ast(I^{d-1}_{\cpt}\!,\mathcal{A})}
      _\infty
    \ar[
      r,
      "{
        \underset
        {\mathrm{Map}^\ast(i, \mathcal{A})}
        {\BigDelta}
      }"
    ]
    &
    \mathrm{Vect}
      ^{\!\!\mathrm{Map}^\ast(X^d_{\cpt}, \mathcal{A})}
      _\infty
    \ar[
      r,
      "{
        \underset
        {%
          \mathrm{Map}^\ast(o, \mathcal{A})%
        }
        {\sum}
      }"
    ]
    &
    \mathrm{Vect}
      ^{\!\!\mathrm{Map}(O^{d-1}_{\cpt}\!,\mathcal{A})}
      _\infty .
  \end{tikzcd}
\end{equation}

Due to the \emph{Beck-Chevalley condition} satisfied by $\infty$-local systems (cf. \cite[Prop. 4.3.3]{HopkinsLurie2013}): 
\begin{equation}
  \label{BeckChevalleyCondition}
  \begin{tikzcd}[
    row sep=5pt, 
    column sep=15pt
  ]
    & 
    \mathcal{X}
    \times_{\mathcal{B}} 
    \mathcal{Y}
    \ar[
      dl,
      "{
        \mathrm{pr}_{\mathcal{X}}
      }"{pos=.2, swap}
    ]
    \ar[
      dr,
      "{
        \mathrm{pr}_{\mathcal{Y}}
      }"{pos=.3}
    ]
    \ar[
      dd,
      phantom,
      "{ \lrcorner }"{
        rotate=-45,
        pos=.1
      }
    ]
    \\
    \mathcal{X}
    \ar[
      dr,
      "{ f }"{swap}
    ]
    &&
    \mathcal{Y}
    \ar[
      dl,
      "{ g }"
    ]
    \\
    & 
    \mathcal{B}
  \end{tikzcd}
  \;\;\;
  \Rightarrow
  \;\;\;
  \begin{tikzcd}[
    row sep=2pt, 
    column sep=22pt
  ]
    & 
    \mathrm{Vect}_\infty
     ^{{
       \mathcal{X}
         \times_{\mathcal{B}} 
        \mathcal{Y}
      }}
    \ar[
      dl,
      <-,
      shorten <=-3pt,
      "{
         \BigDelta
           _{\mathrm{pr}_{\mathcal{X}}}
      }"{pos=.2, swap}
    ]
    \ar[
      dr,
      shorten <=-7pt,
      "{
        \sum_{%
        \mathrm{pr}_{\mathcal{Y}}%
        }
      }"{pos=.2}
    ]
    \\
    \mathrm{Vect}_\infty^{\mathcal{X}}
    \ar[
      dr,
      "{ 
        \sum_f 
      }"{swap}
    ]
    &&
    \mathrm{Vect}_\infty^{\mathcal{Y}}
    \mathrlap{\,,}
    \ar[
      dl,
      <-,
      "{ 
        \BigDelta_g
      }"
    ]
    \\
    & 
    \mathrm{Vect}_\infty^{\mathcal{B}}
  \end{tikzcd}
\end{equation}
this construction \cref{PullPushOfInfinityLocalSystems} is in fact (symmetric-monoidal) functorial under composition of correspondences (cf. \cite[Cor. 5.10]{Sc14-LinTypes}) to be denoted:
\begin{equation}
  \label{TheTQFTd}
  \mathbf{TopObs}(-)
  \;:\;
  \begin{tikzcd}[
    column sep=10pt
  ]
    I^{d-1}
    \ar[
      d,
      hook,
      "{
        i
      }"
    ]
    &\longmapsto&
    \mathrm{Vect}_\infty^{\!\!
      \mathrm{Map}(I_{\cpt},%
      \mathcal{A})
    }
    \ar[
      dd,
      "{
        \underset{%
        \mathrm{Map}(%
          o%
          ,%
          \mathcal{A}%
        )%
        }{\sum}
        \circ
        \underset{
        \mathrm{Map}(%
          i%
          ,%
          \mathcal{A}%
        )
        }{\BigDelta}
      }"{description, pos=.5}
    ]
    \\
    X^d
    &\longmapsto&
    \\
    O^{d-1}
    \ar[
      u,
      hook',
      "{
        o
      }"{swap}
    ]
    &\longmapsto&
    \mathrm{Vect}_\infty^{\!\!
      \mathrm{Map}(O^{d-1}_{\cpt},%
      \mathcal{A})
    }
    \mathrlap{\,.}
  \end{tikzcd}
\end{equation}
In fact, \cite[Thm. 1.2]{Stefanich2025-Categorification} suggests that this construction extends to a symmetric monoidal $(\infty,d)$-functor on $d$-dimensional cobordisms and as such makes a ``fully extended topological quantum field theory'' in the sense of \cite{Lurie2008}.

But, while this is technically a TQFT in the mathematical sense, it is \emph{not yet} the description of the  (light front) time evolution of our topological quantum states in the Schr{\"o}dinger picture (cf. \cite[\S 1, Table 1]{Schreiber2009}): The latter should assign specific (Hilbert) spaces $\HilbertSpace$ of quantum states in dimension $d-1$, while $\mathbf{TopObs}(-)$ instead assigns here \emph{categories of all possible spaces of states}, namely the  representation categories of the algebras of observables, hence equivalently the \emph{Morita classes of algebras of observables}.
\footnote{
  Functorial field theories which assign algebras of observables in codim=1 (instead of spaces of quantum states) have recently been highlighted \parencites[(4.7)]{BunkSchenkel2025a}[(5.3b)]{BunkSchenkel2025b} as the Functorial Quantum Field Theory (FQFT) equivalent to the local nets of observables from Algebraic Quantum Field Theory (AQFT).
}

Also note that the coefficients of $\mathbf{TopObs}(-)$ 
are really those of a $(d+1)$-dimensional field theory, which suggests that the physical $d$-dimensional QFT should be a \emph{boundary field theory} of $\mathbf{TopObs}(-)$.

We make this precise next.

\subsubsection{The Schr{\"o}dinger picture TQFT}
\footnote{
  The observation that ``twisted'' or ``anomalous'' $(d-1)$-dimensional TFTs may be understood as natural transformations between extended $d$-dimensional TFTs we first communicated in 2006 (cf. \cite{nLab:TwistedFunctorialFieldTheory}). The insight was popularized under the name ``twisted field theory'' by \cite[\S 5]{StolzTeichner2011} and then again under the name ``relative field theory'' by \cite{FreedTeleman2014}, from where the broader theoretical physics community picked it up, now under names such as ``anomaly TFT'' or ``symmetry TFT'' (cf. \cite{ApruzziEtAl2021,nLab:TwistedFunctorialFieldTheory}). 

  The special case relevant here, of such transformations into the kind of anomaly TQFTs from \cref{TheddimExtednedTQFT}, we had discussed in \cite[\S 7.3]{Sc14-LinTypes}.
}
\nopagebreak \newline 
\indent The choice of an \emph{actual} space $\HilbertSpace({X^{d-1}})$ of topological quantum states, among the $\infty$-category $\mathbf{TopQStSpc}$ \cref{HigherSpacesOfTopQStatesAsLocalSystems} of all possible such spaces, may naturally and equivalently be encoded as a choice of an $\infty$-functor which is $\mathrm{Vect}_\infty$-linear and of the following form (therefore determined by its value on the tensor unit $\TensorUnit \in \mathrm{Vect}_\infty$):
\begin{equation}
  \begin{tikzcd}[row sep=-2pt, column sep=0pt]
    \mathrm{Vect}_\infty
    \ar[
      rr
    ]
    &&
    \mathrm{Vect}_\infty^{
      \mathrm{Map}(X^{d-1}, \mathcal{A})
    }
    \\
    \TensorUnit 
      &\longmapsto&
    \HilbertSpace(X^{d-1})
    \mathrlap{\,.}
  \end{tikzcd}
\end{equation}
But this is the form of a component morphism over objects of a natural transformation $\mathbf{TopQSts}(-)$ into the functor $\mathbf{QObs}(-)$ \cref{TheTQFTd}
out of the unit functor, $\mathbf{1}$ (constant on $\mathrm{Vect}_\infty$): 
\begin{equation}
  \mathbf{TopQSts}(-)
  \;:\;
  \begin{tikzcd}[
    sep=small
  ]
    \mathbf{1}
    \ar[
      rr
    ]
    &&
    \mathbf{TopObs}(-)
    \mathrlap{\,.}
  \end{tikzcd}
\end{equation}
The codim=0 component of $\mathbf{TopQSts}(X^d)$ is itself a linear natural transformation  (cf. \cite[Prop. 7.9]{Sc14-LinTypes}),
\begin{equation}
  \label{QStatesTransformation}
  \mathbf{TopQSts}(X^d)
  \;:\;
  \begin{tikzcd}[
    column sep=2pt
  ]
    I^{d-1}
    \ar[
      d,
      hook,
      "{ i }"
    ]
    &\longmapsto&
    \mathrm{Vect}_\infty
    \ar[
      rr,
      "{
        \TensorUnit
        \,\mapsto\,
        \HilbertSpace(I^{d-1})
      }"
    ]
    \ar[
      dd,
      equals
    ]
    &\phantom{-----}&
    \mathrm{Vect}_\infty^{%
      \mathrm{Map}(I^{d-1}_{\cpt}\!,\mathcal{A})
    }
    \ar[
      dd,
      "{
        \underset{%
        \mathrm{Map}^\ast(o,%
          \mathcal{A}%
        )%
        }{\sum}
        \circ
        \underset{%
        \mathrm{Map}^\ast(i,%
          \mathcal{A}%
        )
        }
        {\BigDelta}
      }"{description}
    ]
    \ar[
      ddll,
      Rightarrow,
      shorten=17pt,
      "{
        \widetilde{U}(X^d)
      }"{sloped}
    ]
    \\
    X^d &\longmapsto&
    \\
    O^{d-1}
    \ar[
      u,
      hook',
      "{ o }"{swap}
    ]
    &\longmapsto&
    \mathrm{Vect}_\infty
    \ar[
      rr,
      "{
        \TensorUnit
        \,\mapsto\,
        \HilbertSpace(O^{d-1})
      }"{swap}
    ]
    &&
    \mathrm{Vect}_\infty^{%
      \mathrm{Map}(O^{d-1}_{\cpt}\!,\mathcal{A})
    }
    \mathrlap{\,,}
  \end{tikzcd}
\end{equation}
determined by its tensor unit component:
\begin{equation}
  \label{UnitComponentOfTransformationComponentOfQStates}
  \widetilde{U}_{\TensorUnit}(X^d)
  : \; 
  \begin{tikzcd}[sep=small]
    \underset{%
      \mathrm{Map}^\ast(o, \mathcal{A})
    }{\sum}
    \circ
    \underset{%
      \mathrm{Map}^\ast(i, \mathcal{A})
    }{\BigDelta} 
  \HilbertSpace(I^{d-1})
  \ar[rr]
  &&
  \HilbertSpace(O^{d-1}) 
  \mathrlap{\,.}
  \end{tikzcd}
\end{equation}
Adjointly \cref{BaseChangeAdjunctionForInfinityLocalSystems}, this is a block multi-matrix (a categorified integral kernel, cf. \cite[Def. 4.12]{Sc14-LinTypes}),  
\begin{equation}
  \label{QStatesTransformationAsAKernel}
    U_{\TensorUnit}(X^d)
    \;:\;\;
    \begin{tikzcd}
      \underset{
        \mathrm{Map}^\ast(i,\mathcal{A})
      }{\BigDelta}
      \HilbertSpace(I^{d-1})
      \ar[r]
      &
      \underset{
        \mathrm{Map}^\ast(o,\mathcal{A})
      }{\BigDelta}
      \HilbertSpace(O^{d-1})
      \mathrlap{\,,}
    \end{tikzcd}
\end{equation}
indexed by asymptotic boundary conditions (cf. \cref{AsymptoticChargeSector}),
\footnote{
 Namely,  \cref{InfinityLocalSystemsAsInfinityRepresentations} shows that 
  $U_\TensorUnit(X^d)$
  \cref{QStatesTransformationAsAKernel}
  it is equivalently a block matrix of size $\pi_0\mathrm{Map}(X^{d-1}_{\mathrm{in}}, \mathcal{A}) \times \pi_0\mathrm{Map}(X^{d-1}_{\mathrm{out}}, \mathcal{A})$ (hence indexed by the asymptotic charge sectors, \cref{AsymptoticChargeSector}) of homomorphisms between $\infty$-representations, or rather a block \emph{multi-matrix} (the incidence matrix of a multi-graph) with entry sets given by the fibers of $\pi_0 \big(\mathrm{Map}(\iota_{\mathrm{in}}, \mathcal{A}) \times \mathrm{Map}(\iota_{\mathrm{out}}, \mathcal{A})\big)$: There is one linear quantum evolution operator per topological sector of field trajectories between given asymptotic topological charge sectors.

  One may consider ``quantizing further'' by summing over these charge sectors to obtain a single linear evolution map. This is considered in \cite[Def. 7.10]{Sc14-LinTypes}. While interesting in itself (\cite[Prop. 7.11]{Sc14-LinTypes}), such ``further quantization'' is not what the physics and applications suggest to do in the present case: Pushing the $\infty$-local systems on $\mathrm{Map}(X^{d-1}, \mathcal{A})$ to the point means to form the \emph{coinvariants} of the corresponding representations and thereby discards exactly the action of our topological observables. 
}
and the pasting composition of the natural transformations \cref{QStatesTransformation} corresponds to the matrix multiplication of these kernels \cref{QStatesTransformationAsAKernel}, cf. \cite[\S 6.1]{Sc14-LinTypes}. It is hereby that a choice of transformation $\mathbf{TopQSts}(-)$ in \cref{QStatesTransformation} encodes, in addition to the spaces $\HilbertSpace(X^{d-1})$ of quantum states themselves, also their quantum evolution $U(X^d)$.

Of course, a globally consistent (``anomaly free'') choice of $U(-)$ is delicate. We next consider this in the very special but important case of evolution just along diffeomorphisms. 


\subsubsection{The modular functor}
\label{TheModularFunctor}
\footnote{
  Our discussion of equivariance of topological quantum states under diffeomorphisms and mapping class groups (\emph{general covariance} in homotopy theory, cf. \cite{Dul2023}) is due to \parencites[Def. 2.11, \S 2.2]{SS25-FQH}{SS25-Srni}. 
}

We consider now the restriction of the above TQFTs from general cobordisms to those that reflect \emph{diffeomorphisms}. This is known (at least for 3d TQFT) as restricting to the \emph{modular functor}, which records the spaces of states in codim=1 and the action of diffeomorphisms on them.

If we think of $\mathbf{TopObs}(-)$ \cref{TheTQFTd} as extended to an $(\infty,d)$-functor then this means (\cite[Warning 2.2.8]{Lurie2008} \footnote{
  This is inside a \emph{warning} in \cite{Lurie2008} to highlight that, for $d > 3$, the equivariance group may be even larger than the diffeomorphism group, due to the existence of exotic h-cobordisms.  
}) to record also the higher isotopies of diffeomorphisms; 
namely to record the topological action of the diffeomorphism group $\mathrm{Diff}(-)$, whence the moduli space of charges is promoted to the \emph{homotopy quotient} (aka \emph{Borel construction})
\begin{equation}
  \mathrm{Map}\big(
    X^{d-1},
    \mathcal{A}
  \big)
  \sslash 
  \mathrm{Diff}\big(X^{d-1}\big)
  \;\sim\;
  \mathrm{Map}\big(
    X^{d-1},
    \mathcal{A}
  \big)
  \;\;\;
  \underset{
    \mathclap{\mathrm{Diff}(X^{d-1})}
  }{\otimes}
  \;\;\;
  E \mathrm{Diff}\big(X^{d-1}\big)
  \mathrlap{\,.}
\end{equation}
The correspondingly \emph{generally covariantized} quantum observables are hence
\begin{equation}
  \begin{aligned}
  \mathrm{TopObs}_0
  & \defneq 
  \;
  \mathbb{C}\bigg[
  \pi_1
  \Big(
  \mathrm{Map}\big(
    X^{d-1},
    \mathcal{A}
  \big)
  \sslash 
  \mathrm{Diff}\big(X^{d-1}\big)
  \Big)
  \bigg]
  \\
  &
  \underset{\mathclap{
    \scalebox{.6}{\begin{minipage}{35pt}\cite[P. 2.24]{SS25-FQH}\end{minipage}}}
  }{\simeq}
  \;
  \mathbb{C}\bigg[
  \Big(
  \pi_1
  \mathrm{Map}\big(
    X^{d-1},
    \mathcal{A}
  \big)
  \Big)
  \rtimes
  \Big(
  \pi_0
  \mathrm{Diff}\big(X^{d-1}\big)
  \Big)
  \bigg]
  \mathrlap{\,,}
  \end{aligned}
\end{equation}
where on the right we have the \emph{mapping class group}
\begin{equation}
  \mathrm{MCG}\big(X^{d-1}\big)
  :=
  \pi_0 \mathrm{Diff}\big(
    X^{d-1}
  \big)
  \,.
\end{equation}
This entails that the generally covariantized quantum states are still representations of the fundamental group of the charge moduli as in \cref{SpacesOfTopQStatesAsLocalSystems}, but now need to admit and be equipped with extensions to the semidirect action of the mapping class group.

\section{Application to Topological Quantum Materials}
\label{ExamplesAndApplications}

Here we connect the general theory of topological quantization of higher gauge fields, as laid out in \cref{GlobalCompletionOfHigherGaugeFields,QuantizationInTopologicalSector}, to experimental physics and questions of engineering of topologically ordered quantum materials.

\subsection{Canonical Flux Observables}
\label{OnCanonicalFluxObservables}
\footnote{
  On basics of Poisson brackets of observables in field theory, derived from Lagrangian densities, see for instance \cite[\S 9]{Nastase2019}.
  The role of Pontrjagin algebras in uniformly reformulating the quantization of topological flux observables was indicated in \cite{SS24-Obs}.
}

First, to demonstrate backwards compatibility, we begin by showing how our general formula \cref{OrdinaryTopObsAsGroupAlgebras} for ordinary topological flux observables  reproduces quantization of traditional ``canonical'' Poisson brackets of topological flux observables --- first for 4D Yang-Mills theory (\cref{OnMaxwellFluxObservables}), then for 5D Maxwell-Chern-Simons theory \cref{OnMaxwellCSFluxObservables}. The case of 11D MSC theory follows analogously.

\subsubsection{Yang-Mills Flux Observables}
\label{OnMaxwellFluxObservables}
\footnote{
  The Poisson brackets of fluxes in ordinary Yang-Mills theory (\cref{PoissonBracketsOfYMFluxes}) are due to \cite[Thm 1.1, \S A.1]{SS24-Obs}, building on the key observation of \cite{CattaneoPerez2017}. The non-perturbative quantization of these brackets in the topological sector is \cite[Thm. 1.2]{SS24-Obs}, by appeal to Rieffel deformation quantization \parencites[\S 2]{Hawkins2008}{nLab:CStarAlgebraicDeformationQuantization} of Lie-Poisson structures by group convolution algebras \parencites{Rieffel1990}[Ex. 11.1]{LandsmanRamazan2001}[Ex. 2]{Landsman1999}.
}

We begin with comparing to the quantization of fluxes in ordinary Yang-Mills theory and show that for the special case of Maxwell theory (to which our setup applies, by \cref{NonlinearFluxDensities}) a traditional quantization of the topological observables coincides with the result of our general formula \cref{OrdinaryTopObsAsGroupAlgebras}.
To that end, consider:
\begin{itemize}
\item
$\mathfrak{g}$ a Lie algebra with $\mathrm{ad}$-invariant metric $\langle -,-\rangle$,

\item
$\Sigma^2$ a closed orientable surface, not necessarily connected,

\item
$X^{1,3} := \mathbb{R}^{1,1} \times \Sigma^2$
a 4D spacetime of the form \cref{TransverseSpace},
\end{itemize}
and consider $\mathfrak{g}$-valued functions
$
  \alpha \in C^\infty(\Sigma, \mathfrak{g})
$
as observables of $\mathfrak{g}$-Yang-Mills flux through $\Sigma$ given by evaluating flux densities $B_2, E_2$ inside 
\begin{equation}
  \begin{tikzcd}[
    sep=14pt
  ]
  \int_\Sigma \langle \alpha, -\rangle
  :
  \Omega^2_{\mathrm{dR}}\big( \mathbb{R}^1 \times \Sigma; \mathfrak{g} \big)
  \ar[rr]
  &&
  \mathbb{R}
  \ar[r, hook]
  &
  \mathbb{C}\;.
  \end{tikzcd}
\end{equation}

A careful analysis of the standard Poisson brackets of fluxes in $\mathfrak{g}$-Yang-Mills theory reveals the following fact (where $\mathfrak{g}_{\hbar}$ denotes $\mathfrak{g}$ with Lie bracket rescaled by $\hbar \in \mathbb{R}_{>0}$):
\begin{proposition}
  \label[proposition]
    {PoissonBracketsOfYMFluxes}
  The phase space of electromagnetic fluxes through $\Sigma$ in $\mathfrak{g}$-Yang-Mills theory is the Lie-Poisson manifold corresponding to the Fr{\'e}chet Lie algebra
  \begin{equation}
    C^\infty\big(
      \Sigma
      ,
      \mathfrak{g}_{\hbar} 
      \rtimes_{{}_{\mathrm{ad}}}
      \mathfrak{g}_0
    \big)
    = 
    C^\infty\big(
      \Sigma,
      \mathfrak{g}_{\hbar}
    \big)
    \rtimes_{{}_{\mathrm{ad}}}
    C^\infty\big(
      \Sigma,
      \mathfrak{g}_{0}
    \big)
  \end{equation}
  \textup{(where $\mathfrak{g}_{\hbar}$ denotes $\mathfrak{g}$ with Lie bracket rescaled by $\hbar \in \mathbb{R}$)}.
\end{proposition}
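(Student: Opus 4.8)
The plan is to compute the classical Poisson brackets of the smeared flux observables $\int_\Sigma \langle \alpha, B_2 \rangle$ and $\int_\Sigma \langle \beta, E_2 \rangle$ directly from the canonical symplectic structure on the Yang--Mills phase space restricted to a Cauchy surface $\{0\} \times \Sigma$ inside $X^{1,3} = \mathbb{R}^{1,1} \times \Sigma$, and then recognize the resulting bracket relations as exactly those of the Lie--Poisson structure on the dual of the stated Fr\'echet Lie algebra. First I would recall the standard first-order (phase-space) formulation of $\mathfrak{g}$-Yang--Mills: the fields on the Cauchy slice are a connection $A$ on $\Sigma$ and its conjugate electric field $E$ (a $\mathfrak{g}$-valued $1$-form on the $2$-surface, dual to a $2$-form via the metric), with the canonical Poisson bracket $\{A, E\} \sim \delta$ and the Gau\ss{} constraint $\mathrm{d}_A E = 0$. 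Here $B_2 = F_A$ is the magnetic flux density (a function of $A$ alone, hence built from the ``configuration'' sector) while $E_2$ is the electric flux density (the momentum sector). The smeared observables are $\Phi^{\mathrm{mag}}_\alpha := \int_\Sigma \langle \alpha, F_A\rangle$ and $\Phi^{\mathrm{el}}_\beta := \int_\Sigma \langle \beta, E\rangle$ for $\alpha,\beta \in C^\infty(\Sigma,\mathfrak{g})$.

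The key computation, following \cite{CattaneoPerez2017} as used in \cite{SS24-Obs}, is the three brackets. The bracket $\{\Phi^{\mathrm{el}}_\alpha, \Phi^{\mathrm{el}}_\beta\}$: since $\Phi^{\mathrm{el}}$ depends only on $E$, one might naively expect this to vanish, but the correct bracket --- after suitable treatment of the boundary terms and using the metric --- comes out as $\Phi^{\mathrm{el}}_{[\alpha,\beta]}$ with a factor that tracks $\hbar$, i.e.\ it closes on the ``$\mathfrak{g}_\hbar$'' copy. The bracket $\{\Phi^{\mathrm{mag}}_\alpha, \Phi^{\mathrm{el}}_\beta\}$ comes out as $\Phi^{\mathrm{mag}}_{[\alpha,\beta]}$ (the adjoint action of the electric sector on the magnetic sector, reflecting that $F_A$ transforms in the adjoint under the gauge transformations generated by the electric field via the Gau\ss{} law), and $\{\Phi^{\mathrm{mag}}_\alpha, \Phi^{\mathrm{mag}}_\beta\} = 0$ (the magnetic sector is abelian, being built from $F_A$ which Poisson-commutes with itself since it involves only $A$). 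Collecting these: the map $(\alpha,\beta) \mapsto (\Phi^{\mathrm{mag}}_\alpha, \Phi^{\mathrm{el}}_\beta)$ is a Lie algebra homomorphism from $C^\infty(\Sigma, \mathfrak{g}_\hbar \rtimes_{\mathrm{ad}} \mathfrak{g}_0)$ --- with $\mathfrak{g}_\hbar$ the electric (normal) factor, $\mathfrak{g}_0$ the magnetic (abelian, acted-on) factor, and the semidirect product given by the adjoint action --- into the Poisson algebra of functions on phase space. By the standard correspondence between (Fr\'echet) Lie algebras and linear Lie--Poisson structures on their duals, and the fact that these smeared fluxes separate points of the flux phase space, this identifies the flux phase space with the Lie--Poisson manifold $C^\infty(\Sigma, \mathfrak{g}_\hbar \rtimes_{\mathrm{ad}} \mathfrak{g}_0)^\ast$, and the displayed algebra identity $C^\infty(\Sigma, \mathfrak{g}_\hbar \rtimes_{\mathrm{ad}} \mathfrak{g}_0) = C^\infty(\Sigma,\mathfrak{g}_\hbar) \rtimes_{\mathrm{ad}} C^\infty(\Sigma,\mathfrak{g}_0)$ is just pointwise unwinding of the semidirect product.

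I expect the main obstacle to be the careful bookkeeping in the electric-electric bracket: getting the boundary/Gau\ss-constraint contributions and the precise $\hbar$-rescaling right, so that the bracket genuinely closes as $\Phi^{\mathrm{el}}_{[\alpha,\beta]_\hbar}$ rather than vanishing --- this is the subtle point isolated in \cite{CattaneoPerez2017} (the flux observables are not gauge-invariant on the nose, and their brackets are computed as Dirac brackets / after symplectic reduction by the Gau\ss{} constraint). A secondary technical point is ensuring that the infinite-dimensional Lie--Poisson correspondence is applied in a setting where it is valid (Fr\'echet Lie algebras, linear functionals given by smooth smearing), for which one appeals to the standard framework; and one should note that for the abelian case $\mathfrak{g} = \mathbb{R}$ (Maxwell theory, to which the general setup of \cref{NonlinearFluxDensities} directly applies) all brackets become the Heisenberg-type relations between magnetic and electric fluxes, whose subsequent quantization (\cref{OnCanonicalFluxObservables}, next) will be seen to match the group-algebra formula \cref{OrdinaryTopObsAsGroupAlgebras}.
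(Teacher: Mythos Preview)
The paper does not prove this proposition in the text; it is imported from \cite[Thm~1.1, \S A.1]{SS24-Obs} (building on the key observation of \cite{CattaneoPerez2017}) and then used as input for the subsequent discussion. Your outline --- compute the three Poisson brackets of the smeared electric and magnetic flux observables on the Cauchy surface $\mathbb{R}^1 \times \Sigma^2$, invoke the Cattaneo--Perez mechanism for the non-vanishing electric--electric bracket, and then recognize the result as the linear Lie--Poisson structure on the dual of a semidirect-product current algebra --- is precisely the strategy of that reference, so at the level of approach your proposal matches.

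One point needs to be cleaned up, however, since as written your identification of the two factors is internally inconsistent. In a semidirect product $\mathfrak{n} \rtimes \mathfrak{h}$ the \emph{normal} factor $\mathfrak{n}$ is the one that is \emph{acted on}, and every mixed bracket lands in $\mathfrak{n}$. You write that the electric sector is the ``normal'' factor while the magnetic sector is the ``acted-on'' factor --- but these are the same role, not opposite ones. Worse, your own mixed bracket $\{\Phi^{\mathrm{mag}}_\alpha,\Phi^{\mathrm{el}}_\beta\} = \Phi^{\mathrm{mag}}_{[\alpha,\beta]}$ lands in the \emph{magnetic} sector, which would force magnetic to be the normal ideal; combined with your (correct) observation that $\{\Phi^{\mathrm{mag}},\Phi^{\mathrm{mag}}\}=0$, this is incompatible with labeling $\mathrm{el}\leftrightarrow\mathfrak{g}_\hbar$ and $\mathrm{mag}\leftrightarrow\mathfrak{g}_0$ in $\mathfrak{g}_\hbar \rtimes_{\mathrm{ad}} \mathfrak{g}_0$. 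You should revisit the mixed-bracket computation and the assignment of sectors against \cite[\S A.1]{SS24-Obs}: the Cattaneo--Perez regularization of the electric flux introduces an $A$-dependence, and it is this term that controls where the mixed bracket lands and hence which sector is the normal one.
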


Non-perturbative quantization of such brackets is given by a choice of Lie group $G \rtimes (\mathfrak{g}_0/\Lambda)$ integrating $\mathfrak{g}_\hbar \rtimes_{{}_{\mathrm{ad}}} \mathfrak{g}_0$ (hence with $\Lambda \subset \mathfrak{g}$ a $\mathrm{G}_{\mathrm{ad}}$ invariant lattice) with the quantum observables forming the  convolution $C^\ast$-algebra of the Fr{\'e}chet Lie group
\begin{equation}
    C^\infty\big(
      \Sigma
      ,
      G
      \rtimes
      \mathfrak{g}_0/
      \Lambda
    \big)
    = 
    C^\infty\big(
      \Sigma,
      G
    \big)
    \rtimes
    C^\infty\big(
      \Sigma,
      \mathfrak{g}_{0}/\Lambda
    \big)
    \,.
\end{equation}
Given this, the topological quantum observables therefore form the group algebra (cf. \cite[p. 51,197]{Balachandran2010}) of the group of connected components:
\begin{equation}
  \label{TopologicalObservablesOfYMFluxes}
  \mathrm{TopObs}
  =
  \mathbb{C}\big[
    \pi_0
    \,
    C^\infty\big(
      \Sigma,
      G
      \rtimes
      \mathfrak{g}_{0}/\Lambda
    \big)
  \big]
    \,.
\end{equation}

Specializing this to electromagnetism (Maxwell theory) with $\mathfrak{g} = \mathfrak{u}(1) \simeq \mathbb{R}$ an evident choice of integrating group is
\begin{equation}
  \mathrm{U}(1) \times \mathbb{R}/\mathbb{Z}
  \simeq
  \mathrm{U}(1) \times \mathrm{U}(1)
\end{equation}
in which case the topological observables \cref{TopologicalObservablesOfYMFluxes} become:
\begin{equation}
  \begin{aligned}
    \mathrm{TopObs}
    &
    \simeq
    \mathbb{C}\Big[
      \pi_0\,
      C^\infty\big(
        \Sigma,
        \mathrm{U}(1)^2
      \big)
    \Big]
    \\
    &
    \simeq
    \mathbb{C}\Big[
      \pi_0\,
      \mathrm{Map}\big(
        \Sigma,
        \mathrm{U}(1)
      \big)^2
    \Big]    
    \\
    &
    \simeq
    \mathbb{C}\Big[
      \pi_1\,
      \mathrm{Map}^\ast\big(
        \Sigma_{\cpt},
        B \mathrm{U}(1)
      \big)^2
    \Big]    
    \mathrlap{\,.}
  \end{aligned}
\end{equation}
This result for the topological observables on Maxwell flux, derived via an established quantization procedure, coincides with our general formula \cref{SolTopObsInDegreeZeroAsGroupAlgebra} for the choice of flux quantization of Maxwell theory from \cref{SomeFluxQuantizationHypotheses}.

For example, over the torus $\Sigma = T^2$, the magnetic sector gives:
\begin{equation}
  \label{OrdinaryMagneticObservablesOnTorus}
  \begin{aligned}
  \pi_1\, 
  \mathrm{Map}^\ast\big(
    T^2_{\cpt},
    B \mathrm{U}(1)
  \big)
  & \simeq
  \pi_0
  \,
  \mathrm{Map}^\ast\big(
    T^2_{\cpt},
    \pi_1 \, B \mathrm{U}(1)
  \big)
  \\
  & \simeq
  \pi_0
  \,
  \mathrm{Map}\big(
    T^2,
    \mathrm{U}(1)
  \big)
  \\
  & \simeq
  \pi_0
  \,
  \mathrm{Map}^\ast\big(
    T^2,
    B \mathbb{Z}
  \big)
  \\
  & \simeq
  H^1(T^2;\, \mathbb{Z})
  \\
  & \simeq
  \mathbb{Z}_a \times \mathbb{Z}_b
  \mathrlap{\,.}
  \end{aligned}
\end{equation}
The generators 
\begin{equation}
  \begin{tikzcd}[sep=small]
    W_{a/b}
    \coloneqq
    1
    \in 
    \mathbb{Z}_{a/b}
    \ar[r, hook]
    &
    \mathbb{Z}_a \times \mathbb{Z}_b
    \ar[r, hook]
    &
    \mathbb{C}\big[
      \mathbb{Z}_a \times \mathbb{Z}_b
    \big]
  \end{tikzcd}
\end{equation}
may be understood as the \emph{Wilson loop observables} for a pair of basis 1-cycles on the torus. In Maxwell theory,
these commute with each other.

Next, we see how this situation changes as the classifying space $B \mathrm{U}(1)$ is deformed (in \cref{OnSolitonicStringsIn5D}).

\subsubsection{Maxwell-CS Flux Observables}
\label{OnMaxwellCSFluxObservables}
\footnote{
  The Lagrangian density of 5D Maxwell-Chern-Simons theory is discussed in the context of minimal 5D SuGra in \cite[(0.1)]{ChamseddineNicolai2018} and in the context of quantum electrodynamics in \cite[(2)]{Hill:2006}. Our observation here on the Poisson structure among the canonical flux observables is a straightforward consequence, but may not be citable from the literature.
}

The local Lagrangian density for 5D Maxwell-Chern-Simons theory (\cref{HigherGaugeFields}) is
\begin{equation}
  L
  =
  \tfrac{1}{2}
  F_2 \wedge \star F_3
  -
  \tfrac{1}{6}
  A \wedge \wedge F_2 \wedge F_2 
  \mathrlap{\,,}
  \;\;\;
  \mbox{where}\;
  \left\{
  \begin{aligned}
    F_2 & := \mathrm{d}A
    \\
    F_3 & :=  \star F_2
    \mathrlap{\,,}
  \end{aligned}
  \right.
\end{equation}
whence the canonical momentum to the field observable $A$ on a Cauchy surface $\begin{tikzcd}[sep=small]\iota \colon X^d \ar[r, hook] & \mathbb{R}^{1,0} \times X^d\end{tikzcd}$ \cref{InclusionOfCauchySurface} is, cf. \cref{LocalSolutionsViaInitialValues}:
\begin{equation}
  \begin{aligned}
  \widetilde B_3
  & :=
  \frac{\partial L}{\partial(\partial_0 A)}
  \\
  & =
  B_3
  -
  \tfrac{1}{3} A \wedge B_2
  \mathrlap{\,,}
  \end{aligned}
  \quad 
  \mbox{where}
  \;\;
  \left\{
  \begin{aligned}
    B_2 & = \iota^\ast F_2
    \\
    B_3 & = \iota^\ast F_3
    \mathrlap{\,.}
  \end{aligned}
  \right.
\end{equation}
with Poisson bracket
\begin{equation}
  \bigg\{
  {\int_{X^4}}
  \omega_1 
    \wedge 
  \widetilde B_3
  \; ,\,
  A(x)
  \bigg\}
  =
  \omega_1(x)
  \,,
  \;\;\;\;\;
  \mbox{for}\;
  \omega_1 \in \Omega^1_{\mathrm{dR}}(X^d)
  \mathrlap{\,.}
\end{equation}
The gauge invariant momentum observable is however
\begin{equation}
  B_3 
  =
  \widetilde B_3 + \tfrac{1}{3}A \wedge B_2
  \,,
\end{equation}
which therefore has a non-trivial Poisson bracket with itself: 
\begin{equation}
  \label{FluxMomentumSelfCommutatorin5DMSC}
  \bigg\{
  \int_{X^4}
  \,
  \omega_1 \wedge B_3
  \;,\,
  \int_{X^4}
  \,
  \omega' \wedge B_3
  \bigg\}
  =
  \tfrac{4}{3}
  \int_{X^4}
  \omega_1 \wedge \omega'
  \wedge B_2
  \mathrlap{\,.}
\end{equation}

The \emph{topological} flux observables should be those for which the smearing form $\omega$ is closed,
\begin{equation}
  \mathrm{d} \omega_p \overset{!}{=} 0
  \mathrlap{\,.}
\end{equation}
On these, the remaining mixed Poisson bracket vanishes:
\begin{equation}
  \bigg\{
  \int_{X^4}
  \omega_1 \wedge B_3
  \;,\,
  \int_{X^4}
  \omega_2 \wedge B_2
  \bigg\}
  =
  \int_{X^4}
  \omega_2 \wedge 
  \grayunderbrace{\mathrm{d}\omega_1}{=0}
  \mathrlap{\,,}
\end{equation}
so that the topological Poisson structure \cref{FluxMomentumSelfCommutatorin5DMSC} is that of a Heisenberg algebra. 
\footnote{
  As we further pass to the connected component of topological observables understood as the de Rham class their closed smearing coefficient $\omega$, then \cref{FluxMomentumSelfCommutatorin5DMSC} is a Heisenberg extension of $H^1_{\mathrm{dR}}(X^d)$ by $H^2_{\mathrm{dR}}(X^d)$.
  On a 2-torus, the bracket structure \cref{FluxMomentumSelfCommutatorin5DMSC} coincides with the degree=0 Lie algebra of the \emph{toroidified} $L_\infty$-algebra $\mathrm{tor}^2 \mathfrak{l} S^2$, cf. \parencites[p. 10]{SatiVoronov2025}[(83)]{GSS25-TD}.
}

After an electric/magnetic flux quantization law is imposed (which is unclear how to properly do in the traditional ``canonical'' formalism), we are therefore to expect that the topological quantum observables on fluxes in 5D MSC theory form a discrete Heisenberg algebra. This is exactly what we will find below in \cref{AnyonsOnTheTorus} via the complete construction of flux-quantized flux quantum observables from \cref{QuantizationInTopologicalSector}.

A completely analogous argument, just up to degree shifts, applies to the C-field gauge sector of 11D SuGra, and again we find the suggested discrete Heisenberg algebra structure brought out, below in \cref{OnSolitonic5Branes}, by the complete construction of flux-quantized flux quantum observables from \cref{QuantizationInTopologicalSector}.

\subsection{Anyons in FQ(A)H-Materials}
\label{OnFQAHMaterials}
\footnote{
  The construction of topologically ordered anyonic quantum states of FQ(A)H systems via flux quantization in 2-Cohomotopy (Hypothesis h in  \cref{SomeFluxQuantizationHypotheses}) is due to 
  \parencites{SS25-FQH}{SS25-AbelianAnyons}{SS25-FQAH}{SS25-CrystallineChern}, exposition is in \parencites{SS25-ISQS29}{SS25-WilsonLoops}.
}

After a long history as a theoretical curiosity in mathematical physics, (abelian) \emph{anyons} (cf. \cite{Goldin2023,nLab:BraidGroupStatistics}) have consistently been experimentally reported in recent years (starting with \cite{Nakamura2020}, see \cite{Ghosh2025} for the state of the art), in the guise of surplus magnetic flux quanta (and their induced ``quasi-hole'' vortices) in cold 2-dimensional electron gases penetrated by transverse magnetic flux with an exact integer (or rational) multiple of background flux quanta per electron (cf. \cref{FQHAnyons}). 

\begin{figure}[htb]
\caption{%
  \label{FQHAnyons}%
  Anyons in fractional quantum Hall (FQH) systems are (quasi-hole vortices in a 2D electron gas induced by) surplus magnetic flux quanta on top of an exact \emph{filling fraction} of some integer/rational number $K$ of flux quanta per electron. The adiabatic braiding of their worldlines causes the quantum state of the whole system to change by a global phase. This phenomenon is experimentally well established, but challenges traditional theory which does not readily apply to strongly-coupled and global topological phenomena.  
}
\centering
\adjustbox{
  rndfbox=4pt
}{
\adjustbox{
  raise=-1.5cm,
  scale=.9,
}{
\hspace{-.5cm}
\begin{tikzpicture}[
  scale=.75
]

\node
  at (.3,.55+.8)
  {
    \adjustbox{
      bgcolor=white,
      scale=.7
    }{
      \color{darkblue}
      \bf
      \def\arraystretch{.9}
      \begin{tabular}{c}
        surplus
        \\
        flux quantum:
        \\
        \color{purple}
        quasi-hole
        \\
        \color{purple}
        vortex
      \end{tabular}
    }
  };

\draw[
  dashed,
  fill=lightgray
]
  (0,0)
  -- (5,0)
  -- (7+.3-.1,2+.3)
  -- (2.8+.3+.1,2+.3)
  -- cycle;

\begin{scope}[
  shift={(2.4,.5)}
]
\shadedraw[
  draw opacity=0,
  inner color=olive,
  outer color=lightolive
]
  (0,0) ellipse (.7 and .3);
\end{scope}

\begin{scope}[
  shift={(4.5,1.5)}
]

\begin{scope}[
 scale=1.8
]
\shadedraw[
  draw opacity=0,
  inner color=olive,
  outer color=lightolive
]
  (0,0) ellipse (.7 and .25);
\end{scope}

\begin{scope}[
 scale=1.45
]
\shadedraw[
  draw opacity=0,
  inner color=olive,
  outer color=lightolive
]
  (0,0) ellipse (.7 and .25);
\end{scope}

\shadedraw[
  draw opacity=0,
  inner color=olive,
  outer color=lightolive
]
  (0,0) ellipse (.7 and .25);

\begin{scope}[
  scale=.2
]
\draw[
  fill=black
]
  (0,0) ellipse (.7 and .25);
\end{scope}

\end{scope}

\draw[
  white,
  line width=2
]
  (1.3, 1.8) .. controls 
  (2,2.2) and 
  (2.2,1.5) ..
  (2.32,.7);
\draw[
  -Latex,
  black!70
]
  (1.3, 1.8) .. controls 
  (2,2.2) and 
  (2.2,1.5) ..
  (2.32,.7);

\node
  at (1.3,2.7)
  {
    \adjustbox{
      scale=.7
    }{
      \color{darkblue}
      \bf
      \def\arraystretch{.9}
      \def\tabcolsep{-5pt}
      \begin{tabular}{c}
        $K$ flux-quanta
        absorbed
        \\
        by each electron:
      \end{tabular}
    }
  };

\draw[
 line width=2.5pt,
  white
]
  (2.4, 3.1) .. controls 
  (2.8,3.3) and 
  (4,3.5) ..
  (4.3,1.8);

\draw[
  -Latex,
  black!70
]
  (2.4, 3.1) .. controls 
  (2.8,3.3) and 
  (4,3.5) ..
  (4.3,1.8);

\node at 
  (5.9,2.6)
  {
   \scalebox{.8}{
     \color{gray}
     (cf. \cite[Fig. 16]{Stormer99})  
   }
  };

\node[
  gray,
  rotate=-20,
  scale=.73
] 
  at (4.8,+.3) {$\Sigma^2$};

\end{tikzpicture}
}
\hspace{-1cm}
  \adjustbox{raise=-1.4cm}{
  \begin{tikzpicture}[
    xscale=.7
  ]
    \draw[
      gray!30,
      fill=gray!30
    ]
      (-4.6,-1.5) --
      (+1.8,-1.5) --
      (+1.8+3-.5,-.4) --
      (-4.6+3+.5,-.4) -- cycle;

    \begin{scope}[
      shift={(-1,-1)},
      scale=1.2
    ]
    \shadedraw[
      draw opacity=0,
      inner color=olive,
      outer color=lightolive
    ]
      (0,0) ellipse (.7 and .1);
    \end{scope}

    \draw[
     line width=1.4
    ]
      (-1,-1) .. controls
      (-1,0) and
      (+1,0) ..
      (+1,+1);

  \begin{scope}
    \clip 
      (-1.5,-.2) rectangle (+1.5,1);
    \draw[
     line width=7,
     white
    ]
      (+1,-1) .. controls
      (+1,0) and
      (-1,0) ..
      (-1,+1);
  \end{scope}
  
    \begin{scope}[
      shift={(+1,-1)},
      scale=1.2
    ]
    \shadedraw[
      draw opacity=0,
      inner color=olive,
      outer color=lightolive
    ]
      (0,0) ellipse (.7 and .1);
    \end{scope}
    \draw[
     line width=1.4
    ]
      (+1,-1) .. controls
      (+1,0) and
      (-1,0) ..
      (-1,+1);

  \node[
    rotate=-25,
    scale=.7,
    gray
  ]
    at (3.2,-.58) {
      $\Sigma^2$
    };

  \draw[
    -Latex,
    gray
  ]
    (-3.4,-1.35) -- 
    node[
      near end, 
      sloped,
      scale=.7,
      yshift=7pt
      ] {time}
    (-3.4, 1.2);

  \node[
    scale=.7
  ] at 
    (0,-1.3)
   {\bf \color{darkblue} 
   surplus flux quanta};

  \node[
    scale=.7
  ] at 
    (1.5,0)
   {\bf \color{darkgreen} braiding};

  \node[
    fill=white,
    scale=.8
  ] at (-2.5,-1) {$
    \vert \Psi \rangle
  $};

  \node[
    fill=white,
    scale=.8
  ] at (-2.5,.7) {$
    e^{\tfrac{\pi \mathrm{i}}{K}}
    \vert \Psi \rangle
  $};

  \draw[
    |->,
    black!80,
    line width=.5
  ]
    (-2.5,-.6) --
    (-2.5, .3);

  \end{tikzpicture}
  }
}
\end{figure}
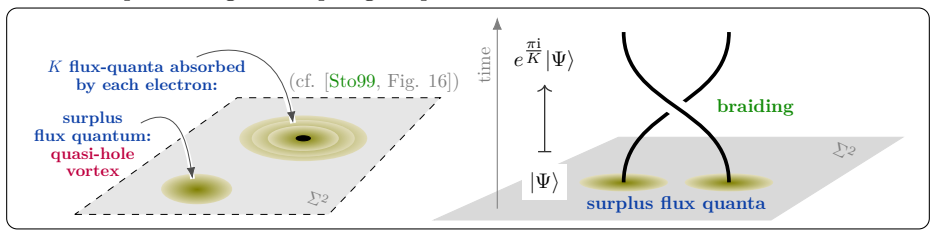

These \emph{fractional quantum Hall systems} (FQH, cf. \cite{Stormer99,nLab:QuantumHallEffect}) have thereby become the leading (currently the only) known hardware platform potentially supporting topological quantum operations plausibly necessary for non-trivial quantum computing technology. However, for that application one needs more than the mere \emph{presence} of solitonic anyons that has so far been established, one will crucially need fine-grained \emph{control} over their movement and hence a detailed understanding of their fundamental nature.

However, the theoretical understanding of these FQH anyons has remained immature. The traditional description by effective Chern-Simons field theory is fraught with internal inconsistency at the global level and in general relies on an outdated hierarchical picture of FQH anyons (cf. \cite[Rems. A.1-3]{SS25-FQH}).

Turning this situation right-side-up, since FQH anyons are, remarkably, magnetic flux quanta (in a strongly coupled electron gas environment), the description of their quantum topology calls for a globally consistent description of flux of the kind we have discussed in \cref{GlobalCompletionOfHigherGaugeFields}. By the results there, we are thus to ask for a classifying space $\mathcal{A}$ which serves as a deformation of the ordinary magnetic classifying space $B \mathrm{U}(1)$ from \cref{OnMaxwellFluxObservables}, now modelling the effective behavior of surplus flux in FQH systems.

We turn to $\mathcal{A} = S^2$ the 2-sphere.

\subsubsection{Solitonic strings in 5D}
\label{OnSolitonicStringsIn5D}
\footnote{
  An old suggestion that strings in the $\mathrm{AdS}_5$-factor of type IIB 10D SuGra on $\mathrm{AdS}_5 \times S^5$ may be higher-dimensional analogs of abelian anyons is given in \cite[\S 3]{Hartnoll2006}.
  We follow the argument in \cite[\S 3]{SS25-WilsonLoops}, for 5D supergravity, which is at least nominally different.
}

We are going to model the anyonic solitons in FQH materials as the KK-compactification of anyonic strings in 5D Maxwell-Chern-Simons theory flux-quantized according to \emph{Hypothesis h} (\cref{Hypothesish}).

By the general discussion of solitonic branes in \cref{OnSolitonicBranes}, this means that we are to take our spatial domain \cref{TheSpatialDomain} to be 
\begin{equation}
  \label{SpatialDomainForSolitonicStrings}
  \begin{aligned}
    X^4_{\mathrm{dom}}
    &
    :=
    \mathbb{R}^1_{\plus}
    \wedge
    \big(
      \mathbb{R}^1 \times \Sigma^2
    \big)_{\cpt}
    \,,
  \end{aligned}
\end{equation}
whence the string-solitonic topological quantum observables (in the 0-charge sector) are \cref{OrdinaryTopObsAsGroupAlgebras}:
\begin{equation}
  \begin{aligned}
    \mathrm{TopObs}_0
    & \simeq
    \mathbb{C}\bigg[
    \pi_0
    \,
    \mathrm{Map}^\ast\Big(
      \mathbb{R}^1_{\plus}
      \wedge
      \big(
        \mathbb{R}^1 \times \Sigma^2
      \big)_{\cpt},
      \mathcal{A}
    \Big)
    \bigg]
    \\
    & \simeq
    \mathbb{C}\bigg[
    \pi_0
    \,
    \mathrm{Map}^\ast\Big(
      \big(
        \mathbb{R}^1 \times \Sigma^2
      \big)_{\cpt},
      \mathcal{A}
    \Big)    
    \bigg]
    \\
    & \simeq
    \mathbb{C}
    \bigg[
    \pi_0
    \,
    \mathrm{Map}^\ast\Big(
      \mathbb{R}^1_{\cpt} 
        \wedge 
      \Sigma^2_{\cpt}
      ,\,
      \mathcal{A}
    \Big)    
    \bigg]
    \\
    & \simeq
    \mathbb{C}\Big[
    \pi_1
    \,
    \mathrm{Map}^\ast\big(
      \Sigma^2_{\cpt}
      ,\,
      \mathcal{A}
    \big)    
    \Big]
    \,.
  \end{aligned}
\end{equation}
Moreover, if $\Sigma^2$ is already compact, then $\Sigma^2_{\cpt} = \Sigma^2_{\plus}$ and the above reduces to
\begin{equation}
  \mbox{$\Sigma^2$ compact}
  \;\;\;
  \Rightarrow
  \;\;\;
  \mathrm{TopObs}_0
  \simeq
  \mathbb{C}\Big[
    \pi_1
    \,
    \mathrm{Map}\big(
      \Sigma^2
      ,\,
      \mathcal{A}
    \big)
  \Big]
  \mathrlap{\,.}
\end{equation}

\subsubsection{Flux quantization in 2-Cohomotopy}
The 2-sphere classifying space $\mathcal{A} := S^2$ here is usefully regarded as the 2-skeleton of the ordinary magnetic classifying space:
\begin{equation}
  \label{SphereAs2SkeletonOnB2Z}
  \mathcal{A}
  :=
  \begin{tikzcd}
    S^2
    \ar[
      r, 
      hook
    ]
    &
    B \mathrm{U}(1)
    \mathrlap{\,.}
  \end{tikzcd}
\end{equation}
Given that we are looking for a global completion which  at least \emph{locally} mimics abelian Chern-Simons theory, we may already see that this is a promising choice by noting that the further Bianchi identity encoded by the corresponding Whitehead $L_\infty$-algebra \cref{CElOfEvenDimensionalSphere},
\begin{equation}
  \mathfrak{a}
  =
  \begin{tikzcd}
  \mathfrak{l}S^2
  \ar[r]
  &
  \mathfrak{l}B\mathrm{U}(1)
  \mathrlap{\,,}
  \end{tikzcd}
\end{equation}
is exactly the relation characterizing an abelian Chern-Simons form ${H_3}$ for a given abelian flux form $B_2$:
\begin{equation}
  \begin{tikzcd}[
    sep=0pt,
    ampersand replacement=\&
  ]
    \Omega^1_{\mathrm{dR}}(-; \mathfrak{l}S^2)
    \ar[rr]
    \&\&
    \Omega^1_{\mathrm{dR}}(-; \mathfrak{l}B \mathrm{U}(1))
    \\
    \left\{
    \begin{aligned}
      \mathrm{d}\, B_2 & = 0
      \\
      \mathrm{d}\, H_3 
      & = B_2 \wedge B_2
    \end{aligned}
    \right\}
    \&\longmapsto\&
    \big\{
      \mathrm{d}\, B_2  = 0
    \big\}
    \mathrlap{\,.}
  \end{tikzcd}
\end{equation}

In the following, we may just take this choice $\mathcal{A} \defneq S^2$ at face value and find its consequences (as in \cite{SS25-FQH}).
But according to \cref{HigherGaugeFields}
we may also think of this step as passing from 4D Maxwell theory to 5D Maxwell-Chern-Simons theory, the latter flux-quantized according to ``Hypothesis h'' from \cref{SomeFluxQuantizationHypotheses}, and thought of as dimensionally reduced to 3D abelian Chern-Simons theory according to \cite[\S 3]{SS25-WilsonLoops}.

\subsubsection{FQH Anyons on the Torus}
\label{AnyonsOnTheTorus}
\footnote{
  The result that $\pi_1\, \mathrm{Map}(T^2, S^2)$ is an integer Heisenberg group is due to \cite{Hansen1974}, and that it is specifically the one at level=2 is due to \cite{LarmoreThomas1980}, see also \cite[Prop. 1.5, Cor. 7.6]{Kallel2001}. 
  (Here we give a much shorter and more transparent re-derivation of this fact, due to \cite{KSS26-HigherAnyons}.) 
  The observation that this relates to quantum observables of anyons on the torus is due to \cite{SS25-FQH}.
}

Remarkably, this deformation of the classifying space $\begin{tikzcd}[sep=small] \mathcal{A} := S^2 \ar[r, hook] & B \mathrm{U}(1)\end{tikzcd}$ \cref{SphereAs2SkeletonOnB2Z} deforms the situation \cref{OrdinaryMagneticObservablesOnTorus} of the magnetic flux observables on the torus as follows:
\begin{proposition}
  \label[proposition]
    {FundamentalGroupOfMapsFromTorusToSphere}
  The fundamental group of maps from the torus to the sphere
  \begin{equation}
  \label{IdentifyingFundamentalGroupOfMapsFromTorusToSphere}
    \pi_1\, \mathrm{Map}\big(T^2, S^2\big)
    \simeq
    \widehat{\mathbb{Z}^2}
  \end{equation}
  is the \emph{integer Heisenberg group} at level=2.
\end{proposition}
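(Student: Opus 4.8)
The plan is to compute $\pi_1 \mathrm{Map}(T^2, S^2)$ directly via obstruction theory, exploiting the fact that a self-map of $T^2 \times S^1$ into $S^2$ is controlled by its behaviour on the $2$-skeleton plus a single $\pi_3(S^2) \simeq \mathbb{Z}$ obstruction. Concretely, I would use the based/free mapping space fibration $\mathrm{Map}^\ast(T^2, S^2) \to \mathrm{Map}(T^2, S^2) \to S^2$ (evaluation at the basepoint), which on $\pi_1$ gives an exact sequence $\pi_2(S^2) \to \pi_1 \mathrm{Map}^\ast(T^2, S^2) \to \pi_1 \mathrm{Map}(T^2, S^2) \to \pi_1(S^2) = 0$; since $\pi_1(S^2) = 0$ the free and based versions differ only by a central $\mathbb{Z}$ coming from $\pi_2(S^2)$ acting on the basepoint loop, which is exactly where the central extension will come from. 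So it suffices to understand $\mathrm{Map}^\ast(T^2_+, S^2)$, equivalently $\mathrm{Map}^\ast(T^2, S^2)$ on the connected component of the constant map (all relevant components over the trivial total charge; note $\pi_0 \mathrm{Map}(T^2, S^2) \simeq H^2(T^2;\mathbb{Z}) \oplus (\text{correction}) $, but $\pi_1$ of the identity component is what we want per the statement's framing).

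The key computation: write $T^2 = (S^1 \times S^1)$ with its standard CW structure (one $0$-cell, two $1$-cells $a,b$, one $2$-cell). A based loop in $\mathrm{Map}^\ast(T^2, S^2)$ based at the constant map is a based map $\Sigma(T^2) \to S^2$, i.e. (using $\mathrm{Map}^\ast(X, Y) = \Omega \mathrm{Map}^\ast(\Sigma X, Y)$ only loosely — better:) an element of $[\,T^2 \wedge S^1,\, S^2\,] = \widetilde{H}^0$... more cleanly, $\pi_1 \mathrm{Map}^\ast(T^2, S^2) \simeq [\,\Sigma T^2_+,\, S^2\,]^\ast / (\text{basepoint component})$ — actually the cleanest route is the one the footnote hints at elsewhere in the paper via $\mathbb{R}^1_{\cpt} \wedge \Sigma^2_{\cpt}$: we have $\pi_1 \mathrm{Map}^\ast(\Sigma^2_+, S^2) \simeq \pi_0 \mathrm{Map}^\ast(S^1 \wedge \Sigma^2_+, S^2) = [\Sigma(\Sigma^2_+), S^2]^\ast$. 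For $\Sigma^2 = T^2$, $\Sigma(T^2_+) = \Sigma T^2 \vee S^1$, and $\Sigma T^2 \simeq S^2 \vee S^2 \vee S^3$. Thus $[\Sigma(T^2_+), S^2]^\ast \simeq [S^2, S^2] \times [S^2,S^2] \times [S^3, S^2] \times [S^1, S^2] \simeq \mathbb{Z} \times \mathbb{Z} \times \mathbb{Z}/2 \times 0$ \emph{as a set}, but the group structure is twisted by Whitehead products / the attaching data and is \emph{not} the direct product: the $\mathbb{Z}$ in the middle $S^3$-summand receives the Whitehead product $[\iota_2, \iota_2]$ of the two degree-one generators, which has Hopf invariant $2$, producing the commutator relation $[x,y] = z^2$ of the Heisenberg group at level $2$.

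Step by step, I would: (i) set up $\Sigma T^2_+ \simeq S^2 \vee S^2 \vee S^3$ and track how the two $S^2$-wedge summands correspond to the two $1$-cycles $a, b$ of $T^2$; (ii) compute $[\Sigma T^2_+, S^2]^\ast$ as a group by resolving the group extension $0 \to \pi_3(S^2) \to \pi_1\mathrm{Map}^\ast(T^2, S^2) \to \pi_2(S^2)^{\oplus 2} \to 0$ — this is a central extension classified by a class in $H^2(\mathbb{Z}^2; \mathbb{Z}/2)$? No: the point is that $\pi_3(S^2) = \mathbb{Z}$ (not $\mathbb{Z}/2$) enters here because in the \emph{unstable} range the relevant cokernel computation of obstruction theory keeps the full $\mathbb{Z}$, and the commutator $[x,y] \in \pi_3(S^2) = \mathbb{Z}$ equals the Whitehead square $[\eta_a, \eta_b]$ pulled back along the cup product $H^1(a) \smile H^1(b) = [T^2] \in H^2(T^2)$; since the generator of $\pi_1 \mathrm{Map}^\ast(S^2,S^2)$ is $\iota_2 \in \pi_2(S^2)$ and $[\iota_2, \iota_2] = 2\eta'$ where $\eta'$ generates $\pi_3(S^2)$, one gets $[x, y] = z^2$ (level $2$), while $x, z$ and $y, z$ commute; (iii) identify the resulting presentation $\langle x, y, z \mid [x,y] = z^2,\ [x,z] = [y,z] = 1\rangle$ with the integer Heisenberg group at level $2$, i.e. $\widehat{\mathbb{Z}^2}$ in the paper's notation $\big(\begin{smallmatrix}1 & \mathbb{Z} & \frac12\mathbb{Z}\\ 0 & 1 & \mathbb{Z}\\ 0&0&1\end{smallmatrix}\big)$ or equivalently the subgroup of $3\times 3$ upper unitriangular matrices with the top-right entry even; (iv) deduce the free mapping space result from the based one via the evaluation fibration and the vanishing $\pi_1(S^2) = 0$, checking that no further central $\mathbb{Z}$ is introduced (or absorbing it into $z$ appropriately).

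The main obstacle will be getting the \emph{level} exactly right — i.e. showing the commutator is $z^2$ and not $z$ or $z^{\pm 1}$ times a unit — since this hinges on the precise normalization of the Whitehead square $[\iota_2, \iota_2] = 2\,\eta_{\mathbb{Z}}$ in $\pi_3(S^2) \cong \mathbb{Z}$ (the classical fact $H(\,[\iota_2,\iota_2]\,) = 2$ for the Hopf invariant) together with the sign/orientation bookkeeping of the cup-product pairing $H^1(T^2) \otimes H^1(T^2) \to H^2(T^2)$ that feeds the two $S^2$-classes into the Whitehead product. A secondary subtlety is making sure the $\mathbb{Z}/2 = \pi_3(S^2)/2$... wait, rather: confirming that the torsion-looking $\mathbb{Z}/2$ one might naively expect from stable considerations does \emph{not} appear, because the extension lives in the unstable world where $\pi_3(S^2) = \mathbb{Z}$ is torsion-free — this is precisely why the answer is the Heisenberg group (with infinite center) and not a finite central extension. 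I would cross-check the final group against the cited computations of \cite{Hansen1974} and \cite{LarmoreThomas1980} and against the low-degree homology $H_1(\mathrm{Map}(T^2,S^2)_0) = \mathbb{Z}^2$ (abelianization of Heisenberg), which provides a quick sanity check.
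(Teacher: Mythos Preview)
Your core strategy coincides with the paper's: both use the stable splitting $\Sigma(T^2_+) \simeq S^1 \vee S^2_a \vee S^2_b \vee S^3$ to identify the underlying set as $\mathbb{Z}_a \times \mathbb{Z}_b \times \mathbb{Z}$, and both pin the commutator relation $[W_a,W_b]=\zeta^2$ to the Whitehead square $[\iota_2,\iota_2] = 2\eta \in \pi_3(S^2)$. The difference is packaging. The paper identifies $\pi_1\mathrm{Map}(T^2,S^2) = [T^2_+,\Omega S^2]^\ast$ and computes the group commutator \emph{directly} as the Samelson product on $\Omega S^2$, which by the standard Samelson--Whitehead correspondence reduces to $[\iota_2,\iota_2]$ on the nose; a second diagram shows $[W_{a/b},\zeta]=e$ by a degree argument (the composite $T^2_+ \to T^2_+\wedge T^2_+ \xrightarrow{\mathrm{pr}_{a/b}\wedge\mathrm{pr}_c} S^1\wedge S^2$ is null). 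This is cleaner than your obstruction-theoretic extension setup and avoids the cohomology-of-$\mathbb{Z}^2$ bookkeeping entirely.

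Two slips to fix in your write-up: (i) $[S^3,S^2] = \pi_3(S^2) = \mathbb{Z}$, not $\mathbb{Z}/2$ --- you catch this yourself, but the earlier line should be corrected; (ii) your step (iv) is unnecessary and slightly wrong. The evaluation fibration $\mathrm{Map}^\ast(T^2,S^2)\to\mathrm{Map}(T^2,S^2)\to S^2$ has a section by constant maps, so the long exact sequence splits and $\pi_1\mathrm{Map}^\ast(T^2,S^2)\xrightarrow{\sim}\pi_1\mathrm{Map}(T^2,S^2)$ --- no extra central $\mathbb{Z}$ from $\pi_2(S^2)$ appears. Equivalently, just work with $\mathrm{Map}(T^2,S^2)=\mathrm{Map}^\ast(T^2_+,S^2)$ from the start, as the paper does, and the based/free comparison never arises.
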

\begin{definition}[Integer Heisenberg group at level=2, {cf. \parencites[Ex. 11 on p. 35]{DummitFoote2003}{nLab:IntegerHeisenbergGroup}}]
\label[definition]{IntegerHeisenbergGroupAtLevel2}
The underlying set is
\begin{equation}
  \label{UnderlyingSetOfIntegerHeisenbergGrp}
  \widehat{\mathbb{Z}^2}
  :=_{\mathrm{Set}}
  (\mathbb{Z}_a \times \mathbb{Z}_b) \times \mathbb{Z}
\end{equation}
with generating elements
\begin{equation}
  \label{GeneratorsOfHeisenbergGrp}
  \left.
  \begin{aligned}
    W_a & := \big((1,0), 0\big)
    \\
    W_b & := \big((0,1), 0\big)
    \\
    \zeta & := \big((0,0), 1\big)
  \end{aligned}
  \right\}
  \in
  (\mathbb{Z}_a \times \mathbb{Z}_b) 
    \times \mathbb{Z}
\end{equation}
on which the only non-trivial group commutator is
\[
  [W_a, W_b] = \zeta^2
  \mathrlap{.}
\]
\end{definition}

We spell out a proof of \cref{FundamentalGroupOfMapsFromTorusToSphere}. To that end, recall the following facts from basic homotopy theory:
\begin{enumerate}

\item The \emph{Samelson product} $[ -,-]_{\mathrm{Sam}}$ on a loop space $\Omega X$ (being an \emph{H-group}) is the ``H-group commutator'' descended from the Cartesian product to the smash product (cf. \cite[\S X.5]{Whitehead1978}):
\begin{equation}
  \label{SamelsonProductOnLoopSpace}
  \begin{tikzcd}[
    row sep=0pt,
    column sep=25pt
  ]
  \;\;  (\ell_2, \ell_1)
 \quad   \ar[
      rr,
      shorten=5pt,
      |->
    ]
    &&
    (\ell_2 
      \star 
    \ell_1)
      \star
    (\overline{\ell_2} \star
    \overline{\ell_1})
    \\
    (\Omega X)
    \times 
    (\Omega X)
    \ar[
      rr,
      "{
        \star
      }"
    ]
    \ar[
      d,
      ->>
    ]
    &&
    \Omega X
    \\[15pt]
    (\Omega X)
    \wedge
    (\Omega X)
    \ar[
      urr,
      "{
        [ -,- ]_{\mathrm{Sam}}
      }"{swap, sloped, pos=.4}
    ]
  \end{tikzcd}
\end{equation}

\item 
The Samelson product induced on $\pi_\bullet(\Omega X)$ relates to the \emph{Whitehead bracket} $[-,-]_{\mathrm{Wh}}$ on $\pi_{\bullet+1}(X)$ as (cf. \cite[Thm. 7.10 on p. 476]{Whitehead1978})
\begin{equation}
  \label{WhiteheadProductAsSamelsonProduct}
  \big[
    \widetilde{\alpha}_1
    ,\,
    \widetilde{\alpha}_2 
    \,
  \big]_{\mathrm{Sam}}
  = 
  (-1)^{\mathrm{deg}(\alpha_1)} 
  \,
  \widetilde{%
    [\alpha,\alpha_2]%
  }_{\mathrm{Wh}}
  \mathrlap{\,,}
\end{equation}
where $
  \begin{tikzcd}[]
    \pi_{\bullet + 1}(X)
    \ar[
      <->,
      r, 
      "{\sim}"{swap},
      "{ \widetilde{(-)} }"
    ]
    &
    \pi_\bullet(\Omega X)
  \end{tikzcd}
$ is induced from the hom-isomorphism of the $\Sigma \dashv \Omega$-adjunction on pointed spaces:
\begin{equation}
  \label{SuspensionLoopingHomIsomorphism}
  \Big\{
  \begin{tikzcd}
    \Sigma Y 
      \ar[
        r,
        dashed,
        "{
          f
        }"
      ]
    &
    X
  \end{tikzcd}
  \Big\}
  \begin{tikzcd}
    {}
    \ar[
      r,
      <->,
      "{ \widetilde{(-)} }",
      "{ \sim }"{swap}
    ]
    &
    {}
  \end{tikzcd}
  \Big\{
  \begin{tikzcd}
    Y 
      \ar[
        r,
        dashed,
        "{
          \widetilde{f}
        }"
      ]
    &
    \Omega X
  \end{tikzcd}
  \Big\}
  \mathrlap{\,.}
\end{equation}

\item 
The Whitehead bracket of the generator $\iota_n := 1 \in \mathbb{Z} \simeq \pi_n(S^n)$ satisfies (cf. \cite[Thm. 2.5 on p. 495]{Whitehead1978}):
\begin{equation}
\label{WhiteheadSquareRootsOfHopfFibrations}
\begin{aligned}
  \big[
    \iota_2, \iota_2
  \big]_{\mathrm{Wh}}
  & 
    = 2 [h_{\mathbb{C}}]
    =: 2 \eta
  \\
  \big[
    \iota_4, \iota_4
  \big]_{\mathrm{Wh}}
  & 
    = 2 [h_{\mathbb{H}}]
    =: 2 \nu
  \\
  \big[
    \iota_8, \iota_8
  \big]_{\mathrm{Wh}}
  & 
    = 2 [h_{\mathbb{O}}]
    =: 2 \sigma
  \mathrlap{\,,}
\end{aligned}
\end{equation}
where $\begin{tikzcd}[sep=small] h_\mathbb{K} : S(\mathbb{K}^2) \ar[r] & \mathbb{K}P^1\end{tikzcd}$ denotes the $\mathbb{C}$/$\mathbb{H}$/$\mathbb{O}$-Hopf fibration (cf. \cite[\S 3.2.3]{SS25-Orient}).
\end{enumerate}
Moreover we need:
\begin{itemize}
\item
The suspension of $T^2_{\plus}$ is homotopy equivalent to (\cite[Prop. 4I.1]{Hatcher2002}) a wedge sum of spheres:
\begin{equation}
  \label{StableSplittingOfSuspendedTorus}
  \begin{tikzcd}[
    row sep=0pt,
    column sep=40pt
  ]
  &
  S^2_a
  \\[+2pt]
  \Sigma\big(T^2_{\plus}\big)
  \underset{
  \mathrm{hmtp}
  }{\simeq}
  S^1 \vee S^2_a \vee S^2_b \vee S^3
  \ar[
    ur,
    "{ 
      \Sigma \mathrm{pr}_a 
    }"{description}
  ]
  \ar[
    r, 
    "{ 
      \Sigma \mathrm{pr}_c 
    }"{description}
  ]
  \ar[
    dr,
    shorten <=-18pt,
    "{ 
      \Sigma \mathrm{pr}_b 
    }"{
      description,
      pos=.3
    }
  ]
  &
  S^3
  \\[-2pt]
  & S^2_b
  \mathrlap{\,,}
  \end{tikzcd}
\end{equation}
via the suspensions of the maps $\mathrm{pr}_{a/b}$ onto the 1-cells and $\mathrm{pr}_c$ onto the 2-cell of the standard CW-complex structure of $T^2$.
\end{itemize}

With this, we have:
\begin{proof}[Proof of \cref{FundamentalGroupOfMapsFromTorusToSphere}]
  The underlying set \cref{UnderlyingSetOfIntegerHeisenbergGrp} follows immediately from \cref{StableSplittingOfSuspendedTorus}; the generators \cref{GeneratorsOfHeisenbergGrp} being represented by the projectors \cref{StableSplittingOfSuspendedTorus}, respectively. 
  Then, by \cref{SamelsonProductOnLoopSpace}, the group commutator $[W_a, W_b]$ is represented by the outer composite map of the following diagram:
  \begin{equation}
    \label{FirstDiagramForDerivingHeisenberg}
    \begin{tikzcd}[
      column sep=40pt
    ]
      T^2_{\plus}
      \ar[
        d,
        "{ \mathrm{diag} }"{swap}
      ]
      \ar[
        rr,
        "{
          [W_a, W_b]
        }"
      ]
      \ar[
        dr,
        "{
          \mathrm{pr}_c
        }"{description}
      ]
      &&
      \Omega S^2
      \\
      T^2_{\plus}
      \wedge
      T^2_{\plus}
      \ar[
        r,
        "{
          \mathrm{pr}_a
          \wedge 
          \mathrm{pr}_b
        }"{swap}
      ]
      &
      S^1 \wedge S^1
      \ar[
        r,
        "{
          \widetilde{\iota}_2
          \,\wedge\,
          \widetilde{\iota}_2
        }"{swap}
      ]
      \ar[
        ur,
        "{
          2 \, \widetilde{\eta}
        }"{description}
      ]
      &
      (\Omega S^2)
      \wedge
      (\Omega S^2)
      \ar[
        u,
        "{
          [-,-]_{\mathrm{Sam}}
        }"{swap}
      ]
    \end{tikzcd}
  \end{equation}
  and we identify the left diagonal map by inspection via \cref{StableSplittingOfSuspendedTorus} and the right diagonal map by \cref{WhiteheadProductAsSamelsonProduct,WhiteheadSquareRootsOfHopfFibrations}. This shows that $[W_a, W_b] = \zeta^2$.

  Similarly, $[W_{a/b}, \zeta]$ is represented by the outer composite map of the following diagram
  \begin{equation}
    \label{SecondDiagramForDerivingHeisenberg}
    \begin{tikzcd}[
      column sep=40pt
    ]
      T^2_{\plus}
      \ar[
        d,
        "{ \mathrm{diag} }"{swap}
      ]
      \ar[
        dr,
        "{ 0 }"{description}
      ]
      \ar[
        rr,
        "{
          [W_{a/b}, \zeta]
        }"
      ]
      &&
      \Omega S^2
      \\
      T^2_{plus}
      \wedge
      T^2_{\plus}
      \ar[
        r,
        "{
          \mathrm{pr}_{a/b}
          \wedge 
          \mathrm{pr}_c
        }"{swap}
      ]
      &
      S^1 \wedge S^2
      \ar[
        r,
        "{
          \widetilde{\iota_2}
          \,\wedge\,
          \widetilde{\eta}
        }"{swap}
      ]
      &
      (\Omega S^2)
      \wedge
      (\Omega S^2)
      \mathrlap{\,,}
      \ar[
        u,
        "{
          [-,-]_{\mathrm{Sam}}
        }"{swap}
      ]
    \end{tikzcd}
  \end{equation}
  but the diagonal map is null(homotopic), as indicated, already by degree reasons. This shows that $[W_{a/b}, \zeta] = \mathrm{e}$ and hence completes the proof.
\end{proof}

But this statement of \cref{FundamentalGroupOfMapsFromTorusToSphere} is remarkable, because the group algebra of the integer Heisenberg group at level 2, which we have thereby found to be the algebra of topological quantum observables \cref{OrdinaryTopObsAsGroupAlgebras} on 2-cohomotopical flux over the torus,
\begin{equation}
  \label{GroupAlgebraOfIntegerHeisenberg}
  \mathbb{C}\big[\,
    \widehat{\mathbb{Z}^2}
  \, \big]
  \underset{\scalebox{.7}{
\cref{IdentifyingFundamentalGroupOfMapsFromTorusToSphere}}}{\simeq}
  \mathbb{C}\big[
    \pi_1
    \mathrm{Map}(T^2, S^2)
  \big]
  \underset{\scalebox{.7}{\cref{OrdinaryTopObsAsGroupAlgebras}}}{\defneq}
  \mathrm{TopObs}_0
  \mathrlap{\,,}
\end{equation}
is exactly the algebra of observables of, equivalently (\parencites[(4.9)]{WenNiu1990}[(4.14)]{IengoLechner1992}[(4.21)]{Fradkin2013}[(5.28)]{Tong2016}):
\begin{enumerate}
  \item 
  abelian Chern-Simons Wilson loops,

  \item
  abelian FQH anyons
\end{enumerate}
on the torus. This shows that:
\begin{standout}
  Changing the flux quantization law of magnetic flux from ordinary cohomology to 2-Cohomotopy makes magnetic flux quanta on the torus become anyonic in just the way expected of surplus flux quanta (attached to quasi-holes) in fractional quantum Hall materials.
\end{standout}

This indicates that 2-Cohomotopy serves as the \emph{effective} non-abelian flux quantization law for surplus flux (above the exact rational filling fraction) in FQH systems. This turns out to be verified from a variety of angles:

\subsubsection{Wilson loop observables}
\label{OnWilsonLoopObservables}
\footnote{
  For background on framed knots and links cf. \cite[pp. 15]{Ohtsuki2001}. For the \emph{writhe} of framed links (the signed total number of crossings in their diagrams) cf.  \parencites[pp 152, 185]{Adams1994}[p. 523]{Ohtsuki2001}.
  
  The traditional definition of renormalized Wilson loops in abelian Chern-Simons goes back to \cite{Polyakov1988}, and its now widely accepted ``renormalization'' via framing was proposed in \cite[\S 2.1]{Witten1989}, reviewed in \cite[\S 2]{SS25-WilsonLoops}. 
  \nopagebreak
  Our systematic (re-)derivation via 2-cohomotopical flux quantization is due to \cite{SS25-AbelianAnyons,SS25-WilsonLoops} with further exposition in \parencites[\S 3]{SS25-Srni}{SS26-Rickles}.
  The proof of \cref{DerivingCSWilsonLoopObs} given there makes crucial use of the classical result \cite[Thm. 1]{Segal1973} and its more recent elaboration provided by \cite{Okuyama2005}.
}
\newline \nopagebreak 
To see more transparently how these observables \cref{GroupAlgebraOfIntegerHeisenberg} reflect anyons, consider the same situation but just on the plane $\mathbb{R}^2$ instead of the torus $T^2$, so that we may disentangle the pure anyon braiding phases from their \emph{Zak phases} associated with the nontrivial topology of the torus.

First of all, we find that the quantum algebra of ordinary topological observables \cref{OrdinaryTopObsAsGroupAlgebras} on solitonic 2-cohomotopical flux through the  plane is simply the group algebra of the integers:
\begin{equation}
  \label{2CohomotopicalObservablesOnPlane}
  \begin{aligned}
  \mathrm{TopObs}_0
  & \defneq
  \mathbb{C}\big[
  \pi_1 
  \,
  \mathrm{Map}^\ast\big(
    \mathbb{R}^2_{\cpt}
    ,\,
    S^2
  \big)
  \big]
  \simeq
  \mathbb{C}\big[
  \mathrm{Map}^\ast\big(
    \mathbb{R}^3_{\cpt}
    ,\,
    S^2
  \big)
  \big]
  \\
& \simeq
  \mathbb{C}\big[
    \pi_3(S^2)
  \big]
 \simeq
  \mathbb{C}\big[\mathbb{Z}\big]
  \mathrlap{\,.}
  \end{aligned}
\end{equation}

\begin{SCfigure}[1.2][htb]
\caption{
  \label{ThePontrjaginConstruction}
  The unstable \emph{Pontrjagin theorem} (cf. \parencites[\S II.16]{Bredon1993}[\S IX]{Kosinski1993}) physically identifies (cf. \parencites[\S 2.1]{SS20-Tad}[\S\S 2.2, 3.2]{SS23-Mf}) 2-cohomotopical charge with topological classes of trajectories of flux quanta whose density is the pullback of a \emph{Thom form} $\mathrm{th}_2$ supported on a neighborhood of $0 \in S^2$.
}
\adjustbox{
  rndfbox=4pt
}{
\hspace{-.2cm}
\begin{tikzpicture}
\begin{scope}[
  scale=.8,
  shift={(.7,-4.9)}
]

\draw[
  line width=.8,
  ->,
  darkgreen
]
  (1.5,1) 
  .. controls (2,1.6) and (3,2.6) .. 
  (6,1);

\node[
  scale=.7,
  rotate=-18
] at (4.7,1.8) {
  \color{darkblue}
  \bf
  classifying map
};

\node[
  scale=.7,
  rotate=-18
] at (4.5,1.4) {
  \color{darkblue}
  \bf
  $n$
};

  \shade[
    right color=gray, left color=lightgray,
    fill opacity=.9
  ]
    (3,-3)
      --
    (-1,-1)
      --
        (-1.21,1)
      --
    (2.3,3);

  \draw[dashed]
    (3,-3)
      --
    (-1,-1)
      --
    (-1.21,1)
      --
    (2.3,3)
      --
    (3,-3);

  \node[
    scale=1
  ] at (3.2,-2.1)
  {$\infty$};

  \begin{scope}[rotate=(+8)]
  \shadedraw[
    dashed,
    inner color=olive,
    outer color=lightolive,
  ]
    (1.5,-1)
    ellipse
    (.2 and .37);
  \draw
   (1.5,-1)
   to 
    node[above, yshift=-1pt]{
     \;\;\;\;\;\;\;\;\;\;\;
     \rotatebox[origin=c]{7}{
     \scalebox{.7}{
     \color{darkorange}
     \bf
       anyon
     }
     }
   }
    node[below, yshift=+6.3pt]{
     \;\;\;\;\;\;\;\;\;\;\;\;
     \rotatebox[origin=c]{7}{
     \scalebox{.7}{
     \color{darkorange}
     \bf
       worldline
     }
     }
   }
   (-2.2,-1);
  \draw
   (1.5+1.2,-1)
   to
   (4,-1);
  \end{scope}

  \begin{scope}[shift={(-.2,1.4)}, scale=(.96)]
  \begin{scope}[rotate=(+8)]
  \shadedraw[
    dashed,
    inner color=olive,
    outer color=lightolive,
  ]
    (1.5,-1)
    ellipse
    (.2 and .37);
  \draw
   (1.5,-1)
   to
   (-2.3,-1);
  \draw
   (1.5+1.35,-1)
   to
   (4.1,-1);
  \end{scope}
  \end{scope}
  \begin{scope}[shift={(-1,.5)}, scale=(.7)]
  \begin{scope}[rotate=(+8)]
  \shadedraw[
    dashed,
    inner color=olive,
    outer color=lightolive,
  ]
    (1.5,-1)
    ellipse
    (.2 and .32);
  \draw
   (1.5,-1)
   to
   (-1.8,-1);
  \end{scope}
  \end{scope}
  
\end{scope}

\node[
  scale=.73,
  rotate=-27
] at (2.21,-5.21) {
  \color{darkblue}
  \bf
  \def\arraystretch{.9}
  \begin{tabular}{l}
    flux
    \\
    $B_2 =$ 
    \\
    $\;\;n^\ast(\mathrm{th}_2)$
  \end{tabular}
};

\node[
  scale=.73,
] at (2.05,-5) {
  \color{darkblue}
  \bf
};

\node[
  rotate=-140
] at (6,-4) {
  \includegraphics[width=2cm]{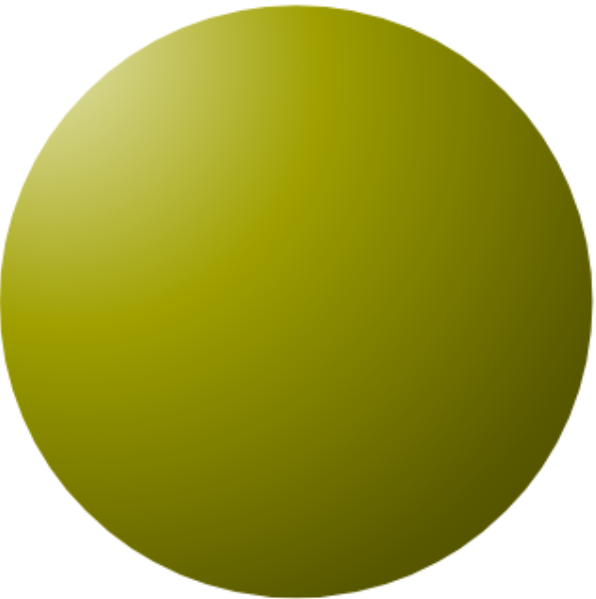}
};

\node[
  scale=.7
] at (6,-5.2) {
  \color{darkblue}
  \bf
  2-sphere $S^2$
};

\node[
  scale=.8,
  rotate=-22
] at (.05,-4.55)
{$\Sigma^2$};

\end{tikzpicture}
\hspace{-.7cm}
}
\end{SCfigure}

By the \emph{Pontrjagin theorem} (cf. \cref{ThePontrjaginConstruction}), these observables $\mathcal{O}_L$ are spanned by characteristic functions on topological equivalence classes $[L]$ of \emph{vacuum diagrams of flux quanta}, given by \emph{cobordism classes of framed links} $L$!

By \cref{2CohomotopicalObservablesOnPlane}, these topological classes of flux vacuum diagram are labeled by the integers, and hence the remaining question is which integer this is:
\begin{proposition}
  \label[proposition]{DerivingCSWilsonLoopObs}
  The integer invariant of framed link diagrams detected by the topological observable $\mathcal{O}_L$ \cref{2CohomotopicalObservablesOnPlane} is the \emph{total crossing number} or \emph{writhe} $\#L$ of the link, which is exactly the \emph{renormalized Wilson loop observables} of \emph{abelian spin Chern-Simons theory}.
\end{proposition}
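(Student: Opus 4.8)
\emph{Proof idea.} The plan is to trace the single integer labelling $\pi_3(S^2)\simeq\mathbb Z$ through two a priori different descriptions --- a homotopy-theoretic one and a field-theoretic one --- and to show that both yield the writhe $\#L$. First I would unpack the Pontrjagin picture of \cref{ThePontrjaginConstruction}: a solitonic $2$-cohomotopical charge on the plane, i.e.\ a class in $\pi_0\,\mathrm{Map}^\ast(\mathbb R^3_{\cpt},S^2)\simeq\pi_3(S^2)$, is represented by a framed link $L\subset\mathbb R^3$, namely the preimage $n_L^{-1}(\mathrm{pt})$ of a regular value near $0\in S^2$ carrying the framing pulled back from the standard framing of $S^2$ at that point; and the observable $\mathcal O_L$ is by construction the characteristic function on the \emph{framed-cobordism} class $[L]$.

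The next step identifies which integer $\mathcal O_L$ detects. Since $\pi_3(S^2)\simeq\mathbb Z$ is generated by the complex Hopf fibration $h_{\mathbb C}$ (as recorded in \cref{WhiteheadSquareRootsOfHopfFibrations}), this integer is the \emph{Hopf invariant} $H(n_L)$, and I would invoke the classical identity $H(n_L)=\sum_{i,j}\mathrm{lk}\big(L_i,(L_j)^{\mathrm{fr}}\big)$: the Hopf invariant of the Pontrjagin map of a framed link equals its total self-linking number (pairwise linking numbers plus the framing self-linkings), because two Hopf fibres over distinct regular values link exactly once. Presenting $L$ by a planar diagram with blackboard framing computes this combinatorially as the writhe $\#L$, the signed count of \emph{all} crossings, self-crossings included. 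That $\#L$ is well defined on framed-\emph{cobordism} classes, not merely on framed isotopy classes, is where I expect the real work to lie: here I would lean on the configuration-space models of $\mathrm{Map}^\ast(\mathbb R^2_{\cpt},S^2)\simeq\Omega^2 S^2$ of \cite{Segal1973,Okuyama2005}, under which a generator of $\pi_1$ of a degree component is a loop of signed point-configurations in the plane whose associated braid has total signed crossing number one.

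On the field-theory side I would recall (\cite{Polyakov1988}; \cite[\S 2.1]{Witten1989}) that the naive abelian Chern--Simons Wilson-loop expectation $\big\langle\prod_i e^{\mathrm i\oint_{L_i}A}\big\rangle$ collapses, via the Gaussian propagator, to a phase built from $\sum_{i,j}\mathrm{lk}(L_i,L_j)$, and that the ill-defined diagonal self-linkings are rendered finite only by point-splitting along a chosen framing of each component --- so the renormalized observable is a function of precisely the quantity $\sum_{i,j}\mathrm{lk}\big(L_i,(L_j)^{\mathrm{fr}}\big)=\#L$ found above; the indispensability of a framing --- and, for the canonical normalization, of a spin structure --- is what the adjective ``spin'' records, and the relevant level is the one matching \cref{FundamentalGroupOfMapsFromTorusToSphere}. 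Matching this function of $\#L$ against the spanning characteristic functions of $\mathrm{TopObs}_0\simeq\mathbb C[\mathbb Z]$ from \cref{2CohomotopicalObservablesOnPlane} (equivalently, under \cref{GroupAlgebraOfIntegerHeisenberg}) then identifies $\mathcal O_L$ with the renormalized Wilson-loop observable of abelian spin Chern--Simons theory.

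The main obstacle is this middle link: pinning down on the nose --- including sign conventions, the level, and the role of the spin structure --- that the homotopy-theoretic Hopf-invariant label of $\pi_3(S^2)$ agrees with the framing-regularized writhe, and in particular that the combinatorial writhe descends to the framed-cobordism quotient and not just to framed isotopy; it is for this that the precise homotopy type of the mapping space, via \cite{Segal1973,Okuyama2005}, is indispensable.
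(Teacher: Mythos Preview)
Your proposal is correct and aligns with the approach the paper indicates: the paper does not give a self-contained proof here but refers to \cite{SS25-AbelianAnyons,SS25-WilsonLoops}, noting explicitly that the argument ``makes crucial use of the classical result \cite[Thm.~1]{Segal1973} and its more recent elaboration provided by \cite{Okuyama2005}'' --- precisely the configuration-space models you invoke at the key step. Your identification of the Pontrjagin--Thom/Hopf-invariant layer, and your correct diagnosis that the real work is showing the writhe descends to framed \emph{cobordism} (not just framed isotopy) via the Segal/Okuyama description of $\Omega^2 S^2$, match what the cited proof does.
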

Namely, evaluated in a pure topological quantum state $\vert K \rangle$ we have
\begin{equation}
  \label{KnotTheoreticWilsonLoopObservables}
  \begin{aligned}
  \langle K \vert
  \mathcal{O}_L
  \vert K \rangle
  & =
  \exp\big(
    \tfrac{
      \pi \mathrm{i}
    }{
      K
    }
    \,
    \# L
  \big)
  \\
  & =
  \exp\Bigg(
    \tfrac{
      \pi \mathrm{i}
    }{
      K
    }
    \bigg(\;
      \displaystyle{
        \sum_{i \neq j \in \pi_0(L)}
      }
      \underset{
        \mathclap{
          \adjustbox{
            raise=-1pt,
            scale=.7
          }{
            \color{gray}
            \bf
            \def\arraystretch{.85}
            \begin{tabular}{c}
              linking
              \\
              numbers
            \end{tabular}
          }
        }
      }{
        \mathrm{lnk}(L_i, L_j)
      }
      \,+\,
      \displaystyle{
        \sum_{i \in \pi_0(L)}
      }
      \underset{
        \mathclap{
          \adjustbox{
            raise=-1pt,
            scale=.7
          }{
            \color{gray}
            \bf
            \def\arraystretch{.85}
            \begin{tabular}{c}
              framing
              \\
              numbers
            \end{tabular}
          }
        }
      }{
        \mathrm{frm}(L_i)
      }
  \;  \bigg)
\!  \Bigg)
  \mathrlap{\,,}
  \end{aligned}
\end{equation}
which is the traditional formula for renormalized Wilson loop observables in abelian Chern-Simons theory. \footnote{
  The traditional renormalization process of these observables is somewhat \emph{ad hoc}, and the end result \cref{KnotTheoreticWilsonLoopObservables} is traditionally more of a definition than a derivation. 
} 
For instance, for the vacuum flux diagram which is the trefoil knot, we obtain:
\begin{equation}
\Bigg\langle K\, \Bigg\vert
\adjustbox{
  raise=-2.6cm,
  scale=.4
}{
\begin{tikzpicture}
\foreach \n in {0,1,2} {
\begin{scope}[
  rotate=\n*120-4
]
\draw[
  line width=2,
  -Latex
]
 (0,-1)
   .. controls
   (-1,.2) and (-2,2) ..
 (0,2)
   .. controls
   (1,2) and (1,1) ..
  (.9,.7);
\end{scope}

\node[darkgreen]
  at (\n*120+31:.7) {
    \scalebox{1}{$+$}
  };
};
\end{tikzpicture}
}
\hspace{-7pt}
\Bigg\vert
\, K
\Bigg\rangle
\;\;
=
\;\;
\Bigg\langle K\, \Bigg\vert
\adjustbox{
  raise=-2.6cm,
  scale=.4
}{
\begin{tikzpicture}
\foreach \n in {0,1,2} {
\begin{scope}[
  rotate=\n*120-4
]
\draw[
  line width=2,
  -Latex
]
 (0,-1)
   .. controls
   (-1,.2) and (-2,2) ..
 (0,2)
   .. controls
   (1,2) and (1,1) ..
  (.9,.7);
\end{scope}
};

\draw[white,fill=white]
  (-.9,-.7)
  rectangle
  (.9,.2);

\draw[line width=2]
  (-.79,.21) 
    .. controls
    (-.7,-.25) and (+.7,-.25) ..
  (.79,.21);

\draw[line width=2]
  (-.27,-.7) 
  .. controls
    (-.4,-.2) and 
    (+.4,-.2) ..
  (+.27,-.7);

\draw[white,fill=white]
 (-.5,.8) 
 rectangle (+.5,-.2);

\draw[line width=2]
  (-.5,.58)
  .. controls
    (-.0,.7) and (-.0,-.26)
    ..
  (-.5,-.06);

\draw[line width=2]
  (+.5,.58)
  .. controls
    (+.0,.7) and (+.0,-.26)
    ..
  (+.5,-.06);

\end{tikzpicture}
}
\hspace{-7pt}
\Bigg\vert
\, K
\Bigg\rangle
\;\;
=
\;\;
\exp\big(
  \pi \mathrm{i}
  \,
  \tfrac{3}{K}
\big)
\mathrlap{\,.}
\end{equation}
But this means equivalently that a \emph{braiding phase}
\begin{equation}
  \zeta 
    := 
  e^{ 
    \tfrac{\pi \mathrm{i} }{K} 
  }
\end{equation}
(or its inverse, depending on orientation) is picked up for each braiding of worldlines of flux quanta --- exactly as seen in FQH materials (cf. \cref{FQHAnyons}).

\subsubsection{Superconducting Island Anyons}
\label{OnSuperconductingIslandAnyons}
\footnote{
  Discussions of theoretical and experimental aspects of doping FQH electron liquids by superconducting islands are listed in \cite{SS26-Islands,nLab:SupercondunctingIslandsInFQH}.

  The prediction, from 2-cohomotopical flux quantization, that superconducting islands in FQH systems support parastatistical anyons is due to \cite[Fig. D \& \S3.8]{SS25-FQH}, based on Thm. 3.50 there, with exposition in \cite{SS26-Islands}.
}

Since 2-cohomotopical flux quantization thereby matches the expected and experimentally observed nature of solitonic FQH anyons on the torus (\cref{AnyonsOnTheTorus}) and on the plane (\cref{OnWilsonLoopObservables}), we gain confidence that this is the correct globally completed model for FQH topological order. Any further consequences of 2-cohomotopical flux quantization therefore become \emph{predictions} about novel anyonic phenomena possible in FQG materials. 

Consider therefore now as a domain a \emph{punctured} surface 
\begin{equation}
  \Sigma^2_{g,b,n>0}
  :=
  \Sigma^2_{g,b} \setminus \mathbf{n}
  \,,
\end{equation}
punctured at a finite set $\mathbf{n} \subset \Sigma^2_{g,b}$ of points.
Under one-point compactification, each puncture is identified with the point-at-infinity (\cref{SolitonicFields}), so that the mathematical situation of puncturing models a physical situation where flux solitons are prevented from approaching any of the $\mathbf{n}$ punctures, cf. \cref{SuperconductingIslandsIllustrated}. 

\begin{figure}[htb]
\caption{
  \label{SuperconductingIslandsIllustrated}
  The understanding of FQH anyons as surplus flux effectively quantized in 2-Cohomotopy first retrodicts (\cref{AnyonsOnTheTorus,OnWilsonLoopObservables}) the expected behaviour of solitonic anyons (quasi-hole vortices) but then furthermore predicts (\cref{OnSuperconductingIslandAnyons}) non-abelian defect anyons associated with points where magnetic flux is expelled (such as superconducting islands in the FQH liquid).  
}
\adjustbox{
  scale=.9,
  rndfbox=4pt
}{
\begin{tikzpicture}[scale=0.8]

\draw[
  gray!30,
  fill=gray!15
]
  (0-.3,0) --
  (-.3+3,1) --
  (9.7+3, 1) --
  (9.7, 0) -- cycle;

\foreach \n in {1,...,13} {
  \draw[
    line width=2pt
  ]
    (\n*.7, -2) --
    (\n*.7, -1);
  \draw[
    line width=2pt
  ]
    (\n*.7, +1) --
    (\n*.7, +2);
}

\draw[
  line width=2pt
]
  (4*.7, -1) .. controls
  (4*.7, -.3) and
  (4*.7+.3,-.6) ..
  (4*.7+.3, 0);
\draw[
  line width=2pt
]
  (4*.7, +1) .. controls
  (4*.7, +.3) and
  (4*.7+.3,+.6) ..
  (4*.7+.3, 0);

\draw[
  line width=2pt
]
  (3*.7, -1) .. controls
  (3*.7, -.3) and
  (3*.7-.3,-.6) ..
  (3*.7-.3, 0);
\draw[
  line width=2pt
]
  (3*.7, +1) .. controls
  (3*.7, +.3) and
  (3*.7-.3,+.6) ..
  (3*.7-.3, 0);

\draw[
  line width=2pt
]
  (11*.7, -1) .. controls
  (11*.7, -.3) and
  (11*.7+.2,-.6) ..
  (11*.7+.2, 0);
\draw[
  line width=2pt
]
  (11*.7, +1) .. controls
  (11*.7, +.3) and
  (11*.7+.2,+.6) ..
  (11*.7+.2, 0);

\draw[
  line width=2pt
]
  (10*.7, -1) .. controls
  (10*.7, -.3) and
  (10*.7-.2,-.6) ..
  (10*.7-.2, 0);
\draw[
  line width=2pt
]
  (10*.7, +1) .. controls
  (10*.7, +.3) and
  (10*.7-.2,+.6) ..
  (10*.7-.2, 0);

\draw[
  line width=2pt
]
  (6*.7, -1) .. controls
  (6*.7, -.3) and
  (6*.7+.75,-.6) ..
  (6*.7+.75, 0);

\draw[
  line width=2pt
]
  (9*.7, -1) .. controls
  (9*.7, -.3) and
  (9*.7-.75,-.6) ..
  (9*.7-.75, 0);

\draw[
  line width=2pt
]
  (7*.7, -1) .. controls
  (7*.7, -.3) and
  (7*.7+.25,-.6) ..
  (7*.7+.25, 0);
\draw[
  line width=2pt
]
  (8*.7, -1) .. controls
  (8*.7, -.3) and
  (8*.7-.25,-.6) ..
  (8*.7-.25, 0);

\draw[
  gray,
  line width=2
]
  (-.3,0) --
  (2.05,0);
\draw[
  gray,
  line width=2
]
  (2.85,0) --
  (6.95,0);
\draw[
  gray,
  line width=2
]
  (7.75,0) --
  (9.7,0);
  
\shadedraw[
  draw opacity=0,
  inner color=olive,
  outer color=lightolive
]
  (7.5*.7,0) ellipse 
  (.58 and .12);

\draw[
  line width=2pt
]
  (7*.7, +1) .. controls
  (7*.7, +.3) and
  (7*.7+.25,+.6) ..
  (7*.7+.25, 0);

\draw[
  line width=2pt
]
  (8*.7, +1) .. controls
  (8*.7, +.3) and
  (8*.7-.25,+.6) ..
  (8*.7-.25, 0);

\draw[
  line width=2pt
]
  (6*.7, +1) .. controls
  (6*.7, +.3) and
  (6*.7+.75,+.6) ..
  (6*.7+.75, 0);

\draw[
  line width=2pt
]
  (9*.7, +1) .. controls
  (9*.7, +.3) and
  (9*.7-.75,+.6) ..
  (9*.7-.75, 0);

\draw[
  line width=2pt
]
  (1*.7, -1) --
  (1*.7, +1); 

\draw[
  line width=2pt
]
  (2*.7, -1) .. controls
  (2*.7, -.3) and
  (2*.7-.1,-.6) ..
  (2*.7-.1, 0);
\draw[
  line width=2pt
]
  (2*.7, +1) .. controls
  (2*.7, +.3) and
  (2*.7-.1,+.6) ..
  (2*.7-.1, 0);

\draw[
  line width=2pt
]
  (5*.7, -1) .. controls
  (5*.7, -.3) and
  (5*.7+.1,-.6) ..
  (5*.7+.1, 0);
\draw[
  line width=2pt
]
  (5*.7, +1) .. controls
  (5*.7, +.3) and
  (5*.7+.1,+.6) ..
  (5*.7+.1, 0);

\draw[
  line width=2pt
]
  (12*.7, -1) --
  (12*.7, +1); 

\draw[
  line width=2pt
]
  (13*.7, -1) --
  (13*.7, +1); 

\draw[
  gray!30,
  draw opacity=.5,
  fill=gray!25,
  fill opacity=.5
]
  (0-.3,0) --
  (-.3-3,-1) --
  (9.7-3, -1) --
  (9.7,0) -- cycle;

\draw[fill=white] 
  (3.5*.7,0) ellipse 
  (.57 and .07);
\draw[fill=white] 
  (10.5*.7,0) ellipse 
  (.4 and .07);

\begin{scope}

\clip
  (4,0) rectangle
  (+6,-1);

\shadedraw[
  draw opacity=0,
  inner color=olive,
  outer color=lightolive
]
  (7.5*.7,0) ellipse 
  (.58 and .12);

\end{scope}

\node at (2.4,-2.5)
 {
   \scalebox{.65}{
     \color{darkblue}
     \bf
     \def\arraystretch{.9}
     \begin{tabular}{c}
       flux-expelling
       \\
       defect (puncture):
       \\
       {\bf non-abelian} anyon
     \end{tabular}
   }
 };

\node at (5.3,-2.5)
 {
   \scalebox{.65}{
     \color{darkblue}
     \bf
     \def\arraystretch{.9}
     \begin{tabular}{c}
       flux quantum
       \\
       soliton (vortex):
       \\
     abelian anyon
     \end{tabular}
     }
 };

\end{tikzpicture}
}
\end{figure}

This may be realized in experiment (cf. \cite{Gul:2022and}) by doping the semiconducting host material of the FQH systems by \emph{superconducting islands} at the position of the punctures.
\footnote{This is because a superconductor tends to expel magnetic flux, by the \emph{Meissner effect}. The engineering challenge is to have the superconducting phase survive the high magnetic field strength that is needed for the ambient FQH phase.}

In the solid state physics literature it is argued in somewhat roundabout ways that such superconducting islands in FQH materials are associated with non-abelian \emph{parafermionic} anyons. From our formalism we obtain an unambiguous prediction:

\begin{proposition}[{\cite[Ex. 3.59]{SS25-FQH}}]
  The covariantized \textup{(\cref{TheModularFunctor})} 2-cohomotopical quantum observables on the 3-punctured sphere form the group algebra of the \emph{framed symmetric group} on 3 strands with total framing divisible by 3:
  \begin{equation}
    \pi_1
    \Big(
    \mathrm{Map}^\ast\big(
      \Sigma^2_{0,0,3}
      ,\,
      S^2
    \big)
    \sslash
    \mathrm{Diff}\big(
      \Sigma^2_{0,0,3}
    \big)
    \Big)
    \subset
    \mathbb{Z}^3 
      \rtimes 
    \mathrm{Sym}_3
    \mathrlap{\,.}
  \end{equation}
\end{proposition}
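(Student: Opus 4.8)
The plan is to reduce the claim, via the general-covariance prescription of \cref{TheModularFunctor}, to computing $\pi_1$ of the homotopy quotient
\[
  \mathcal{M}\,\defneq\,
  \mathrm{Map}^\ast\big(
    (\Sigma^2_{0,0,3})_{\cpt},\,S^2
  \big)
  \sslash
  \mathrm{Diff}\big(\Sigma^2_{0,0,3}\big)
\]
(based at the zero-charge configuration, with $(\Sigma^2_{0,0,3})_{\cpt}=S^2/\mathbf{3}$ the one-point compactification of the thrice-punctured sphere, \cref{SolitonicFields}), which one reads off from the long exact homotopy sequence of the Borel fibration $\mathrm{Map}^\ast((\Sigma^2_{0,0,3})_{\cpt},S^2)\to\mathcal{M}\to B\,\mathrm{Diff}(\Sigma^2_{0,0,3})$. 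This needs three inputs, the first two of which are standard. (a) \emph{The diffeomorphism group.} Since the thrice-punctured sphere has negative Euler characteristic, the identity component of $\mathrm{Diff}^+(\Sigma^2_{0,0,3})$ is contractible (Earle--Eells), while $\mathrm{PSL}_2(\mathbb{C})$ acts sharply $3$-transitively on $\mathbb{CP}^1$, so every permutation of the punctures is realized by an orientation-preserving diffeomorphism and the pure mapping class group is trivial; hence $\mathrm{Diff}(\Sigma^2_{0,0,3})\simeq\mathrm{Sym}_3$ is homotopy-discrete and $B\,\mathrm{Diff}(\Sigma^2_{0,0,3})\simeq B\mathrm{Sym}_3$. (b) \emph{The compactified domain.} Adjoining a single point at infinity to the three cylindrical ends identifies three points of $S^2$ with the basepoint, so $(\Sigma^2_{0,0,3})_{\cpt}$ is pointed-homotopy-equivalent to $S^2\vee S^1\vee S^1$; hence $\mathrm{Map}^\ast((\Sigma^2_{0,0,3})_{\cpt},S^2)\simeq\Omega^2 S^2\times\Omega S^2\times\Omega S^2$, whose set of components is $\pi_2 S^2=\mathbb{Z}$ (the total magnetic charge, fixed by $\mathrm{Sym}_3$ as the diffeomorphisms preserve orientation) and whose fundamental group at zero charge is $\pi_3 S^2\oplus\pi_2 S^2\oplus\pi_2 S^2\cong\mathbb{Z}^3$ as an abelian group --- generated by the Hopf class $\eta$ on the top ($S^2$-)summand and the two winding classes on the circle summands.

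(c) \emph{The $\mathbb{Z}[\mathrm{Sym}_3]$-module structure on $\pi_1\mathrm{Map}^\ast((\Sigma^2_{0,0,3})_{\cpt},S^2)$} induced by the mapping-class-group action permuting the three punctures --- this is the heart of the matter. I would compute it from the $\mathrm{Sym}_3$-equivariant skeletal filtration of the suspension $\Sigma\big((\Sigma^2_{0,0,3})_{\cpt}\big)$, using the reduced cellular chains of $S^2/\mathbf{3}$ and carefully tracking the signs arising from the orientation-reversals that permutations of the punctures induce on the edges. This exhibits $\pi_1\mathrm{Map}^\ast((\Sigma^2_{0,0,3})_{\cpt},S^2)$ as an extension of an integral form of the standard $2$-dimensional representation of $\mathrm{Sym}_3$ by the rank-one summand $\mathbb{Z}\langle\eta\rangle$, and the point is to identify the result with the $\mathrm{Sym}_3$-sublattice
\[
  \big\{\,(n_1,n_2,n_3)\in\mathbb{Z}^3\;\big|\;3\mid n_1+n_2+n_3\,\big\}
  \;=\;
  \mathbb{Z}(1,1,1)\,\oplus\,\big\{\,\textstyle\sum_i n_i=0\,\big\}
\]
of the permutation module $\mathbb{Z}^3$ (one framing integer per puncture). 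The decisive arithmetic fact is that the two integral forms of the standard representation --- the sum-zero sublattice of the permutation module and its $\mathbb{Z}$-dual $\mathbb{Z}^3/\mathbb{Z}(1,1,1)$ --- are \emph{not} isomorphic over $\mathbb{Z}$, differing by an index-$3$ lattice map (equivalently: by a nontrivial class in the relevant $\mathrm{Ext}^1_{\mathbb{Z}[\mathrm{Sym}_3]}(\,\cdot\,,\mathbb{Z})\cong\mathbb{Z}/3$); this index $3$ is exactly the ``$3\mid$~total framing'' of the proposition, and it disappears after rationalization.

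With all three inputs in place, the long exact sequence of the Borel fibration gives, using that $\pi_1\mathrm{Diff}(\Sigma^2_{0,0,3})=0$ and that the connecting map $\mathrm{Sym}_3=\pi_1 B\mathrm{Sym}_3\to\pi_0\mathrm{Map}^\ast((\Sigma^2_{0,0,3})_{\cpt},S^2)=\mathbb{Z}$ is trivial (the action on components being trivial), a short exact sequence
\[
  1\longrightarrow
  \pi_1\mathrm{Map}^\ast\big((\Sigma^2_{0,0,3})_{\cpt},S^2\big)
  \longrightarrow
  \pi_1\mathcal{M}
  \longrightarrow
  \mathrm{Sym}_3
  \longrightarrow 1
  \mathrlap{\,;}
\]
the zero-charge configuration is a strict fixed point of the M\"obius realization of $\mathrm{Sym}_3$ on $\mathbb{CP}^1$, which therefore provides a splitting, and combined with (c) this identifies $\pi_1\mathcal{M}$ with $\big\{(n_i)\mid 3\mid\sum_i n_i\big\}\rtimes\mathrm{Sym}_3$, i.e.\ with the asserted subgroup of the framed symmetric group $\mathbb{Z}^3\rtimes\mathrm{Sym}_3$ on $3$ strands; passing to group algebras then finishes the proof. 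The main obstacle is input (c): pinning down the $\mathbb{Z}[\mathrm{Sym}_3]$-module structure down to signs and extension classes, and hence the exact ``$3\mid$~total framing'' constraint --- a genuinely integral phenomenon, the representation-theoretic counterpart of the factor-$2$ Whitehead-square identity $[\iota_n,\iota_n]=2\cdot(\text{Hopf class})$ (cf.\ \cref{WhiteheadSquareRootsOfHopfFibrations}) that governed the torus computation (\cref{FundamentalGroupOfMapsFromTorusToSphere}) and the renormalized Wilson-loop observables on the plane (\cref{OnWilsonLoopObservables}). Inputs (a)--(b), the $\Omega^n S^2$ bookkeeping, and the Borel-fibration assembly are routine.
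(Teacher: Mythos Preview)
The paper does not prove this proposition in the text; it is stated with a citation to \cite[Ex.~3.59]{SS25-FQH} and no argument is given here beyond the general recipe of \cref{TheModularFunctor}. So there is no in-paper proof to compare against.

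That said, your approach is exactly the instantiation of that recipe and is correct in outline. Inputs (a) and (b) are standard and right: the identity component of $\mathrm{Diff}^+(\Sigma^2_{0,0,3})$ is contractible with $\pi_0\simeq\mathrm{Sym}_3$, and the one-point compactification $S^2/\mathbf{3}\simeq S^2\vee S^1\vee S^1$ gives $\pi_1\mathrm{Map}^\ast\cong\mathbb{Z}^3$ abelian at zero charge. Your Borel-fibration assembly and the splitting via the M\"obius fixed point are also fine. You have correctly located the only substantive step as (c): the wedge decomposition $S^2\vee S^1\vee S^1$ manifestly breaks the $\mathrm{Sym}_3$-symmetry (it singles out one puncture), so the permutation action on the three $\mathbb{Z}$-factors is not visible in those coordinates and must be recovered by an honest computation. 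Your identification of the index-$3$ constraint with the discrepancy between the two integral forms of the standard $\mathrm{Sym}_3$-representation --- the sum-zero sublattice versus $\mathbb{Z}^3/\mathbb{Z}(1,1,1)$ --- is the right diagnosis, and your proposed method (equivariant cellular chains of $S^2/\mathbf{3}$, tracking the Hopf class $\eta$ against the diagonal $(1,1,1)$) is the natural way to carry it out. This is a proof sketch rather than a proof, but the plan is sound and there is no gap in the strategy; what remains is the bookkeeping in (c), which you have honestly flagged.
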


Accordingly, the corresponding topological quantum states are $\mathrm{Sym}_3$-representations, exhibiting the superconducting islands as parastatistical defect anyons (cf. \parencites[Rem. 3.61]{SS25-FQH}{SS26-Islands}).

\subsection{Anyonic C-Field Flux Quanta}
\label{AnyonicCFieldFluxQuanta}
\footnote{
  This section \cref{AnyonicCFieldFluxQuanta} presents a new result, not published before. The end result has some structural similarity with the proposal in \cite[\S 2]{Hartnoll2006}, but the details are nominally different.

  Further different but related realizations of anyons, as higher flux quanta on M5-brane probes in globally completed 11D SuGra, have been found in \cite{SS23-Defect} (realizing more enticing non-abelian $\mathfrak{su}(2)$-anyons, but at the cost of a more complex notion of \emph{intersecting solitons} due to \cite[\S 2.4-5]{SS22-Conf}), and in \parencites{SS25-Seifert,SS25-Srni,BaSS26-MString} (via a variant flux quantization in \emph{equivariant twistorial cohomotopy} introduced in \parencites{FSS22-Twistorial}{SS25-TEC}, cf. \cite[\S 12]{FSS23-Char}).
  The latter construction is not unlike what is suggested by informal physics arguments in \cite{ChoGangKim2020}, followed up on in \parencites{CuiQiuWang2023}{BonettiSchaeferNamekiWu2024}.

  All of these are examples of what is called \emph{geometric engineering} (cf. \cite{Duplij2017,nLab:GeometricEngineering}) of (FQH) anyons (on M5-branes) in 11D SuGra/M-theory, whereby a model for a strongly coupled quantum system is identified in a subsector of M-brane theory and thereby made amenable to tools of analysis that would not otherwise be evident.
  
}

With the above effective description of FQH anyons (in \cref{OnFQAHMaterials}) naturally embedded (by \cite{SS25-WilsonLoops}) into the gauge sector of 5D supergravity globally completed according to \emph{Hypothesis h} (\cref{Hypothesish}), we are led to ask whether analogous phenomena appear in the gauge sector of 11D SuGra when globally completed according to \emph{Hypothesis H}. Indeed this is the case, essentially by the same arguments shifted in degrees.

\subsubsection{Solitonic 5-branes}
\label{OnSolitonic5Branes}
\footnote{
  The observation in this \cref{OnSolitonic5Branes} is 
  due to \cite{KSS26-HigherAnyons}.
}

According to the general formula for solitonic $p$-brane domains \cref{OnSolitonicBranes} and in higher-dimensional analogy to \cref{SpatialDomainForSolitonicStrings}, the spatial domain on which to measure 3-soliton charge in 11d SuGra is of the form
\begin{equation}
  X^{10}_{\mathrm{dom}}
  :=
  \mathbb{R}^{3}_{\plus}
  \wedge
  \big(
    \mathbb{R}^1
    \times
    \Sigma^6
  \big)_{\cpt}\,.
\end{equation}
However, the Pontrjagin theorem (as in \cref{ThePontrjaginConstruction}) makes the 4-Cohomotopy charge on such a 
domain correspond to $6-4 = 2$-manifolds inside $\Sigma^6$, whose further extension along the $\mathbb{R}^3$-factor
therefore makes up the worldvolume of M5-branes.

In any case, the 3-solitonic ordinary topological observables \cref{OrdinaryTopObsAsGroupAlgebras} are
\begin{equation}
  \begin{aligned}
  \mathrm{TopObs}_0
  & \simeq
  \mathbb{C}\bigg[
  \pi_0\,
  \mathrm{Map}^\ast\Big(
    \mathbb{R}^{3}_{\plus}
    \wedge
    \big(
      \mathbb{R}^1
      \times
      \Sigma^6
    \big)_{\cpt}
    ,\,
    S^4
  \Big)
  \bigg]
  \\
  & \simeq
  \mathbb{C}\bigg[
  \pi_0\,
  \mathrm{Map}^\ast\Big(
    \mathbb{R}^1_{\cpt}
    \times
    \Sigma^6_{\cpt}
    ,\,
    S^4
  \Big)
  \bigg]
  \\
  & \simeq
  \mathbb{C}\Big[
  \pi_1\,
  \mathrm{Map}^\ast\Big(
    \Sigma^6_{\cpt}
    ,\,
    S^4
  \Big)
  \Big]
  \mathrlap{\,,}
  \end{aligned}
\end{equation}
and if $\Sigma^6$ is already compact then this reduces to:
\begin{equation}
  \mbox{
    $\Sigma^6$ compact
  }
  \;\;\;
  \Rightarrow
  \;\;\;
  \mathrm{TopObs}_0
  \simeq
  \mathbb{C}\Big[
    \pi_1
    \,
    \mathrm{Map}\big(
      \Sigma^6
      ,\,
      S^4
    \big)
  \Big]
  \mathrlap{\,.}
\end{equation}

\subsubsection{The higher analog of anyons on the torus}

Consider the 11D background spacetime of the form:
\begin{equation}
  \label{TheS3S3Spacetime}
  X^{1,10}
    \defneq
  \mathrm{AdS}_3 
    \times
    \mathbb{R}^2
    \times 
  S^3 \times S^3 
  \,,
\end{equation}
which is the near-horizon geometry of a certain intersection of black M2/M5-branes (cf. \cite[\S 2.2, p. 9,16]{BoonstraPeetersSkendris1998}).
Taking here physical spacetime to occupy an \emph{entanglement wedge} subregion of $\mathrm{AdS}_3$ (which is globally hyperbolic, in contrast to the full $\mathrm{AdS}$ spacetime, cf. \cite{HeadrickEtAl2014}), the actual Cauchy surface is diffeomorphic to
\begin{equation}
  \label{TheS3S3CauchySurface}
  X^{10}
  \simeq
  \mathbb{R}^2
    \times
  \mathbb{R}^2
    \times 
  S^3 \times S^3 
  \,,
\end{equation}
whence the spatial domain for 3-soliton charge topological observables
\begin{equation}
  \label{TheS3S3SpatialDomain}
  X^{10}_{\mathrm{dom}}
  \simeq
  \mathbb{R}^3_{\plus}
  \wedge
  \big(
    \mathbb{R}^1
      \times 
    S^3 \times S^3 
  \big)_{\cpt}
  \,,
\end{equation}

The charge as fixed light-front time is classified by the 4-Cohomotopy of the remaining sphere factors:
\begin{lemma}
  On the above spacetime \cref{TheS3S3Spacetime} there are two 4-cohomotopical charge sectors:
  \begin{equation}
    \label{4CohomotopyOfS3TimesS3}
    \pi^4\big(S^3 \times S^3\big)
    \defneq
    \pi_0\, 
    \mathrm{Map}\big(
      S^3 \times S^3
      ,\,
      S^4
    \big)
    \simeq
    \pi_6(S^4)
    \simeq
    \mathbb{Z}_{/2}
    \mathrlap{\,.}
  \end{equation}
\end{lemma}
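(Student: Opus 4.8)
The plan is to reduce the mapping space $\mathrm{Map}(S^3\times S^3,S^4)$ to the mapping space out of the top cell of $S^3\times S^3$. First I would note that, since $S^4$ is $3$-connected, the $\pi_1(S^4)$-action on based homotopy classes is trivial, so free and based homotopy classes agree and it suffices to compute $[S^3\times S^3,S^4]_\ast$. Next I would use the standard minimal CW-structure $S^3\times S^3\simeq (S^3\vee S^3)\cup_{[\iota_1,\iota_2]}e^6$, whose top cell is attached along the Whitehead product $[\iota_1,\iota_2]\colon S^5\to S^3\vee S^3$ of the two inclusion classes, giving the cofiber sequence
\[
  S^5\xrightarrow{\,[\iota_1,\iota_2]\,}S^3\vee S^3\xrightarrow{\ j\ }S^3\times S^3\xrightarrow{\ q\ }S^6\xrightarrow{\ \delta\ }S^4\vee S^4\to\cdots,\qquad \delta\simeq\pm\,\Sigma[\iota_1,\iota_2].
\]
Because $\pi_3(S^4)=0$ we get $[S^3\vee S^3,S^4]_\ast\cong\pi_3(S^4)^{\oplus 2}=0$, so every $f\colon S^3\times S^3\to S^4$ is nullhomotopic on $S^3\vee S^3$; since $j$ is a cofibration, $f$ is homotopic to a map that factors through the quotient $q$. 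Hence $q^\ast\colon[S^6,S^4]_\ast\to[S^3\times S^3,S^4]_\ast$ is surjective.

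The hard part will be injectivity of $q^\ast$: exactness of the Puppe sequence alone only controls the fibre over the basepoint, not the whole map. To upgrade this, I would invoke the classical fact that suspensions of Whitehead products vanish, so $\Sigma[\iota_1,\iota_2]\simeq\ast$, hence $\Sigma\delta\simeq\ast$; the once-suspended cofiber sequence therefore splits, $\Sigma(S^3\times S^3)\simeq S^4\vee S^4\vee S^7$, and under this splitting $\Sigma q$ is the projection onto the $S^7$-summand, which admits a section $s\colon S^7\to\Sigma(S^3\times S^3)$ with $\Sigma q\circ s\simeq\mathrm{id}_{S^7}$. (This is the James–Milnor stable splitting $\Sigma(X\times Y)\simeq\Sigma X\vee\Sigma Y\vee\Sigma(X\wedge Y)$ for $X=Y=S^3$, but it also follows directly from $\Sigma\delta\simeq\ast$.) Now if $g_0,g_1\colon S^6\to S^4$ satisfy $g_0q\simeq g_1q$, then suspending and precomposing with $s$ yields $\Sigma g_0\simeq\Sigma g_1$ in $\pi_7(S^5)$; since $6$ lies in the stable range for $S^4$, Freudenthal's theorem makes $\Sigma\colon\pi_6(S^4)\to\pi_7(S^5)$ an isomorphism, so $g_0\simeq g_1$. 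Thus $q^\ast$ is a bijection and $[S^3\times S^3,S^4]\cong\pi_6(S^4)$.

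It then remains only to recall the classical computation $\pi_6(S^4)\cong\mathbb{Z}/2$ (part of the $2$-stem, generated by a composite of suspended Hopf maps), which I would simply cite. Combining the three identifications gives $\pi^4(S^3\times S^3):=\pi_0\,\mathrm{Map}(S^3\times S^3,S^4)\cong\pi_6(S^4)\cong\mathbb{Z}/2$, in particular exactly two charge sectors. The only genuinely delicate point is the injectivity of $q^\ast$; an alternative to the suspension-and-section argument would be a direct coaction computation showing that the $[S^4\vee S^4,S^4]$-action on $[S^6,S^4]$ coming from the cofiber sequence is trivial precisely because $\Sigma\delta\simeq\ast$, but the suspension argument is the cleaner route.
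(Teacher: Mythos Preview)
Your proof is correct and rests on the same underlying idea as the paper: the CW-decomposition of $S^3\times S^3$ with top cell attached along the Whitehead product $[\iota_1,\iota_2]$, together with $\pi_3(S^4)=0$, forces every map to factor through the quotient $q\colon S^3\times S^3\to S^6$. The paper's proof asserts only this factorization and then declares the isomorphism, leaving injectivity of $q^\ast$ implicit; you go further and establish it explicitly via the stable splitting $\Sigma(S^3\times S^3)\simeq S^4\vee S^4\vee S^7$ combined with Freudenthal for $\pi_6(S^4)\to\pi_7(S^5)$. Your closing alternative is in fact the more direct route and closer in spirit to what the paper presumably intends: since $\delta=\Sigma[\iota_1,\iota_2]\simeq\ast$, the Puppe sequence shows that every fiber of $q^\ast$ is a coset of $\mathrm{im}(\delta^\ast)=0$ in the group $\pi_6(S^4)$, hence a singleton.
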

\begin{proof}
  The standard CW-complex structure of $S^3 \times S^3$ has two 3-cells $S^3 \vee S^3$ and a single 6-cell attached via the Whitehead product of their identity maps (this by the very definition of the Whitehead product):
  \begin{equation}
    \begin{tikzcd}
      S^5 
      \ar[
        rr,
        "{ [\iota_3, \iota_3]_{\mathrm{Wh}} }"
      ]
      \ar[
        d,
        hook
      ]
      &{}
      \ar[
        dr,
        phantom,
        "{ \ulcorner }"{pos=.9}
      ]
      &
      S^3 \vee S^3
      \ar[
        d,
        hook
      ]
      \\
      D^6
      \ar[
        rr,
        "{
          i_6
        }"
      ]
      &&
      S^3 \times S^3
      \mathrlap{\,.}
    \end{tikzcd}
  \end{equation}
  But since $\pi^4(S^3) \simeq \pi_3(S^4) = 0$, it follows that all maps $\begin{tikzcd}[sep=small]S^3 \times S^3 \ar[r] & S^4\end{tikzcd}$ factor through the cell attachment along $\begin{tikzcd}[sep=small]S^5 \ar[r] & \ast\end{tikzcd}$, which is $S^6$.
  This shows the first claimed isomorphism. The second is by the standard tables of low homotopy groups of spheres.
\end{proof}

\begin{proposition}
\label[proposition]{FundamentalGroupOfChargeOnS3TimesS3}
The fundamental group of both of these charge sectors, $[k] \in \mathbb{Z}_{/2}$ is:
\begin{equation}
  \label{PiOneOfMapsFromS3TimesS3ToS4}
  \pi_1\,
  \mathrm{Map}_{[k]}\big(
    S^3 \times S^3
    ,\,
    S^4
  \big)
  \simeq
  \widehat{\mathbb{Z}^2}
  \times 
  \mathbb{Z}_{/12}
  \mathrlap{\,,}
\end{equation}
where $\widehat{\mathbb{Z}^2}$ is the integer Heisenberg group at level=2 \textup{(\cref{IntegerHeisenbergGroupAtLevel2})}.
\end{proposition}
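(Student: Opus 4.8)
The plan is to compute the fundamental group of $\mathrm{Map}\big(S^3\times S^3, S^4\big)$ by the same strategy used for $\pi_1\,\mathrm{Map}(T^2,S^2)$ in \cref{FundamentalGroupOfMapsFromTorusToSphere}: reduce the computation of $\pi_1$ of a mapping space to $\pi_0$ of a mapping space out of a suspension, exploit the stable splitting of the suspended domain, and then identify the two a priori unknown commutators via the Samelson/Whitehead dictionary \cref{WhiteheadProductAsSamelsonProduct} together with the values \cref{WhiteheadSquareRootsOfHopfFibrations}. First I would record that $\pi_1\,\mathrm{Map}_{[k]}(S^3\times S^3,S^4)\simeq \pi_0\,\mathrm{Map}^\ast\big(\Sigma(S^3\times S^3)_{\plus}, S^4\big)$ restricted to the path component $[k]$, and that up to homotopy $\Sigma\big((S^3\times S^3)_{\plus}\big)\simeq S^1 \vee S^4_a \vee S^4_b \vee S^7$ — the four-fold analogue of \cref{StabelSplittingOfSuspendedTorus}, coming from the CW-structure of $S^3\times S^3$ with two $3$-cells and one $6$-cell. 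Hence as a set $\pi_1\,\mathrm{Map}_{[k]}(S^3\times S^3,S^4)$ splits off contributions from $\pi_4(S^4)\times\pi_4(S^4)\times\pi_7(S^4)\simeq \mathbb{Z}_a\times\mathbb{Z}_b\times(\mathbb{Z}\times\mathbb{Z}_{/12})$ — the $\mathbb{Z}$ generated by the Hopf class $\nu$ playing the role that $\zeta$ (the class $\eta$) played before, and $\pi_7(S^4)\simeq\mathbb{Z}\{\nu\}\oplus\mathbb{Z}_{/12}\{\Sigma\nu'\}$ supplying the extra $\mathbb{Z}_{/12}$ central factor.

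Next I would compute the group commutators. Writing $W_a,W_b$ for the generators coming from the two $S^4$ wedge summands, $\zeta$ for the generator coming from $\nu\in\pi_7(S^4)$ via $\pi_3(\Omega S^4)$, and $\tau$ for a generator of the $\mathbb{Z}_{/12}$, the key claim is $[W_a,W_b]=\zeta^2$ while all other commutators vanish. As in \cref{FirstDiagramForDerivingHeisenberg}, $[W_a,W_b]$ is represented by the composite $S^3\times S^3 \xrightarrow{\mathrm{pr}_c} S^3\wedge S^3 \xrightarrow{\widetilde\iota_4\wedge\widetilde\iota_4} (\Omega S^4)\wedge(\Omega S^4)\xrightarrow{[-,-]_{\mathrm{Sam}}}\Omega S^4$, which by \cref{WhiteheadProductAsSamelsonProduct} and the first two lines of \cref{WhiteheadSquareRootsOfHopfFibrations} (the relation $[\iota_4,\iota_4]=2\nu$) equals $2\widetilde\nu$ precomposed with the projection $S^3\times S^3\to S^6$ onto the top cell — i.e. $\zeta^2$. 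The remaining commutators $[W_{a/b},\zeta]$, $[W_{a/b},\tau]$, $[\zeta,\tau]$ are represented by diagonal maps into a smash product whose source factor lands in a homotopy group that vanishes for degree reasons (as in \cref{SecondDiagramForDerivingHeisenberg}), so all of these are trivial, and $\tau$ is central of order $12$. This gives the group extension $\widehat{\mathbb{Z}^2}\times\mathbb{Z}_{/12}$, independently of the component label $[k]$ — the two components of $\mathrm{Map}(S^3\times S^3,S^4)$ are abstractly homotopy equivalent, or at least have isomorphic $\pi_1$, by a translation-type argument or by inspecting the relevant Federer/evaluation spectral sequence.

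The main obstacle I anticipate is controlling the \emph{$\mathbb{Z}_{/12}$ summand and its placement in the extension}: unlike the torus case where the relevant homotopy group of the suspended domain was simply $\pi_\bullet$ of a wedge of spheres through a range where everything was free, here $\pi_7(S^4)$ carries torsion, so I must check that $\tau$ really is central (no secondary Samelson/Whitehead interaction produces a relation like $[W_a,W_b]=\zeta^2\tau^j$) and that the extension does not twist the $\mathbb{Z}_{/12}$ factor nontrivially into the Heisenberg part. This amounts to verifying that the relevant Whitehead products $[\iota_4,\alpha]$ for $\alpha\in\pi_7(S^4)$ torsion, and $[\iota_4,\iota_4]$ modulo the cell structure, contribute nothing beyond $2\nu$ on the top cell — which should follow from dimension/connectivity bounds and the known structure of $\pi_\ast(S^4)$ in low degrees, but requires care. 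A secondary, more routine obstacle is the precise identification $\pi_0\,\mathrm{Map}^\ast(\Sigma Z_{\plus},S^4)\cong\pi_1\,\mathrm{Map}(Z,S^4)$ with the correct basepoint/component bookkeeping, and checking the stable splitting \cref{StabelSplittingOfSuspendedTorus}-analogue holds before suspension is actually needed (it does, since $\Sigma X_\plus \simeq \Sigma X \vee S^1$ and the suspension of $S^3\times S^3$ splits as a wedge by the same cofibre-sequence argument as in \cite[Prop. 4I.1]{Hatcher2002}).
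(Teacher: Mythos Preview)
Your approach for the $[0]$-component is exactly the paper's: degree-shift the proof of \cref{FundamentalGroupOfMapsFromTorusToSphere} via the splitting $\Sigma\big((S^3\times S^3)_{\plus}\big)\simeq S^1\vee S^4_a\vee S^4_b\vee S^7$, read off the underlying set from $\pi_4(S^4)^2\times\pi_7(S^4)$, and identify $[W_a,W_b]=\zeta^2$ from $[\iota_4,\iota_4]=2\nu$. Your worry that $[W_a,W_b]$ might acquire a $\tau$-component is legitimate but is settled by the cited fact \cref{WhiteheadSquareRootsOfHopfFibrations}: $[\iota_4,\iota_4]$ lands exactly on $2\nu$ in the $\mathbb{Z}$-summand of $\pi_7(S^4)$, with no $\mathbb{Z}_{/12}$ contribution. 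The centrality of $\tau$ then follows by the same dimension count you sketch.

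The genuine gap is your treatment of the $[1]$-component. A ``translation-type argument'' does not apply here: $S^4$ is not an H-space, so there is no group action on the mapping space that translates between components. The paper instead adapts Hansen's trick: pinch off the top cell via $\phi\colon S^3\times S^3 \to (S^3\times S^3)\vee S^6$ and use this to build an endo-map $\Theta$ of the pointed mapping space, sending $f$ to $\nabla\circ(f\vee [1])\circ\phi$; this $\Theta$ swaps the two components (since it adds $[1]\in\pi^4(S^3\times S^3)\cong\mathbb{Z}_{/2}$) and satisfies $\Theta\circ\Theta\simeq\mathrm{id}$ (since $[2]\simeq[0]$), hence restricts to a homotopy equivalence between the components. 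A preliminary step, using the evaluation fibration over $S^4$ and its $2$-connectedness, reduces from the free to the pointed mapping space. Your Federer spectral sequence suggestion could in principle be made to work, but you would need to show the differentials are insensitive to the component, which is not automatic.
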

\begin{proof}
First to see that the claim holds for the sector of charge $0 \,\mathrm{mod}\,2$:

The proof of \cref{FundamentalGroupOfMapsFromTorusToSphere} in \cref{AnyonsOnTheTorus} immediately generalizes from flux on the torus $T^2 = S^1 \times S^1$ with classifying space $S^2$ to higher flux on $S^{2k+1} \times S^{2k+1}$ with classifying space $S^{2k+2}$, for all $k \in \mathbb{N}$: The stable splitting in \cref{StableSplittingOfSuspendedTorus} generalizes to 
\begin{equation}
  \Sigma\Big(
    \big(S^{2k+1} \times S^{2k+1}\big)_{\cpt}
  \Big)
  \simeq
  S^1 
  \vee 
  S^{2k+2}_a
  \vee
  S^{2k+2}_b
  \vee
  S^{4k+3}
\end{equation}
and from there the diagrams \cref{FirstDiagramForDerivingHeisenberg,SecondDiagramForDerivingHeisenberg} apply verbatim up to the evident shifts in degrees. For $k = 1$ this gives the claim to be proven. (The torsion factor in \cref{PiOneOfMapsFromS3TimesS3ToS4} comes from $\pi_7(S^4) \simeq \mathbb{Z} \times \mathbb{Z}_{/12}$.)

Hence, to conclude, it is sufficient to see that the other charge sector has the fundamental group. This is the content of the following \cref{CoincidingFundamentalGroupsOfMapS3TimesS3ToS4}.
\end{proof}
\begin{lemma}
  \label[lemma]{CoincidingFundamentalGroupsOfMapS3TimesS3ToS4}
  The two connected components \cref{4CohomotopyOfS3TimesS3} of $\mathrm{Map}\big(S^3 \times S^3,\, S^4\big)$ have isomorphic fundamental groups.
\end{lemma}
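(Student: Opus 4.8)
The cleanest route is to exhibit an explicit self-homotopy-equivalence of the source $S^3\times S^3$ that carries one charge sector to the other, since a homotopy equivalence of the source induces a homeomorphism of the full mapping space $\mathrm{Map}(S^3\times S^3, S^4)$ permuting its connected components, and this permutation in particular identifies the fundamental groups of the swapped components. So the plan is: (1) identify a self-map $\psi$ of $S^3\times S^3$ whose induced action on $\pi^4(S^3\times S^3)\simeq \pi_6(S^4)\simeq\mathbb{Z}_{/2}$ is nontrivial; (2) conclude that $\mathrm{Map}(\psi, S^4)$ swaps the two components; (3) read off the claimed isomorphism of $\pi_1$.

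For step (1), recall from the proof of the preceding lemma that $S^3\times S^3$ is built as $(S^3\vee S^3)\cup_{[\iota_3,\iota_3]}D^6$, and that $\pi^4$ of it is detected entirely on the top cell, i.e.\ the identification $\pi^4(S^3\times S^3)\simeq\pi_6(S^4)$ is via restriction along $\begin{tikzcd}[sep=small]S^6\ar[r]&S^3\times S^3\end{tikzcd}$ collapsing the $3$-skeleton. A self-map $\psi$ acts on this top cell by its ``degree'' (the integer by which $\psi_\ast$ multiplies on $H_6$), and the induced map on $\pi_6(S^4)\simeq\mathbb{Z}_{/2}$ is multiplication by that degree mod $2$. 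Thus I only need a self-equivalence of $S^3\times S^3$ of odd degree on $H_6$ that is \emph{not} compatible with the basepoint up to basepoint-preserving homotopy — but in fact the subtlety is the opposite: for a \emph{based} equivalence $\psi$, the degree on $H_6$ is forced to be $\pm1$ (being a product of the degrees on the two $H_3$-factors, each $\pm1$), hence acts as the identity mod $2$, so based equivalences do not help. The resolution is to use a \emph{free} (unbased) self-homeomorphism, or rather to compare components directly: the two components are distinguished by a $\mathbb{Z}_{/2}$-valued invariant, and I claim there is no based obstruction at all — instead, translation in the mapping space along a path in $S^4$ is not available since $S^4$ is simply connected, so I must genuinely move within the source.

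Therefore the key step, and the main obstacle, is step (1): producing the required self-map. Here is the argument I expect to work. Consider the cofiber sequence $S^5\xrightarrow{[\iota_3,\iota_3]} S^3\vee S^3\to S^3\times S^3\to S^6$; a map $f\colon S^3\times S^3\to S^4$ is determined up to homotopy by (a) its restriction to $S^3\vee S^3$, which is null since $\pi^4(S^3)=0$, forcing $f$ through the quotient $S^6$, and (b) the resulting class in $\pi_6(S^4)$. So $\pi^4(S^3\times S^3)\xrightarrow{\ \sim\ }\pi_6(S^4)$, and likewise $\pi_1\mathrm{Map}(S^3\times S^3,S^4)$ fits (via the $S^3$-skeleton contributing nothing to $\pi_0$ and the evaluation fibrations) into the same pattern as $\pi_1\mathrm{Map}(S^6,S^4)$ corrected by the attaching map. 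Concretely: the component of $f$ in $\mathrm{Map}(S^3\times S^3,S^4)$ has $\pi_1$ computed by the exact sequence of the evaluation fibration over the (simply-connected) $S^4$, together with contributions $\pi_1\mathrm{Map}^\ast(S^6,S^4)\simeq\pi_7(S^4)$, $\pi_1\mathrm{Map}^\ast(S^3_a,S^4)\simeq\pi_4(S^4)\simeq\mathbb{Z}$, $\pi_1\mathrm{Map}^\ast(S^3_b,S^4)\simeq\mathbb{Z}$, assembled by the Whitehead-product attaching data exactly as in \cref{FirstDiagramForDerivingHeisenberg}. The point is that \emph{all} of this input data — $\pi_7(S^4)$, the two copies of $\pi_4(S^4)$, and the Samelson/Whitehead bracket governing their interaction — is literally independent of which component $[k]\in\mathbb{Z}_{/2}$ one sits over, because the bracket computations only see the $3$-skeleton $S^3\vee S^3$ and the attaching map $[\iota_3,\iota_3]$, never the $\mathbb{Z}_{/2}$-labelling of the top-cell class. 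Hence I would simply run the same fibration/long-exact-sequence bookkeeping that underlies \cref{FundamentalGroupOfMapsFromTorusToSphere} and \cref{FundamentalGroupOfChargeOnS3TimesS3} once more, observing that it yields $\widehat{\mathbb{Z}^2}\times\mathbb{Z}_{/12}$ with no dependence on $[k]$; the hard part is making precise that the connecting maps in the evaluation-fibration sequence are the \emph{same} for both components, which follows because the difference of two maps in different components, pinched onto the top cell, is an element of $\pi_6(S^4)$ that acts trivially on $\pi_1$ by the (degree-reasons) vanishing already exploited in diagram \cref{SecondDiagramForDerivingHeisenberg}.
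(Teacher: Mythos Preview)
Your first instinct --- a self-equivalence of the \emph{source} $S^3\times S^3$ --- is correctly diagnosed as a dead end: any such equivalence has degree $\pm 1$ on $H_6$ and hence acts trivially on $\pi_6(S^4)\simeq\mathbb{Z}_{/2}$.

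Your fallback, redoing the stable-splitting computation and observing no $[k]$-dependence, has a real gap. The argument behind \cref{FundamentalGroupOfMapsFromTorusToSphere} (and its degree-shifted analogue) computes $\pi_1\mathrm{Map}^\ast\big(S^3\times S^3,\,S^4\big)$ \emph{at the constant map}: it uses the adjunction $\pi_1\mathrm{Map}^\ast(X,Y)_{\mathrm{const}}\simeq[\Sigma X_+,Y]$ and then the splitting of $\Sigma(S^3\times S^3)_+$. At a non-constant basepoint $f$ there is no such identification, and in general $\pi_1$ of a pointed mapping space \emph{does} depend on the component. Your claim that ``the connecting maps are the same for both components'' is exactly what needs proving, and appealing to the vanishing in \cref{SecondDiagramForDerivingHeisenberg} does not address it: that diagram establishes a commutator relation \emph{inside} the $[0]$-component, not an identification \emph{between} components.

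The paper (following Hansen) instead builds a self-map of the \emph{mapping space}. After reducing to pointed maps via the evaluation fibration (using that $S^4$ is $2$-connected), one takes the pinch map $\phi\colon S^3\times S^3\to(S^3\times S^3)\vee S^6$ collapsing the boundary of the top cell, and defines
\[
  \Theta(f)\;:\;
  S^3\times S^3
  \xrightarrow{\;\phi\;}
  (S^3\times S^3)\vee S^6
  \xrightarrow{\;f\,\vee\,[1]\;}
  S^4\vee S^4
  \xrightarrow{\;\nabla\;}
  S^4.
\]
This $\Theta$ adds the generator $[1]\in\pi_6(S^4)$ to the top-cell class of $f$, so it swaps the two components; and since $[2]=[0]$ in $\mathbb{Z}_{/2}$, one has $\Theta\circ\Theta\simeq\mathrm{id}$, whence $\Theta$ restricts to a homotopy equivalence between the components. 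Your final sentence (``the difference\ldots pinched onto the top cell'') is the seed of exactly this construction --- you just need to promote it from a heuristic about ``acting trivially on $\pi_1$'' to an honest self-equivalence of $\mathrm{Map}^\ast$.
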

\begin{proof}
  We adapt the proof idea of \cite[Prop. 1]{Hansen1974}.

  First observe that the \emph{pointed} mapping space has the same fundamental groups, as shown by the homotopy long exact sequence of the fibration of evaluation at a basepoint in the component $[k] \in \mathbb{Z}_{/2}$ and using the the 2-connectedness of the 4-sphere:
  \begin{equation}
    \begin{tikzcd}[
      row sep=0pt,
      column sep=16pt
    ]
      \mathrm{Map}_{[k]}^{\ast}\big(%
        S^3 \!\times\! S^3%
        ,\,%
        S^4
      \big)
      \ar[
        r,
        "{
          \mathrm{fib}_k
        }"
      ]
      &
      \mathrm{Map}_{[k]}\big(%
        S^3 \!\times\! S^3%
        ,\,%
        S^4
      \big)
      \ar[
        r,
        "{ \mathrm{ev}_{k} }"
      ]
      &
      S^4
      \\[-6pt]
      &&
      \grayoverbrace{
        \pi_2 S^4
      }{ 0 }
      \ar[
        dll,
        snake left
      ]
      \\[+6pt]
      \pi_1
      \mathrm{Map}_{[k]}^{\ast}\big(%
        S^3 \!\times\! S^3%
        ,\,%
        S^4
      \big)
      \ar[
        r,
        "{ \sim }"
      ]
      &
      \pi_1
      \mathrm{Map}_{[k]}\big(%
        S^3 \!\times\! S^3%
        ,\,%
        S^4
      \big)
      \ar[
        r,
        "{ \mathrm{ev}_{k} }"
      ]
      &
      \grayunderbrace{
      \pi_1
      S^4}{1}
      \mathrlap{\,.}
    \end{tikzcd}
  \end{equation}
  Therefore, it is now sufficient to show that the connected components of the pointed mapping space are homotopy equivalent: 
  \begin{equation}
    \label{HomotopyEquivalenceBetweenComponentsOfMapS3TimesS3ToS4}
    \mathrm{Map}_{[0]}^\ast\big(
      S^3 \times S^3,
      S^4
    \big)
    \underset{\mathrm{hmtp}}{\simeq}
    \mathrm{Map}_{[1]}^\ast\big(
      S^3 \times S^3,
      S^4
    \big)
    \mathrlap{\,.}
  \end{equation}

  To that end, consider the \emph{pinching map} $\phi$
  \begin{equation}
    \begin{tikzcd}[
      column sep=15pt
    ]
      S^{5}
      \ar[d]
      \ar[
        r,
        hook
      ]
      &
      \mathring{D}^6
      \ar[
        r,
        hook
      ]
      &
      D^6
      \ar[
        r,
        hook,
        "{ i_6 }"
      ]
      \ar[
        dr,
        phantom,
        "{ \ulcorner }"{pos=.9}
      ]
      &
      S^3 \times S^3
      \ar[
        d,
        "{ \phi }"
      ]
      \\
      \ast
      \ar[rrr]
      &&&
      (S^3 \times S^3) \vee S^6
    \end{tikzcd}
  \end{equation}
  and use it to construct an endo-map
  \begin{equation}
    \begin{tikzcd}
      \mathrm{Map}^\ast\big(
        S^3 \times S^3,
        S^4
      \big)
      \ar[
        r,
        "{ \Theta }"
      ]
      &
      \mathrm{Map}^\ast\big(
        S^3 \times S^3,
        S^4
      \big)
    \end{tikzcd}
  \end{equation}
  by setting
  \begin{equation}
    \Theta(f)
    :
    \begin{tikzcd}[sep=20pt]
      (S^3 \!\times\! S^3)
      \ar[
        r,
        "{ \phi }"
      ]
      &
      (S^3 \!\times\! S^3)
      \vee
      S^6
      \ar[
        rr,
        "{ 
          f 
          \,\vee\, 
          [1]
        }"
      ]
      &&
      S^4 \vee S^4
      \ar[
        r,
        "{ \nabla }"
      ]
      &
      S^4
      \mathrlap{.}
    \end{tikzcd}
  \end{equation}
  Now observe that $\Theta$ swaps connected components, and that $\Theta \circ \Theta$ is homotopic to the identity (because $\phi$ is homotopy co-associative, $\nabla$ is homotopy associative and $[2]$ is homotopic to $[0]$). Therefore, $\Theta$ restricts on the $[0]$-components to a homotopy equivalence as claimed in \cref{HomotopyEquivalenceBetweenComponentsOfMapS3TimesS3ToS4}.  
\end{proof}

But \cref{FundamentalGroupOfChargeOnS3TimesS3} has the remarkable consequence that, with 11D SuGra globally completed by flux quantization in 4-Cohomotopy:
\begin{standout}
  C-Field flux quanta on entanglement wedges of $\mathrm{AdS}_3 \times S^3 \times S^3 \times \mathbb{R}^2$ \cref{TheS3S3Spacetime} have the same anyonic quantum observables as surplus magnetic flux quanta, hence as quasi-hole anyons, of toroidal FQH materials. 
\end{standout}
This suggests further possibilities for modelling anyonic topological quantum materials (which remain ill-understood) by embedding their theory into global completions of higher supergravity theories. See also the outlook in \cref{TopoloticalQuantumGates}.

\section{Outlook: Beyond the Topological Sector}

We close with some comments on what to do next.

\subsection{Quantum 11D Supergravity}
\label{QuantumSupergravity}
\footnote{
 Basic review of 11D SuGra includes 
 \parencites[\S 1]{Duff1999World}{MiemiecSchnakenburg2006}[\S 10]{FreedmanVanProeyen2012}.}

We have focused in \cref{GlobalCompletionOfHigherGaugeFields} on the (higher) gauge field sectors of (higher-dimensional) supergravity theories, disregarding the \emph{backreaction} of these fields on a background field of (super-)gravity.
But for the special case of 11D supergravity (at least), we can actually characterize the topological quantum observables of the full 11D SuGra theory globally completed by flux quantization --- including the dynamical super-gravitational sector, keeping just the bare choice of underlying manifold as background:

\subsubsection{The phase space of completed 11D SuGra}
\label{PhaseSpaceOfCompleted11DSugra}
\footnote{
  Something close to  \cref{11DSugraViaSuperFluxBianchi} has been claimed in \parencites{CremmerFerrara1980}{BrinkHowe1980} with little indication of its proof, and then again in \cite[\S III.8.5]{CDF1991} with indication of a couple of the easy steps of the proof. The full proof is laid out in \cite[Thm. 3.1]{GSS24-SuGra}, using computer algebra to double-check the easier steps and to complete the computationally intensive steps (the verification takes over 11 minutes on a contemporary notebook computer). 
  The observation that this allows to construct the higher phase space of global completions of 11D SuGra by flux quantization follows \cite[Claim 1.1]{GSS24-SuGra}.

  The analogous statement for the dynamics of M5-brane probes in 11D SuGra originates in \parencites{HoweSezgin1997}[\S 5.2]{Sorokin2000}, with a streamlined rederivation in \cite{GSS25-M5} and the first-ever nontrivial example rigorously constructed in \cite{GSS25-Embedding}.  
}

This is due to the following `miracle':
\begin{proposition}[{\cite[Thm. 3.1]{GSS24-SuGra}}]
  \label[proposition]{11DSugraViaSuperFluxBianchi}
  An $(11\vert \mathbf{32})$-dimensional super-spacetime $X$ with super-frame field $(e,\psi)$ \textup{(the gravity \& gravitino field)} is a solution of 11D supergravity with C-field flux density $G_4$ iff the super-flux densities
  \begin{equation}
    \label{The11DSuperFluxDensities}
    \begin{aligned}
    G_4^s 
    &
    :=
    G_4  
      + 
    \tfrac{1}{2}\big(\,
      \overline{\psi}
      \Gamma_{a_1 a_2}
      \psi
    \big)
    e^{a_1}
    e^{a_2}
    \\
    G_4^s 
    &
    :=
    G_7
      + 
    \tfrac{1}{5!}\big(\,
      \overline{\psi}
      \Gamma_{a_1 \cdots a_5}
      \psi
    \big)
    e^{a_1}
    \cdots
    e^{a_5}
    \end{aligned}
  \end{equation}
  constitute a closed $\mathfrak{l}S^4$-valued differential super-form \cref{ClosedS4ValuedForms}:
  \footnote{
    This means that 
    \cref{SuperFluxDensitiesAsClosedlS4ValuedForms} is equivalent to all of:
    \begin{enumerate}
      \item the \emph{Einstein equation} for $e$ sourced by $G_4$,

      \item the \emph{Rarita-Schwinger equation} of motion for $\psi$

      \item the \emph{Maxwell-Chern-Simons equation} for $G_4$ (\cref{HigherGaugeFields})

      \emph{including} the constraint that $G_7 = \star_{{}_{11\vert 0}} G_4$.
    \end{enumerate}
  }
  \begin{equation}
    \label{SuperFluxDensitiesAsClosedlS4ValuedForms}
    \big(
      G_4^s, 
      G_7^s
    \big)
    \in
    \Omega^1_{\mathrm{dR}}\big(
      X;
      \mathfrak{l}S^4
    \big)_{\mathrm{cl}}
    \mathrlap{\,.}
  \end{equation}
\end{proposition}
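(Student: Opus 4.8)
The plan is to reduce the claimed equivalence to a purely algebraic statement about super-differential forms on the super-Cartan geometry of $(11\vert\mathbf{32})$-superspacetime, and then to verify that statement degree by degree in the fermionic form-degree. First I would fix notation: work with the super-vielbein $(e^a,\psi^\alpha)$ as a coframe on $X$, with torsion constraint $\mathrm{d}e^a = \overline{\psi}\,\Gamma^a\psi + (\text{bosonic torsion})$ — the standard on-shell torsion constraint of 11D superspace — and expand every form in the basis $e^{a_1}\cdots e^{a_k}\psi^{\alpha_1}\cdots\psi^{\alpha_\ell}$. The closure condition $\mathrm{d}G_4^s = 0$ and $\mathrm{d}G_7^s = \tfrac12 G_4^s\wedge G_4^s$ then splits into one equation for each bidegree $(k,\ell)$ of forms. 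The key point, which is the substance of \cite[Thm. 3.1]{GSS24-SuGra}, is that the bidegrees with $\ell\geq 1$ (the components with at least one explicit gravitino leg) are \emph{kinematical}: they hold automatically given the torsion constraint and the Fierz identities for $\Gamma$-matrices in $11$ dimensions, whereas the purely bosonic bidegree $(\ell=0)$ components reproduce exactly the Bianchi identity / equation of motion $\mathrm{d}G_4=0$, $\mathrm{d}G_7 = \tfrac12 G_4\wedge G_4$, $G_7 = \star G_4$ together with the Einstein and Rarita--Schwinger equations.

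The key steps, in order, would be: (i) record the structure equations of on-shell $D=11$ superspace, i.e. the torsion constraint $T^a = \overline{\psi}\Gamma^a\psi$ and the gravitino curvature, these being the ``super-frame field'' data; (ii) compute $\mathrm{d}G_4^s$ by differentiating the two summands in \cref{The11DSuperFluxDensities} and substituting the structure equations, collecting terms by bidegree; (iii) in the top fermionic bidegrees, show that the resulting spinor-bilinear coefficients vanish identically by the relevant $D=11$ Fierz rearrangement (the cubic $\overline{\psi}\Gamma\psi\cdot\overline{\psi}\Gamma\psi\cdot\psi$ identity) — this is the one genuinely $11$-dimension-specific input; (iv) in the mixed bidegrees, show that the remaining equations are equivalent to the gravitino equation of motion (Rarita--Schwinger) and, via the $(2,2)$- and $(3,2)$-components paired against $\mathrm{d}G_7^s = \tfrac12 G_4^s\wedge G_4^s$, to the Einstein equation with $G_4$-stress-energy source; (v) in the purely bosonic bidegree, read off $\mathrm{d}G_4 = 0$, $\mathrm{d}G_7 = \tfrac12 G_4\wedge G_4$, and the Hodge duality $G_7 = \star_{11\vert 0}G_4$ (the last of which is \emph{forced}, not assumed, precisely because the duality-symmetric $\mathfrak{l}S^4$-valued formulation packages both $G_4$ and $G_7$ and the closure of $G_7^s$ then pins $G_7$ to $\star G_4$, exactly as in \cref{ClosedS4ValuedForms} and \cref{SolutionSpaceAsClosedDiffForms}); (vi) conversely, run steps (ii)--(v) backwards to see that any closed $\mathfrak{l}S^4$-valued super-$4$-form of the prescribed shape forces the torsion constraint and all equations of motion.

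I expect the main obstacle to be step (iii)--(iv): the Fierz identities and the bookkeeping of spinor bilinears in $11$ dimensions are computationally heavy and error-prone — indeed the original proof in \cite{GSS24-SuGra} resorts to computer algebra for exactly this part, with the verification taking over eleven minutes. The conceptual content is clear and the bosonic sector (step (v)) is essentially the direct computation already invoked in \cref{SolutionSpaceAsClosedDiffForms} and in the proof of \cref{IdentifyingCauchyData}; what is delicate is confirming that \emph{no} additional constraints beyond the supergravity equations of motion emerge from the fermionic bidegrees, i.e. that the map from ``$\mathfrak{l}S^4$-valued super-form data'' to ``supergravity solutions'' is a genuine bijection rather than merely an inclusion. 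Since the excerpt permits me to assume \cite[Thm. 3.1]{GSS24-SuGra}, in practice I would cite that computation for the Fierzing and present steps (i), (ii), (v), (vi) explicitly, indicating precisely which super-bidegree yields which component of the supergravity equations, and noting that the Hodge self-duality is an output of the duality-symmetric packaging rather than an input.
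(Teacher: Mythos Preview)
The paper does not itself prove this proposition: it is stated with attribution to \cite[Thm.~3.1]{GSS24-SuGra}, and the accompanying footnote only remarks that the full proof there proceeds by computer-algebra verification of the computationally intensive steps (the Fierz-identity bookkeeping), taking over eleven minutes to run. Your outline --- expand $\mathrm{d}G_4^s$ and $\mathrm{d}G_7^s - \tfrac{1}{2}G_4^s\wedge G_4^s$ by fermionic bidegree, kill the $\ell\geq 1$ components via torsion constraints and $D=11$ Fierz identities, and read off the supergravity field equations from the $\ell=0$ components --- is exactly the strategy the footnote attributes to the cited proof, and your identification of the Fierzing as the step requiring machine assistance matches the paper's own commentary.
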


But this reduces all of dynamical 11D SuGra to the (rational) homotopy theoretic form of its gauge sector established in \cref{GlobalCompletionOfHigherGaugeFields}, subject to the constraint \cref{The11DSuperFluxDensities}. In consequence, the flux-quantized phase space stack of the full gravitational theory, $\mathrm{PhsSpc}^{\mathrm{grv}}$, is obtained much like that of its gauge sector in \cref{OnTheCompletedPhaseSpace}, just replacing the sheaf of all closed $\mathfrak{l}S^4$-valued differential forms with the sheaf of super-frame fields and closed $\mathfrak{l}S^4$-valued super-forms subject to the constraint \cref{The11DSuperFluxDensities}. Otherwise the homotopy pullback construction \cref{GaugeFieldsAsLifts} of the phase space stack is the same. 
From there we may again form the shape of this phase space, $\shape \mathrm{PhsSpc}^{\mathrm{grv}}$, and thereby obtain the complete topological observables of the theory, just as in \cref{OnTopologicalObservables}.

Of course, the complication now is that,  due to the constraint \cref{The11DSuperFluxDensities}, 
Prop. \ref{ShapeOfAdiff} no longer applies (at least not directly), so that we lose the more explicit description of \cref{TheShapeOfPhaseSpace} and with it much of the computational control over these topological observables of 11D SuGra. This is both expected and desired: The observables in the full SuGra theory must be much richer and hence harder to analyze than in the pure gauge sector. But at least we thereby have their abstract description and may eventually find mathematical means to further unravel their structure.

(Analogous statements apply when including M5-brane probes into 11D SuGra: Their phase space is again given by promoting their worldvolume Bianchi identity \cref{M5WorldvolumeBianchi} from the plain flux density $H_3$ to its constrained super-space form, cf. \cite[Prop. 3.18]{GSS25-M5}).

\subsubsection{Effective Quantum Supergravity of FQH Excitations}
\footnote{
  More references on effective super-symmetry/gravity of the density wave modes in FQH systems are listed in \cite{SS26-SDiff,nLab:SuSyInFQHReferences}.
  The relevant supergravity is necessarily a non-relativistic (\emph{Newton-Cartan}) limit of ordinary SuGra (cf. \cite{BergshoeffRosseel2024}).
}

We have seen in \cref{OnFQAHMaterials,AnyonicCFieldFluxQuanta} that the topological gauge sector of (higher-dimensional) supergravity allows to ``geometrically engineer'' the effective topological physics of FQH anyons in a globally consistent and predictive way. This raises the question whether the remaining gravitational, fermionic and non-topological aspects of SuGra (as in \cref{QuantumSupergravity}) might naturally serve to engineer the finer non-topological physics of FQH systems, too. 

There is growing indication that just this is the case, namely the  \emph{density wave} excitations in the FQH electron gas, above the topological Laughlin ground state, have the following properties:
\begin{enumerate}
\item The long-wavelength limit of the bosonic mode looks like a (chiral) \emph{graviton}
\\
(theoretically: \cite{Liou2019}, and experimentally: \cite{Liang2024}).

\item The long-wavelength limit of the fermionic mode looks like a \emph{gravitino}
\\
(cf. \parencites{Haldane2013}[p. 8]{Wang2023}).

\item The bosonic/fermionic modes exhibit an emergent (broken) \emph{supersymmetry},
\\
(theoretically: \cite{GromovMartinecRyu2020,Liu2024Resolving}, and experimentally: \cite{Pu2023Signatures}),
\item 
whose key properties are usefully modeled by \emph{supergravity}
\\
(cf. \cite{NPBG2023,Du2025Chiral}).
\end{enumerate}

All this suggests that --- where we found (\cref{AnyonicCFieldFluxQuanta,OnFQAHMaterials}) the anyonic topological ground state of FQH systems faithfully embedded into the topological gauge sectors of higher-dimensional supergravity, after cohomotopical flux quantization --- we may expect the full (geometric, non-topological) completed quantum supergravity indicated in \cref{PhaseSpaceOfCompleted11DSugra} to know about the non-topological physics of FQH excitations. 

This remains to be seen. But something like this will be needed in order to proceed from the above understanding of the fundamental nature of FQH anyons to an understanding of their physical control:

\subsection{Topological Quantum Gates}
\label{TopoloticalQuantumGates}
\footnote{
  Exposition of topological quantum gates in our context is in \cite[\S 3]{MySS2024}. 
  The idea goes back to \parencites{Kitaev2003}{FreedmanKitaevLarsenWang2003}, traditional review includes \parencites{Nayak2008}{SatiValera2025}. The plausible \emph{necessity} of topological quantum protection for non-toy quantum computers has been highlighted (besides \cite[abst.]{FreedmanKitaevLarsenWang2003}) by \cite{DasSarma2022Hype}. And yet, the theoretical understanding of the required controllably movable defect anyons (cf. \cref{TopologicalQuantumGates}) has remained immature and may require a new theoretical approach that properly handles the traditionally elusive global topology of higher gauge fields.   
}
\newline \nopagebreak
The mere \emph{presence} of anyonic topological order is fascinating, but to live up to the promise of topological quantum hardware, the anyons (their positions) need to be \emph{controllable} (by adiabatic tuning of the material's external parameters), so as to operate \emph{topological quantum gates} on the material's ground state by controlled \emph{braiding} of anyon worldlines (cf. \cref{TopologicalQuantumGates}). 

\begin{figure}[htb]
\caption{%
  \label{TopologicalQuantumGates}%
  The original and still most substantial idea of topologically protected quantum operations assumes the \emph{controlled movement} of \emph{defect anyons}. This will require understanding the fine-grained response of these defect anyons to ``adiabatic'' external tuning of parameters of the host material and hence a refinement of their quantum theory beyond the topological sector. For geometric engineering of anyons in SuGra (as in \cref{AnyonicCFieldFluxQuanta}) this may mean entering the quantum gravity regime proper (as in \cref{QuantumSupergravity}). 
}
\centering
\adjustbox{
  scale=.8,
  rndfbox=4pt
}{
\begin{tikzpicture}[
  baseline=(current bounding box.center)
]

\clip
  (-8.4, 1.5) rectangle
  (4.9,-3.3);

  \shade[right color=lightgray, left color=white]
    (3,-3)
      --
    (-1,-1)
      --
    (-1.21,1)
      --
    (2.3,3);

  \draw[]
    (3,-3)
      --
    (-1,-1)
      --
    (-1.21,1)
      --
    (2.3,3)
      --
    (3,-3);

\draw[-Latex]
  ({-1 + (3+1)*.3},{-1+(-3+1)*.3})
    to
  ({-1 + (3+1)*.29},{-1+(-3+1)*.29});

\draw[-Latex]
    ({-1.21 + (2.3+1.21)*.3},{1+(3-1)*.3})
      --
    ({-1.21 + (2.3+1.21)*.29},{1+(3-1)*.29});

\draw[-Latex]
    ({2.3 + (3-2.3)*.5},{3+(-3-3)*.5})
      --
    ({2.3 + (3-2.3)*.49},{3+(-3-3)*.49});

\draw[-latex]
    ({-1 + (-1.21+1)*.53},{-1 + (1+1)*.53})
      --
    ({-1 + (-1.21+1)*.54},{-1 + (1+1)*.54});

  \begin{scope}[rotate=(+8)]
   \draw[dashed]
     (1.5,-1)
     ellipse
     ({.2*1.85} and {.37*1.85});
   \begin{scope}[
     shift={(1.5-.2,{-1+.37*1.85-.1})}
   ]
     \draw[->, -Latex]
       (0,0)
       to
       (180+37:0.01);
   \end{scope}
   \begin{scope}[
     shift={(1.5+.2,{-1-.37*1.85+.1})}
   ]
     \draw[->, -Latex]
       (0,0)
       to
       (+37:0.01);
   \end{scope}
   \begin{scope}[shift={(1.5,-1)}]
     \draw (.43,.65) node
     { \scalebox{.8}{$
     $} };
  \end{scope}
  \draw[fill=white, draw=gray]
    (1.5,-1)
    ellipse
    ({.2*.3} and {.37*.3});
  \draw[line width=3.5, white]
   (1.5,-1)
   to
   (-2.2,-1);
  \draw[line width=1.1]
   (1.5,-1)
   to node[
     above, 
     yshift=-4pt, 
     pos=.85]{
     \;\;\;\;\;\;\;\;\;\;\;\;\;
     \rotatebox[origin=c]{7}
     {
     \scalebox{.7}{
     \color{darkorange}
     \bf
     \colorbox{white}{anyonic defect}
     }
     }
   }
   (-2.2,-1);
  \draw[
    line width=1.1
  ]
   (1.5+1.2,-1)
   to
   (3.5,-1);
  \draw[
    line width=1.1,
    densely dashed
  ]
   (3.5,-1)
   to
   (4,-1);

  \draw[line width=3, white]
   (-2,-1.3)
   to
   (0,-1.3);
  \draw[-latex]
   (-2,-1.3)
   to
   node[
     below, 
     yshift=+3pt,
     xshift=-7pt
    ]{
     \scalebox{.7}{
       \rotatebox{+7}{
       \color{darkblue}
       \bf
       parameter
       }
     }
   }
   (0,-1.3);
  \draw[dashed]
   (-2.7,-1.3)
   to
   (-2,-1.3);

 \draw
   (-3.15,-.8)
   node{
     \scalebox{.7}{
       \rotatebox{+7}{
       \color{darkgreen}
       \bf
       braiding
       }
     }
   };

  \end{scope}

  \begin{scope}[shift={(-.2,1.4)}, scale=(.96)]
  \begin{scope}[rotate=(+8)]
  \draw[dashed]
    (1.5,-1)
    ellipse
    (.2 and .37);
  \draw[fill=white, draw=gray]
    (1.5,-1)
    ellipse
    ({.2*.3} and {.37*.3});
  \draw[line width=3.1, white]
   (1.5,-1)
   to
   (-2.3,-1);
  \draw[line width=1.1]
   (1.5,-1)
   to
   (-2.3,-1);
  \draw[line width=1.1]
   (1.5+1.35,-1)
   to
   (3.6,-1);
  \draw[
    line width=1.1,
    densely dashed
  ]
   (3.6,-1)
   to
   (4.1,-1);
  \end{scope}
  \end{scope}

  \begin{scope}[shift={(-1,.5)}, scale=(.7)]
  \begin{scope}[rotate=(+8)]
  \draw[dashed]
    (1.5,-1)
    ellipse
    (.2 and .32);
  \draw[fill=white, draw=gray]
    (1.5,-1)
    ellipse
    ({.2*.3} and {.32*.3});
  \draw[line width=3.1, white]
   (1.5,-1)
   to
   (-1.8,-1);
\draw
   (1.5,-1)
   to
   (-1.8,-1);
  \draw
    (5.23,-1)
    to
    (6.4-.6,-1);
  \draw[densely dashed]
    (6.4-.6,-1)
    to
    (6.4,-1);
  \end{scope}
  \end{scope}

\draw (1.73,-1.06) node
 {
  \scalebox{.8}{
    $k_{{}_{I}}$
  }
 };

\begin{scope}
[ shift={(-2,-.55)}, rotate=-82.2  ]

 \begin{scope}[shift={(0,-.15)}]

  \draw[]
    (-.2,.4)
    to
    (-.2,-2);

  \draw[
    white,
    line width=1.1+1.9
  ]
    (-.73,0)
    .. controls (-.73,-.5) and (+.73-.4,-.5) ..
    (+.73-.4,-1);
  \draw[
    line width=1.1
  ]
    (-.73+.01,0)
    .. controls (-.73+.01,-.5) and (+.73-.4,-.5) ..
    (+.73-.4,-1);

  \draw[
    white,
    line width=1.1+1.9
  ]
    (+.73-.1,0)
    .. controls (+.73,-.5) and (-.73+.4,-.5) ..
    (-.73+.4,-1);
  \draw[
    line width=1.1
  ]
    (+.73,0+.03)
    .. controls (+.73,-.5) and (-.73+.4,-.5) ..
    (-.73+.4,-1);

  \draw[
    line width=1.1+1.9,
    white
  ]
    (-.73+.4,-1)
    .. controls (-.73+.4,-1.5) and (+.73,-1.5) ..
    (+.73,-2);
  \draw[
    line width=1.1
  ]
    (-.73+.4,-1)
    .. controls (-.73+.4,-1.5) and (+.73,-1.5) ..
    (+.73,-2);

  \draw[
    white,
    line width=1.1+1.9
  ]
    (+.73-.4,-1)
    .. controls (+.73-.4,-1.5) and (-.73,-1.5) ..
    (-.73,-2);
  \draw[
    line width=1.1
  ]
    (+.73-.4,-1)
    .. controls (+.73-.4,-1.5) and (-.73,-1.5) ..
    (-.73,-2);

 \draw
   (-.2,-3.3)
   to
   (-.2,-2);
 \draw[
   line width=1.1,
   densely dashed
 ]
   (-.73,-2)
   to
   (-.73,-2.5);
 \draw[
   line width=1.1,
   densely dashed
 ]
   (+.73,-2)
   to
   (+.73,-2.5);

  \end{scope}
\end{scope}

\begin{scope}[shift={(-5.6,-.75)}]

  \draw[line width=3pt, white]
    (3,-3)
      --
    (-1,-1)
      --
    (-1.21,1)
      --
    (2.3,3)
      --
    (3, -3);

  \shade[right color=lightgray, left color=white, fill opacity=.7]
    (3,-3)
      --
    (-1,-1)
      --
    (-1.21,1)
      --
    (2.3,3);

  \draw[]
    (3,-3)
      --
    (-1,-1)
      --
    (-1.21,1)
      --
    (2.3,3)
      --
    (3, -3);

\draw (1.73,-1.06) node
 {
  \scalebox{.8}{
    $k_{{}_{I}}$
  }
 };

\draw[-Latex]
  ({-1 + (3+1)*.3},{-1+(-3+1)*.3})
    to
  ({-1 + (3+1)*.29},{-1+(-3+1)*.29});

\draw[-Latex]
    ({-1.21 + (2.3+1.21)*.3},{1+(3-1)*.3})
      --
    ({-1.21 + (2.3+1.21)*.29},{1+(3-1)*.29});

\draw[-Latex]
    ({2.3 + (3-2.3)*.5},{3+(-3-3)*.5})
      --
    ({2.3 + (3-2.3)*.49},{3+(-3-3)*.49});

\draw[-latex]
    ({-1 + (-1.21+1)*.53},{-1 + (1+1)*.53})
      --
    ({-1 + (-1.21+1)*.54},{-1 + (1+1)*.54});

  \begin{scope}[rotate=(+8)]
   \draw[dashed]
     (1.5,-1)
     ellipse
     ({.2*1.85} and {.37*1.85});
   \begin{scope}[
     shift={(1.5-.2,{-1+.37*1.85-.1})}
   ]
     \draw[->, -Latex]
       (0,0)
       to
       (180+37:0.01);
   \end{scope}
   \begin{scope}[
     shift={(1.5+.2,{-1-.37*1.85+.1})}
   ]
     \draw[->, -Latex]
       (0,0)
       to
       (+37:0.01);
   \end{scope}
  \draw[fill=white, draw=gray]
    (1.5,-1)
    ellipse
    ({.2*.3} and {.37*.3});
 \end{scope}

   \begin{scope}[shift={(-.2,1.4)}, scale=(.96)]
  \begin{scope}[rotate=(+8)]
  \draw[dashed]
    (1.5,-1)
    ellipse
    (.2 and .37);
  \draw[fill=white, draw=gray]
    (1.5,-1)
    ellipse
    ({.2*.3} and {.37*.3});
\end{scope}
\end{scope}

  \begin{scope}[shift={(-1,.5)}, scale=(.7)]
  \begin{scope}[rotate=(+8)]
  \draw[dashed]
    (1.5,-1)
    ellipse
    (.2 and .32);
  \draw[fill=white, draw=gray]
    (1.5,-1)
    ellipse
    ({.2*.3} and {.37*.3});
\end{scope}
\end{scope}

\begin{scope}
[ shift={(-2,-.55)}, rotate=-82.2  ]

 \begin{scope}[shift={(0,-.15)}]

 \draw[line width=3, white]
   (-.2,-.2)
   to
   (-.2,2.35);
 \draw
   (-.2,.5)
   to
   (-.2,2.35);
 \draw[dashed]
   (-.2,-.2)
   to
   (-.2,.5);

\end{scope}
\end{scope}

\begin{scope}
[ shift={(-2,-.55)}, rotate=-82.2  ]

 \begin{scope}[shift={(0,-.15)}]

 \draw[
   line width=3, white
 ]
   (-.73,-.5)
   to
   (-.73,3.65);
 \draw[
   line width=1.1
 ]
   (-.73,.2)
   to
   (-.73,3.65);
 \draw[
   line width=1.1,
   densely dashed
 ]
   (-.73,.2)
   to
   (-.73,-.5);
 \end{scope}
 \end{scope}

\begin{scope}
[ shift={(-2,-.55)}, rotate=-82.2  ]

 \begin{scope}[shift={(0,-.15)}]

 \draw[
   line width=3.2,
   white]
   (+.73,-.6)
   to
   (+.73,+3.7);
 \draw[
   line width=1.1,
   densely dashed]
   (+.73,-0)
   to
   (+.73,+-.6);
 \draw[
   line width=1.1 ]
   (+.73,-0)
   to
   (+.73,+3.71);
\end{scope}
\end{scope}

\end{scope}

\draw[
  draw=white,
  fill=white
]
  (-8,-2.4) rectangle
  (3,-3.8);

\draw[
  draw=white,
  fill=white
]
  (-1,-1.7) rectangle
  (3.4,-2.5);

\begin{scope}[
  shift={(0,1.3)}
]
\draw
  (-2.2,-4.2) node
  {
    \adjustbox{
      bgcolor=white,
      scale=1.2
    }{
      $
       \mathllap{
          \raisebox{1pt}{
            \scalebox{.58}{
              \color{darkblue}
              \bf
              \def\arraystretch{.9}
              \begin{tabular}{c}
                some quantum state for
                \\
                fixed defect positions
                \\
                $k_1, k_2, \cdots$
                at time
                {\color{purple}$t_1$}
              \end{tabular}
            }
          }
          \hspace{-5pt}
       }
        \big\vert
          \psi({\color{purple}t_1})
        \big\rangle
      $
    }
  };

\draw
  (+3.2,-3.85) node
  {
    \adjustbox{
      bgcolor=white,
      scale=1.2
    }{
      $
        \underset{
          \raisebox{-7pt}{
            \scalebox{.55}{
              \color{darkblue}
              \bf
              \def\arraystretch{.9}
               \begin{tabular}{c}
              another quantum state for
                \\
                fixed defect positions
                \\
                $k_1, k_2, \cdots$
                at time
                {\color{purple}$t_2$}
              \end{tabular}
            }
          }
        }{
        \big\vert
          \psi({\color{purple}t_2})
        \big\rangle
        }
      $
    }
  };

\draw[|->]
  (-1.3,-4.1)
  to
  node[
    sloped,
    yshift=5pt
  ]{
    \scalebox{.7}{
      \color{darkgreen}
      \bf
      unitary adiabatic transport
    }
  }
  node[
    sloped,
    yshift=-5pt,
    pos=.4
  ]{
    \scalebox{.7}{
    }
  }
  (+2.4,-3.4);
\end{scope}

\end{tikzpicture}
}
\end{figure}
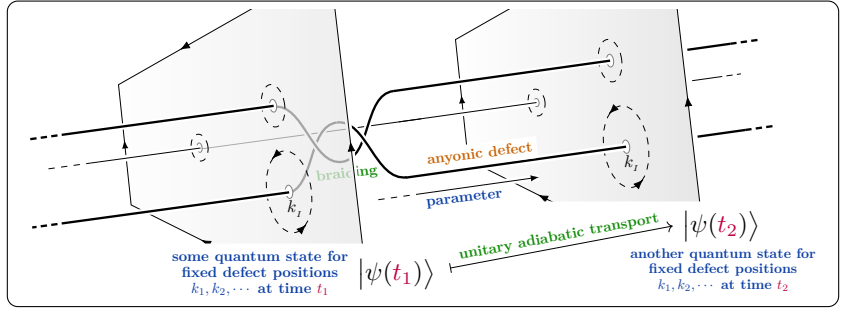

At the hardware level, this requires first understanding and then implementing non-topological interactions of anyons with perturbations in their host material. But if our derivation of anyonic order from the topological gauge sector of globally completed supergravity is more than a coincidence, then this understanding of anyon/substrate interaction will involve understanding completed quantum supergravity beyond the topological sector, along the lines indicated above in \cref{QuantumSupergravity}. It remains to see how this works out.


\clearpage

\noindent
\textbf{Acknowledgements.} 
We thank the organizers of \cite{ICMS:2025:GHS} for the opportunity to present this work, 
and we thank two anonymous referees for useful comments on an earlier version of this text.

\printbibliography

\end{document}